\definecolor{ForestGreen}{rgb}{0.1333,0.5451,0.1333}
\definecolor{DarkRed}{rgb}{0.65,0,0}
\definecolor{Red}{rgb}{1,0,0}
\theoremstyle{plain}
\newtheorem{thm}{Theorem}[section]
\newtheorem{cor}[thm]{Corollary}
\newtheorem{prop}[thm]{Proposition}
\newtheorem{conj}[thm]{Conjecture}
\newtheorem{fact}[thm]{Fact}
\newtheorem{lem}[thm]{Lemma}
\newtheorem{lemma}[thm]{Lemma}
\newtheorem{claim}[thm]{Claim}
\newtheorem{Def}[thm]{Definition}
\newtheorem{obs}[thm]{Observation}
\newtheorem{rem}[thm]{Remark}
\crefname{thm}{Theorem}{theorems}
\crefname{cla}{Claim}{claims}
\crefname{lem}{Lemma}{lemmas}
\crefname{fact}{Fact}{facts}
\newcommand{\Ber}{\textrm{Ber}}
\newcommand{\Uni}{\textrm{Uni}}
\newcommand{\Geo}{\textrm{Geo}}
\newcommand{\E}{\mathbb{E}}
\newcommand{\R}{\mathbb{R}}
\newcommand{\Z}{\mathbb{Z}}
\newcommand{\eps}{\varepsilon}
\newcommand{\calA}{\mathcal{A}}
\newcommand{\calC}{\mathcal{C}}
\newcommand{\calF}{\mathcal{F}}
\newcommand{\calR}{\mathcal{R}}
\newcommand{\calS}{\mathcal{S}}
\newcommand{\calW}{\mathcal{W}}
\newcommand{\poly}{\mathrm{poly}}
\newcommand{\suppOp}{\mathrm{supp}}
\newcommand{\support}[1]{\mathrm{supp}({\bf{#1}})}
\newcommand{\supp}[1]{\support{#1}}
\newcommand\norm[1]{\lVert{\bf{#1}}\rVert}
\newcommand\normOp[1]{\lVert{#1}\rVert}
\newcommand{\defeq}{:=}
\def\capitalize#1{\uppercase{#1}}
\newcommand{\adaptive}{adaptive\xspace}
\newcommand{\outputadaptive}{output-adaptive\xspace}
\newenvironment{wrapper}[1]
{
	\smallskip
	\begin{center}
		\begin{minipage}{\linewidth}
			\begin{mdframed}[hidealllines=true, backgroundcolor=gray!20, leftmargin=0cm,innerleftmargin=0.3cm,innerrightmargin=0.3cm,innertopmargin=0.375cm,innerbottommargin=0.375cm,roundcorner=10pt]
				#1}
			{\end{mdframed}
		\end{minipage}
	\end{center}
	\smallskip
}
\newcommand{\codestyle}[1]{\mathtt{#1}}
\newcommand{\dsplit}{\codestyle{degree}\text{-}\codestyle{split}\xspace}
\newcommand{\init}{\codestyle{init}}
\newcommand{\set}{\codestyle{set}}
\newcommand{\sample}{\codestyle{sample}}
\newcommand{\update}{\codestyle{update}}
\newcommand{\rebuild}{\codestyle{rebuild}}
\newcommand{\geom}{\mathrm{geom}}
\newcommand{\x}{{\bf x}}
\newcommand{\y}{{\bf y}}
\newcommand{\z}{{\bf z}}
\newcommand{\DM}{$\eps$-robust\xspace}
\title{Near-Optimal Dynamic Rounding\\ of Fractional Matchings in Bipartite Graphs}
\date{\vspace{-1cm}}
\author[1]{Sayan Bhattacharya\thanks{Supported by Engineering and Physical Sciences Research Council, UK (EPSRC) Grant EP/S03353X/1.}}
\author[1]{Peter Kiss\thanks{Work done in part while the author was visiting Max-Planck-Institut für Informatik.}}
\author[2]{Aaron Sidford\thanks{Supported in part by a Microsoft Research Faculty Fellowship, NSF CAREER Award CCF-1844855, NSF Grant CCF-1955039, a PayPal research award, and a Sloan Research Fellowship.}}
\author[3]{David Wajc\thanks{Work done in part while the author was at Stanford University and Google Research. Supported by a Taub Family Foundation ``Leader in Science and Technology'' fellowship.}}
\affil[1]{University of Warwick}
\affil[2]{Stanford University}
\affil[3]{Technion --- Israel Institute of Technology}
\begin{document}
\maketitle
\pagenumbering{gobble}
\begin{abstract}
We study dynamic $(1-\epsilon)$-approximate  rounding of fractional matchings---a key ingredient in numerous breakthroughs in the dynamic graph algorithms literature. Our first contribution is a surprisingly simple deterministic rounding algorithm in bipartite graphs with amortized update time $O(\epsilon^{-1} \log^2 (\epsilon^{-1} \cdot n))$, matching an (unconditional) recourse lower bound of $\Omega(\epsilon^{-1})$ up to logarithmic factors. Moreover, this algorithm's update time improves provided the minimum (non-zero) weight in the fractional matching is lower bounded throughout. Combining this algorithm with novel dynamic \emph{partial rounding} algorithms to increase this minimum weight, we obtain a number of algorithms that improve this dependence on $n$. For example, we give a high-probability randomized algorithm with $\tilde{O}(\epsilon^{-1}\cdot (\log\log n)^2)$-update time against adaptive adversaries.\footnote{Throughout, we use ``soft-O'' notation, $\tilde{O}$, to suppress logarithmic factors in $\eps$, i.e., $\tilde{O}(f)=O(f\cdot \mathrm{poly}(\log (\eps^{-1})))$.} Using our rounding algorithms, we also round known $(1-\epsilon)$-decremental fractional bipartite matching algorithms with no asymptotic overhead, thus improving on state-of-the-art algorithms for the decremental bipartite matching problem. Further, we provide extensions of our results to general graphs and to maintaining almost-maximal matchings.
\end{abstract}

\newpage
\tableofcontents

\newpage

\pagenumbering{arabic}
\section{Introduction}\label{sec:intro}

Dynamic matching is one of the most central and well-studied dynamic algorithm problems.
Here, a graph undergoes edge insertions and deletions, and we wish to quickly update a large matching (vertex-disjoint set of edges) following each such change to the graph.

A cornerstone of numerous dynamic matching results 
is the dynamic \emph{relax-and-round} approach: the combination of dynamic \emph{fractional} matching algorithms \cite{bhattacharya2015deterministic,bhattacharya2016new,bhattacharya2017fully,bhattacharya2019deterministically,bhattacharya2020deterministic} with dynamic \emph{rounding} algorithms \cite{arar2018dynamic,wajc2020rounding,bhattacharya2021deterministic,kiss2022improving,bhattacharya2023dynamic}.
This dynamic fractional matching problem asks to maintain a vector $\x\in \mathbb{R}^E_{\geq 0}$ such that $x(v)\defeq \sum_{e\ni v} x_e$ satisfies the fractional degree constraint $x(v) \leq 1$ for all vertices $v\in V$ and $\norm{x}:= \sum_e x_e$ is large compared to
the size of the largest  (fractional) matching in the dynamic graph $G=(V,E)$.  The goal typically is to solve this problem while 
 minimizing the amortized or worst-case time per edge update in $G$.\footnote{An algorithm has \emph{amortized} update time $f(n)$ if every sequence of $t$ updates takes at most $t\cdot f(n)$ time and has \emph{worst-case} update time $f(n)$ if each operation takes at most $f(n)$ time. 
As we focus on amortized update times, we omit this distinction.} 
For the rounding problem (the focus of this work), an abstract interface can be defined as follows.
\vspace{-0.55cm}
\begin{wrapper}
\begin{Def}\label{def:rounding} A \underline{\emph{dynamic rounding algorithm (for fractional matchings)}} is a data structure supporting the following operations:
\begin{itemize}
    \item $\init(G = (V,E), \, \x \in \R^E_{\geq 0},\, \epsilon \in (0,1))$: initializes the data structure for undirected graph $G$ with vertices $V$ and edges $E$, current fractional matching $\x$ in $G$, and target error $\epsilon$.
    \item $\update(e \in E,\, \nu \in [0,1])$: sets $x_e \gets \nu$ under the promise that the resulting $\x$ is a fractional matching in $G$.\footnote{Invoking $\update(e,0)$ essentially deletes $e$ and subsequently invoking $\update(e,\nu)$ for $\nu > 0$ essentially adds $e$ back. So, $G$ might as well be the complete graph on $V$. However, we find the notation $G = (V, E)$ convenient.}
\end{itemize}
The algorithm must maintain a matching $M$ in the support of $\x$, $\support{x}:=\{e\in E\mid x_e>0\}$, such  that $M$ is a $(1 - \epsilon)$-approximation with respect to $\norm{x}:=\sum_e x_e$, i.e.,
\[
M \subseteq \support{x} \text{ , } M \text{ is a matching } \text{ , and }
|M| \geq (1 - \epsilon)\cdot  \norm{x} \,.
\]
\end{Def}
\end{wrapper}

\vspace{-0.15cm}
The combination of fast fractional algorithms with fast dynamic rounding algorithms plays a key role in state-of-the-art time/approximation trade-offs for the dynamic matching problem against an adaptive adversary \cite{wajc2020rounding,bhattacharya2021deterministic,kiss2022improving}, including the recent breakthroughs of \cite{bhattacharya2023dynamic,behnezhad2023dynamic}. 
Here, a randomized algorithm \emph{works against an adaptive adversary} (or \emph{is adaptive}, for short) if its guarantees hold even when future updates depend on the algorithm's previous output and its internal state. Slightly weaker are \emph{\outputadaptive} algorithms, that allow updates to depend only on the algorithms' output. 
Note that deterministic algorithms are automatically adaptive.
A major motivation to study \outputadaptive dynamic algorithms is their black-box use as subroutines within other algorithms. (See discussions in, e.g., \cite{nanongkai2017dynamic,beimel2022dynamic,chuzhoy2019new}.)

Despite significant effort and success in designing and applying dynamic rounding algorithms, the update time of current $(1-\eps)$-approximate dynamic rounding approaches are slower by large $\poly(\eps^{-1}, \log n)$ factors than an unconditional recourse (changes per update) lower bound of $\Omega(\epsilon^{-1})$ (\Cref{fact:recourse-lb}).\footnote{Proving update time lower bounds for approximate dynamic matching is a notoriously challenging open problem. On the other hand, \cite{solomon2021generalized} show that recourse can be made $O(\eps^{-1})$ for any approximate dynamic matching algorithm.} 
 Consequently, rounding is a computational bottleneck for the running time of many
 state-of-the-art dynamic matching algorithms
\cite{wajc2020rounding,bhattacharya2021deterministic,kiss2022improving,bhattacharya2023dynamic,behnezhad2023dynamic,azarmehr2023fully} and decremental (only allowing deletions) matching algorithms \cite{bernstein2020deterministic,jambulapati2022regularized}.
\newpage

\smallskip 
The question thus arises, 
\emph{can one design (\outputadaptive) optimal dynamic rounding algorithms for fractional matching?} We answer this question in the affirmative in a strong sense.

\subsection{Our Contributions}

Our main results are deterministic and randomized dynamic fractional matching rounding algorithms for bipartite graphs that match the aforementioned simple recourse lower bound of $\Omega(\eps^{-1})$ up to logarithmic factors in $\eps$ and (sub-)logarithmic factors in $n:=|V|$.
These results are  summarized by the following theorem.\footnote{Throughout, we use ``soft-O'' notation, $\tilde{O}$, to suppress logarithmic factors in $\eps$, i.e., $\tilde{O}(f)=O(f\cdot \mathrm{poly}(\log (\eps^{-1})))$.}

\begin{wrapper}
\begin{thm}\label{thm:bipartite}
    The dynamic bipartite matching rounding problem admits:
    \begin{enumerate}
        \item A deterministic algorithm with $\tilde{O}(\eps^{-1} \log n)$ $\update$ time.
        \item An \adaptive algorithm with $\tilde{O}(\eps^{-1} \cdot (\log \log n)^2)$ $\update$ time that is correct w.h.p.
        \item An \outputadaptive algorithm with   $\tilde{O}(\eps^{-1})$ expected $\update$ time.
    \end{enumerate}
  The $\init(G,\x,\eps)$ time of each of these algorithms is $O(\eps\cdot |\supp{x}|)$ times its $\update$ time.
\end{thm}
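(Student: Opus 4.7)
The plan is to reduce all three parts to a single parametric ``base'' rounding algorithm whose amortized update time is $\tilde O(\eps^{-1}\cdot \log(1/w_{\min}))$, where $w_{\min}$ is a promised lower bound on the nonzero entries of $\x$. Plugging in the trivial bound $w_{\min}\geq \eps/n$---obtained by zeroing any edge with $x_e < \eps/n$, which costs only an additive $O(\eps)\cdot \norm{x}$ in the approximation---immediately yields the deterministic $\tilde O(\eps^{-1}\log n)$ bound in part~(1). Parts~(2) and~(3) are then obtained by composing the base rounding with a \emph{partial rounding} subroutine: a dynamic algorithm under the same $\update$ interface that transforms $\x$ into a fractional matching $\y$ with $\norm{y}\geq (1-O(\eps))\norm{x}$ and whose nonzero entries are all at least $\Omega(\eps)$. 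Feeding $\y$ into the base rounding gives $\log(1/w_{\min})=O(\log(1/\eps))$, so the total cost is $\tilde O(\eps^{-1})$ plus whatever the partial rounding costs. The $\init$ bounds of the form $O(\eps\cdot |\supp{x}|)\cdot T_{\update}$ follow by processing the initial support one edge at a time: only an $\eps$-fraction of those virtual updates actually pay the full $T_{\update}$ budget, the rest being absorbed by amortization against $\norm{x}$.

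For the base algorithm I would partition $\supp{x}$ into $L=O(\log(1/w_{\min}))$ geometric weight classes $\calC_\ell = \{e : x_e \in [2^{-\ell-1},2^{-\ell})\}$, so that each edge in $\calC_\ell$ has weight within a factor two of $2^{-\ell}$. Exploiting integrality of the bipartite matching polytope, I would maintain an integral matching $M$ that absorbs level-by-level contributions from each $\calC_\ell$ via a hierarchical cascade: the algorithm processes levels from heaviest to lightest and, at each level, augments $M$ along short alternating paths built from that level's edges. A standard potential-function argument charges the amortized $O(L/\eps)$ recourse per update against the $\ell_1$-change in $\x$; the unconditional $\Omega(\eps^{-1})$ recourse lower bound then ensures this is tight up to the $\log(1/w_{\min})$ factor and the polylogarithmic factors in $1/\eps$ hidden inside $\tilde O(\cdot)$.

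For the partial-rounding step, the output-adaptive algorithm of~(3) can be a straightforward oblivious sampler applied within each weight class $\calC_\ell$: retain each edge independently with probability $\Theta(x_e/\eps)$, rescale survivors to weight $\eps$, and refresh the sample locally on each $\update$. Since an output-adaptive adversary never sees the internal random bits, standard concentration bounds give $\norm{y}=(1-O(\eps))\norm{x}$ w.h.p.\ and $\tilde O(\eps^{-1})$ expected cost per update. The adaptive variant~(2) is the hardest piece: I would iterate the partial rounding across $O(\log\log n)$ recursive layers, each shrinking $1/w_{\min}$ by roughly a square root, while hiding the per-layer randomness from the adversary via a periodic rebuild-on-violation schedule. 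The main obstacle will be ensuring that this nested recursion (i) accumulates only additive $O(\eps)$ approximation error rather than multiplicative, and (ii) preserves w.h.p.\ correctness through every layer while charging only $\tilde O(\eps^{-1})$ amortized time per update per layer, for a total of $\tilde O(\eps^{-1}(\log\log n)^2)$. Careful concentration analysis across the recursive sampling steps, together with a reduction from adaptive to oblivious adversaries via periodic rebuilding, is where I expect most of the technical work to go.
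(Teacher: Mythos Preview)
Your high-level architecture---a base rounding algorithm parameterized by $w_{\min}$, composed with partial rounding to boost $w_{\min}$---is exactly the paper's structure. But two concrete pieces diverge from the paper in ways that leave real gaps.

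\textbf{The base algorithm.} The paper's base (\Cref{alg:dynamic_rounding_hierarchy}) does not use alternating paths or process geometric weight classes heaviest-first. It encodes each $x_e$ in \emph{binary} and rounds bit-by-bit from the least significant bit up, using the $\dsplit$ subroutine (halving degrees in a multigraph) at each of $L=O(\log(1/w_{\min}))$ levels. Crucially, its amortized update time is $O(\eps^{-1}L^2)$, not $O(\eps^{-1}L)$: each of the $L$ levels is rebuilt every $\Theta(2^i\eps\norm{x}/L)$ updates at cost $O(2^i\norm{x})$, so each level contributes $O(\eps^{-1}L)$ and the total is $O(\eps^{-1}L^2)$. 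Your ``standard potential-function argument charges $O(L/\eps)$'' is too optimistic; getting a single factor of $L$ from the base is not something the paper achieves, and your alternating-path sketch does not substantiate it either.

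\textbf{Consequences for parts (1) and (2).} Because the base is $L^2$, plugging in $w_{\min}\geq \eps/n^2$ gives only $\tilde O(\eps^{-1}\log^2 n)$, not the claimed $\tilde O(\eps^{-1}\log n)$ for part~(1). The paper obtains a single $\log n$ by first running a \emph{deterministic} coarsening (repeated $\dsplit$, \Cref{det-split}) that lifts $w_{\min}$ to $\poly(\eps)/\log n$ in $O(\eps^{-1}\log n)$ time, and only then applies the base with $L=O(\log\log n+\log(1/\eps))$. Likewise, the $(\log\log n)^2$ in part~(2) is \emph{not} the product of $O(\log\log n)$ recursive partial-rounding layers: it is simply $L^2$ after a single randomized subsampling step (\Cref{rand-split-whp}) lifts $w_{\min}$ to $\poly(\eps/\log n)$. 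Your recursive square-root scheme is a different and unnecessarily complicated route; the paper needs only one coarsening step per regime, and the doubled logarithm comes from the squared $L$ in the base, not from nesting. Part~(3) is close in spirit, though the paper's coarsening there relies on a bespoke output-adaptive set-sampler data structure (\Cref{thm:set_sampler}) with constant-probability success and resampling, rather than w.h.p.\ concentration.
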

\end{wrapper}

In contrast, prior approaches have update time at least $\Omega(\eps^{-4})$  (see \Cref{sec:techniques}). 
Moreover, all previous adaptive algorithms with high probability (w.h.p.) or deterministic guarantees all have at least (poly)logarithmic dependence on $n$, as opposed to our (sub-)logarithmic dependence on $n$.

\paragraph{General Graphs.}
In general graphs, one cannot round all fractional matchings (as defined) to integrality while only incurring a $(1-\eps)$ factor loss in value.\footnote{Consider the triangle graph with fractional values $x_e=1/2$ on all three edges; this fractional matching has value $3/2$, while any integral matching in a triangle has size at most one. While adding additional constraints \cite{edmonds1965maximum} avoids this issue, no dynamic fractional algorithms for the matching polytope in general graphs are currently known.}
Nonetheless, 
it is known how to round  ``structured'' $(1/2-\eps)$-approximate dynamic fractional matchings \cite{bhattacharya2017fully,bhattacharya2019deterministically} (see \Cref{def:AMFM}) to obtain an \emph{integral} $(1/2-\eps)$-approximate matching \cite{arar2018dynamic,wajc2020rounding,bhattacharya2021deterministic,kiss2022improving}, and even \emph{almost maximal} matchings \cite{bhattacharya2023dynamic}, as defined in \cite{peleg2016dynamic} and restated below.
	\begin{Def}
 \label{def:amm:main}
		A matching $M$ in $G$ is an \emph{$\eps$-almost maximal matching ($\eps$-AMM)} if $M$ is maximal with respect to some subgraph $G[V\setminus U]$ obtained by removing at most $|U|\leq \eps\cdot \mu(G)$ vertices from $G$, where $\mu(G)$ is the maximum matching size in $G$.
	\end{Def}

    Such $\eps$-AMM's are $(1/2 - \eps)$-approximate with respect to the maximum matching \cite{peleg2016dynamic}. Moreover, (almost) maximality of $\eps$-AMM makes their maintenance a useful algorithmic subroutine \cite{peleg2016dynamic,bhattacharya2023dynamic,azarmehr2023fully}.
    Extending our approach to rounding the aforementioned well-structured, ``near maximal'' dynamic fractional matchings in general graphs \cite{bhattacharya2017fully,bhattacharya2019deterministically}, we obtain faster $\eps$-AMM algorithms, as follows (see \Cref{thm:general-formal} for formal statement).

\begin{wrapper}
    \begin{thm}[Informal version of \Cref{thm:general-formal}]\label{thm:general-informal}
        There exist dynamic algorithms 
        maintaining $\eps$-AMM in general graphs in update time $\tilde{O}(\eps^{-3})+O(t_f + u_f\cdot t_r)$, where $t_f$ and $u_f$ are the update time and number of calls to $\update$ of any  ``structured'' dynamic fractional  matching algorithm, and $t_r$ is the $\update$ time for ``partial'' rounding. Furthermore, there exist dynamic partial rounding algorithms with the same update times and adaptivity as those of \Cref{thm:bipartite}.
    \end{thm}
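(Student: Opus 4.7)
The plan is to assemble an $\eps$-AMM algorithm from three loosely coupled components: (a) an existing dynamic \emph{structured} fractional matching algorithm as in~\cite{bhattacharya2017fully,bhattacharya2019deterministically}, producing a $(1/2-\eps)$-approximate $\x$ with a hierarchical weight-class decomposition and near-tight fractional degrees at ``heavy'' vertices; (b) a \emph{partial} rounding data structure that, given such $\x$, maintains an integral matching $M \subseteq \supp{x}$ covering almost every vertex $v$ with $x(v) \approx 1$; and (c) an \emph{augmentation} step that upgrades $M$ to an $\eps$-AMM. Component (a) costs $t_f$ time per update and issues $u_f$ $\update$ calls to component (b), costing $u_f \cdot t_r$ in total. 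The augmentation step will add an amortized overhead of $\tilde{O}(\eps^{-3})$ per update, yielding the claimed bound.

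For component (b), I would generalize each of the three bipartite rounding algorithms of \Cref{thm:bipartite} to the partial-rounding setting in general graphs by exploiting the bipartite-like structure present at each weight level of the structured fractional matching. Concretely, the Bhattacharya--Henzinger--Nanongkai-style solutions partition vertices by level so that, at each level, the edges essentially cross between a ``tight'' class and a ``slack'' class, mimicking a bipartite subgraph. Running the corresponding bipartite rounding algorithm at each level in parallel---and taking the union over levels---should yield a partial matching covering all but an $\eps$-fraction of heavy vertices while preserving the update time and adaptivity guarantees of \Cref{thm:bipartite}.

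For component (c), I would grow $M$ greedily within the subgraph induced by vertices not yet covered by $M$. By part (b), this residual subgraph has matching number $O(\eps \cdot \mu(G))$, and by standard lazy rebuild arguments (amortizing local maximality checks over $\Omega(\eps^{-1})$ consecutive updates and exploiting the $\Omega(\eps)$ fractional weight each heavy vertex contributes) the per-update cost of maintaining $\eps$-almost maximality should be $\tilde{O}(\eps^{-3})$. The main obstacle I anticipate is preserving the adaptivity of component (b) across dynamic transitions of the hierarchical decomposition---in particular, showing that level changes can be absorbed within the $u_f$ $\update$ budget without leaking internal randomness to an adaptive adversary---and ensuring that the lazy augmentation in (c) tolerates the recourse of the partial rounding.
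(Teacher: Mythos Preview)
Your proposal diverges from the paper in a way that creates a genuine gap. The paper's ``partial rounding'' is \emph{not} a subroutine that outputs an integral matching covering most heavy vertices; it is a dynamic \emph{coarsening} that outputs another \emph{fractional} matching $\x'$ with $\x'_{\min} \geq \Theta(\eps)$ while approximately preserving fractional degrees (Properties \ref{property:global-slack}--\ref{property:edgebound-correct}). The point is that an $(\eps,\delta)$-AMFM with $\x_{\min} \geq \delta$ is itself an $(\eps,\delta^{-1})$-kernel (\Cref{lem:AMFM}); the paper then, after adding $O(\eps)\cdot\mu(G)$ dummy vertices to repair the few nodes where the coarsening lost fractional degree, invokes the known static kernel-to-AMM procedure (\Cref{prop:KERNELtoAMM}) every $\Theta(\eps\cdot\mu(G))$ updates, relying on AMM stability (\Cref{obs:AMMstability}). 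The $\tilde{O}(\eps^{-3})$ term comes from amortizing this periodic static call over the period length.

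Your plan for component~(b) has two concrete problems. First, the claim that each weight level of the BHN hierarchy is ``bipartite-like'' is unjustified: edges at a given level need not go between disjoint vertex classes, and odd cycles can appear within a level, so the bipartite rounding of \Cref{thm:bipartite} cannot be applied there directly. Second, even if each level could be rounded to an integral matching, the \emph{union} over levels is in general not a matching---a single vertex can carry fractional weight across many levels and be matched in several of them. The paper sidesteps both issues entirely: its coarsening algorithms (based on $\dsplit$ and subsampling) work in arbitrary graphs because they only need Property~\ref{property:degree-halved}, and the integrality step is handled once, globally, by the static kernel-to-AMM algorithm rather than level by level. Your component~(c) is also under-specified: the residual graph after removing matched vertices may have $\Omega(m)$ edges even when its matching number is $O(\eps\cdot\mu(G))$, so ``grow $M$ greedily'' does not obviously cost $\tilde{O}(\eps^{-3})$ without the vertex-cover trick of \Cref{fact:small-mu-MM} or the kernel machinery the paper uses.
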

\end{wrapper}

\subsubsection{Applications}\label{sec:applications}

Applying our rounding algorithms to known fractional algorithms yields a number of new state-of-the-art dynamic matching results.

For example, by a black-box application of \Cref{thm:bipartite}, we deterministically round known decremental (fractional) bipartite matching algorithms \cite{jambulapati2022regularized,bernstein2020deterministic} \emph{with no asymptotic overhead}, 
yielding faster $(1-\eps)$-approximate decremental bipartite matching algorithms.
We also discuss how a variant of \Cref{thm:general-informal} together with the general-graph decremental algorithm of \cite{assadi2022decremental} leads to a conjecture regarding 
the first \emph{deterministic} sub-polynomial update time $(1-\eps)$-approximate decremental matching algorithm in general graphs.


Our main application is obtained by applying our rounding algorithm for general graphs of \Cref{thm:general-informal} to the $O(\eps^{-2})$-time fractional matching algorithm of \cite{bhattacharya2019deterministically}, yielding the following.

\begin{restatable}{thm}{AMMresults}\label{thm:AMMresults}
    For any $\eps>0$, there exist dynamic $\eps$-AMM algorithms that are:
    \begin{enumerate}
        \item  Deterministic, using $\tilde{O}(\eps^{-3}\cdot \log n)$  $\update$ time.
        \item \expandafter\capitalize\adaptive, using $\tilde{O}(\eps^{-3} \cdot (\log\log n)^2)$  $\update$ time, correct w.h.p.
        \item \expandafter\capitalize\outputadaptive, using $\tilde{O}(\eps^{-3})$ expected $\update$ time.
    \end{enumerate}
\end{restatable}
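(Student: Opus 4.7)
The plan is to invoke Theorem~\ref{thm:general-informal} as a black box and plug in two ingredients: (i) the ``structured'' dynamic fractional matching algorithm of \cite{bhattacharya2019deterministically}, and (ii) one of the three dynamic partial rounding algorithms guaranteed by Theorem~\ref{thm:general-informal} (whose update times and adaptivity match those of Theorem~\ref{thm:bipartite}). The resulting update time is then a routine arithmetic combination.

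More concretely, first I would verify that the fractional matching algorithm of \cite{bhattacharya2019deterministically} outputs a fractional matching of precisely the ``structured'' form required by Theorem~\ref{thm:general-informal} (i.e.\ the notion behind the referenced \Cref{def:AMFM}); this should be direct since Theorem~\ref{thm:general-informal} is phrased exactly to accept such structured inputs, and these are the structured outputs it was designed to handle. Denote by $t_f = O(\eps^{-2})$ its amortized $\update$ time, and let $u_f$ denote its amortized recourse (number of $\update(e,\nu)$ calls to the rounding data structure per edge update to $G$); for this algorithm $u_f = \tilde{O}(\eps^{-2})$ amortized. Second, I would feed these into the bound $\tilde{O}(\eps^{-3}) + O(t_f + u_f \cdot t_r)$ from Theorem~\ref{thm:general-informal}, where $t_r$ is the $\update$ time of the chosen partial rounding algorithm. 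Plugging in $t_r = \tilde{O}(\eps^{-1}\log n)$, $t_r = \tilde{O}(\eps^{-1}(\log\log n)^2)$, and $t_r = \tilde{O}(\eps^{-1})$ respectively gives the three claimed update times of $\tilde{O}(\eps^{-3}\log n)$, $\tilde{O}(\eps^{-3}(\log\log n)^2)$, and $\tilde{O}(\eps^{-3})$, since $u_f \cdot t_r$ dominates both $\tilde{O}(\eps^{-3})$ and $t_f$ in each case.

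For the adaptivity claim, since the fractional algorithm of \cite{bhattacharya2019deterministically} is deterministic (hence automatically adaptive), the randomness seen by an adversary comes only from the rounding data structure, so the overall algorithm inherits the adaptivity guarantee of the rounding subroutine: deterministic in case (1), adaptive w.h.p.\ in case (2), and output-adaptive in expectation in case (3). Correctness (that the maintained matching is an $\eps$-AMM) follows directly from Theorem~\ref{thm:general-informal}.

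The main obstacle I anticipate is purely bookkeeping: pinning down the precise amortized recourse $u_f$ of the fractional algorithm from \cite{bhattacharya2019deterministically} and confirming that its output meets the structural hypotheses of Theorem~\ref{thm:general-informal} (rather than merely being some $(1/2-\eps)$-approximate fractional matching). Provided these are quoted correctly from \cite{bhattacharya2019deterministically} and the framework of Theorem~\ref{thm:general-informal}, the rest is mechanical substitution with no new conceptual steps.
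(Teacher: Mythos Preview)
Your proposal is correct and takes essentially the same approach as the paper: apply the general-graph framework (the formal version of \Cref{thm:general-informal}) with the deterministic $(\eps,\eps/16)$-AMFM algorithm of \cite{bhattacharya2019deterministically} and the three coarsening algorithms of \Cref{lem:coarsening:AMFM}, then read off the three update-time bounds. The only refinement is that the paper pins down $u_f = O(\eps^{-2})$ (not merely $\tilde{O}(\eps^{-2})$) and the AMFM structure by citing \cite[Lemma~C.1]{wajc2020rounding}, which resolves exactly the bookkeeping obstacle you flagged.
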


In contrast, all prior non-oblivious $(1/2-\eps)$-approximate matching algorithms had at least quartic dependence on $\eps$, which the above result improves to cubic. Moreover, this result yields the first deterministic 
$O(\log n)$-time and adaptive $o(\log n)$-time high-probability algorithms for this widely-studied approximation range and for near-maximal matchings. 
This nearly concludes a long line of work on deterministic/adaptive dynamic matching algorithms for the $(1/2-\eps)$ approximation regime \cite{bhattacharya2015deterministic,bhattacharya2016new,behnezhad2019fully,bhattacharya2019deterministically,wajc2020rounding,bhattacharya2021deterministic,kiss2022improving,bhattacharya2023dynamic}.

\subsection{Our Approach in 
a Nutshell}\label{sec:techniques}

Here we outline our approach, focusing on the key ideas behind \Cref{thm:bipartite}. To better contrast our techniques with those of prior work, we start by briefly overviewing the latter.

\medskip 
\noindent\textbf{Previous approaches.} Prior dynamic rounding algorithms \cite{arar2018dynamic,wajc2020rounding,bhattacharya2021deterministic,kiss2022improving} all broadly work by partially rounding the fractional matching $\x$ to obtain a matching sparsifier $S$ (a sparse subgraph approximately preserving the fractional matching size compared to $\x$).
Then, they periodically compute a $(1-\eps)$-approximate matching in this sparsifier $S$ using a static $\tilde{O}(|S|\cdot \eps^{-1})$-time algorithm (e.g., \cite{duan2014linear}) whenever $\norm{x}$ changes by $\eps\cdot \norm{x}$, i.e., every $\Omega(\eps\cdot \norm{x})$ updates. This period length guarantees that the matching computed remains a good approximation of the current fractional matching during the period, with as good an approximation ratio as the sparsifier $S$.
Now, for sparsifier $S$ to be $O(1)$-approximate, it must have size $|S|=\Omega(\norm{x})$, and so this approach results in an update time of at least $\Omega(\eps^{-2})$. Known dynamic partial rounding approaches all result in even larger sparsifiers, resulting in large $\poly(\eps^{-1}, \log n)$ update times.

\medskip 
\noindent\textbf{Direct to integrality.}
Our first rounding algorithm for bipartite graphs breaks from this framework and directly rounds to integrality. 
This avoids overhead of periodic recomputation of static near-maximum matching algorithms, necessary to avoid super-linear-in-$\eps^{-1}$ update time (or $n^{o(1)}$ factors, if we substitute the static approximate algorithms with the breakthrough near-linear-time max-flow algorithm of \cite{chen2022maximum}).
The key idea is that, by encoding each edge's weight in binary, we can round the fractional matching ``bit-by-bit'', deciding for each edge whether to round a component of value $2^{-i}$ to a component of value $2^{-i+1}$. 
This can be done statically in near-linear-time by variants of standard degree splitting algorithms, decreasing the degree of each node in a multigraph by a factor of two (see \Cref{sec:prelims}).
Letting $L:=\log((\min_{e:x_e\neq 0} x_e)^{-1})$, we show that by buffering updates of total value at most $O(\eps\cdot \norm{x}/L)$ for each power of $2$, we can efficiently dynamize this approach, obtaining a dynamic rounding algorithm with update time $\tilde{O}(\eps^{-1}\cdot L^2)$.
As we can assume that $\min_{e:x_e\neq 0} x_e\geq \eps/n^2$ (\Cref{obs:xmin-and-high-bits}), this gives our bipartite $\tilde{O}(\eps^{-1}\cdot\log^2n)$ time algorithm.

\medskip 
\noindent\textbf{Faster partial rounding.}
The second ingredient needed for \Cref{thm:bipartite} are a number of algorithms for ``partially rounding'' fractional matchings, increasing $\min_{e:x_e\neq 0} x_e$ while approximately preserving the value of the fractional matching. (The output is not quite a fractional matching, but in a sense is close to one. 
See \Cref{def:coarsening}.)
Our partial rounding algorithms draw on a number of techniques, including fast algorithms for partitioning a fractional matching's support into \emph{multiple} sparsifiers, as opposed to a single such sparsifier in prior work, and a new \outputadaptive sampling data structure of possible independent interest (\Cref{sec:app:setsampling}).\footnote{From this we derive the first \outputadaptive matching algorithm that is not also \adaptive.}$^,$\footnote{Concurrently to our work, another such sampling algorithm was devised \cite{yi2023optimal}. See discussion in \Cref{sec:app:setsampling}.}
Combining these partial rounding algorithms with our simple algorithm underlies all our bipartite rounding results of \Cref{thm:bipartite}, as well as our general graph rounding results of \Cref{thm:general-informal} (with further work -- see \Cref{sec:general}).

\subsection{Related Work}\label{sec:related-work}

The dynamic matching literature is vast, and so we only briefly discuss it here. For a more detailed discussion, see, e.g., the recent papers \cite{behnezhad2023dynamic,bhattacharya2023dynamic,bhattacharya2023dynamicsublinear,assadi2023regularity}.

The dynamic matching problem has been intensely studied since a seminal paper of Onak and Rubinfeld \cite{onak2010maintaining}, which showed how to maintain a constant-approximate matching in polylogarithmic time.
Results followed in quick succession, including conditional polynomial update time lower bounds for exact maximum matching size \cite{abboud2014popular,henzinger2015unifying,abboud2016popular,dahlgaard2016hardness,kopelowitz2016higher}, and numerous algorithmic results, broadly characterized into two categories:
polynomial time/approximation tradeoffs \cite{gupta2013fully,peleg2016dynamic,bernstein2015fully,bernstein2016faster,van2019dynamic,grandoni2022maintaining,kiss2022improving,behnezhad2020fully,wajc2020rounding,bhattacharya2021deterministic,behnezhad2022new,roghani2022beating,behnezhad2023dynamic,assadi2023regularity,bhattacharya2023dynamicsublinear,ivkovic1993fully}, and  $1/2-$ or $(1/2-\eps)$-approximate
algorithms with polylogarithmic or even constant update time \cite{baswana2015fully,solomon2016fully,bhattacharya2016new,arar2018dynamic,charikar2018fully,bernstein2019deamortization,bhattacharya2019deterministically,behnezhad2019fully,chechik2019fully,wajc2020rounding,bhattacharya2021deterministic,kiss2022improving}.\footnote{Some works study  approximation of maximum matching \emph{size} \cite{sankowski2009maximum,van2019dynamic,behnezhad2023dynamic,bhattacharya2023dynamic,bhattacharya2023dynamicsublinear,azarmehr2023fully}.}
We improve the state-of-the-art update times for all deterministic and adaptive algorithms in the intensely-studied second category.

The $(1-\eps)$-approximate matching problem has also been studied in \emph{partially dynamic} settings. 
This includes a  recent algorithm supporting vertex updates on opposite sides of a bipartite graph, though not edge updates \cite{zheng2023multiplicative} (see arXiv).
For incremental (edge-insertion-only) settings several algorithms are known \cite{gupta2014maintaining,grandoni20191+,blikstad2023incremental,bhattacharya2023dynamicLP}, the fastest  having
$\poly(\eps^{-1})$ update time \cite{blikstad2023incremental}. In decremental settings (edge-deletion-only), rounding-based algorithms with $\poly(\eps^{-1},\log n)$ update time in bipartite graphs \cite{bernstein2020deterministic,jambulapati2022regularized,bhattacharya2023dynamicLP} and randomized $\exp(\eps^{-1})$ in general graphs \cite{assadi2022decremental} are known.
We improve on these decremental results, speeding up bipartite matching, and giving the first deterministic logarithmic-time  algorithm for general graphs.

\subsection{Paper Outline}
Following some preliminaries in \Cref{sec:prelims}, we provide our first simple bipartite rounding algorithm in \Cref{sec:sequential}. 
In \Cref{sec:partial-rounding} we introduce the notion of partial roundings that we study and show how such partial rounding algorithms (which we provide and analyze in \Cref{sec:partial-rounding-implementations}) can be combined with our simple algorithm to obtain the (bipartite) rounding algorithms of \Cref{thm:bipartite}.
In \Cref{sec:general} we analyze our rounding algorithms when applied to known fractional matchings in general graphs, from which we obtain \Cref{thm:general-informal,thm:AMMresults}.
We conclude with our decremental results in \Cref{sec:decremental}.
\section{Preliminaries}\label{sec:prelims}

\noindent\textbf{Assumptions and Model.} Throughout, we assume that $\norm{x}\geq 1$, as otherwise it is trivial to round $\norm{x}$ within a factor of $1-\eps$, by maintaining a pointer to any edge in $\supp{x}$ whenever the latter is not empty.
In this paper we work in the word RAM model of computation with words of size $w:=\Theta(\log n)$, allowing us to index any of $2^{O(w)}=\poly(n)$ memory addresses, perform arithmetic on $w$-bit words, and draw $w$-bit random variables, all in constant time. 
We will perform all above operations on $O(\log(\eps^{-1}\cdot n))$-bit words, which is still $O(w)$ provided $\eps^{-1}=\poly(n)$. If $\eps$ is much smaller, all stated running times trivially increase by a factor of $O(\log(\eps^{-1}))$.

\vspace{-0.375cm}
\paragraph{Notation.} For multisets $S_1$ and $S_2$, we 
denote by  $S_1\uplus S_2$ the ``union'' multiset, in which each element has multiplicity that is the sum of its multiplicities in $S_1$ and $S_2$. 
A vector $\x$ is {\em $\lambda$-uniform} if $x_e = \lambda$ for all $e \in \supp{\x}$, and is \emph{uniform} if it is $\lambda$-uniform for some $\lambda$. Given fractional matching $\x$, we call an integral matching $M \subseteq \supp{\x}$ that is $(1-\eps)$-approximate, i.e., $|M| \geq \norm{\x} \cdot (1 - \eps)$ an \emph{$\eps$-approximate rounding} of $\x$.
Finally, we use the following notion of distance and its monotonicity.\vspace{-0.5cm}
\begin{obs}\label{obs:monotone-distance}
For vectors $\x,\y\in \mathbb{R}^{E}$ and $\eps\geq 0$, define $d^\epsilon_V(\x, \y) := \sum_{v \in V} (|x(v) - y(v)| - \epsilon)^+$, for $(z)^+:=\max(0, z)$ the positive part of $z\in \mathbb{R}$. Then, we have 
 $d^{\eps}_V(\x,\y)\leq d^{\eps'}_V(\x,\y)$ for all $\epsilon\geq \epsilon'$. Moreover, by the triangle inequality and the basic fact that $(a+b)^+\leq a^+ + b^+$ for all real $a,b$, we have $d^{\eps_1+\eps_2}_V(\x,\z)\leq d^{\eps_1}_V(\x,\y) + d^{\eps_2}_V(\y,\z)$ for all $\eps_1,\epsilon_2\geq 0$ and vectors $\x,\y,\z\in \mathbb{R}^E$.
\end{obs}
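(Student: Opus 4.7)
The plan is to verify both claims pointwise at each vertex $v \in V$ and then sum. The observation has two parts: monotonicity in the error parameter, and a ``shifted'' triangle inequality. Both follow from very elementary properties of the positive-part map $(\cdot)^+$, namely that $(\cdot)^+$ is nondecreasing and subadditive (the latter being the hint $(a+b)^+ \leq a^+ + b^+$ stated in the problem).

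First I would dispense with the monotonicity claim. Fix $v$ and set $c := |x(v) - y(v)| \geq 0$. For $\eps \geq \eps'$ we have $c - \eps \leq c - \eps'$, and since the map $t \mapsto t^+$ is nondecreasing, $(c-\eps)^+ \leq (c-\eps')^+$. Summing over $v \in V$ gives $d^\eps_V(\x,\y) \leq d^{\eps'}_V(\x,\y)$.

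For the triangle inequality, I fix $v$ and apply the usual triangle inequality $|x(v) - z(v)| \leq |x(v) - y(v)| + |y(v) - z(v)|$. Subtracting $\eps_1 + \eps_2$ from both sides yields
\[
|x(v) - z(v)| - (\eps_1+\eps_2) \;\leq\; \bigl(|x(v) - y(v)| - \eps_1\bigr) + \bigl(|y(v) - z(v)| - \eps_2\bigr).
\]
Applying monotonicity of $(\cdot)^+$ and then its subadditivity $(a+b)^+ \leq a^+ + b^+$ gives
\[
\bigl(|x(v) - z(v)| - (\eps_1+\eps_2)\bigr)^+ \;\leq\; \bigl(|x(v) - y(v)| - \eps_1\bigr)^+ + \bigl(|y(v) - z(v)| - \eps_2\bigr)^+.
\]
Summing this inequality over $v \in V$ yields $d^{\eps_1+\eps_2}_V(\x,\z) \leq d^{\eps_1}_V(\x,\y) + d^{\eps_2}_V(\y,\z)$.

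There is no real obstacle here; the only subtlety is remembering that the naive ``$(a^+ - \eps)^+ \leq a^+$'' style reasoning must be replaced by the subadditivity step, since individual differences $|x(v)-y(v)| - \eps_i$ could be negative. Using the stated hint sidesteps this cleanly.
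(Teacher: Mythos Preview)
Your proof is correct and follows exactly the approach the paper intends: the observation's own statement already sketches the argument (``by the triangle inequality and the basic fact that $(a+b)^+\leq a^+ + b^+$''), and you have simply written out the pointwise-then-sum verification in full. There is nothing to add.
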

\vspace{-0.375cm}
\paragraph{Support and Binary encoding.}
We denote the binary encoding of each edge $e$'s fractional value by $x_e \defeq \sum_i (x_e)_i \cdot  2^{-i}$.
We further let $\suppOp_i(\x)\defeq \{e\in E \mid (x_e)_i=1\}$ denote the set of coordinates of $\x$ whose $i$-th bit is a $1$. So, $\suppOp(\x)=\bigcup_i \suppOp_i(\x)$.
Next, we let $\x_{\min}\defeq \min_{e \in \support{x}} x_e$.
The following observation allows us to restrict our attention to a small number of bits when rounding bipartite fractional matchings $\x$. (In \Cref{sec:general} we extend this observation to the structured fractional matchings in general graphs that interest us there.)

\begin{obs}\label{obs:xmin-and-high-bits}
For rounding bipartite fractional matching, by decreasing $\eps$ by a constant factor, it is without loss of generality that $\x_{\min}\geq \eps/n^2$ and moreover if $\Delta\leq \x_{\min}$ and $L:=1+\lceil \log(\eps^{-1}\Delta^{-1})\rceil$, we may safely assume that $(x_e)_i=0$ for all $i>L$.
\end{obs}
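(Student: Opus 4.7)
The plan is to prove the two parts of the observation separately by constructing modified fractional matchings $\x'$ and $\x''$ that satisfy the desired structural properties while losing only a $(1-O(\eps))$ factor in norm; then a $(1-\eps/c)$-approximate rounding of either vector becomes a $(1-\eps)$-approximate rounding of $\x$.

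For the first assumption, $\x_{\min} \geq \eps/n^2$, I would define $x'_e := x_e$ when $x_e \geq \eps/n^2$ and $x'_e := 0$ otherwise. Since $|\supp{x}|$ is bounded by the number of potential edges, which is at most $\binom{n}{2}$, the total mass discarded is at most $\binom{n}{2}\cdot \eps/n^2 \leq \eps/2$. Using the standing assumption $\norm{x}\geq 1$, this is at most $\eps\norm{x}/2$, so $\norm{x'}\geq (1-\eps/2)\norm{x}$. Because $\x'\leq \x$ coordinate-wise, $\x'$ remains a fractional matching and $\supp{x'}\subseteq\supp{x}$, hence any $(1-\eps/2)$-approximate rounding of $\x'$ is a $(1-\eps)$-approximate rounding of $\x$. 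Dynamically, any $\update(e,\nu)$ with $\nu<\eps/n^2$ is forwarded to the underlying rounder as $\update(e,0)$, which preserves the fractional-matching promise since entries are only decreased.

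For the second assumption, given $\Delta\leq \x_{\min}$ and $L:=1+\lceil\log(\eps^{-1}\Delta^{-1})\rceil$, I would define $\x''$ by truncating each binary expansion at position $L$, setting $x''_e := \sum_{i=1}^{L}(x_e)_i \cdot 2^{-i}$. The per-coordinate error is $x_e - x''_e \leq \sum_{i>L} 2^{-i} = 2^{-L} \leq \eps\Delta/2$ by the choice of $L$. Since every nonzero $x_e$ satisfies $x_e\geq \x_{\min}\geq \Delta$, this gives $x''_e \geq (1-\eps/2)\,x_e$, and summing yields $\norm{x''}\geq (1-\eps/2)\norm{x}$. As $\x''\leq \x$ coordinate-wise, $\x''$ is still a fractional matching with $\supp{x''}\subseteq\supp{x}$, so any $(1-\eps/2)$-approximate rounding of $\x''$ rounds $\x$ to within $(1-\eps)$. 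Dynamically, each $\update(e,\nu)$ is forwarded as $\update(e,\nu'')$, where $\nu''$ is the $L$-bit truncation of $\nu$.

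There is no real obstacle here, only a bit of bookkeeping to confirm that the two reductions compose: running the underlying rounder on $\x''$ derived from the thresholded $\x'$ loses a factor of at most $(1-\eps/2)^2 \geq 1-\eps$ in norm, so invoking the rounder with $\eps'=\eps/4$ restores a genuine $(1-\eps)$-approximate rounding of the original $\x$, which is exactly the ``constant-factor decrease of $\eps$'' claimed in the observation.
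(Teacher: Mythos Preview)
Your argument is correct and follows the same route as the paper: zero out entries below $\eps/n^2$ (losing at most $\binom{n}{2}\cdot\eps/n^2\leq\eps/2$ in total mass) and truncate each $x_e$ at bit $L$ (losing at most $2^{-L}\leq\eps\Delta/2\leq(\eps/2)x_e$ per edge). The paper performs both modifications in one stroke with $\eps'=\eps/3$, whereas you separate them, but this is purely presentational.

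One small bookkeeping slip in your last paragraph: with the thresholds set using $\eps$ itself, the two reductions already cost a full $(1-\eps)$ factor in norm, so running the rounder at accuracy $\eps/4$ yields $(1-\eps/4)(1-\eps)<1-\eps$. The fix is to scale $\eps$ down \emph{uniformly}---use $\eps/3$ (say) in the threshold $\eps/(3n^2)$, in the definition of $L$, and as the rounder's accuracy---so that the composed loss is at most $(1-\eps/3)^3\geq 1-\eps$. This is exactly what the paper does and what the phrase ``decreasing $\eps$ by a constant factor'' is meant to license.
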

\begin{proof}
Let $\eps'=\eps/3$. Consider the vector $\x'$ obtained by zeroing out all entries $e$ of $\x$ with $x_e<\eps'/n^2$ and setting $(x_e)_i=0$ for all edges $e$ and indices $i>L$. 
Clearly, $\supp{x'}\subseteq \supp{x}$ and $\x'$ is a fractional matching, as $\x'\leq \x$.
The following shows that $\norm{x'}$ is not much smaller than $\norm{x}\geq 1$.
$$\norm{x'} \geq \norm{x} - {n \choose 2}\eps'/n^2 - \sum_{e}\sum_{i>L} 2^{-i} \geq \norm{x} - \eps' - \sum_e \eps'\cdot \x_{\min} \geq \norm{x}\cdot (1-\eps') - \sum_e \eps'\cdot x_e = (1-2\eps')\cdot \norm{x}.$$
Therefore, a matching $M\subseteq \supp{x'}\subseteq \supp{x}$ that is $(1-\eps')$-approximate w.r.t.~$x'$ is $(1-\eps)$-approximate w.r.t.~$\x$, as $|M|\geq (1-\eps')\cdot \norm{x'}\geq (1-3\eps')\cdot \norm{x} = (1-\eps)\cdot \norm{x}$.
\end{proof}

\vspace{-0.375cm}
\paragraph{Recourse Lower Bound.}
We note that the number of changes to $M$ 
per update (a.k.a.~the rounding algorithm's \emph{recourse}) is at least $\Omega(\eps^{-1})$ in the worst case.

\begin{fact}\label{fact:recourse-lb}
Any $(1-\eps)$-approximate dynamic matching rounding algorithm $\calA$ must use $\Omega(\eps^{-1})$ amortized recourse, even in bipartite graphs.
\end{fact}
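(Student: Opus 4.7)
My plan is to prove the lower bound by constructing an adversarial update sequence on a bipartite graph for which every valid $(1-\eps)$-approximate rounding algorithm must make $\Omega(T/\eps)$ matching changes across $T$ updates.

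First, I would design a hard bipartite gadget consisting of $\Theta(\eps^{-1})$ edges endowed with a fractional matching $\x$ so that the approximation requirement $|M|\ge (1-\eps)\norm{x}$ tightly pins down the set of valid matchings, and, crucially, nearby fractional updates push these valid matchings to be pairwise very far apart in symmetric difference. A natural starting point is to place $k = \Theta(\eps^{-1})$ disjoint edges each of fractional value close to $1$, so that $\norm{x} \approx k$ and any valid $M$ may omit at most a single edge of $\supp{\x}$; an alternative is a star $K_{1,k}$ with uniform weights $\eps$, where $\norm{x} = 1$ forces $|M| = 1$ among $k$ candidate edges.

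Second, I would exhibit an adaptive adversary that inspects the current matching $M$ and probes it with a single $\update$ call such as $\update(e_j,0)$ for an edge $e_j \in M$. Such a probe removes $e_j$ from $\supp{\x}$ and, because the approximation guarantee allows very little slack in $|M|$ at an integer threshold of $\norm{x}$, forces the algorithm to remove $e_j$ and insert some other support edge, contributing $\Theta(1)$ to the recourse. Coordinated follow-up $\update$ calls shift fractional weight back to the support and repeat the process.

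Third, to amplify the amortized bound to $\Omega(\eps^{-1})$, I would either (i) place $\Theta(\eps^{-1})$ independent copies of this gadget in parallel within $G$ and have the adversary orchestrate the $\update$ stream across copies so that a single call forces $\Theta(\eps^{-1})$ coordinated swaps on average; or (ii) embed the gadget within a larger graph so that the tight interplay between integer-valued $|M|$ and fractional $\norm{x}$ across adjacent integer thresholds causes a single-edge update to cascade through $\Theta(\eps^{-1})$ matching swaps.

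The main obstacle is rigorously establishing this $\eps^{-1}$ amplification. Because a single $\update(e,\nu)$ call modifies only one coordinate of $\x$ and changes $\norm{x}$ by at most $1$, the naive per-update recourse bound is only $\Omega(1)$. Obtaining the full $\eps^{-1}$ factor requires exploiting that the slack of $\eps\cdot \norm{x}$ in the approximation guarantee concentrates into $\Theta(\eps^{-1})$ small-weight edges of $\supp{\x}$, each of which the adversary can force the algorithm to track at a rate of one matching swap per update. Formalizing this cascade---either via a potential-function argument tracking the symmetric difference between successive valid matchings, or via an exchange argument on the lattice of feasible $M$'s---is the technical crux of the proof.
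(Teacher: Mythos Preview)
Your proposal correctly identifies the framework (an adversarial update sequence forcing many matching changes) but has a genuine gap: none of your candidate gadgets achieve the $\Omega(\eps^{-1})$ amplification, and you explicitly flag this as unresolved. With $k$ disjoint edges of weight near $1$, deleting one matched edge forces only $O(1)$ changes; with a star $K_{1,k}$ of weight $\eps$ per edge, again each update forces only $O(1)$ swaps. Your suggested fixes---parallel copies or a cascading potential argument---do not work as stated: a single $\update$ touches a single edge, so parallel copies are perturbed one at a time, and there is no mechanism in your constructions by which one edge change propagates through $\Theta(\eps^{-1})$ matching edges.

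The missing idea is \emph{rigidity via a long alternating structure}. The paper takes a path of odd length $\Theta(\eps^{-1})$ with every edge at value $1/2$, so $\norm{x}=\Theta(\eps^{-1})$. Any matching with $|M|\ge(1-\eps)\norm{x}$ must be a maximum matching of the path, and on an odd-length path this maximum matching is \emph{unique}: it consists of all odd-indexed edges. Now delete (set to $0$) the first and last edges---just two $\update$ calls. The remaining path is again odd-length, but shifted, so the unique valid $M$ is now the set of all \emph{even}-indexed edges. Hence $O(1)$ updates force $\Theta(\eps^{-1})$ matching changes, and repeating this toggle gives the amortized $\Omega(\eps^{-1})$ bound. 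The crucial feature your gadgets lack is that the approximation constraint pins down $M$ \emph{globally and uniquely}, so that a local boundary perturbation flips the entire matching.
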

\begin{proof}
Consider a path graph $G$ of odd length $4\eps^{-1}+2$ with values $1/2$ assigned to each edge. 
A matching  $M\subseteq  \supp{x}$ of size  $|M|\geq (1-\eps)\cdot \norm{x}$ must match all odd-indexed edges of the path.
However, after invoking $\update(\cdot ,0)$ for the first and last edges in the path, for $|M|\geq (1-\eps)\cdot \norm{x}$ (for the new $\x$),  $M$ must match all \emph{even}-indexed edges. Therefore, repeatedly invoking $\update(\cdot, 0)$ and then $\update(\cdot, 1/2)$ for these two edges sufficiently many times implies that the matching $M$ maintained by $\calA$ must change by an average of $\Omega(\eps^{-1})$ edges per update.
\end{proof}

    \subsection{The $\dsplit$ subroutine}

	Throughout the paper, 
    we use the following subroutine to partition a graph into two subgraphs of roughly equal sizes while roughly halving all vertices' degrees. 
    Such subroutines obtained by e.g., computing maximal walks and partitioning them into odd/indexed edges, have appeared in the literature before in various places. For completeness, we provide this algorithm in \Cref{app:dsplit}.

    \begin{restatable}{prop}{dsplitalgo}\label{prop:degree-split}
        There exists an algorithm $\dsplit$, which on multigraph $G=(V,E)$ with maximum edge multiplicity at most two (i.e., no edge has more than two copies) computes in $O(|E|)$ time two (simple) edge-sets $E_1$ and $E_2$ of two disjoint sub-graphs of $G$, such that $E_1,E_2$ and the degrees $d_G(v)$ and $d_i(v)$ of $v$ in $G$ and $H_i:=(V,E_i)$ satisfy the following.
        \begin{enumerate}[label=(P{{\arabic*}})]
            \item \label{property:size-halved}$|E_1|=\lceil \frac{|E|}{2}\rceil$ and $|E_2|= \lfloor \frac{|E|}{2}\rfloor$.
            \item \label{property:degree-halved} $d_i(v)\in   \left[ \frac{d_G(v)}{2}-1,\,\frac{d_G(v)}{2}+1\right]$ for each vertex $v\in V$ and $i\in \{1,2\}$.
        \item \label{property:bipartite-degree-halved}  $d_i(v)\in \left[\lfloor\frac{d_G(v)}{2}\rfloor,\ \lceil\frac{d_G(v)}{2}\rceil\right]$
        for each vertex $v\in V$ and $i\in \{1,2\}$ \underline{if $G$ is bipartite}.
        \end{enumerate}
    \end{restatable}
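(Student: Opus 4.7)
The plan is to reduce to the classical Euler-tour edge-splitting approach, with two complications to handle carefully: (i) parallel edges, so that $E_1,E_2$ correspond to simple sub-graphs, and (ii) bipartiteness, for the tighter guarantee of (P3).

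As preprocessing, for every pair of parallel edges between $u$ and $v$ in $G$, I would place one copy in $E_1$ and the other in $E_2$. This immediately makes the output sub-graphs simple and contributes $1$ to each of $d_1(v),d_2(v)$ per parallel pair at $v$. Writing $m(v)$ for the number of parallel pairs at $v$, the residual is a simple graph $G'=(V,E')$ with $d_{G'}(v)=d_G(v)-2m(v)$, so any near-half split of $G'$ pulls through to $G$ (since $2m(v)$ is even, it commutes with $\lfloor\cdot/2\rfloor$ and $\lceil\cdot/2\rceil$).

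Next, I would $2$-color $E'$ via the Euler-tour trick. In the general case, augment $G'$ with a single dummy vertex $v^*$ connected to every odd-degree vertex of $G'$, making the augmented graph Eulerian; in the bipartite case, place two dummies $v^*_L, v^*_R$ on opposite sides joined to the odd-degree vertices of the opposite parts (adding an edge $v^*_L v^*_R$ if parities demand), keeping the augmented graph bipartite and Eulerian. I would then run Hierholzer's algorithm in $O(|E'|)$ time to extract one Eulerian circuit per connected component, coloring its edges alternately red/blue along the traversal. Crucially, in every component containing a dummy I would start the circuit at the dummy: this parks any potential wrap-around imbalance (the only mechanism by which an odd-length circuit can create a color mismatch at a vertex) at the dummy itself. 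After deleting the dummy's edges, every original vertex $v$ sees its incident edges split $d_{G'}(v)/2$ apiece (if $d_{G'}(v)$ is even) or $\lfloor d_{G'}(v)/2\rfloor,\lceil d_{G'}(v)/2\rceil$ (if $d_{G'}(v)$ is odd, after one color class loses its unique dummy-incident edge). For components not containing a dummy all $G'$-degrees are already even; an odd-length circuit there gives its start vertex the split $d_{G'}(v)/2+1,d_{G'}(v)/2-1$, still inside the $\pm 1$ window of (P2). In the bipartite case every Eulerian circuit in a bipartite Eulerian graph has even length, so alternation produces exact half-splits at every original vertex, which after combining with the preprocessing yields (P3).

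The main obstacle I anticipate is enforcing (P1) simultaneously with the per-vertex bounds. Each odd-length circuit contributes $\pm 1$ to $|E_1|-|E_2|$, and a naive summation across components may drift arbitrarily far from $0$. The fix is to exploit the freedom in choosing each circuit's starting color: process the components in sequence and, for every odd-length circuit, pick the starting color that nudges the running $|E_1|-|E_2|$ toward $0$. Flipping a circuit's starting color merely swaps the two color labels \emph{within} that circuit, so it leaves every per-vertex bound invariant while flipping the circuit's contribution between $+1$ and $-1$; hence the cumulative imbalance can be kept in $\{-1,0,+1\}$ throughout, giving $|E_1|=\lceil|E|/2\rceil$ and $|E_2|=\lfloor|E|/2\rfloor$. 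The overall running time is $O(|E|)$, dominated by Hierholzer's algorithm together with linear passes for preprocessing parallel edges, augmentation, alternating coloring, and dummy-edge removal.
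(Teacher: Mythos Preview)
Your proof is correct and takes a genuinely different route from the paper's. You augment $G'$ with one or two dummy vertices to make it Eulerian, run Hierholzer, two-color each circuit by alternation (starting at a dummy when one is present), delete the dummy edges, and then balance the global size by choosing each circuit's starting color. The paper instead works directly on $G'$ with no augmentation: it repeatedly extracts \emph{maximal walks}, assigns odd-indexed edges of each walk to the currently smaller of $E_1,E_2$ and even-indexed edges to the larger, and argues that maximality forces every vertex to be an extreme endpoint of at most one walk, so the per-vertex imbalance is at most $1$ except when $v$ starts and ends an odd-length cycle (which cannot happen in bipartite graphs).

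The trade-offs are as follows. The paper's ``smaller/larger'' assignment rule gives (P1) for free, whereas you need an extra pass to orient each circuit's coloring and a separate case analysis (which you sketch but slightly understate: an even-length circuit containing dummies can still contribute $\pm 1$ to $|E_1|-|E_2|$ after dummy-edge deletion when the $v^*_L v^*_R$ edge is present---though the contribution is always in $\{-1,0,+1\}$, so your balancing argument still goes through). Conversely, your Eulerian framing makes (P3) almost automatic---bipartite $\Rightarrow$ even circuits $\Rightarrow$ perfect alternation---whereas the paper must argue that a per-vertex imbalance of $2$ can only arise from an odd cycle. Both arguments handle the parallel-edge preprocessing identically, and both run in $O(|E|)$ time.
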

	\section{Simple Rounding for Bipartite Matchings}\label{sec:sequential}

    In this section we use the binary encoding of $\x$ to approximately round fractional bipartite matchings in a ``linear'' manner, rounding from the least-significant to most-significant bit of the encoding. We first illustrate this approach in a static setting in \Cref{sec:warmup}. This will serve as a warm-up for our first dynamic rounding algorithm provided in \Cref{sec:bip-dynamic}, which is essentially a dynamic variant of the static algorithm (with 
    its $\init$ procedure being essentially the static algorithm).
	
	\subsection{Warm-up: Static Bipartite Rounding} \label{sec:warmup}

In this section, we provide a simple static bipartite rounding algorithm for fractional matchings. 

Specifically, we prove the following \Cref{thm:static_bipartite_round}, analyzing our rounding algorithm, \Cref{alg:rounding_hierarchy}. The algorithm simply considers for all $i$, $E_i := \suppOp_i(x)$, i.e., the edges whose $i$-th bit is set to one in $\x$. Starting from  $F_L=\emptyset$, for $i=L,\dots,1$, the algorithm applies $\dsplit$ to the multigraph $G[F_i\uplus E_i]$ and sets $F_{i-1}$ to be the first edge-set output by $\dsplit$ (by induction, $E_i,F_i$ are simple sets, and so $G[F_i\uplus E_i]$ has maximum multiplicity two.)
Overloading notation slightly, we denote this by $F_{i -1} \gets \dsplit(G[F_i\uplus E_i])$. The algorithm then outputs $E_0\cup F_0$.

\begin{algorithm}
\caption{Hierarchical Fractional Rounding Algorithm}
\label{alg:rounding_hierarchy}
\SetKwInOut{Input}{input}
\SetKwInOut{Output}{output}

\Input{Fractional matching $\x \in \R^E_{\geq 0}$ in graph $G = (V,E)$}
\Input{Accuracy parameter $\epsilon \in (0, 1)$}
\Output{Integral matching $M \subseteq \support{x}$ with $|M| \geq (1 - \epsilon)\cdot \norm{x}$}

\BlankLine
$L \gets 1+\lceil \log_2(\eps^{-1} x_{\min}^{-1})\rceil$ and $F_L \gets \emptyset$\;
\For{$i = L, L - 1, \ldots, 1$}{
    $E_i\gets \suppOp_i(\x)$\;
    $F_{i-1} \gets \dsplit(G[E_i\uplus F_i])$ \tcp*{First set output by $\dsplit$} 
}
\textbf{return} $M \gets E_0\cup F_0$\;
\end{algorithm}

\begin{thm}\label{thm:static_bipartite_round}
On fractional bipartite matching $\x$ and error parameter $\epsilon \in (0, 1)$,~\Cref{alg:rounding_hierarchy} outputs an integral matching $M \subseteq \support{x}$ with $|M| \geq (1 - \epsilon) \cdot \norm{\x}$ in time $O(|\supp{\x}| \cdot \log (\eps^{-1} \x_{\min}^{-1}))$.
\end{thm}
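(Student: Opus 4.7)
The plan is to establish three properties of the returned set $M = E_0 \cup F_0$---containment in $\supp{x}$, being a matching, and having size at least $(1-\eps)\norm{x}$---by analyzing the two recursions that govern the iteration. I would first apply \Cref{obs:xmin-and-high-bits} with $\Delta = x_{\min}$ to assume (losing only a constant factor in $\eps$) that $(x_e)_i = 0$ for every $i > L$, so the bits processed by the loop cover the whole binary expansion of $\x$. Writing $d_i(v)$ and $f_i(v)$ for the degrees of $v$ in $E_i$ and $F_i$, property \ref{property:size-halved} of \dsplit gives the size recursion $|F_{i-1}|\geq (|E_i|+|F_i|)/2$, and the bipartite-specific property \ref{property:bipartite-degree-halved} gives the degree recursion $f_{i-1}(v)\leq \lceil (d_i(v)+f_i(v))/2\rceil$.

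The size bound is the easy half. Unrolling the size recursion from $F_L = \emptyset$ immediately yields $|F_0|\geq \sum_{i=1}^{L}|E_i|\cdot 2^{-i}$. Since $(x_e)_0 = 1$ forces $x_e = 1$, each $e\in E_0$ isolates its endpoints from the rest of $\supp{x}$; hence no edge of $E_0$ ever enters any $E_j$ or $F_j$ for $j\geq 1$, so $E_0\cap F_0 = \emptyset$ and
\[ |M| = |E_0| + |F_0| \;\geq\; \sum_{e}\sum_{i\geq 0}(x_e)_i\cdot 2^{-i} \;=\; \norm{x}, \]
which by \Cref{obs:xmin-and-high-bits} is at least $(1-\eps)$ times the original $\norm{x}$. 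Support containment follows from the straightforward induction $F_i \subseteq \bigcup_{j > i} E_j \subseteq \supp{x}$, together with $E_0\subseteq\supp{x}$.

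The main obstacle---and the key step---is the matching property. For this I would introduce the \emph{weight degree} $w_i(v) := f_i(v)\cdot 2^{-i}$, for which the bipartite degree recursion rescales (using $\lceil z\rceil \leq z + 1$) to $w_{i-1}(v) \leq d_i(v)\cdot 2^{-i} + w_i(v) + 2^{-i}$. Unrolling from $w_L(v)=0$, bounding the accumulated rounding by the geometric sum $\sum_{i=1}^{L}2^{-i} = 1-2^{-L}$, and using $\sum_{i=1}^{L}d_i(v)\cdot 2^{-i}\leq x(v)\leq 1$, I would deduce $f_0(v) = w_0(v) < x(v) + 1 \leq 2$, so $f_0(v)\leq 1$ at every vertex. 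Moreover, at any $v$ incident to an edge of $E_0$, the isolation argument forces $d_i(v)=0$ for all $i\geq 1$, and the same unrolling then gives $f_0(v) < 1$, hence $f_0(v) = 0$. Together these show every vertex has degree at most one in $E_0\cup F_0$, confirming that $M$ is a matching; the critical sharpness here---that the accumulated error is just barely less than one---is why the bipartite property \ref{property:bipartite-degree-halved} (rather than merely \ref{property:degree-halved}) is essential. For the runtime, \Cref{prop:degree-split} makes each iteration take $O(|E_i|+|F_i|)$ time, and telescoping the size recursion to $|F_i|\leq \sum_{j>i}|E_j|\cdot 2^{-(j-i)}+O(L)$ yields total time $O(L\cdot|\supp{x}|) = O(|\supp{x}|\cdot\log(\eps^{-1}x_{\min}^{-1}))$.
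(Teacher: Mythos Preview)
Your argument is correct and reaches the same conclusion as the paper, but the feasibility step takes a genuinely different route. The paper introduces the intermediate vectors $\x^{(i)}$ of \Cref{rounded-frac} and proves by induction that each $\x^{(i)}$ is a fractional matching; the crux is a \emph{parity/divisibility} observation: when $d_{G_i}(v)$ is odd, $x^{(i)}(v)$ is divisible by $2^{-i}$ but not by $2^{1-i}$, so $x^{(i)}(v)\leq 1-2^{-i}$, which absorbs the $+2^{-i}$ overshoot from rounding. By contrast, you bypass these intermediate fractional matchings entirely and bound $f_0(v)$ directly via a geometric series on the accumulated rounding error, together with the observation that any vertex touched by $E_0$ is isolated in all higher levels. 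Your argument is more elementary for the static bipartite case and makes nicely explicit why the error is ``just barely less than one''; the paper's $\x^{(i)}$-based argument, on the other hand, is what carries over cleanly to the dynamic algorithm (\Cref{lem:feasible-fractional}) and to the general-graph analysis (\Cref{sequential-general}), since it tracks feasibility at every level rather than only at level $0$.

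One small imprecision worth flagging: the recursion $w_{i-1}(v)\leq d_i(v)\cdot 2^{-i}+w_i(v)+2^{-i}$ does not follow from the crude bound $\lceil z\rceil\leq z+1$ you cite (that would give an extra $2^{1-i}$ per step and an accumulated error of $2-2^{1-L}$, which is too large). It follows from the sharper fact $\lceil k/2\rceil\leq (k+1)/2$ for \emph{integer} $k$, which is exactly what Property~\ref{property:bipartite-degree-halved} provides. The recursion you wrote is correct; only the justification needs tightening. The runtime paragraph is also fine, though the additive $O(L)$ in your bound on $|F_i|$ can be sharpened to $O(1)$, matching the paper's $|F_i|\leq m+2$.
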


By \Cref{obs:xmin-and-high-bits}, $L=O(\log(\eps^{-1}\cdot n))$, and so \Cref{thm:static_bipartite_round} implies an $O(|\supp{x}|\cdot \log (\eps^{-1}\cdot n))$ runtime for \Cref{alg:rounding_hierarchy}.
    We prove this theorem in several steps. Key to our analysis is the following sequence of vectors (which we will soon show are fractional matchings if $\supp{x}$ is bipartite).

    \begin{Def}
    Letting $F_i(e):=\mathds{1}[e\in F_i]$ and $E_i(e):=\mathds{1}[e\in E_i] = (x_e)_i$, we define a sequence of vectors $\x^{(i)}\in \mathbb{R}^E_{\geq 0}$ for $i=0,1,\dots,L$ as follows.
\begin{align}\label{rounded-frac}
	x^{(i)}_e := F_i(e)\cdot 2^{-i} + \sum_{j = 0}^{i} E_j(e) \cdot 2^{-j}.
	\end{align}
 \end{Def}
	So, $\normOp{\x^{(L)}}\geq (1-\eps)\cdot \norm{\x}$, by definition and \Cref{obs:xmin-and-high-bits}. Moreover, each (copy of) edge $e$ output/discarded by $\dsplit(G[E_i\uplus F_i])$ corresponds to adding/subtracting $2^{-i}$ to/from $x^{(i)}_e$ to obtain $x^{(i-1)}_e$. 
    This allows us to prove the following lower bound on the size of the output.
    \begin{lem}\label{lem:sequential-flow-preserved}
    $\normOp{\x^{(i)}}\geq (1-\eps)\cdot \norm{x}$ for all $i\in \{0,1,\dots,L\}$.
    \end{lem}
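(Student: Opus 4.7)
The plan is to prove the lemma by showing two things: (i) the base case $\normOp{\x^{(L)}} \geq (1-\eps)\norm{x}$ (already noted in the text preceding the statement, via $F_L = \emptyset$ together with \Cref{obs:xmin-and-high-bits}), and (ii) the monotonicity $\normOp{\x^{(i-1)}} \geq \normOp{\x^{(i)}}$ for each $1 \leq i \leq L$. Together these give $\normOp{\x^{(i)}} \geq \normOp{\x^{(L)}} \geq (1-\eps)\norm{x}$ for every $i \in \{0, 1, \dots, L\}$ by backwards induction on $i$, which is the desired conclusion.

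The key computation is a direct one. In the difference $\normOp{\x^{(i)}} - \normOp{\x^{(i-1)}}$, the terms $\sum_{j=0}^{i-1} E_j(e)\cdot 2^{-j}$ appearing in both $\x^{(i)}$ and $\x^{(i-1)}$ cancel, leaving
\[
\normOp{\x^{(i)}} - \normOp{\x^{(i-1)}} \;=\; 2^{-i}\,|E_i| \;+\; 2^{-i}\,|F_i| \;-\; 2^{-(i-1)}\,|F_{i-1}| \;=\; 2^{-i}\bigl(|E_i| + |F_i| - 2|F_{i-1}|\bigr).
\]
Now I will invoke Property~\ref{property:size-halved} of $\dsplit$ applied to the multigraph $G[E_i \uplus F_i]$, which has exactly $|E_i| + |F_i|$ edges. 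It guarantees that the first output set satisfies $|F_{i-1}| = \lceil (|E_i| + |F_i|)/2 \rceil \geq (|E_i| + |F_i|)/2$, so $2|F_{i-1}| \geq |E_i| + |F_i|$, making the above difference non-positive. This yields the desired monotonicity.

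I anticipate no substantial obstacle: the only ingredient beyond unpacking the definition of $\x^{(i)}$ is the size-balance Property~\ref{property:size-halved} of $\dsplit$, which is precisely tailored to give $2|F_{i-1}| \geq |E_i| + |F_i|$. Note that we did not even need Property~\ref{property:degree-halved} or \ref{property:bipartite-degree-halved}, nor the bipartiteness of $G$; these will instead be used when showing that each $\x^{(i)}$ is in fact a fractional matching (so that the final $\x^{(0)} = E_0 \cup F_0$ is an integral matching). Thus the norm-preservation argument is genuinely a simple telescoping/monotonicity calculation driven by the halving identity of $\dsplit$.
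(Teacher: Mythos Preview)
Your proposal is correct and takes essentially the same approach as the paper: both compute the difference $\normOp{\x^{(i-1)}}-\normOp{\x^{(i)}}$ directly from the definition, invoke Property~\ref{property:size-halved} to obtain $|F_{i-1}|\geq \tfrac{1}{2}(|E_i|+|F_i|)$ and hence monotonicity, and then chain back to the base case $\normOp{\x^{(L)}}\geq(1-\eps)\norm{x}$ supplied by \Cref{obs:xmin-and-high-bits}.
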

\begin{proof}
    By Property \ref{property:size-halved} we have that $|F_{i-1}|\geq \lceil \frac{1}{2} (|F_i|+|E_{i}|)\rceil$ and so \begin{align*}
        \normOp{\x^{(i-1)}} & = \normOp{\x^{(i)}} + 2^{1-i}\cdot \sum_e F_{i-1}(e) - \sum_e 2^{-i}\cdot (F_i(e)+E_i(e))\geq \normOp{\x^{(i)}}.
    \end{align*}
    Therefore, repeatedly invoking the above bound and appealing to \Cref{obs:xmin-and-high-bits}, we have that indeed, for all $i\in \{0,1,\dots,L\}$,
    \begin{align*}
    \normOp{\x^{(i)}} & \geq \normOp{\x^{(i+1)}} \geq \dots \geq \normOp{\x^{(L)}} \geq (1-\eps)\cdot \norm{x}. \qedhere
	\end{align*}
\end{proof}
	A simple proof by induction shows that if $\supp{x}$ is bipartite, then the above procedure preserves all vertices' fractional degree constraints, i.e., the vectors $\x^{(i)}$ are all fractional matchings.
	\begin{lem}\label{lem:static-bip-feasible}
	If $\x$ is a fractional bipartite matching then $x^{(i)}(v) \leq 1$ for every vertex $v\in V$ and $i\in \{0,1,\dots,L\}$.
	\end{lem}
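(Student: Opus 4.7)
The plan is to prove the lemma by downward induction on $i$, starting at $i = L$ and descending to $i = 0$. For the base case $i = L$, I would use that $F_L = \emptyset$ together with Observation~\ref{obs:xmin-and-high-bits} (which lets us assume $(x_e)_i = 0$ for $i > L$) to conclude that $x^{(L)}_e = \sum_{j=0}^{L} (x_e)_j \cdot 2^{-j} = x_e$, hence $x^{(L)}(v) = x(v) \leq 1$ since $\x$ is a fractional matching.

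For the inductive step, assume $x^{(i)}(v) \le 1$ for all $v$ and aim to show $x^{(i-1)}(v)\le 1$. Writing $d_v := |F_i(v)|+|E_i(v)|$ for the degree of $v$ in the multigraph $G[F_i \uplus E_i]$, a direct computation using the definition~(\ref{rounded-frac}) yields
\[
x^{(i-1)}(v) - x^{(i)}(v) \;=\; 2^{-(i-1)}|F_{i-1}(v)| - 2^{-i} d_v \;=\; 2^{-i}\bigl(2|F_{i-1}(v)| - d_v\bigr).
\]
Invoking the \emph{bipartite} degree guarantee~\ref{property:bipartite-degree-halved} of $\dsplit$, we have $|F_{i-1}(v)| \le \lceil d_v/2\rceil$, so $2|F_{i-1}(v)| - d_v \le 1$, which gives $x^{(i-1)}(v) \le x^{(i)}(v) + 2^{-i} \le 1 + 2^{-i}$.

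The last quantitative bound is not quite $\le 1$, so the final step is an integrality argument: every term contributing to $x^{(i-1)}(v)$, namely $2^{-(i-1)}|F_{i-1}(v)|$ and each $2^{-j}|E_j(v)|$ for $0 \le j \le i-1$, is a non-negative integer multiple of $2^{-(i-1)}$. Hence $x^{(i-1)}(v) = k \cdot 2^{-(i-1)}$ for some integer $k \ge 0$. Combining $k \cdot 2^{-(i-1)} \le 1 + 2^{-i}$ with integrality of $k$ forces $k \le 2^{i-1}$, and therefore $x^{(i-1)}(v) \le 1$, completing the induction.

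The only nontrivial point is the tightening from the weak estimate $1 + 2^{-i}$ to the exact bound $1$, which crucially uses both the \emph{bipartite} half of Proposition~\ref{prop:degree-split} (the sharper ceiling/floor bound rather than the $\pm 1$ general bound) and the integrality of $2^{i-1}\cdot x^{(i-1)}(v)$; in the non-bipartite case one would pick up an extra additive slack that integrality can no longer absorb, which is consistent with the paper's earlier remark that rounding in general graphs cannot be done without loss.
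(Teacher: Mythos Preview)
Your proof is correct and follows essentially the same approach as the paper: downward induction on $i$, the bipartite bound \ref{property:bipartite-degree-halved} of $\dsplit$, and a divisibility argument to close the remaining $2^{-i}$ gap. The only cosmetic difference is that the paper splits on the parity of $d_{G_i}(v)$ and uses that $x^{(i)}(v)$, being a multiple of $2^{-i}$ but (in the odd case) not of $2^{-i+1}$, satisfies $x^{(i)}(v)\le 1-2^{-i}$; you instead argue directly that $x^{(i-1)}(v)$ is an integer multiple of $2^{-(i-1)}$ and round down from $1+2^{-i}$ --- these are two phrasings of the same integrality observation.
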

\begin{proof}
	By reverse induction on $i\leq L$. The base case holds since $\x^{(L)}\leq \x$ is a fractional matching.
	To prove the inductive step for $i-1$ assuming the inductive hypothesis $x^{(i)}(v)\leq 1$,
	let $d_{G_i}(v)=\sum_{e\in v} \left(E_i(e)+F_i(e)\right)$
	be the number of (possibly parallel) edges incident to $v$ in $G_i:=G[E_i\uplus F_i]$.
	By Property \ref{property:bipartite-degree-halved}, 
    we have the following upper bound on $v$'s fractional degree under $\x^{(i-1)}$.
	\begin{align}\label{eqn:bip-static-new-degrees}
	x^{(i-1)}(v) \leq x^{(i)}(v) - d_{G_i}(v)\cdot 2^{-i} + \left\lceil \frac{d_{G_i}(v)}{2}\right\rceil \cdot 2^{-i+1}.
	\end{align}
	If $d_{G_i}(v)$ is even, then we are done, by the inductive hypothesis giving $x^{(i-1)}(v)\leq x^{(i)}(v) \leq 1$. Suppose therefore that $d_{G_i}(v)$ is odd.
	By \Cref{rounded-frac}, any value $x^{(i)}_e$ is evenly divisible by $2^{-i}$ and therefore the same holds for $x^{(i)}(v)$. By the same token, $d_{G_i}(v)$ is odd if and only if $x^{(i)}(v)$ is not evenly divisible by $2^{-i+1}$. However, since $x^{(i)}(v)$ is evenly divisible by $2^{-i}$ and it is at most one, this implies that $x^{(i)}(v)\leq 1-2^{-i}$. Combined with \Cref{eqn:bip-static-new-degrees}, we obtain the desired inequality when $d_{G_i}(v)$ is odd as well, since \begin{align*}
    x^{(i-1)}(v) & \leq x^{(i)}(v) + 2^{-i} \leq  1-2^{-i}+2^{-i} = 1.\qedhere
    \end{align*}
\end{proof}
	
	Now, since the vector $\x^{(0)}$ is integral, the preceding lemmas imply that if $\x$ is a bipartite fractional matching then $M$ is a large \emph{integral} matching.
    \begin{lem}
        If $\x$ is a fractional bipartite  matching then $M=E_0\cup F_0\subseteq \supp{x}$ is an integral matching of cardinality at least $|M|=|E_0|+|F_0|\geq (1-\eps)\cdot \norm{x}$.
    \end{lem}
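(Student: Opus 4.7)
The plan is to combine the two preceding lemmas applied at level $i = 0$. By the definition in \Cref{rounded-frac} instantiated at $i = 0$, we have $x^{(0)}_e = F_0(e) + E_0(e)$, which is integer-valued and a priori lies in $\{0,1,2\}$. I would first argue that in fact $x^{(0)}_e \in \{0,1\}$, i.e., that $E_0 \cap F_0 = \emptyset$. This follows from \Cref{lem:static-bip-feasible} applied with $i = 0$: if some edge $e$ were in both $E_0$ and $F_0$, then $x^{(0)}_e = 2$, and so $x^{(0)}(v) \geq 2 > 1$ at either endpoint of $e$, contradicting the lemma. Hence $M = E_0 \cup F_0$ is the (disjoint) union, so $|M| = |E_0| + |F_0| = \normOp{\x^{(0)}}$, and $\x^{(0)}$ is the indicator vector of $M$.

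Next, since $\x^{(0)}$ is a $\{0,1\}$-valued vector satisfying the fractional degree constraints $x^{(0)}(v)\le 1$ (again by \Cref{lem:static-bip-feasible}), $M$ is an integral matching. Applying \Cref{lem:sequential-flow-preserved} at $i = 0$ then gives the desired size bound
\[
|M| \;=\; \normOp{\x^{(0)}} \;\geq\; (1-\eps)\cdot \norm{x}.
\]

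It remains to verify the support containment $M \subseteq \supp{x}$. By construction $E_0 = \suppOp_0(\x) \subseteq \supp{x}$, and a simple backwards induction on $i$ shows that $F_i \subseteq \bigcup_{j > i} E_j \subseteq \supp{x}$: the base case $F_L = \emptyset$ is trivial, and since $F_{i-1}$ is output by $\dsplit$ on $G[E_i \uplus F_i]$, we have $F_{i-1} \subseteq E_i \cup F_i$, closing the induction. In particular $F_0 \subseteq \supp{x}$, and combining with $E_0 \subseteq \supp{x}$ yields $M \subseteq \supp{x}$.

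The only non-routine step is the disjointness $E_0 \cap F_0 = \emptyset$, which is where bipartiteness of $\supp{x}$ is used (via \Cref{lem:static-bip-feasible}); the rest is bookkeeping. Note also that the claim would fail in general graphs, since in a triangle with $x_e = 1/2$ one could end up with $x^{(0)}_e = 2$ for some edge, which is why the whole argument relies on the bipartite feasibility lemma.
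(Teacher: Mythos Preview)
Your proof is correct and follows essentially the same approach as the paper's: both combine \Cref{lem:static-bip-feasible} (feasibility of $\x^{(0)}$) with \Cref{lem:sequential-flow-preserved} (size lower bound) and the inductive containment $F_{i-1}\subseteq E_i\cup F_i\subseteq\supp{x}$. The only cosmetic difference is that you make the disjointness $E_0\cap F_0=\emptyset$ explicit (deducing it from $x^{(0)}(v)\le 1$), whereas the paper simply asserts that the binary vector $\x^{(0)}$ is the characteristic vector of $M$; this is the same argument unpacked.
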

\begin{proof}
        By \Cref{lem:static-bip-feasible}, the (binary) vector $\x^{(0)}$ (the characteristic vector of $M$) is a feasible fractional matching, and so $M$ is indeed a matching.
        That $M\subseteq \supp{x}$ follows from $\dsplit$ outputting a sub(multi)set of the edges of its input, and therefore a simple proof by induction proves that $\suppOp(\x) \supseteq  \suppOp(\x^{(L)}) \supseteq \suppOp(\x^{(L-1)}) \supseteq \dots  \supseteq \suppOp(\x^{(0)}) = M$. The lower bound on $|M|=\normOp{x^{(0)}}$ then follows from \Cref{lem:sequential-flow-preserved}.
\end{proof}

    Finally, we bound the algorithm's running time.
    \begin{lem}
        \Cref{alg:rounding_hierarchy} takes time $O(|\supp{x}|\cdot L)$ when run on vector $\x\in \mathbb{R}^{E}_{\geq 0}$.
    \end{lem}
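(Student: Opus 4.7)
The plan is straightforward: bound the cost of each of the $L$ iterations by $O(|\supp{x}|)$ and then multiply. Before the main loop, I would precompute the sets $E_i = \suppOp_i(\x)$ for $i=0,1,\dots,L$ together in $O(L\cdot|\supp{x}|)$ total time: for each edge $e\in\supp{x}$, scan the $O(L)$ bits of its binary encoding (we may truncate at bit $L$ by \Cref{obs:xmin-and-high-bits}) and append $e$ to $E_i$ whenever $(x_e)_i=1$. This also lets us charge the cost of assembling the multigraph $G[E_i\uplus F_i]$ to $O(|E_i|+|F_i|)$ in iteration $i$.

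For the main loop, the per-iteration work is dominated by the call to $\dsplit(G[E_i\uplus F_i])$, which by \Cref{prop:degree-split} runs in time $O(|E_i|+|F_i|)$. So it suffices to show $|E_i|,|F_i|\leq|\supp{x}|$ for all $i$. The bound $|E_i|\leq|\supp{x}|$ is immediate from $E_i\subseteq\supp{x}$. For $|F_i|\leq|\supp{x}|$, I would invoke the chain of containments
\[
\supp{x}\;\supseteq\;\suppOp(\x^{(L)})\;\supseteq\;\suppOp(\x^{(L-1)})\;\supseteq\;\dots\;\supseteq\;\suppOp(\x^{(0)})
\]
already established in the preceding lemma (an induction noting that every edge in $F_i\cup E_i$ appearing in the output of $\dsplit$ gets mass $2^{-i+1}$ in $\x^{(i-1)}$, so $F_{i-1}\subseteq\suppOp(\x^{(i-1)})$, while $F_L=\emptyset\subseteq\suppOp(\x^{(L)})$ trivially). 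Since $F_i\subseteq\suppOp(\x^{(i)})\subseteq\supp{x}$, we get $|F_i|\leq|\supp{x}|$ as required.

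Putting these together, the total running time is
\[
O(L\cdot|\supp{x}|)\;+\;\sum_{i=1}^{L} O(|E_i|+|F_i|)\;=\;O(L\cdot|\supp{x}|),
\]
as claimed. There isn't really a hard step here: the only place requiring any care is verifying $F_i\subseteq\supp{x}$, and this is already part of the support-nesting argument given in the previous lemma, so the running-time proof can simply invoke it rather than re-derive it. The final step is to output $M=E_0\cup F_0$, which takes time $O(|E_0|+|F_0|)=O(|\supp{x}|)$ and is absorbed into the bound.
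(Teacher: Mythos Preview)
Your proof is correct. The only point of difference from the paper is in how you bound $|F_i|$. You invoke the support-nesting chain $\suppOp(\x)\supseteq\suppOp(\x^{(L)})\supseteq\cdots\supseteq\suppOp(\x^{(0)})$ established in the preceding lemma, deducing $F_i\subseteq\suppOp(\x^{(i)})\subseteq\supp{x}$ and hence $|F_i|\leq|\supp{x}|$. The paper instead argues directly from Property~\ref{property:size-halved} of $\dsplit$: since $|F_{i-1}|\leq\tfrac12|F_i|+\tfrac12|E_i|+1$ and $|E_i|\leq m:=|\supp{x}|$, an induction from $|F_L|=0$ gives $|F_i|\leq m+2$ for all $i$. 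Your route is arguably cleaner here because it reuses work already done; the paper's recursion-based bound is self-contained and does not rely on the preceding lemma, but both arrive at the same $O(mL)$ total.
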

\begin{proof}
    To analyze the runtime of the algorithm, note that it runs in time $O(L + \sum_{i = 0}^{L} (|F_i| + |E_i|))$. Further, $|F_L| = 0$ and
    by Property \ref{property:size-halved} we have that $|F_i| \leq \frac{1}{2} |F_{i + 1}| + \frac{1}{2}|E_{i + 1}| + 1$ for all $i \in \{0,1,\dots,L-1\}$. Letting $m \defeq |\support{x}|$ we know that $|E_i| \leq m$ for all $i$, and so by induction
    \[
    |F_i| \leq \frac{1}{2}|F_{i+1}|+\frac{1}{2}m + 1
    \leq m + 2
    \qquad \text{ for all }
    j \in \{0, 1, ... , L - 1\}.
    \]
    Thus, the algorithm runs in the desired time of $O(m L + L) = O(m L)$.
\end{proof}
    \Cref{thm:static_bipartite_round} follows by combining the two preceding lemmas. 

    We now turn to the dynamic counterpart of \Cref{alg:rounding_hierarchy}.
	
	\subsection{A Simple Dynamic Bipartite Rounding Algorithm}\label{sec:bip-dynamic}

    In this section we dynamize the preceding warm-up static algorithm, obtaining the following result.

    \begin{thm}\label{thm:sequential-algo}
    \Cref{alg:dynamic_rounding_hierarchy} is a deterministic dynamic bipartite matching rounding algorithm. Under the promise that the dynamic input vector $\x$ satisfies $\x_{\min}\geq \delta$ throughout, its amortized $\update$ time is $O(\eps^{-1} \cdot \log^2(\eps^{-1}\delta^{-1}))$ and  its $\init$ time on vector $\x$ is $O(|\supp{x}|\cdot\log(\eps^{-1} \delta^{-1}))$.
    \end{thm}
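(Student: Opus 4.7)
The plan is to dynamize \Cref{alg:rounding_hierarchy} by lazily maintaining the hierarchy $\{E_i, F_i\}_{i=0}^L$ from the static warm-up and buffering pending bit-changes at each level until enough weight has accumulated to justify a partial rebuild. Setting $L := 1 + \lceil \log_2(\eps^{-1}\delta^{-1})\rceil$, the $\init$ procedure simply runs \Cref{alg:rounding_hierarchy}, costing $O(|\supp{x}|\cdot L)$ by \Cref{thm:static_bipartite_round}. In addition to the current sets $E_i := \suppOp_i(\x)$, the derived sets $F_i$, and the output matching $M = E_0\cup F_0$, I maintain per-level weight-counters $C_i$ tracking the mass of bit changes at level $i$ since $F_{i-1}$ was last refreshed. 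On $\update(e,\nu)$, I compute the $O(L)$ bit positions where the binary expansions of the old $x_e$ and the new $\nu$ differ, flip the corresponding indicators in the $E_i$'s, and add $2^{-i}$ to $C_i$ per flip; when $\nu = 0$ I additionally splice $e$ out of every $F_j$ and out of $M$. Whenever some $C_k$ exceeds the threshold $\tau := \eps \normOp{\x}/(cL)$ for a small constant $c$, I trigger a \emph{level-$k$ rebuild}: iterate $j = k, k-1, \ldots, 1$ and recompute $F_{j-1}\gets\dsplit(G[E_j\uplus F_j])$ as in \Cref{alg:rounding_hierarchy}, then reset $C_0,\ldots,C_k$ and $M\gets E_0\cup F_0$. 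A global reinitialization is performed whenever $\normOp{\x}$ doubles or halves.

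For correctness, I would argue that at every moment the stored $F_j$'s satisfy the recurrence $F_{j-1}=\dsplit(G[\tilde E_j\uplus F_j])$ for an implicit snapshot $\tilde \x$ whose low bits agree with the current $\x$ and whose high bits are an earlier committed version. Provided $\tilde\x$ is a valid bipartite fractional matching, \Cref{lem:static-bip-feasible,lem:sequential-flow-preserved} apply verbatim and yield that $M$ is an integral matching with $|M|\geq(1-\eps/c)\normOp{\tilde\x}$ and $M\subseteq\suppOp(\tilde\x)$. The eager cleanup when $\nu=0$ ensures $M\subseteq\supp{x}$ at all times, at a cost of at most one matching edge per deletion, absorbed by the corresponding drop in $\normOp{\x}$. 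The triggering rule enforces $\|\tilde\x-\x\|_1\leq\sum_i C_i\leq L\tau=\eps\normOp{\x}/c$, so $|M|\geq(1-\eps/c)(\normOp{\x}-\eps\normOp{\x}/c)\geq(1-\eps)\normOp{\x}$ for an appropriately small constant $c$.

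For the running time, I would first establish the uniform size bound $|E_j|,|F_j|=O(\normOp{\x}\cdot 2^j)$: the $|E_j|$ bound follows from $|E_j|\cdot 2^{-j}\leq\normOp{\x}$, and the $|F_j|$ bound from $|F_j|\cdot 2^{-j}\leq\normOp{\x^{(j)}}$ together with a telescoping refinement of \Cref{lem:sequential-flow-preserved} showing $\normOp{\x^{(j)}}\leq\normOp{\x}+2^{-j}$. A level-$k$ rebuild then costs $O\bigl(\sum_{j<k}(|E_j|+|F_j|)\bigr)=O(\normOp{\x}\cdot 2^k)$, and two successive level-$k$ rebuilds are separated by at least $\tau/2^{-k}=\Omega(\eps\normOp{\x}\cdot 2^k/L)$ bit flips at that level, giving an amortized $O(L/\eps)$ cost per level-$k$ bit flip. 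Since each $\update$ triggers $O(L)$ bit flips, the amortized update time is $O(L^2/\eps)=O(\eps^{-1}\log^2(\eps^{-1}\delta^{-1}))$; the doubling reinitializations are separated by $\Omega(\normOp{\x})$ updates and contribute only a lower-order term.

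The main technical obstacle lies in the correctness step: verifying that the implicit snapshot $\tilde\x$---a temporal mixture of recently-updated low-bit data and older high-bit data---remains a valid bipartite fractional matching throughout, so that the inductive step of \Cref{lem:static-bip-feasible} applies at every vertex. I would handle this by coupling each trigger with a rebuild starting from the highest currently dirty level $k^* := \max\{k : C_k > 0\}$, guaranteeing that $\tilde\x$ at every moment is the restriction of $\x$ at a genuine past time to each level. A refined potential-function amortization, with potential $\Phi = (L/\eps)\sum_k 2^k C_k$, then absorbs the occasionally larger rebuild cost without affecting the asymptotic $O(L^2/\eps)$ per-update bound.
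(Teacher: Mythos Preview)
There is a genuine gap in your approximation guarantee. Your counters $C_i$ accumulate \emph{weight} ($2^{-i}$ per bit-$i$ flip) against a uniform threshold $\tau=\eps\norm{\x}/(cL)$, but the damage one deletion does to $M$ is an \emph{integral} edge. Take the extreme where every edge has value $2^{-L}\approx\delta$: after $\init$ you have $M=F_0$ of size $\Theta(\norm{\x})$, and only $C_L$ ever moves. An adversary zeroing (and optionally restoring) the edges of $F_0$ one by one splices each from $M$, yet $C_L$ grows by only $2^{-L}$ per operation; a rebuild fires only after $\tau\cdot 2^L=\Theta(\norm{\x}/(L\delta))$ operations, which for $\delta\ll 1/L$ vastly exceeds $|F_0|$. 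Thus $M$ can be emptied while $\norm{\x}$ has dropped by only $|F_0|\cdot\delta=O(\delta\,\norm{\x})$. Your assertion that each deletion is ``absorbed by the corresponding drop in $\norm{\x}$'' is precisely what fails: the drop is $\delta$, not $1$.

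The paper's \Cref{alg:dynamic_rounding_hierarchy} avoids this by making the counters \emph{count updates}, not weight: every $\update$ increments each $c_i$ by one, with level-$i$ threshold $2^{i-2}\eps\norm{\x}/L$. Crucially, on every update it removes not just $e$ but also one edge incident to \emph{each endpoint} of $e$ from every $F_{i-1}$. This lets feasibility go through as a direct induction bounding $F_i(v)\le\bigl\lceil 2^i\sum_{j>i}S_j(v)\,2^{-j}\bigr\rceil$ (\Cref{lem:feasible-fractional}), with no appeal to a snapshot $\tilde\x$, and simultaneously caps the per-level loss in $\normOp{\x^{(i-1)}}$ at $\eps\norm{\x}/L$ (\Cref{lem:sequential-size}). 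Your snapshot route also has trouble on its own terms: once you splice deleted edges from the $F_j$'s the recurrence $F_{j-1}=\dsplit(G[\tilde E_j\uplus F_j])$ no longer holds, so the static lemmas do not apply verbatim; and your proposed fix of rebuilding from the highest dirty level $k^\ast$ breaks the amortization---a single bit-$k^\ast$ flip followed by enough bit-$k$ flips for small $k$ to trigger forces a cost-$\Theta(\norm{\x}\cdot 2^{k^\ast})$ rebuild that your potential $\Phi=(L/\eps)\sum_k 2^k C_k$ can pay only $\Theta(\norm{\x}\cdot 2^k)$ for.
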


    Since $\delta\geq \eps/n^2$ by \Cref{obs:xmin-and-high-bits}, \Cref{thm:sequential-algo} yields an $\tilde{O}(\eps^{-1}\cdot \log^2n)$ update time algorithm.

    Our dynamic algorithm follows the preceding static approach. For example, its initialization is precisely the static \Cref{alg:rounding_hierarchy} (and so the $\init$ time follows from \Cref{thm:static_bipartite_round}). In particular, the algorithm considers a sequence of graphs $G_i:=G[E_i\uplus F_i]$\ and fractional matchings $\x^{(i)}$ defined by $G_i$ and the $i$ most significant bits of $x_e$, as in \Cref{rounded-frac}.
	However, to allow for low (amortized) update time we allow for a small number of unprocessed changed or deleted edges for each $i$, denoted by $c_i$. 
    When such a number $c_i$ becomes large, we \emph{rebuild} the solution defined by $F_i$ and $\suppOp_i(\x),\dots,\suppOp_0(\x)$ as in the static algorithm.
    Formally, 
    our algorithm is given in \Cref{alg:dynamic_rounding_hierarchy}.

\begin{algorithm}[t]
\caption{Hierarchical Dynamic Fractional Rounding Algorithm}
\label{alg:dynamic_rounding_hierarchy}
\SetKwInOut{State}{global}
\SetKwInOut{Return}{return}
\SetKwProg{Fn}{function}{}{}

\State{Current vector $\x$}
\State{Current output integral matching  $M$}
\State{Accuracy $\epsilon$ and maximum layer $L \in \mathbb{Z}_{> 0}$}
\State{Partial roundings $F_0,F_1, \ldots, F_{L} \subseteq E$ and update counts $c_0, c_1, \ldots, c_L \in  \mathbb{Z}_{\geq 0}$
}

\BlankLine
\tcp{In $\init$ we assume that the algorithm knows $\delta$, a lower bound on $\x_{\min}$ for all nonzero $\x$ encountered after an operation}
\Fn{$\init(G = (V,E),\; \x \in \R^E_{\geq 0}, \; \epsilon \in (0,1))$}{
    Save $\x$ and $\epsilon$ as global variables\;
    Initialize $L \gets 1+\lceil \log_2(\eps^{-1} \delta^{-1}) \rceil$, $c_i \gets 0$, and $F_i \gets \emptyset$, for all $i \in \{0,1,\ldots,L\}$\;
    Call $\rebuild(L)$\;
}

\BlankLine

\Fn{$\rebuild(i)$}{
    \For{$j = i, i - 1, \ldots, 0$}{
        $E_j\gets \suppOp_j(\x)$ and $c_j\gets 0$\;
        \lIf{$j \neq 0$}{$F_{j-1} \gets \dsplit(G[E_j\uplus F_j])$}
    }
    $M \gets E_0\cup F_0$\;
}

\BlankLine

\Fn{$\update(e,\nu)$}{
    $\x_e \gets \nu$\;
    \For{$i = L, L - 1,...,0$}{
        \lIf{$e \in E_i$}{remove $e$ from $E_i$}
       
        \If{$i \neq 0$}{
            \lIf{$e \in F_{i-1}$}{remove $e$ from $F_{i-1}$}
            \lElse{remove one edge adjacent to each endpoint of $e$ from $F_{i-1}$ (if there is one)}
        }
        $c_i\gets c_i+1$\;
        \lIf{$c_i > 2^{i-2} \cdot \frac{\eps\norm{x}}{L}$}{call $\rebuild(i)$ and \textbf{return}} 
    }
}

\end{algorithm}

    \medskip\noindent\textbf{Conventions and notation.}
    Most of our lemmas concerning \Cref{alg:dynamic_rounding_hierarchy} hold for arbitrary non-negative vectors $\x\in \mathbb{R}^{E}_{\geq 0}$, a fact that will prove useful in later sections.
    We state explicitly which lemmas hold if $\x$ is a fractional bipartite matching.
    In the analysis of \Cref{alg:dynamic_rounding_hierarchy} we let $\x^{(i)}$ be as defined in \Cref{rounded-frac}, but with $E_i$ and $F_i$ of the dynamic algorithm. 
    Furthermore, we prove all structural properties of \Cref{alg:dynamic_rounding_hierarchy} for any time after $\init$ and any number of $\update$ operations, and so we avoid stating this in all these lemmas' statements for brevity. Next, we use the shorthand $S_i:=\suppOp_i(\x)$, and note that unlike in the static algorithm, due to deletions from $E_i$ before the next $\rebuild(i)$, the containment $E_i\subseteq S_i$ may be strict.
    \medskip

    First, we prove that $M$ is a matching if $\x$  is a bipartite fractional matching. More generally, we prove that each $\x^{(i)}$, and in particular $\x^{(0)}$, is a fractional matching, implying the above.
    
    \begin{lem}\label{lem:feasible-fractional}
    If $\x$ is a fractional bipartite matching, then $\x^{(i)}$ is a fractional matching for all $i\in \{0,1,\dots, L\}$.
    \end{lem}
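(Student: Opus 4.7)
The plan is to establish an auxiliary ``degree-halving'' invariant maintained by the algorithm, from which the lemma will follow by a short reverse induction combined with a divisibility argument.

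\medskip
\noindent\textbf{Step 1 (Auxiliary invariant).} I would first prove, by induction on the number of operations, that at the end of $\init$ and of every $\update$ call, the following holds for each $v \in V$ and each $i \in \{0,1,\ldots,L-1\}$:
\[
(\star)\qquad d_{F_i}(v) \;\leq\; \left\lceil \frac{d_{E_{i+1}}(v) + d_{F_{i+1}}(v)}{2}\right\rceil.
\]
Right after $\init$, which invokes $\rebuild(L)$ (i.e., runs the static \Cref{alg:rounding_hierarchy}), $(\star)$ holds by the bipartite property \ref{property:bipartite-degree-halved} of $\dsplit$. For a subsequent $\update(e,\nu)$ call, I would handle each layer $i$ separately. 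If the call triggers $\rebuild(i^*)$ with $i^* \geq i+1$, then $F_i$ is freshly recomputed by $\dsplit(G[E_{i+1}\uplus F_{i+1}])$ using the then-current $E_{i+1}, F_{i+1}$, so $(\star)$ follows directly from property \ref{property:bipartite-degree-halved}. Otherwise $F_i$, $E_{i+1}$, and $F_{i+1}$ are all untouched by any rebuild, but both iterations $i+2$ and $i+1$ of the update loop execute (the boundary case $i = L-1$ is covered by the convention $F_L \equiv \emptyset$, which makes iteration $i+2$ vacuous).

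\medskip
\noindent\textbf{Technical crux.} The core case analysis shows that iterations $i+2$ and $i+1$ of the update loop together preserve $(\star)$. Writing $a, b, f$ for the pre-update values $d_{E_{i+1}}(v), d_{F_{i+1}}(v), d_{F_i}(v)$, iteration $i+2$ replaces $b$ by $b' = b - \mathbf{1}[v\in e,\, b\geq 1]$, and iteration $i+1$ then replaces $a$ by $a' = a - \mathbf{1}[v\in e,\, e\in E_{i+1}]$ and $f$ by $f'' = f - \mathbf{1}[v\in e,\, f\geq 1]$. Crucially, the ``remove one edge adjacent to each endpoint of $e$'' rule in the update code guarantees $f$ drops by one whenever $v\in e$ and $f\geq 1$, regardless of whether $e\in F_i$. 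A short case analysis on whether $v\in e$, coupled with the pre-update bound $f \leq \lceil (a+b)/2\rceil$ and the elementary inequality $\lceil (a+b)/2\rceil - \lceil (a+b-\delta_a-\delta_b)/2\rceil \leq 1$, then yields $f'' \leq \lceil (a'+b')/2\rceil$.

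\medskip
\noindent\textbf{Step 2 (Deducing the lemma).} Granting $(\star)$, I would prove $\x^{(i)}(v) \leq 1$ by reverse induction on $i\in\{0,1,\ldots,L\}$. The base case $i=L$ is immediate: since $F_L = \emptyset$ throughout and $E_j \subseteq \suppOp_j(\x)$ at all times (a fact I would note separately), \Cref{obs:xmin-and-high-bits} gives $\x^{(L)}(v) \leq \sum_{j=0}^L d_{\suppOp_j(\x)}(v)\cdot 2^{-j} = x(v) \leq 1$. For the inductive step, invoking $(\star)$ together with the definition of $\x^{(i+1)}(v)$ yields
\[
\x^{(i)}(v) \;\leq\; \bigl(d_{E_{i+1}}(v) + d_{F_{i+1}}(v) + 1\bigr)\cdot 2^{-(i+1)} + \sum_{j=0}^{i} d_{E_j}(v)\cdot 2^{-j} \;=\; \x^{(i+1)}(v) + 2^{-(i+1)} \;\leq\; 1 + 2^{-(i+1)}.
\]
Since each summand in the definition of $x^{(i)}_e$ is an integer multiple of $2^{-i}$, so is $\x^{(i)}(v)$; the largest such multiple not exceeding $1 + 2^{-(i+1)} < 1 + 2^{-i}$ is $1$, hence $\x^{(i)}(v) \leq 1$.

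\medskip
\noindent\textbf{Main obstacle.} The bulk of the work is the case analysis in the technical crux of Step 1, where one must verify that the self-correcting removal of an adjacent edge from $F_i$ by the update code exactly compensates for the earlier decrease of $d_{F_{i+1}}(v)$ caused by iteration $i+2$. The remaining pieces (the reverse induction and the divisibility argument in Step 2) are short and essentially routine once $(\star)$ is in hand.
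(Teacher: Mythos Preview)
Your proposed invariant $(\star)$ is not actually preserved by the $\update$ procedure, and the gap is in your formula for $b'$. You write $b' = b - \mathbf{1}[v\in e,\, b\geq 1]$, implicitly assuming that $d_{F_{i+1}}(v)$ can only drop when $v$ is an endpoint of the updated edge $e$. But look again at the line ``remove one edge adjacent to each endpoint of $e$ from $F_{i-1}$'': when this fires in iteration $i+2$, the edge removed for an endpoint $u\in e$ may be some $(u,v)\in F_{i+1}$ with $v\notin e$. Then $d_{F_{i+1}}(v)$ drops by one even though $v\notin e$. Meanwhile, iteration $i+1$ removes from $F_i$ only edges adjacent to the endpoints of $e$, and there is no reason the edge chosen for $u$ there should also be incident to $v$. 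Concretely, if before the update $d_{E_{i+1}}(v)=1$, $d_{F_{i+1}}(v)=2$, and $d_{F_i}(v)=2$ (all allowed by Property~\ref{property:bipartite-degree-halved}), then after such an update you can have $d_{E_{i+1}}(v)=1$, $d_{F_{i+1}}(v)=1$, $d_{F_i}(v)=2$, and $(\star)$ fails since $2 > \lceil 2/2\rceil = 1$.

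The paper's proof sidesteps exactly this issue by tying $F_i(v)$ not to $d_{E_{i+1}}(v)+d_{F_{i+1}}(v)$ but to the current bit-degrees $S_j(v)=d_{\suppOp_j(\x)}(v)$, via the invariant $F_i(v)\le \bigl\lceil 2^i\sum_{j>i} S_j(v)\,2^{-j}\bigr\rceil$. The point is that $S_j(v)$ changes only when $v\in e$, so the right-hand side of the paper's invariant is stable for $v\notin e$, and for $v\in e$ the forced removal from $F_i$ compensates. Your Step~2 (the reverse induction with the divisibility argument) is fine and indeed mirrors the paper's conclusion; the trouble is purely that $(\star)$ is the wrong invariant for this algorithm. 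If you replace $(\star)$ by a bound in terms of the $S_j(v)$'s you can salvage the outline.
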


\begin{proof} 
    Fix vertex $v$, and let $F_i(v)$ and $S_i(v)$ be the number of edges of $v$ in $F_i$ and $S_i$ respectively, for all $i\in \{0,1,\dots, L\}$.
    To upper bound $x^{(i)}(v)$, we start by upper bounding $F_i(v)$, as follows.
	\begin{align} \label{eqn:fiv-upper-bound}
	F_{i}(v) & \leq \left\lceil 2^{i} \cdot \sum_{j=i+1}^L S_j(v)\cdot 2^{-j}\right\rceil\,.
	\end{align}
    We prove the above by induction on the number of operations and by reverse induction on $i\in \{0,1,\dots,L\}$, as follows. The base case $i=L$ is trivial, as $F_L(v)=0$ throughout and the RHS is non-negative.
    Next, for $i<L$, consider the effect on $F_{i}(v)$ of an update resulting in a call to $\rebuild(i+1)$ (e.g., after calling $\init$), at which point $E_{i+1}\gets S_{i+1}$.
    \begin{align*}
        F_i(v) & \leq \left\lceil \frac{1}{2}\cdot (S_{i+1}(v) + F_{i+1}(v)) \right\rceil  & \textrm{Property \ref{property:bipartite-degree-halved}} \\
	   & \leq \left\lceil \frac{1}{2}\cdot S_{i+1}(v) + \frac{1}{2}\cdot \left\lceil 2^{i+1} \cdot \sum_{j=i+2}^L S_j(v)\cdot 2^{-j}\right\rceil \right\rceil  & \textrm{Inductive hypothesis for $i+1$}\\ 
	   & \leq \left\lceil \frac{1}{2}\cdot S_{i+1}(v) + \frac{1}{2}\cdot  2^{i+1} \cdot \sum_{j=i+2}^L S_j(v)\cdot 2^{-j}\right\rceil \\ 
        & = \left\lceil 2^{i} \cdot \sum_{j=i+1}^L S_j(v)\cdot 2^{-j}\right\rceil \,,
    \end{align*}
	where the last inequality follows from the basic fact that for non-negative $y,z$ with $y$ an integer, $\lceil \frac{1}{2}\cdot y + \frac{1}{2} \lceil z \rceil\rceil \leq \lceil \frac{1}{2}(y+ z)\rceil$.
    Next, it remains to prove the inductive step for index $i$ and a call to $\update$ for which $\rebuild(i+1)$ is not called: but such an update only decreases the left-hand side of \Cref{eqn:fiv-upper-bound}, while it causes a decrease in the right-hand side (by one) only if an edge of $v$ was updated in this call to $\update$, in which case we delete at least one edge incident to $v$ in $F_i$, if any exist, and so the left-hand side also decreases by one (or is already zero). 

    Finally, combining \Cref{rounded-frac} and \Cref{eqn:fiv-upper-bound}, we obtain the desired inequality $x^{(i)}(v)\leq 1$.
    \begin{align*}
    x^{(i)}(v) & \leq  F_{i}(v)\cdot 2^{-i} + \sum_{j=0}^{i}S_j(v)\cdot 2^{-j} 
    \leq  2^{-i} \cdot \left\lceil 2^i\cdot \sum_{j=i+1}^L S_j(v)\cdot 2^{-j}\right\rceil + \sum_{j=0}^{i}S_j(v)\cdot 2^{-j}
     \leq 1.
    \end{align*}
    Above, the first inequality follows from \Cref{rounded-frac} and $E_i(v)\leq S_i(v)$ since $E_i\subseteq S_i$. The second inequality follows from \Cref{eqn:fiv-upper-bound}.
    Finally, the final inequality relies on $\sum_j S_{j}(v)\cdot 2^{-j} = x(v) \leq 1$, together with $2^i\cdot \sum_{j=i+1}^L S_j(v)\cdot 2^{-j}$ being fractional if and only if $\sum_{j=i+1}^L S_j(v)\cdot 2^{-j}$ is not evenly divisible by $2^{-i}$, though it is evenly divisible by $2^{-i-1}$, in which case $x(v)\leq 1-2^{-i}$.
\end{proof}

    \begin{rem}\label{rem:Fi-ub}
    The same proof approach, using Property \ref{property:size-halved} of $\dsplit$ (for possibly non-bipartite graphs) implies the global upper bound $|F_{i}| \leq \left\lceil 2^{i} \cdot \sum_{j=i+1}^L |S_j|\cdot 2^{-j}\right\rceil \leq 1 + 2^i\cdot \norm{x}$.
    \end{rem}

    Next, we prove the second property of a rounding algorithm, namely that $M\subseteq \supp{x}$.
    \begin{lem}\label{lem:sequential-containment}
$\suppOp{(\x^{(i)})}\subseteq \supp{x}$ for all $i\in \{0,1\dots,L\}$ and therefore 
       $M\subseteq \supp{x}$.
    \end{lem}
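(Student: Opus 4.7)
The plan is to reduce the claim to the invariant that $E_j \subseteq \supp{x}$ and $F_j \subseteq \supp{x}$ for every $j \in \{0,1,\dots,L\}$ at all times, and then to establish this invariant by induction on the sequence of operations performed after $\init$. The reduction itself is immediate: by the definition of $\x^{(i)}$ in \Cref{rounded-frac}, we have $\suppOp(\x^{(i)}) \subseteq F_i \cup \bigcup_{j=0}^{i} E_j$, so the invariant implies $\suppOp(\x^{(i)}) \subseteq \supp{x}$, and in particular $M = E_0 \cup F_0 \subseteq \supp{x}$.

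For the base case (right after $\init$, and more generally after any call to $\rebuild$), the subroutine sets $E_j \gets \suppOp_j(\x) \subseteq \supp{x}$ and then assigns $F_{j-1}$ to the first set returned by $\dsplit$ on $G[E_j \uplus F_j]$. Since $\dsplit$ outputs sub-multisets of its input, a reverse induction on $j$ (with base case $F_L = \emptyset$) gives $F_{j-1} \subseteq E_j \cup F_j \subseteq \supp{x}$, as required. This is essentially the static containment argument already used in \Cref{sec:warmup}.

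For the inductive step I would carefully analyze $\update(e,\nu)$. The key structural observation is that the body of the loop only \emph{removes} edges from each $E_i$ and each $F_{i-1}$, and never inserts any. Hence for any edge $e' \neq e$ that is still present in $E_j$ or $F_j$ after the update, $x_{e'}$ is unchanged and so $e' \in \supp{x}$ still. As for $e$ itself, if $\nu = 0$ (the only case in which $e$ leaves $\supp{x}$), the loop explicitly purges $e$ from every $E_i$ containing it and from every $F_{i-1}$ containing it, so the invariant is preserved at $e$ as well. A $\rebuild(i)$ triggered partway through $\update$ merely reapplies the base-case argument to layers $0,1,\dots,i$ using the post-update $\x$, so the invariant is restored on those layers and untouched above. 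The proof is essentially structural bookkeeping rather than real combinatorics; the one branch I would double-check is the one where the loop removes an edge adjacent to each endpoint of $e$ from $F_{i-1}$ (when $e \notin F_{i-1}$), but this too is a pure removal, so the invariant is unaffected.
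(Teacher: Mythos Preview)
Your proposal is correct and follows essentially the same approach as the paper: both arguments reduce to the invariant that every $E_j$ and $F_j$ stays inside $\supp{x}$, established by noting that $\rebuild$ only uses $\dsplit$ on edges already in the support while $\update$ only ever removes edges from these sets (and explicitly purges the updated edge $e$ from every layer). The paper phrases the invariant as a chain $E_i\cup F_i \subseteq \suppOp(\x^{(i+1)}) \subseteq \supp{x}$, whereas you maintain $E_j,F_j\subseteq\supp{x}$ directly, but the underlying bookkeeping is the same.
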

\begin{proof}
        We prove the stronger claim by induction on the number of operations of \Cref{alg:dynamic_rounding_hierarchy} and by reverse induction on $i\in \{0,1,\dots,L\}$ that $\suppOp{(\x^{(i)})} = E_i\cup F_i\subseteq \suppOp{(\x^{(i+1)})} \subseteq \supp{x}$. 
        That $E_i\subseteq \suppOp_i(\x)$ throughout is immediate, since $E_i\gets \suppOp_i(\x)$ when $\rebuild(i)$ is called (and in particular after $\init$ was called), and subsequently all edges $e\in E_i$ that are updated (and in particular each edge whose $x_e$ value is set to zero) are removed from $E_i$. 
        Therefore $E_i\subseteq \supp{x}$ throughout, and in particular $\suppOp{(\x^{(L)})} = E_L \subseteq \supp{x}$.
        Similarly, by the properties of $\dsplit$ and the inductive hypothesis, we have that after $\rebuild(i)$ is called, $F_{i}\subseteq F_{i+1} \cup E_{i+1}\subseteq \suppOp{(\x^{(i+1)})} \subseteq \supp{x}$, and each edge $e$ updated since is subsequently deleted from $F_i$ (as are some additional edges).
        Therefore $F_i\subseteq \supp{x}$ throughout.
        We conclude that $\suppOp{(\x^{(i)})} \subseteq \supp{x}$ for all $i$, as desired.
\end{proof}

    We now argue that the unprocessed edges have a negligible effect on values of $\x^{(i)}$ compared to their counterparts obtained by running the static algorithm on the entire input $\x$. 

    \begin{lem}\label{lem:sequential-size} 
    $\normOp{\x^{(i)}} \geq (1-2\eps)\cdot \norm{x}$ for every $i\in \{0,1,\dots,L\}$.
    \end{lem}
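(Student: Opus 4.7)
\emph{Proof plan.} The plan is to lower-bound $\normOp{\x^{(i)}}$ by a two-step telescoping argument: first show that $\normOp{\x^{(L)}}$ is close to $\norm{x}$, then control each successive difference $\normOp{\x^{(j-1)}}-\normOp{\x^{(j)}}$ for $j=L,L-1,\dots,i+1$. The key structural observation driving both steps is that the quantity $2^{-(j-1)}|F_{j-1}|-2^{-j}(|F_j|+|E_j|)$ is non-negative immediately after every $\rebuild(k)$ with $k\ge j$ (by Property~\ref{property:size-halved} of $\dsplit$), and between such rebuilds its decrease is controlled by the update counter $c_j$.

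For the base case, I will use that no operation modifies $F_L$, so $F_L=\emptyset$ throughout and $\normOp{\x^{(L)}}=\sum_{j=0}^L |E_j|\cdot 2^{-j}$. Writing $S_j:=\suppOp_j(\x)$, we have $\norm{x}=\sum_{j\ge 0}|S_j|\cdot 2^{-j}$, so $\norm{x}-\normOp{\x^{(L)}} = \sum_{j=0}^L (|S_j|-|E_j|)\cdot 2^{-j} + \sum_{j>L}|S_j|\cdot 2^{-j}$. A straightforward induction shows $E_j\subseteq S_j$ throughout (since whenever $e$ leaves $S_j$ it is also removed from $E_j$ inside $\update$), and each increment of $c_j$ adds at most one edge to $S_j\setminus E_j$; combined with the rebuild invariant $c_j\le 2^{j-2}\eps\norm{x}/L + O(1)$, this makes the first sum $O(\eps\norm{x})$. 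For the second sum, the choice $L=1+\lceil\log_2(\eps^{-1}\delta^{-1})\rceil$ and $\x_{\min}\ge\delta$ give $2^{-L}\le\eps\x_{\min}/2\le\eps x_e/2$ for every $e\in\supp{x}$, hence $\sum_{j>L}|S_j|\cdot 2^{-j}\le\sum_{e\in\supp{x}} \eps x_e/2=\eps\norm{x}/2$. Thus $\normOp{\x^{(L)}}\ge(1-O(\eps))\norm{x}$.

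For the telescoping step, fix $j\in\{i+1,\dots,L\}$ and set $\Delta_j:=\normOp{\x^{(j-1)}}-\normOp{\x^{(j)}}=2^{-(j-1)}|F_{j-1}|-2^{-j}(|F_j|+|E_j|)$. Let $T$ be the most recent time at which some $\rebuild(k)$ with $k\ge j$ was called; this is the last time $c_j$ was zeroed and $F_{j-1}$ was recomputed via $\dsplit(G[E_j\uplus F_j])$, so by Property~\ref{property:size-halved} we have $|F_{j-1}|\ge\lceil(|F_j|+|E_j|)/2\rceil$ and hence $\Delta_j\ge 0$ at time $T$. Between $T$ and now, no such rebuild has occurred, so none of $F_{j-1},F_j,E_j$ is reset, and each can only be decreased by subsequent $\update$ calls; in particular, $|F_{j-1}|$ decreases by at most $2$ per update reaching level $j$, and at most $c_j$ such updates have occurred. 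Since decreases to $|F_j|$ and $|E_j|$ enter $\Delta_j$ with a positive sign, they only help the bound, giving $\Delta_j\ge -2c_j\cdot 2^{-(j-1)} = -c_j\cdot 2^{2-j} \ge -\eps\norm{x}/L - O(2^{-j})$.

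Summing these telescoping bounds, $\normOp{\x^{(i)}}-\normOp{\x^{(L)}}=\sum_{j=i+1}^L\Delta_j\ge -\eps\norm{x}-O(1)$, and combining with the base case yields $\normOp{\x^{(i)}}\ge(1-2\eps)\norm{x}-O(1)$; the lower-order additive term, coming from rounding up each threshold $c_j$ by at most one, is absorbed into the multiplicative $2\eps$ error by rescaling $\eps$ by a constant. The main subtlety lies in the telescoping step: one must observe that between consecutive rebuilds at level $\ge j$ the three sets $F_{j-1},F_j,E_j$ evolve only via deletions, and that deletions to $F_j$ and $E_j$ actually \emph{increase} $\Delta_j$, so the total loss at level $j$ can be charged solely to the at most $2c_j$ deletions in $F_{j-1}$.
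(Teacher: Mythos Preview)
Your proposal is correct and follows essentially the same telescoping argument as the paper: bound each $\Delta_j = \normOp{\x^{(j-1)}}-\normOp{\x^{(j)}}$ by showing it is non-negative immediately after a rebuild at level $\ge j$ (via Property~\ref{property:size-halved}) and can drop by at most $\eps\norm{x}/L$ between such rebuilds (since at most $2c_j$ edges are removed from $F_{j-1}$, while deletions from $E_j,F_j$ only help). Your treatment of the base case $\normOp{\x^{(L)}}\ge(1-\eps)\norm{x}$ is in fact slightly more careful than the paper's---the paper cites \Cref{obs:xmin-and-high-bits} without explicitly accounting for $E_j\subsetneq S_j$ in the dynamic setting, whereas your bound $|S_j\setminus E_j|\le c_j$ fills that gap (and indeed the contribution $\sum_j c_j\cdot 2^{-j}\le (L{+}1)\cdot \eps\norm{x}/(4L)$ is comfortably absorbed).
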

\begin{proof}
    As in the proof of \Cref{lem:sequential-flow-preserved}, by Property \ref{property:size-halved}, after $\init$ or any $\update(\cdot,\cdot)$ triggering a call to $\rebuild(i)$ we have that $|F_{i-1}| = \lceil \frac{1}{2} (|F_i|+|E_{i}|)\rceil$, and so $\normOp{\x^{(i-1)}} \geq \normOp{\x^{(i)}}$. 
    On the other hand, between calls to $\rebuild(i)$ there are at most $2^{i-2}\cdot \frac{\eps \norm{x}}{L}$ calls to $\update(e,\nu)$, resulting in at most $2^{i-1}\cdot \frac{\eps \norm{x}}{L}$ many edges being deleted from $F_{i-1}$, which in turn result in $\normOp{\x^{(i-1)}}$ decreasing by at most $2^{-(i-1)}\cdot 2^{i-1}\cdot \frac{\eps \norm{x}}{L} = \frac{\eps\norm{x}}{L}$.
    In contrast, by \Cref{rounded-frac}, any changes in $E_j$ for $j\neq i$ have no effect on $\normOp{\x^{(i-1)}}-\normOp{\x^{(i)}}$. 
    On the  other hand, until the next $\rebuild(i)$ is triggered, we have that $E_i$ and $F_i$ can only decrease (contributing to an increase in $\normOp{x^{(i-1)}} - \normOp{x^{(i)}}$); $E_i$ can only decrease since edges are only added to $E_i$ when $\rebuild(i)$ is called, and $F_i$ only decreases until  $\rebuild(i+1)$ is called, which triggers a call to $\rebuild(i)$.
    Therefore, $\normOp{x^{(i-1)}} - \normOp{x^{(i)}}$ decreases by at most $\frac{\eps\norm{x}}{L}$ during updates until the next call to $\rebuild(i)$, and so after $\init$ and after every $\update$ of \Cref{alg:dynamic_rounding_hierarchy}, we have that
    $$\normOp{\x^{(i-1)}}\geq \normOp{\x^{(i)}} - \frac{\eps\norm{x}}{L}.$$ 
    Invoking the above inequality $L$ times, and using that $\normOp{\x^{(L)}}\geq (1-\eps)\cdot \norm{x}$ 
    by \Cref{obs:xmin-and-high-bits}, we obtain the desired inequality.
    \begin{align*}
    	|E_0\cup F_0| = \normOp{\x^{(0)}} & \geq \normOp{\x^{(1)}} - \frac{\eps \norm{x}}{L} \geq \dots \geq \normOp{\x^{(L)}} - L\cdot \frac{\eps \norm{x}}{L} \geq (1-2\eps)\cdot\norm{x}. \qedhere 
     \end{align*}
\end{proof}

\begin{rem}\label{lem:sequential-size-upper-bound} 
	The latter is nearly tight, as 	$\normOp{\x^{(i)}} \leq  (1+\epsilon)\cdot \norm{x} + 2^{1-i}$ for every $i\in \{0,1,\dots,L\}$.
\end{rem}
\begin{proof}[Proof (Sketch)] 
	The proof follows that of \Cref{lem:sequential-size}, with the following changes:
	By Property \ref{property:size-halved}, after $\init$ or any $\update(\cdot,\cdot)$ triggering a call to $\rebuild(i)$ we have the upper bound $|F_{i-1}| = \lceil \frac{1}{2} \leq 1+\frac{1}{2}(|F_i|+|E_i|)$, and so $\normOp{\x^{(i-1)}}\leq \normOp{\x^{(i)}}+2^{-i+1}$. 
	 On the other hand, the increase in $\normOp{x^{(i-1)}} - \normOp{x^{(i)}}$ until such a $\rebuild(i)$ is at most $\frac{\epsilon\norm{x}}{L}$ (similarly to the decrease in the same). The proof then concludes similarly to that of \Cref{lem:sequential-size}, also using that $\normOp{\x^{(L)}}\leq \norm{x}$.
\end{proof}

    Finally, we turn to analyzing the algorithm's update time.
  
    \begin{lem}\label{lem:sequential-update}
    The (amortized) time per $\update$ of \Cref{alg:dynamic_rounding_hierarchy} is $O(\eps^{-1}\cdot L^2).$
    \end{lem}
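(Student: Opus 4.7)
The plan is to separate the cost of each $\update$ call into two pieces and amortize the rebuild cost over the updates that trigger it. Each $\update(e,\nu)$ does $O(1)$ work per layer before a possible rebuild (update $\x_e$, maintain $E_i$ and $F_{i-1}$, increment $c_i$, test threshold), contributing $O(L)$ per call; this is dominated by the claimed bound. The remaining task is to bound the amortized cost of the $\rebuild(i)$ calls.

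First I would bound the cost of a single $\rebuild(i)$. Its runtime is $O\bigl(\sum_{j=0}^{i} (|E_j|+|F_j|)\bigr)$, where $E_j,F_j$ are the sets at the moment the rebuild is executed. For the $E_j$ part, note that each edge $e\in E_j \subseteq \suppOp_j(\x)$ has $(x_e)_j=1$ and hence $x_e\geq 2^{-j}$, so $|E_j|\leq 2^j\cdot \norm{x}$. For the $F_j$ part, \Cref{rem:Fi-ub} gives $|F_j|\leq 1+2^j\cdot\norm{x}$. Summing the geometric series yields $\rebuild(i)$ cost $O(2^i\cdot\norm{x} + L) = O(2^i\cdot\norm{x})$, using our standing assumption $\norm{x}\geq 1$.

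Next I would amortize this rebuild cost over the updates since the previous $\rebuild(i)$. Between two consecutive calls to $\rebuild(i)$, the counter $c_i$ must have grown from $0$ past the threshold $2^{i-2}\cdot\frac{\eps\norm{x}}{L}$, so there are $\Omega(2^i\cdot\eps\norm{x}/L)$ updates in that window. Hence the amortized cost contributed by rebuilds at layer $i$ is
\[
O\!\left(\frac{2^i\cdot\norm{x}}{2^i\cdot \eps\norm{x}/L}\right) \;=\; O\!\left(\frac{L}{\eps}\right).
\]
Summing across $i \in \{0,1,\ldots,L\}$ layers gives the desired amortized $O(\eps^{-1}\cdot L^2)$ update time.

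The one subtlety, and what I would treat as the main (mild) obstacle, is that $\norm{x}$ changes across the updates in a window, so the rebuild's cost and the threshold count are taken with respect to possibly different values of $\norm{x}$. I would resolve this by a standard token/charging argument: when $\update$ increments $c_i$, deposit $\Theta(L/\eps)$ tokens at layer $i$; since the threshold at every moment scales with the current $\norm{x}$, the total tokens accumulated between two layer-$i$ rebuilds is $\Omega(2^i\cdot\norm{x}^\star)$, where $\norm{x}^\star$ is the value at the rebuild, which dominates the rebuild cost $O(2^i\cdot\norm{x}^\star)$ computed above. (Alternatively, an initial $\rebuild(L)$ check using the $\norm{x}\geq 1$ assumption bounds ratios of $\norm{x}$ values between consecutive layer-$i$ rebuilds by a constant, since each update changes $\norm{x}$ by at most $1$ while the window has length $\Omega(\eps\norm{x}/L)$; pick constants so $\norm{x}$ cannot even double.) Combining with the per-update $O(L)$ bookkeeping cost completes the proof.
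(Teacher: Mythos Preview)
Your proof is correct and follows essentially the same approach as the paper: bound $|E_j|,|F_j|=O(2^j\cdot\norm{x})$ (the paper uses \Cref{rem:Fi-ub} for $|F_j|$ just as you do), sum to get $\rebuild(i)$ cost $O(2^i\cdot\norm{x})$, amortize over the $\Omega(2^i\eps\norm{x}/L)$ updates that incremented $c_i$, and sum the resulting $O(L/\eps)$ over all layers. The subtlety you raise about $\norm{x}$ changing is actually a non-issue here, since at the moment $\rebuild(i)$ is triggered both the threshold $c_i>2^{i-2}\eps\norm{x}/L$ and the rebuild cost $O(2^i\norm{x})$ are stated in terms of the \emph{same} current $\norm{x}$, so the ratio is $O(L/\eps)$ without any further argument; the paper accordingly does not discuss it.
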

\begin{proof}
    By \Cref{rem:Fi-ub}, 
    $|F_i|\leq 1 + 2^i\cdot \norm{x} = O(2^i\cdot \norm{x})$ (recalling that without loss of generality $\norm{x}\geq 1$). 
    Similarly, trivially $|S_i| = 2^{i}\cdot 2^{-i}\cdot |S_i|\leq 2^i\cdot \norm{x}$.
    Therefore, by \Cref{prop:degree-split}, the calls to $\dsplit(G[E_i\uplus F_{i}])$ in \Cref{alg:dynamic_rounding_hierarchy} (at which point $E_i=S_i$) take time $O(2^{i}\cdot \norm{x})$, and so the time for $\rebuild(i)$ is $\sum_{j=0}^i O(2^j\cdot \norm{x}) = O(2^i\cdot \norm{x})$.
    But since $\rebuild(i)$ is called after $2^{i-2}\cdot \frac{\eps \norm{x}}{L}$ updates, its cost amortizes to $O(\eps^{-1}\cdot L)$ time per update. 
    Summing over all $i\in \{0,1,\dots,L\}$, we find that indeed, the amortized time per $\update$ operation, which is $O(L)$ (due to deleting $O(1)$ edges from each $E_i$ and $F_i$ for each $i$) plus its contribution to periodic calls to $\rebuild$, is $O(\eps^{-1}\cdot L^2)$.
\end{proof}

    We are finally ready to prove \Cref{thm:sequential-algo}.
\begin{proof}[Proof of \Cref{thm:sequential-algo}]
    \Cref{alg:dynamic_rounding_hierarchy} is a dynamic rounding algorithm for bipartite fractional matchings, since 
    $M$ is a matching contained in $\suppOp{(\x_0)} \subseteq \suppOp{(\x_1)}\subseteq \dots \subseteq \suppOp{(\x)}$ if the latter is bipartite, by \Cref{lem:feasible-fractional} and \Cref{lem:sequential-containment}, and moreover $|M|=\normOp{\x^{(0)}}\geq (1-2\eps)\cdot \norm{x}$, by \Cref{lem:sequential-size}.
    The algorithm's $\update$ time and $\init$ time follow from \Cref{lem:sequential-update} and \Cref{thm:static_bipartite_round}.
\end{proof}

    To (nearly) conclude, this section provides a simple bipartite rounding algorithm with near-optimal $\eps$-dependence. In the following section we show how \emph{partially} rounding the fractional matching allows to dynamically guarantee that $\x_{\min}$ be sub-polynomial in $\eps/n$, thus allowing us to decrease $L$ and obtain speedups (improved $n$-dependence) when combined with \Cref{alg:dynamic_rounding_hierarchy}.

    \paragraph{\Cref{alg:dynamic_rounding_hierarchy} in general graphs.}    Before continuing to the next section, we mention that the alluded-to notion of partial rounding will also be useful when rounding (well-structured) fractional matchings in \emph{general} graphs as well (see \Cref{sec:general}).
    With this in mind, we provide the following lemma, which is useful to analyze \Cref{alg:dynamic_rounding_hierarchy} when rounding general graph matchings.

    \begin{restatable}{lem}{sequentialgeneral}\label{sequential-general}
        For $d^c_V(\x, \y) := \sum_{v \in V} (|x(v) - y(v)| - c)^+$, the vectors $\x^{(i)}$ satisfy $$d^{2^{-i+1}}_V(\x,\x^{(i)})\leq \eps\cdot \norm{x} \qquad \forall i\in \{0,1,\dots,L\}.$$
    \end{restatable}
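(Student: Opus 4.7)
The plan is to telescope $\x \to \x^{(L)} \to \x^{(L-1)} \to \dots \to \x^{(i)}$ through the generalized triangle inequality of \Cref{obs:monotone-distance}. Since $2^{-L+1} + \sum_{j=i+1}^{L} 2^{-j+1} = 2^{-L+1} + (2^{-i+1} - 2^{-L+1}) = 2^{-i+1}$, iterating \Cref{obs:monotone-distance} yields
\begin{equation*}
d^{2^{-i+1}}_V(\x, \x^{(i)}) \;\leq\; d^{2^{-L+1}}_V(\x, \x^{(L)}) \;+\; \sum_{j=i+1}^{L} d^{2^{-j+1}}_V(\x^{(j)}, \x^{(j-1)}),
\end{equation*}
reducing the proof to bounding the base term by $O(\eps\norm{x})$ and each of the $L-i$ pairwise terms by $O(\eps\norm{x}/L)$.

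The main obstacle is the pairwise bound, for which I would first observe, directly from \Cref{rounded-frac}, that $x^{(j)}(v) - x^{(j-1)}(v) = 2^{-j+1}\cdot [(F_j(v) + E_j(v))/2 - F_{j-1}(v)]$. Immediately after $\rebuild(j)$ (which sets $F_{j-1} \gets \dsplit(G[E_j \uplus F_j])$), Property \ref{property:degree-halved} guarantees $|(F_j(v) + E_j(v))/2 - F_{j-1}(v)| \leq 1$ at every vertex $v$, so $|x^{(j)}(v) - x^{(j-1)}(v)| \leq 2^{-j+1}$ pointwise and hence $d^{2^{-j+1}}_V(\x^{(j)}, \x^{(j-1)}) = 0$ at that moment. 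Between two consecutive $\rebuild(j)$ invocations, only the loop iterations at indices $j$ (modifying $E_j$ and $F_{j-1}$) and $j+1$ (modifying $F_j$) alter this expression, and each $\update$ perturbs these sets at $O(1)$ vertices by at most $1$; thus $\sum_v (|x^{(j)}(v) - x^{(j-1)}(v)| - 2^{-j+1})^+$ grows by at most $O(2^{-j})$ per update. Since the $\rebuild(j)$ trigger admits at most $2^{j-2}\cdot \eps\norm{x}/L$ inter-rebuild updates, the accumulated increase is $O(\eps\norm{x}/L)$, as desired. The most delicate point here is accounting for the set of vertices whose $F_j(v), F_{j-1}(v), E_j(v)$ can be perturbed in a single $\update$ call: the ``remove one edge adjacent to each endpoint'' branch of $\update$ can touch up to four vertices per level rather than only the two endpoints of $e$, but this set remains $O(1)$ and so does not affect the asymptotics.

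For the base term, using \Cref{obs:xmin-and-high-bits} (or an analog from \Cref{sec:general} in the general-graph setting) one may assume $(x_e)_k = 0$ for all $k > L$, and the algorithm maintains the invariant $E_j \subseteq S_j := \suppOp_j(\x)$ (by inspection of $\update$, since edges are only removed from $E_j$ after a $\rebuild(j)$ resets $E_j \gets S_j$). Hence $x(v) - x^{(L)}(v) = \sum_{j=0}^L (S_j(v) - E_j(v)) \cdot 2^{-j} \geq 0$; summing over $v$ and swapping sums gives $\sum_v |x(v) - x^{(L)}(v)| = 2\sum_j |S_j \triangle E_j| \cdot 2^{-j}$. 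Since each $\update$ changes $|S_j \triangle E_j|$ by at most $O(1)$ and at most $2^{j-2}\cdot \eps\norm{x}/L$ updates occur between calls to $\rebuild(j)$, this sum is $O(\eps\norm{x})$, which in particular upper-bounds $d^{2^{-L+1}}_V(\x, \x^{(L)})$. Combining the base bound with the $L-i$ pairwise bounds gives the claim up to an absolute constant, which can be absorbed by an $O(1)$ rescaling of $\eps$ (equivalently, by tightening the constant in the rebuild trigger by the same $O(1)$ factor).
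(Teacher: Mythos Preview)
Your proposal is correct and follows essentially the same route as the paper: telescope via the generalized triangle inequality of \Cref{obs:monotone-distance}, bound each pairwise term $d^{2^{-j+1}}_V(\x^{(j)},\x^{(j-1)})$ by observing it vanishes right after $\rebuild(j)$ (via Property~\ref{property:degree-halved}) and accumulates only $O(\eps\norm{x}/L)$ over the at most $2^{j-2}\eps\norm{x}/L$ updates before the next rebuild, then sum.

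The one notable difference is your treatment of the base term $d^{2^{-L+1}}_V(\x,\x^{(L)})$. The paper simply invokes ``$\x=\x^{(L)}$ by \Cref{obs:xmin-and-high-bits}'', but this identity only holds immediately after $\init$ (or a full $\rebuild(L)$), when $E_j=S_j$ for all $j$; in the dynamic setting $E_j\subsetneq S_j$ is possible, so $\x^{(L)}$ can strictly differ from $\x$. Your explicit bound via $\sum_v(x(v)-x^{(L)}(v))=2\sum_j 2^{-j}|S_j\setminus E_j|$ and the per-level update count closes this gap cleanly. Both arguments land on the same $O(\eps\norm{x})$ conclusion (with the constant absorbed into $\eps$), so the end result is identical.
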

    
\begin{proof}
        First, we verify that the inequality holds (with some extra slack) right after $\rebuild(i)$ (and in particular right after $\init$).
        First, by Property \ref{property:degree-halved} of $\dsplit$, during the invocation of which $E_i=S_i$, we have that $F_i(v) \in \left[\frac{1}{2}(E_{i+1}(v)+F_{i+1}(v) - 1,\, \frac{1}{2}(E_{i+1}(v)+F_{i+1}(v)) + 1\right]$.
        Therefore, by \Cref{rounded-frac}, for each vertex $v$, we have after $\rebuild(i)$ that $$|x^{(i)}(v)-x^{(i+1)}(v)|\leq 2^{-i}.$$
        On the other hand, 
        before an $\update$ with current input $\x$   
        there are at most $2^{i-2}\cdot \frac{\eps\norm{x}}{L}$ calls to $\update$ since the last call to $\rebuild(i)$,
        resulting in at most $3\cdot 2^{i-2}\cdot \frac{\eps\norm{x}}{L}$ many edges being added or deleted from $E_i\cup F_i$. 
        Therefore, during the updates between calls to $\rebuild(i)$, the total variation distance between $\x^{(i)}$ and $\x^{(i+1)}$ changes by at most $\frac{\eps\norm{x}}{L}$, and so after $\init$ and after any $\update$, 
        $$\sum_v \left(|x^{(i)}(v) - x^{(i+1)}(v)|-2^{-i}\right)^+\leq \frac{\eps\norm{x}}{L}.$$
        Now, using the basic fact that $(a+b)^+ \leq a^+ + b^+$ for all real $a,b$, summing the above difference over all $i$, and using that $\x=\x^{(L)}$ by \Cref{obs:xmin-and-high-bits}, we obtain the desired inequality, as follows.
        \begin{align*}
            \sum_v \left(|x^{(i)}(v) - x^{L}(v)|-2^{-i+1}\right)^+ & \leq
            \sum_{j=i}^{L-1} \sum_v \left(|x^{(i)}(v) - x^{(i+1)}(v)|-2^{-i}\right)^+ \leq L\cdot \frac{\eps\norm{x}}{L} = \eps\norm{x}. && \qedhere
        \end{align*}
\end{proof}

\section{Partial Rounding: a Path to Speedups}\label{sec:partial-rounding}

So far, we have provided a rounding algorithm with near-optimal dependence on $\eps$ (by \Cref{fact:recourse-lb}) and polylogarithmic dependence on $\x_{\min}^{-1}=\poly(\eps^{-1}n)$ of the fractional matching $\x$. 
To speed up our algorithm we thus wish to dynamically maintain a ``coarser'' fractional matching $\x'$ (i.e., with larger $(\x'_{\min})^{-1}$ than $\x_{\min}^{-1}$) that approximately preserves the value of $\x$. The following definition captures this notion of coarser fractional matchings that we will use.\footnote{In what follows, we use the definition of $d^{\epsilon}_V$ from \Cref{obs:monotone-distance}.}

\renewcommand{\bf}[1]{\mathbf{#1}}
\begin{restatable}{Def}{splitDef}\label{def:coarsening}
  Vector $\x'\in \R^E_{\geq 0}$ is an \emph{$(\epsilon, \delta)$-coarsening} of a vector $\x\in \mathbb{R}^E_{\geq 0}$ if:
\begin{enumerate}[label=(C{{\arabic*}})]  
  \setcounter{enumi}{-1}
        \item \textbf{Containment}: $\supp{x'}\subseteq \supp{x}$.  \label{property:containment}
        \item \textbf{Global Slack}: $|\norm{x} - \norm{x'}| \leq  \eps\cdot \norm{\x} + \eps$.
        \label{property:global-slack}
        \item \textbf{Vertex Slack}: 
        $d^{\epsilon}_V\left(\x, \x'\right) \leq \eps\cdot \norm{x} + \eps$.
        \label{property:vertex-wise-slack}
        \item \textbf{Edge Values}: $x'_e \in \{0\} \cup [\delta, 2\delta)$ if $x_e < \delta$ and $x'_e = x_e$ otherwise.
        \label{property:edgebound-correct}    
\end{enumerate}

The coarsening $\x'$ is \emph{bounded} if it also  satisfies the following property:

\begin{enumerate}[label=(C4)]  
        \item \textbf{Boundedness}: $x'(v) \leq x(v) + \eps$ for all $v \in V$.
        \label{property:boundedcoursening}  
\end{enumerate}
\end{restatable}

We briefly motivate the above definition: As we shall see, properties \ref{property:global-slack} and \ref{property:vertex-wise-slack} imply that $\x'$ (after mild post-processing) is a $(1-\eps)$-approximation of $\x$, and so rounding $\x'\leq \x$ results in a $(1-\eps)^2\geq (1-2\eps)$-approximation of $\x$. 
The less immediately intuitive Property \ref{property:vertex-wise-slack} will also prove useful when rounding in general graphs, in \Cref{sec:general}. 
For now, we will use this property when combining coarsenings of disjoint parts of the support of $\x$. 
Property \ref{property:edgebound-correct} then allows us to round such coarsening $\x'$ efficiently, with only a polylogarithmic dependence on $\delta^{-1}$, using \Cref{alg:dynamic_rounding_hierarchy} (by \Cref{thm:sequential-algo}). Finally, Property~\ref{property:boundedcoursening} guarantees that $\x'/(1+\eps)$ is a fractional matching.

A key ingredient for subsequent sections is thus a dynamic coarsening algorithm, as follows.

\begin{wrapper}
\begin{Def}\label{def:rounding} A \underline{\emph{dynamic $(\eps,\delta)$-coarsening algorithm}} is a data structure supporting the following operations:
\begin{itemize}
    \item $\init(G = (V,E), \, \x \in \R^E_{\geq 0})$: initializes the data structure for undirected graph $G$ with vertices $V$ and edges $E$, current vector $\x$.
    \item $\update(e \in E,\, \nu \in [0,1])$: sets $x_e \gets \nu$.
\end{itemize}
The algorithm must maintain an $(\eps,\delta)$-coarsening $\x'$ of (the current) $\x$.
\end{Def}
\end{wrapper}

As we show in \Cref{sec:general}, the internal state of \Cref{alg:dynamic_rounding_hierarchy} yields a dynamic coarsening algorithm.
In this section we state bounds for a number of dynamic coarsening algorithms (analyzed in \Cref{sec:partial-rounding-implementations}), with the objective of using their output as the input of \Cref{alg:dynamic_rounding_hierarchy}, from which we obtain faster dynamic bipartite rounding algorithms than when using the latter algorithm in isolation. The following lemma, proved in \Cref{sec:proof:claim:dynamic-coursaneing-to-rounding}, captures the benefit of this approach.

\begin{restatable}{lem}{coarsentoround}
\label{lem:dynamic-coarsening-to-rounding}
\textbf{(From coarsening to rounding).} Let $\calC$ be a dynamic $(\epsilon,\delta)$-coarsening algorithm with $\update$ time $t^{\calC}_U := t^{\calC}_U(\eps, \delta, n)$ and $\init$ time $O(|\supp{\x}| \cdot t^\calC_{I})$. Let $\calR$ be a dynamic rounding algorithm for fractional matchings $\x$ with $\x_{\min}\geq \delta$, with $\update$ time $t^{\calR}_U := t^{\calR}_U(\eps,\delta,n)$ and $\init$ time $O(|\supp{x}|\cdot t^{\calR}_I)$, for $t^{\calR}_I := t^{\calR}_I(\eps,\delta,n)$. Then, there exists  an $O(\epsilon + \delta)$-approximate dynamic rounding algorithm $\calR^*$ with $\update$ time $O(t_U^{\calC} + t^{\calR}_U + \eps^{-1}\cdot t^{\calR}_I)$ and $\init$ time $O(|\supp{\x}| \cdot (t^\calR_I + t^\calC_I))$ which is deterministic/\adaptive/\outputadaptive if both $\calR$ and $\calC$ are.
\end{restatable}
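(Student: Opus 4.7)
The plan is to define $\calR^*$ by cascading $\calC$ and $\calR$, inserting a simple \emph{feasibilization} step between them that turns the (possibly infeasible) coarsening $\x'$ into an actual fractional matching $\tilde{\x}$. Concretely, $\calR^*$ maintains $\x'$ via $\calC$, then constructs $\tilde{\x}$ from $\x'$ by greedily deleting an incident edge from every vertex $v$ for which $x'(v)>1$ until $\tilde{x}(v)\leq 1$ everywhere, and finally runs $\calR$ on $\tilde{\x}$, returning $\calR$'s matching $M$ as its own output.

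The heart of the correctness argument is a bound $\normOp{\tilde{\x}} \geq (1-O(\eps))\norm{x}$, from which $|M| \geq (1-\eps)\normOp{\tilde{\x}} \geq (1-O(\eps+\delta))\norm{x}$ follows. The total weight removed during feasibilization is at most $\sum_v (x'(v)-1)^+$; since $x(v)\leq 1$, this is in turn at most $\sum_{v:x'(v)>1}(x'(v)-x(v))$. I will split this sum depending on whether $x'(v)-x(v)>\eps$: the high-slack part is directly bounded by $d^{\eps}_V(\x,\x')\leq \eps\norm{x}+\eps$ via Property~C2, while the low-slack part contributes at most $\eps$ per overloaded vertex, and $|\{v:x'(v)>1\}|\leq 2\normOp{\x'} = O(\norm{x})$ (each such vertex contributes more than $1$ to $\sum_v x'(v)=2\normOp{\x'}$, while $\normOp{\x'} = O(\norm{x})$ by Property~C1). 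Combining with Property~C1 and $\norm{x}\geq 1$ yields the desired bound on $\normOp{\tilde{\x}}$. Moreover, Property~C0 gives $M\subseteq \suppOp(\tilde{\x})\subseteq \supp{x}$, and Property~C3 preserves the minimum entry of $\tilde{\x}$ at $\delta$ so that $\calR$ is applicable.

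For the dynamic aspect, I plan an epoch-based scheme: at the start of each epoch, freshly compute $\tilde{\x}$ from the current $\x'$ and reinitialize $\calR$ on $\tilde{\x}$. Within an epoch, each $\update(e,\nu)$ triggers $\calC.\update(e,\nu)$, whose (amortized $O(1)$) induced changes in $\x'$ are then propagated through the feasibilization to changes in $\tilde{\x}$, and finally passed to $\calR.\update$. Choosing the epoch length so that the rebuild cost amortizes to $\eps^{-1}\cdot t_I^{\calR}$ per $\update$ gives the claimed update time $O(t_U^{\calC}+t_U^{\calR}+\eps^{-1}\cdot t_I^{\calR})$; the init cost is just the sum of those of $\calC$ and $\calR$. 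Adaptivity (deterministic/\adaptive/\outputadaptive) of $\calR^*$ is inherited, since all intermediate operations are deterministic functions of the visible inputs.

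The main technical obstacle will be the dynamic feasibilization within an epoch. Naively, a single update to $\x'$ may turn a previously feasible vertex overloaded (requiring a new deletion) or relieve overloading (in principle allowing a prior deletion to be undone), which can cascade. My plan is to use a conservative lazy strategy: only delete edges from $\tilde{\x}$ when necessary to restore feasibility, never re-add deleted edges mid-epoch, and refresh $\tilde{\x}$ from scratch at each epoch boundary. A careful accounting should then show that $\tilde{\x}$ remains a valid fractional matching throughout the epoch and that the within-epoch drift is absorbed into the $O(\eps\norm{x})$ budget established by Properties~C1 and~C2, preserving the approximation guarantee epoch-to-epoch.
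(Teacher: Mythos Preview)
Your high-level architecture---maintain a coarsening, feasibilize it to a true fractional matching, feed that to $\calR$, and amortize periodic reinitialization---is exactly the paper's. But two steps in your plan do not go through as written.

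\textbf{The recourse assumption on $\calC$ is unwarranted.} You write that each $\update(e,\nu)$ ``triggers $\calC.\update(e,\nu)$, whose (amortized $O(1)$) induced changes in $\x'$ are then propagated \ldots\ and finally passed to $\calR.\update$.'' The lemma statement gives you only the \emph{update time} $t_U^{\calC}$ of $\calC$, not a recourse bound on its output $\x'$, and in fact the coarsening algorithms of \Cref{sec:partial-rounding-implementations} do \emph{not} have $O(1)$ recourse: e.g., the split-based coarsener of \Cref{cl:dynamic-coarsening:uniform:rough} may at a single update discard its current coarsening and switch to an entirely different one, changing $\Theta(|\supp{x'}|)$ entries of $\x'$. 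Propagating all of those to $\calR.\update$ would blow up the time budget. The paper sidesteps this by \emph{not} tracking $\x'$ between epochs at all: it runs $\calC$ continuously (paying $t_U^{\calC}$ per update), but reads $\x'$ only at reinitialization; within an epoch it updates its feasibilized vector $\hat{\x}$ directly from the \emph{input} update $(e,\nu)$ (deleting $O(1)$ incident low-weight edges per endpoint to preserve feasibility), incurring $O(1)$ calls to $\calR.\update$ per input update.

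\textbf{The feasibilization bound is false without restricting which edges you delete.} You claim the weight removed is at most $\sum_v (x'(v)-1)^+$. Consider $v$ with one ``large'' edge of weight $0.9$ (where $x'_e=x_e$ by Property~\ref{property:edgebound-correct}) and one ``small'' edge of weight $0.2$, so $x'(v)=1.1$ and the claimed budget is $0.1$. If your greedy rule deletes the large edge, you remove weight $0.9$. The paper's \Cref{cl:bounded-coarsening} avoids this by only deleting edges $e$ with $x_e<\delta$ (so the overshoot per vertex is $<2\delta$), stopping while still above $x(v)+\eps$ so that the total removed weight is charged directly to $d_V^{\eps}(\x,\x')$, and then \emph{scaling} by $\alpha=(1+3(\eps+2\delta))^{-1}$ to restore feasibility rather than deleting further. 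Your pure-deletion variant can be made to work, but you must (i) only delete small edges and (ii) add the $O(\delta)$-per-overloaded-vertex overshoot to your accounting.
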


In our invocations of \Cref{lem:dynamic-coarsening-to-rounding} we will use \Cref{alg:dynamic_rounding_hierarchy} to play the role of Algorithm $\calR$. 
In \Cref{sec:partial-rounding-implementations} we provide a number of coarsening algorithm, whose properties we state in this section, together with the obtained rounding algorithms' guarantees.

A number of our coarsening algorithms will make use of subroutines for splitting (most of) the fractional matching's support into numerous disjoint coarsenings, as in the following.
\begin{Def}\label{def:split}
    An \emph{$(\epsilon, \delta)$-split} of fractional matching $\z\in \mathbb{R}_{\geq 0}^E$ with $\z_{\max} \leq \delta$ consists of $(\epsilon, \delta)$-coarsenings $\z^{(1)},\dots,\z^{(k)}$ with disjoint supports, together covering at least half of $\supp{z}$, i.e.,  $$\sum_i |\supp{z^{(i)}}|\geq \frac{1}{2}\cdot |\supp{z}|.$$
\end{Def}

The following lemma combined with \Cref{lem:dynamic-coarsening-to-rounding} motivates our interest in such splits.
\begin{restatable}{lem}{splittocoarsen}
\label{lem:split-to-dynamic-coarsening}
\textbf{(From static splitting to dynamic coarsening).}
Let $\calS$ be a static $(\gamma, \delta)$-split algorithm with running time $|\supp{x}| \cdot t_s$ on uniform fractional matching $\x$, where $t_s := t_s(n,\gamma,\delta)$. Then there exists a dynamic algorithm $\calC$  which for any (possibly non-uniform) fractional matching $\x$  maintains an $\left(O\left(\eps + \gamma \cdot \eps^{-1}\cdot (\log (\gamma^{-1}\cdot n)\right), \delta\right)$-coarsening of $\x$ with $\update$ time $O(\eps^{-1} \cdot t_s)$ and $\init$ time $O(|\supp{x}| \cdot t_s)$. Algorithm $\calC$ is deterministic/\adaptive/\outputadaptive if $\calA$ is.
\end{restatable}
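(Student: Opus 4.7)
The plan is to bucket the small-valued edges of $\x$ geometrically by value so that each bucket is essentially uniform, apply the static split algorithm $\calS$ bucket-by-bucket, and dynamize via periodic per-bucket rebuilds.

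Following \Cref{obs:xmin-and-high-bits}, I would first restrict attention to edges with $x_e \geq \Theta(\gamma/n^2)$, losing only $O(\gamma)$ in both slacks, and leave edges with $x_e \geq \delta$ uncoarsened (they already satisfy \ref{property:edgebound-correct}). For edges with $x_e < \delta$, I partition them geometrically by placing $e$ in bucket $S_i$ if $x_e \in [(1+\eps)^{-i-1},(1+\eps)^{-i})$, yielding $L' = O(\eps^{-1}\log(\gamma^{-1}n))$ small buckets with pairwise disjoint supports. For each small bucket I define the uniform sub-vector $\y^{[i]}$ with $y^{[i]}_e = (1+\eps)^{-i-1}$ for $e \in S_i$; the rounding slack per vertex is then $O(\eps)\cdot x(v)$, summing to $O(\eps)\cdot\norm{x}$ globally.

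For each small bucket I apply the static $\calS$ to $\y^{[i]}$ (which is uniform with $\y^{[i]}_{\max} < \delta$, as required) and take \emph{any one} of the returned $(\gamma,\delta)$-coarsenings as $\z^{[i]}$; the split's disjoint-support / half-coverage guarantee is not actually exploited here. I then set $x'_e = x_e$ for $e$ with $x_e \geq \delta$, and $x'_e = z^{[i]}_e$ for $e$ in small bucket $i$. Disjointness of the bucket supports delivers \ref{property:containment} and \ref{property:edgebound-correct} immediately; \ref{property:global-slack} and \ref{property:vertex-wise-slack} follow by summing per-bucket slacks through the triangle inequality in \Cref{obs:monotone-distance} (so e.g.\ $d_V^{L'\gamma}(\x,\x') \leq \sum_i d_V^{\gamma}(\y^{[i]},\z^{[i]}) \leq O(\gamma L')\cdot\norm{x}$), giving an $(O(\eps + \gamma L'),\delta) = (O(\eps + \gamma\eps^{-1}\log(\gamma^{-1}n)),\delta)$-coarsening statically in time $O(\sum_i |\supp{y^{[i]}}| \cdot t_s) = O(|\supp{x}|\cdot t_s)$, matching the claimed $\init$ time.

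For dynamization, after each $\update(e,\nu)$ I would update $e$'s bucket membership (touching at most two buckets per update) in $O(\log L')$ time, while keeping each $\z^{[i]}$ frozen between its own rebuilds. A rebuild of bucket $i$ is triggered once its dirty-counter reaches $T_i = \Theta(\eps\cdot |\supp{y^{[i]}}|)$, which amortizes the rebuild cost to $O(t_s/\eps)$ per update and introduces drift slack at most $O(\eps)\cdot \norm{y^{[i]}}$ per bucket, summing to $O(\eps)\cdot \norm{x}$ globally. The algorithm $\calC$ then inherits the determinism / \adaptive / \outputadaptive nature of $\calS$, as dynamization only adds deterministic bookkeeping.

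The main obstacle I foresee is the vertex-slack accounting across $L' = O(\eps^{-1}\log(\gamma^{-1}n))$ buckets. Via \Cref{obs:monotone-distance}, summing the per-bucket vertex slacks $d_V^{\gamma}(\y^{[i]},\z^{[i]}) \leq \gamma\norm{y^{[i]}}+\gamma$ forces the output slack-radius parameter of $\calC$ to absorb an $L'$ factor, which is precisely the $\gamma\eps^{-1}\log(\gamma^{-1}n)$ term in the lemma. Balancing this blowup against the per-bucket drift budget and rebuild frequency, together with careful handling of edges that migrate between buckets during updates (so that each migration is charged to its correct bucket exactly once and vertex-slack is not double counted), is where most of the technical bookkeeping will sit.
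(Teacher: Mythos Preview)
Your bucketing and the extension to non-uniform inputs are essentially the same as in the paper, and your accounting for the $\gamma\eps^{-1}\log(\gamma^{-1}n)$ term is right. The gap is in the sentence ``the split's disjoint-support / half-coverage guarantee is not actually exploited here.'' It is exploited, and without it your dynamization step is not sound against an adaptive adversary.

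The issue is the mismatch in edge-value scales between $\y^{[i]}$ and $\z^{[i]}$. Your per-bucket rebuild threshold is $T_i = \Theta(\eps\cdot|\suppOp(\y^{[i]})|)$, and you claim this bounds the drift in $\z^{[i]}$ by $O(\eps)\cdot\normOp{\y^{[i]}}$. But $\z^{[i]}$ has edge values in $[\delta,2\delta)$ while $\y^{[i]}$ is $\lambda_i$-uniform with $\lambda_i$ possibly as small as $\Theta(\gamma/n^2)\ll\delta$. Hence $|\suppOp(\z^{[i]})|\approx (\lambda_i/\delta)\cdot|\suppOp(\y^{[i]})|$, which can be far smaller than $T_i$. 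An adaptive adversary who deletes exactly the edges of $\suppOp(\z^{[i]})$ empties the output coarsening of bucket $i$ after only $|\suppOp(\z^{[i]})|\ll T_i$ updates, long before your rebuild fires; at that point $\normOp{\z^{[i]}}=0$ while $\normOp{\y^{[i]}}$ is essentially unchanged, so neither \ref{property:global-slack} nor \ref{property:vertex-wise-slack} holds. Equivalently, each targeted deletion removes $\Theta(\delta)$ from the output, not $\Theta(\lambda_i)$, so the drift over $T_i$ updates can be as large as $\Theta(\eps\delta/\lambda_i)\cdot\normOp{\y^{[i]}}$, which is unbounded in the ratio $\delta/\lambda_i$.

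The paper's fix (its \Cref{cl:dynamic-coarsening:uniform:rough}) is precisely to keep \emph{all} the coarsenings in the split, output one at a time, and \emph{switch} to a fresh one whenever an $\Theta(\eps)$-fraction of the current output's support has been deleted. Because the coarsenings have disjoint supports that together cover at least half of $\suppOp(\y^{[i]})$, a pigeonhole argument guarantees an undamaged coarsening is always available until $\Theta(\eps)\cdot|\suppOp(\y^{[i]})|$ updates have occurred, at which point you can afford to re-run $\calS$. The total switching work across a period is $O(|\suppOp(\y^{[i]})|)$, so the amortized cost stays $O(\eps^{-1}t_s)$. You should incorporate this switching mechanism; the rest of your outline then goes through.
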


\paragraph{Section outline.}
We prove \Cref{lem:dynamic-coarsening-to-rounding,lem:split-to-dynamic-coarsening} in \Cref{sec:proof:claim:dynamic-coursaneing-to-rounding,sec:proof:lem:split-to-dynamic-coarsening}, respectively. Before that, we state bounds of a number of such partial rounding algorithms (presented and analyzed in \Cref{sec:partial-rounding-implementations}), together with the rounding algorithms we obtain from these, yielding \Cref{thm:bipartite}.

\subsection{Partial Rounding Algorithms, with Applications}\label{sec:coarsening-applications}

Here we state the properties of our coarsening and splitting algorithms presented in \Cref{sec:partial-rounding-implementations}, together with the implications to dynamic rounding, as stated in \Cref{thm:bipartite}.

In \Cref{sec:proof:det-split}, we provide a deterministic static split algorithm as stated in the following lemma.

\begin{restatable}{lem}{detsplit}\label{det-split}
For any $\eps > 0$, there exists a deterministic static $(4 \eps, \eps)$-split algorithm which on input uniform fractional matchings $\x$ runs in time $O\left(|\support{x}| \cdot \log({\epsilon}^{-1} \cdot n\right))$.
\end{restatable}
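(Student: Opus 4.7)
The plan is to partition $\supp{x}$ into $2^j$ disjoint pieces via $j$ levels of recursive $\dsplit$, where $j := \lceil \log_2(\eps/\lambda) \rceil$ and $\lambda$ is the uniform value of $\x$ (one may assume $\lambda \leq \eps$, since a split requires $\x_{\max} \leq \delta = \eps$). Then for $\mu := 2^j \lambda \in [\eps, 2\eps)$, the plan is to define the $i$-th coarsening $\x^{(i)}$ by setting $x^{(i)}_e := \mu$ for every edge $e$ in the $i$-th piece $E_i$ and $x^{(i)}_e := 0$ otherwise.

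Concretely, I would initialize the partition $\calP := \{\supp{x}\}$ and then for $j$ rounds replace every current piece $P \in \calP$ with the two subsets output by $\dsplit(P)$ (each piece is a simple graph, hence trivially has multiplicity at most two, so $\dsplit$ applies). After $j$ rounds this produces $2^j$ disjoint pieces $E_1, \ldots, E_{2^j}$ partitioning $\supp{x}$, and the algorithm outputs the $2^j$ vectors $\x^{(i)}$ described above.

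The key step will be to argue, by induction on split depth, that each piece and each vertex's degree within it are close to their ``ideal'' values. Using \ref{property:size-halved} together with the telescoping recurrence $c_{k+1} = c_k/2 + 1/2$ with $c_0 = 0$, one obtains $\bigl||E_i| - |\supp{x}|/2^j\bigr| \leq 1$. Using \ref{property:degree-halved} with the recurrence $c'_{k+1} = c'_k/2 + 1$ and $c'_0 = 0$, one obtains $|d_i(v) - d(v)/2^j| \leq 2$, where $d_i(v)$ and $d(v)$ are the degrees of $v$ in $E_i$ and $\supp{x}$ respectively. Multiplying through by $\mu = 2^j \lambda$ converts these bounds into the deviations $|x^{(i)}(v) - x(v)| = \lambda \cdot |2^j d_i(v) - d(v)| \leq 2\mu < 4\eps$ and $|\norm{x^{(i)}} - \norm{x}| = \lambda \cdot \bigl|2^j |E_i| - |\supp{x}|\bigr| \leq \mu < 2\eps$.

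From these, the four coarsening axioms follow immediately: \ref{property:containment} by construction; \ref{property:global-slack} from $|\norm{x^{(i)}} - \norm{x}| < 2\eps \leq 4\eps \cdot \norm{x} + 4\eps$ using the standing assumption $\norm{x}\geq 1$; \ref{property:vertex-wise-slack} since every vertex contributes $0$ to $d^{4\eps}_V(\x, \x^{(i)})$; and \ref{property:edgebound-correct} since $x_e = \lambda < \eps$ and $x^{(i)}_e \in \{0, \mu\} \subseteq \{0\} \cup [\eps, 2\eps)$. The $E_i$ partition $\supp{x}$, so they are disjoint and together cover all of $\supp{x}$, well above the required half. For the runtime, each of the $j$ levels processes $|\supp{x}|$ edges in total (by the linear cost of $\dsplit$), giving $O(|\supp{x}| \cdot j)$; using $\lambda \geq 1/|\supp{x}| \geq 2/n^2$ (from $\norm{x} \geq 1$), one bounds $j \leq \lceil \log_2(\eps n^2 / 2) \rceil = O(\log(\eps^{-1} n))$ as required. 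The main subtlety is that a naive chaining of the per-level error would yield an additive $\Theta(j)$ error in degrees; it is precisely the geometric decay in the above recurrences that keeps the error a constant, which is exactly what is needed for every $d^{4\eps}_V(\x, \x^{(i)})$ to be identically zero.
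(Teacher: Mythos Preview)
Your proposal is correct and follows essentially the same approach as the paper: recursively apply $\dsplit$ for $j=\lceil\log_2(\eps/\lambda)\rceil$ levels, assign uniform weight $\mu=2^j\lambda\in[\eps,2\eps)$ on each leaf piece, and verify the coarsening axioms via per-piece size and degree bounds. The paper accounts for the error by telescoping (bounding $|\normOp{\x^i}-\normOp{\x^{i-1}}|\le\lambda\cdot 2^i$ and summing the geometric series to $\le 2\eps'\le 4\eps$), whereas you track the deviation directly via the fixed-point recurrences $c_{k+1}=c_k/2+1/2$ and $c'_{k+1}=c'_k/2+1$; these are two ways of expressing the same geometric decay and yield the same constants. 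Your use of the standing assumption $\norm{x}\ge 1$ to bound $\lambda\ge 2/n^2$ (and hence $j=O(\log(\eps^{-1}n))$) replaces the paper's explicit handling of the small-$\lambda$ regime via singleton coarsenings, but both are fine given the paper's conventions.
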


Combining the above lemma with \Cref{lem:dynamic-coarsening-to-rounding,lem:split-to-dynamic-coarsening} yields the first result of \Cref{thm:bipartite}.

\begin{cor}

\label{cor:bipartite:deterministic}

There exists a deterministic dynamic bipartite rounding algorithm with $\update$ time $O(\eps^{-1} \cdot t(\eps,n))$ and $\init$ time $O(|\supp{x}|\cdot t(\eps,n))$, for $t(\eps,n) = \log n +  \log^2(\eps^{-1})$.
\end{cor}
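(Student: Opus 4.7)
The plan is to instantiate the pipeline \emph{split algorithm} $\to$ \emph{dynamic coarsening} $\to$ \emph{dynamic rounding} by combining the three tools that are already in place: the deterministic static split algorithm of \Cref{det-split}, the meta-reduction \Cref{lem:split-to-dynamic-coarsening} that turns a static splitting algorithm into a dynamic coarsening one, and finally \Cref{lem:dynamic-coarsening-to-rounding} that combines any dynamic coarsening with any dynamic rounding algorithm for matchings with bounded-below minimum weight. The rounding algorithm $\calR$ we plug into the latter reduction will be \Cref{alg:dynamic_rounding_hierarchy} (using \Cref{thm:sequential-algo}), since it pays only polylogarithmically in $\delta^{-1}$.

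The crucial step is choosing the parameter $\eps_1$ at which we invoke the split algorithm of \Cref{det-split} (which produces a $(4\eps_1,\eps_1)$-split). Denote by $\eps$ the target approximation of $\calR^*$. By \Cref{lem:split-to-dynamic-coarsening}, a $(\gamma,\delta)=(4\eps_1,\eps_1)$-split with runtime factor $t_s=O(\log(\eps_1^{-1} n))$ yields a dynamic coarsening whose approximation is $O(\eps + \eps_1\cdot \eps^{-1}\cdot \log(\eps_1^{-1} n))$. I will choose $\eps_1 = \Theta(\eps^{2}/\log n)$, which makes this approximation $O(\eps)$, and simultaneously makes the ``minimum weight threshold'' $\delta = \eps_1$ of the coarsening at least $\Theta(\eps^2/\log n)$. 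Since \Cref{lem:dynamic-coarsening-to-rounding} produces an $O(\eps+\delta)=O(\eps)$-approximate rounding, the final approximation target is met.

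For the update time, I will simply sum the three contributions from \Cref{lem:dynamic-coarsening-to-rounding}:
\[
t_U^{\calC} + t_U^{\calR} + \eps^{-1}\cdot t_I^{\calR}.
\]
By \Cref{lem:split-to-dynamic-coarsening}, $t_U^{\calC} = O(\eps^{-1}\cdot t_s) = O(\eps^{-1}\cdot (\log n+\log\eps^{-1}))$. By \Cref{thm:sequential-algo} applied with $\delta=\eps_1=\Theta(\eps^2/\log n)$, we have $t_U^{\calR} = O(\eps^{-1}\cdot \log^2(\eps^{-1}\delta^{-1})) = O(\eps^{-1}\cdot(\log\log n+\log\eps^{-1})^2) = O(\eps^{-1}(\log^2\log n+\log^2\eps^{-1}))$ and $t_I^{\calR}=O(\log(\eps^{-1}\delta^{-1}))=O(\log\log n + \log\eps^{-1})$. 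Summing and absorbing $\log^2\log n = o(\log n)$ gives total update time $O(\eps^{-1}\cdot(\log n+\log^2\eps^{-1})) = O(\eps^{-1}\cdot t(\eps,n))$. For $\init$, \Cref{lem:dynamic-coarsening-to-rounding} gives $O(|\supp{x}|\cdot (t_I^{\calR}+t_I^{\calC}))=O(|\supp{x}|\cdot (\log n+\log\eps^{-1})) = O(|\supp{x}|\cdot t(\eps,n))$, as required. Determinism is inherited all the way through, since \Cref{det-split} is deterministic and both reductions preserve determinism.

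The only non-routine step is the parameter balancing in the previous paragraph: we need $\eps_1$ small enough to absorb the $\eps^{-1}\log(\eps_1^{-1} n)$ blow-up in the approximation guarantee of \Cref{lem:split-to-dynamic-coarsening}, but not so small that $\log(\eps^{-1}\eps_1^{-1})$ explodes in the bound from \Cref{thm:sequential-algo}. The specific choice $\eps_1=\Theta(\eps^2/\log n)$ makes $\eps^{-1}\eps_1^{-1}=\Theta(\log n/\eps^3)$ so that $\log(\eps^{-1}\eps_1^{-1})=O(\log\log n+\log\eps^{-1})$, and thus $\log^2(\eps^{-1}\eps_1^{-1})$ contributes only $O(\log^2\eps^{-1})$ (plus a negligible $\log^2\log n$ term that is swallowed by the $\log n$ from the coarsening update cost). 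Beyond this calculation, the proof is entirely a black-box composition.
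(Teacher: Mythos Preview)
Your proof is correct and follows exactly the same pipeline as the paper's: invoke \Cref{det-split} for a deterministic static split, turn it into a dynamic coarsening via \Cref{lem:split-to-dynamic-coarsening}, and combine with \Cref{alg:dynamic_rounding_hierarchy} through \Cref{lem:dynamic-coarsening-to-rounding}. The only cosmetic difference is your split parameter $\eps_1=\Theta(\eps^{2}/\log n)$ versus the paper's $\eps^{3}/\log n$; both choices yield an $O(\eps)$-coarsening (under the paper's standing assumption $\eps^{-1}=\poly(n)$, which you implicitly use when bounding $\gamma\eps^{-1}\log(\gamma^{-1}n)$ by $O(\eps)$) and identical asymptotic running times.
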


\begin{proof}

By \Cref{det-split}, there exists a deterministic static $\left(\frac{4 \eps^3}{\log n}, \frac{\eps^3}{\log n}\right)$-split algorithm that on uniform fractional matching $\x$ runs in $O(|\supp{\x}| \cdot \log(\eps^{-1} \cdot n))$ time. Plugging this algorithm into Lemma~\ref{lem:split-to-dynamic-coarsening} yields a deterministic dynamic $\left(O(\eps), \frac{\eps^{3}}{\log n }\right)$-coarsening algorithm $\calC$ with $\update$ time $t^{\calC}_U = O(\eps^{-1} \cdot (\log n))$ and initialization time $O(|\supp{\x}| \cdot \log n)$. 
Moreover, by \Cref{thm:sequential-algo} there exists a deterministic dynamic bipartite matching rounding algorithm $\calR$ for fractional bipartite matchings $\x$ with $\x_{\min}=\Omega(\poly(\eps^{-1}\cdot \log n))$ with $\update$ time $t^{\calR}_U = O(\eps^{-1} \cdot \log^2(\eps^{-1}\cdot \log n))$ and $\init$ time $O(|\supp{\bf{x}}| \cdot t^{\calR}_I)$, for $t^{\calR}_I= O(\log(\eps^{-1}\cdot \log n))$. Plugging these algorithms into \Cref{lem:dynamic-coursening:expectation}, we obtain a deterministic algorithm which has $\update$ time $$O(t^{\calC}_U + t^{\calR}_U + t^{\calR}_I \cdot \eps^{-1}) = O(\eps^{-1} \cdot (\log n + \log^2(\eps^{-1}\cdot \log n))) = O(\eps^{-1} \cdot (\log n + \log^2(\eps^{-1})),$$ 
and initialization time $O(|\supp{\x}| \cdot (\log n + \log^2 (\eps^-1))$. The last equality holds for all ranges of $n$ and $\eps$, whether $\eps^{-1}=O(\log n)$ or $\eps^{-1}=\Omega(\log n)$.
\end{proof}

Next, in \Cref{sec:api-rand-implement} we provide a simple linear-time subsampling-based randomized split algorithm with the following properties.

\begin{restatable}{lem}{randsplitwhp}\label{rand-split-whp}
For any $\eps > 0$, there exists a static randomized algorithm 
that on uniform fractional matchings $\x$ computes an $(\eps, \frac{\eps^4}{24\log^2 n})$-split in 
$O(|\support{x}|)$-time, and succeeds w.h.p.
\end{restatable}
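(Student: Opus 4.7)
The plan is to use uniform random coloring as a simple linear-time subsampling scheme. Given a $\lambda$-uniform fractional matching $\x$ with $\lambda \leq \delta$, set $K := \lceil \delta/\lambda \rceil$ and $\nu := K\lambda$; note $\nu \in [\delta, 2\delta)$ by the bounds $K \geq \delta/\lambda$ and $K - 1 < \delta/\lambda$. Independently assign each edge $e \in \supp{x}$ a color $c(e)$ drawn uniformly from $\{1, \dots, K\}$, and output $\z^{(1)}, \dots, \z^{(K)}$ defined by $z^{(i)}_e := \nu$ if $c(e) = i$ and $z^{(i)}_e := 0$ otherwise. Generating all colors runs in $O(|\supp{x}|)$ time.

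By construction, the supports $\supp{z^{(i)}}$ are pairwise disjoint subsets of $\supp{x}$ whose union is exactly $\supp{x}$ (more than half is covered), and every nonzero entry equals $\nu \in [\delta, 2\delta)$, which yields properties \ref{property:containment} and \ref{property:edgebound-correct} deterministically. For property \ref{property:global-slack}, observe that $\norm{z^{(i)}} = \nu \cdot |c^{-1}(i)|$, where $|c^{-1}(i)| \sim \mathrm{Bin}(|\supp{x}|, 1/K)$ has expectation $|\supp{x}|/K$; hence $\E[\norm{z^{(i)}}] = \nu \cdot |\supp{x}|/K = \lambda \cdot |\supp{x}| = \norm{x}$. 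Using $\nu^2 \cdot |\supp{x}|/K = \nu \cdot \norm{x}$, a standard multiplicative Chernoff bound gives, w.h.p., $|\norm{z^{(i)}} - \norm{x}| = O(\sqrt{\nu \cdot \norm{x} \cdot \log n}) + O(\nu \log n)$. Substituting $\nu \leq 2\delta = \eps^4/(12 \log^2 n)$ bounds the first term by $O(\eps^2 \sqrt{\norm{x}/\log n})$; a short case split (AM-GM when $\norm{x}$ is not too small, and the additive $\eps$ slack otherwise) shows this is at most $\eps \cdot \norm{x} + \eps$, while the second term is $O(\eps^4/\log n) \leq \eps$.

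Property \ref{property:vertex-wise-slack} follows from an analogous per-vertex concentration: $z^{(i)}(v) = \nu \cdot \mathrm{Bin}(d(v), 1/K)$ has expectation $\lambda d(v) = x(v)$, and since $\x$ is a fractional matching, $d(v) \leq 1/\lambda$. Chernoff therefore yields $|z^{(i)}(v) - x(v)| = O(\sqrt{\nu \log n}) + O(\nu \log n) = O(\eps^2/\sqrt{\log n})$ w.h.p., which for the chosen $\delta$ is at most $\eps$. Consequently every per-vertex term of $d^{\eps}_V(\x, \z^{(i)}) = \sum_v (|x(v) - z^{(i)}(v)| - \eps)^+$ vanishes, so $d^{\eps}_V(\x, \z^{(i)}) = 0$, comfortably satisfying \ref{property:vertex-wise-slack}. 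A union bound over the $K \leq \poly(n)$ coarsenings and $n$ vertices, each Chernoff event failing with probability at most $n^{-c}$ for any desired constant $c$, establishes correctness w.h.p.

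The main technical care lies in the parameter tuning: the per-vertex Chernoff deviation $\sqrt{\nu \log n}$ must not exceed the vertex slack $\eps$, which forces $\delta = O(\eps^2/\log n)$; the slightly tighter choice $\delta = \eps^4/(24 \log^2 n)$ then absorbs the Chernoff constants in both \ref{property:global-slack} and \ref{property:vertex-wise-slack} simultaneously. Beyond this careful bookkeeping of constants, the argument is a routine Chernoff-plus-union-bound calculation, yielding the claimed linear-time high-probability $(\eps, \eps^4/(24 \log^2 n))$-split algorithm.
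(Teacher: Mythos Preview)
Your approach is essentially identical to the paper's: independently color each edge of $\supp{x}$ with one of $K \approx \delta/\lambda$ colors, scale the resulting color classes to weight $K\lambda \in [\delta,2\delta)$, and verify \ref{property:global-slack} and \ref{property:vertex-wise-slack} via Chernoff plus a union bound. The paper uses $k = 2^{\lceil \log_2(\delta/\lambda)\rceil}$ rather than your $K = \lceil \delta/\lambda \rceil$, but this is immaterial.

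There is one small gap. Your union bound ``over the $K \leq \poly(n)$ coarsenings'' is asserted but not justified: the lemma statement places no lower bound on $\lambda$, so $K = \lceil \delta/\lambda\rceil$ can be super-polynomial, and then union-bounding $K\cdot n$ Chernoff events with failure probability $n^{-c}$ each no longer gives a w.h.p.\ guarantee. The paper handles this with a trivial base case: if $\lambda \leq \eps/n^2$ (whence $\norm{\x}\leq \eps$), output $|\supp{\x}|$ many single-edge coarsenings, each of weight $\approx\delta$; properties \ref{property:global-slack} and \ref{property:vertex-wise-slack} then hold vacuously by the additive $+\eps$ slack. With that one-line case added, $\lambda \geq \eps/n^2$ gives $K = O(n^2)$ and your union bound goes through; the rest of your argument matches the paper's.
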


Combining the above lemma with \cref{lem:dynamic-coarsening-to-rounding,lem:split-to-dynamic-coarsening} yields the w.h.p.~result of \Cref{thm:bipartite}.
\begin{cor}

\label{cor:bipartite:whp}

There exists an adaptive dynamic bipartite rounding algorithm that succeeds w.h.p., with $\update$ time
$O(\eps^{-1} \cdot t(\eps,n))$ and $\init$ time $O(|\supp{x}|\cdot t(\eps,n))$, for $t(\eps,n)=\log^2 \log n+\log^2(\eps^{-1})$.
\end{cor}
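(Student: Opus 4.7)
The plan is to mirror the proof of \Cref{cor:bipartite:deterministic}, simply replacing the deterministic split algorithm of \Cref{det-split} with the randomized w.h.p.\ split of \Cref{rand-split-whp}, and then tuning the split parameter so that the resulting $\delta$ makes the downstream call to \Cref{thm:sequential-algo} cheap.

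First, I would apply \Cref{rand-split-whp} with parameter $\eps_0 := \Theta(\eps^2 / \log n)$, chosen so that (i) when fed into \Cref{lem:split-to-dynamic-coarsening}, the induced error $\eps_0 \cdot \eps^{-1} \cdot \log(\eps_0^{-1}\cdot n)$ is $O(\eps)$, and (ii) the associated $\delta := \eps_0^4/(24\log^2 n)$ satisfies $\log(\eps^{-1}\delta^{-1}) = O(\log(\eps^{-1}) + \log\log n)$. This gives a static $(\eps_0, \delta)$-split algorithm running in time $O(|\supp{x}|)$ on uniform fractional matchings, correct w.h.p.\ per invocation.

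Next, I would plug this split algorithm into \Cref{lem:split-to-dynamic-coarsening} to obtain a dynamic $(O(\eps), \delta)$-coarsening algorithm $\calC$ with $\update$ time $t_U^{\calC} = O(\eps^{-1})$ and $\init$ cost $O(|\supp{\x}|)$ per edge. Separately, \Cref{thm:sequential-algo} provides a deterministic bipartite rounding algorithm $\calR$ for fractional matchings with $\x_{\min} \geq \delta$, with $\update$ time $t_U^{\calR} = O(\eps^{-1}\cdot \log^2(\eps^{-1}\delta^{-1})) = O(\eps^{-1}(\log^2(\eps^{-1}) + \log^2\log n))$ and $\init$ cost $t_I^{\calR} = O(\log(\eps^{-1}) + \log\log n)$ per edge. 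Finally, feeding $\calC$ and $\calR$ into \Cref{lem:dynamic-coarsening-to-rounding} yields an $O(\eps)$-approximate dynamic bipartite rounding algorithm with $\update$ time
\[
O\bigl(t_U^{\calC} + t_U^{\calR} + \eps^{-1}\cdot t_I^{\calR}\bigr) = O\bigl(\eps^{-1}\cdot(\log^2\log n + \log^2(\eps^{-1}))\bigr),
\]
and the stated $\init$ time, after rescaling $\eps$ by a constant factor.

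The one subtlety I would need to address is propagating the w.h.p.\ (and adaptive) guarantee. Since \Cref{rand-split-whp} succeeds w.h.p.\ per call and \Cref{lem:split-to-dynamic-coarsening} invokes the static split only during rebuilds (at most $\poly(n)$ times over any $\poly(n)$-length update sequence), using fresh randomness per call and a union bound gives w.h.p.\ correctness throughout. Adaptivity follows because each rebuild uses independent randomness unseen by the adversary before that rebuild; hence the split succeeds w.h.p.\ regardless of how the adversary adapts to the outputs of previous rebuilds. I do not expect serious obstacles here: the main work is just verifying that the chosen $\eps_0$ simultaneously controls the coarsening error and keeps $\log(\eps^{-1}\delta^{-1})$ within the desired $O(\log(\eps^{-1}) + \log\log n)$, which is a routine calculation.
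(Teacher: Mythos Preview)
Your proposal is correct and takes essentially the same approach as the paper: instantiate \Cref{rand-split-whp} with a suitably small parameter, feed it through \Cref{lem:split-to-dynamic-coarsening} to get a dynamic $(O(\eps),\poly(\eps/\log n))$-coarsening, then compose with \Cref{thm:sequential-algo} via \Cref{lem:dynamic-coarsening-to-rounding}. The only cosmetic difference is your split parameter $\eps_0 = \Theta(\eps^2/\log n)$ versus the paper's $\eps^3/\log n$; both yield $\delta = \poly(\eps/\log n)$ and hence the same final bounds, and your explicit discussion of the w.h.p.\ union bound and adaptivity is if anything more careful than the paper's.
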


\begin{proof}

By Lemma~\ref{rand-split-whp}, there exists a randomized static algorithm that computes a $\left(\frac{\eps^3}{\log(n)}, \frac{\eps^{12}}{24 \cdot \log^6(n)}\right)$-split of any uniform fractional matching $\x$ in $O(|\supp{\x}|)$ time, succeeding w.h.p. Plugging this algorithm into Lemma~\ref{lem:split-to-dynamic-coarsening} we obtain a randomized (with high probability) dynamic  $\left(O(\eps), \frac{\eps^{12}}{24 \cdot \log^6(n)}\right)$-coarsening algorithm $\calC$ with $\update$ time $t^{\calC}_U = O(\eps^{-1})$ and $\init$ time $O(|\supp{\x}|)$. 
On the other hand, by \Cref{thm:sequential-algo}, there exists a deterministic dynamic bipartite matching rounding algorithm $\calR$ for fractional matchings $\x$ with $\x_{\min}=\Omega(\poly(\eps^{-1}\cdot \log n))$ with $\update$ time $t^{\calR}_U = O(\eps^{-1} \cdot \log^2(\log n \cdot \eps^{-1}))$ and $\init$ time $O(|\supp{\bf{x}}| \cdot t^{\calR}_I)$, for $t^{\calR}_I= O(\log(\log n \cdot \eps^{-1}))$. Plugging these algorithms into \Cref{lem:dynamic-coursening:expectation}, we obtain a randomized \adaptive algorithm which works with high probability and has $\update$ time $$O(t^{\calC}_U + t^{\calR}_U + t^{\calR}_I \cdot \eps^{-1}) = O(\eps^{-1} \cdot \log^2(\log n \cdot \eps^{-1})) = O\left(\eps^{-1}\cdot \left(\log^2\log n + \log^2(\eps^{-1})\right)\right),$$
and $\init$ time $O(|\supp{\x}| \cdot \log^2(\log n) + \log^2 (\eps^{-1}))$. The last equality holds whether $\eps^{-1}=O(\log n)$ or $\eps^{-1}=\Omega(\log n)$.
\end{proof}

Finally, in \Cref{sec:proof:lem:dynamic-coursening:expectation}, building on a \outputadaptive dynamic set sampling algorithm which we provide in \Cref{sec:app:setsampling},
we give an \outputadaptive coarsening algorithm with constant (and in particular independent of $n$) expected amortized $\update$ time.

\begin{restatable}{lem}{randsplitexp}
    
\label{lem:dynamic-coursening:expectation}

There exists an \outputadaptive dynamic 
$(O(\eps), \eps^{3})$-coarsening algorithm for dynamic fractional matchings $\x$
with expected $\update$ time $O(\eps^{-1})$ and expected $\init$ time $O(|\supp{\x}|)$.
\end{restatable}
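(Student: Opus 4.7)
The plan is to reduce the task to an output-adaptive dynamic set sampling primitive (deferred to \Cref{sec:app:setsampling}) and then assemble the coarsening by independent subsampling within geometric value-buckets. Concretely, for every edge $e$ with $x_e \geq \delta = \eps^3$ I set $x'_e := x_e$; this satisfies properties \ref{property:containment} and \ref{property:edgebound-correct} on such edges and costs zero time per update. For edges with $x_e < \delta$, I partition them into $O(\log \eps^{-1})$ geometric value buckets $B_i := \{e : x_e \in [(1+\eps)^{-i-1}, (1+\eps)^{-i})\}$ with $(1+\eps)^{-i} \leq \delta$, and for each bucket I maintain a Bernoulli$(p_i)$ subsample $B'_i \subseteq B_i$ via one instance of the dynamic set sampling structure, with $p_i$ chosen so that assigning each sampled edge a common value in $[\delta, 2\delta)$ preserves its expected mass up to a $(1\pm O(\eps))$ factor.

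Each call to $\update(e, \nu)$ moves $e$ between at most two buckets, so the maintenance reduces to $O(1)$ insert/delete operations on the set sampling structure, each of expected $O(1)$ cost by the structure's guarantee. This trivially fits within the $O(\eps^{-1})$ budget and leaves slack for a periodic $\Theta(\eps \cdot \norm{x})$-step rebuild to bound drift in the $\norm{x}$ that determines the sampling thresholds, analogously to how \Cref{alg:dynamic_rounding_hierarchy} uses $\rebuild$. The coarsening properties then fall out: properties \ref{property:containment} and \ref{property:edgebound-correct} hold deterministically by construction, while properties \ref{property:global-slack} and \ref{property:vertex-wise-slack} follow from linearity of expectation, which gives $\E[\norm{x'}] = (1 \pm O(\eps))\norm{x}$ and $\E[x'(v)] = (1 \pm O(\eps))\,x(v)$, together with Chernoff-type concentration across the independent samples. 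The threshold $\eps$ inside $d^\eps_V$ is essential here: vertices whose sample variance exceeds their expected fractional degree are absorbed by the $(\cdot - \eps)^+$ correction, while higher-mass vertices concentrate tightly, so that the sum over $v$ can be bounded by $\eps \cdot \norm{x} + \eps$ in expectation.

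The main obstacle is the output-adaptive adversary. If the algorithm sampled independently using internal random coins, then revealing the output $\x'$ would leak which coins fired, and a subsequent adversarial update could exploit this correlation to violate the concentration needed above. Handling this is precisely the role of the set sampling structure of \Cref{sec:app:setsampling}: its guarantee is that, conditioned on the adversary's view so far, the distribution of the subsample remains effectively a fresh independent Bernoulli sample, so that Chernoff bounds remain valid across the entire update sequence. Given this black-box primitive, the coarsening algorithm above falls out from standard dynamic bucketing, with all the nontrivial probabilistic work localized inside the set sampling subroutine.
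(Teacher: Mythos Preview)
Your proposal has a genuine gap in how it handles output-adaptivity. The set sampler of \Cref{sec:app:setsampling} does \emph{not} maintain a running subsample that remains ``distributionally fresh'' after the adversary has seen it; that is impossible, since an adversary who sees $\x'$ can simply delete every sampled edge and drive $\norm{x'}$ to zero while $\norm{x}$ barely moves. What the sampler actually provides is a $\sample()$ operation that returns a \emph{new} independent sample in expected $O(1+|T|)$ time, and its output-adaptivity guarantee is only that each such fresh call is correctly distributed even when the intervening $\set()$ calls depended on prior $\sample()$ outputs.

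The paper's proof is built around exactly this distinction. It periodically calls $\sample()$ to draw a fresh candidate coarsening, \emph{deterministically checks} whether the result is a valid $(O(\eps),\eps^3)$-coarsening (this succeeds with constant probability, shown via Chernoff for \ref{property:global-slack} and a Markov argument on $\E[d^\eps_V]$ for \ref{property:vertex-wise-slack}), and retries if not. Between re-samplings it ignores insertions, removes deleted edges from $\x'$, and appeals to the stability lemma (\Cref{lem:coarsening:stability}) to argue the coarsening stays valid for $\Theta(\norm{\x}\cdot\eps^{-2})$ updates. That period is essential: a fresh sample has expected size $\Theta(\norm{\x}/\eps^3)$, so amortizing over $\Theta(\norm{\x}/\eps^2)$ updates gives the $O(\eps^{-1})$ bound. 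Your stated rebuild period of $\Theta(\eps\cdot\norm{\x})$ would instead yield $\Theta(\eps^{-4})$. Two smaller points: the geometric bucketing is unnecessary (the paper samples each low-weight edge with probability $x_e/\eps^3$ directly), and your bucket count $O(\log\eps^{-1})$ is off, since $x_e$ may be as small as $\eps/n^2$.
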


Finally, combining the above lemma with \Cref{lem:dynamic-coarsening-to-rounding}  yields the third result of \Cref{thm:bipartite}.

\begin{cor}

\label{cor:bipartite:expectation}

There exists an \outputadaptive dynamic bipartite rounding algorithm with expected $\update$ time $O(\eps^{-1} \cdot t(\eps))$ and expected $\init$ time $O(|\supp{\x}| \cdot t(\eps))$ for $t(\eps) = \log^2(\eps^{-1})$.
\end{cor}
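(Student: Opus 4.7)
The plan is to derive this corollary by plugging the output-adaptive coarsening algorithm of \Cref{lem:dynamic-coursening:expectation} together with the deterministic hierarchical rounding algorithm of \Cref{thm:sequential-algo} (\Cref{alg:dynamic_rounding_hierarchy}) into the composition result of \Cref{lem:dynamic-coarsening-to-rounding}. No new structural work is needed; everything is black-box from the preceding components.

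Concretely, I would take $\calC$ to be the \outputadaptive $(O(\eps), \eps^{3})$-coarsening algorithm of \Cref{lem:dynamic-coursening:expectation}, so that its output $\x'$ satisfies $\x'_{\min}\geq \delta:=\eps^{3}$, with expected update time $t^{\calC}_U = O(\eps^{-1})$ and expected $\init$ time $O(|\supp{x}|)$, i.e.\ $t^{\calC}_I = O(1)$. I would then take $\calR$ to be \Cref{alg:dynamic_rounding_hierarchy}, which by \Cref{thm:sequential-algo} is deterministic and, under the promise $\x_{\min}\geq \delta = \eps^{3}$, has update time $t^{\calR}_U = O(\eps^{-1}\cdot \log^2(\eps^{-1}\delta^{-1})) = O(\eps^{-1}\cdot \log^2(\eps^{-1}))$ and initialization time $O(|\supp{x}|\cdot t^{\calR}_I)$ for $t^{\calR}_I = O(\log(\eps^{-1}\delta^{-1})) = O(\log(\eps^{-1}))$.

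Feeding these two algorithms into \Cref{lem:dynamic-coarsening-to-rounding} yields an $O(\eps + \delta) = O(\eps)$-approximate dynamic bipartite rounding algorithm (which after rescaling $\eps$ by a constant is $\eps$-approximate) whose adaptivity is the weaker of the two components; since $\calC$ is \outputadaptive and $\calR$ is deterministic, the composition is \outputadaptive. Its expected update time is
\[
O\!\left(t_U^{\calC} + t_U^{\calR} + \eps^{-1}\cdot t_I^{\calR}\right) = O\!\left(\eps^{-1} + \eps^{-1}\log^2(\eps^{-1}) + \eps^{-1}\log(\eps^{-1})\right) = O\!\left(\eps^{-1}\cdot \log^2(\eps^{-1})\right),
\]
and its expected $\init$ time is $O(|\supp{x}|\cdot (t^{\calR}_I + t^{\calC}_I)) = O(|\supp{x}|\cdot \log(\eps^{-1})) = O(|\supp{x}|\cdot t(\eps))$ since $\log(\eps^{-1})\leq \log^2(\eps^{-1})$ (the bound $t(\eps)=\log^2(\eps^{-1})$ claimed in the statement is loose for $\init$ but matches the $\update$ bound).

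There is essentially no main obstacle here, as this is a direct plug-and-play corollary; the only minor bookkeeping is verifying that (i) the promise $\x'_{\min}\geq \eps^{3}$ output by $\calC$ exactly matches the hypothesis needed by \Cref{thm:sequential-algo} to obtain the stated $\log^2(\eps^{-1})$ rather than $\log^2(\eps^{-1}n)$ factor in the update time of $\calR$, and (ii) the adaptivity claim of \Cref{lem:dynamic-coarsening-to-rounding}, namely that a deterministic rounding algorithm composed with an \outputadaptive coarsening algorithm is itself \outputadaptive, which is precisely asserted there. All other terms telescope into $O(\eps^{-1}\log^2(\eps^{-1}))$.
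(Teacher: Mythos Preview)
Your proposal is correct and follows essentially the same approach as the paper: combine the \outputadaptive $(O(\eps),\eps^3)$-coarsening of \Cref{lem:dynamic-coursening:expectation} with the deterministic rounding of \Cref{thm:sequential-algo} via \Cref{lem:dynamic-coarsening-to-rounding}, and read off the resulting time bounds. Your bookkeeping matches the paper's (indeed, you correctly record $t^{\calC}_U=O(\eps^{-1})$ as stated in \Cref{lem:dynamic-coursening:expectation}, which is absorbed by the dominant $O(\eps^{-1}\log^2(\eps^{-1}))$ term).
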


\begin{proof}
By \Cref{lem:dynamic-coursening:expectation}, there exists a dynamic $(\eps,O(\eps^3))$-coarsening algorithm $\calC$ with expected $\update$ time $t^{\calC}_U=O(1)$ and $\init$ time $O(|\supp{\x}|)$. 
On the other hand, by \Cref{thm:sequential-algo} there exists a deterministic (hence \outputadaptive) dynamic bipartite matching rounding algorithm $\calR$ 
for fractional matchings $\x$ with $\x_{\min}=\Omega(\eps^{3})$ with $\update$ time $t^{\calR}_U = O(\eps^{-1} \cdot \log^2(\eps^{-1}))$ and $\init$ time $O(|\supp{\bf{x}}| \cdot t^{\calR}_I)$, for $t^{\calR}_I= O(\log(\eps^{-1}))$.
Plugging these algorithms into \Cref{lem:dynamic-coursening:expectation}, we obtain an \outputadaptive algorithm with expected $\update$ time $O(t^{\calC}_U + t^{\calR}_U + t^{\calR}_I \cdot \eps^{-1}) = O(\eps^{-1} \cdot \log^2(\eps^{-1}))$ and expected $\init$ time $O(|\supp{\x}| \cdot \log^{2}(\eps^{-1}))$.
\end{proof}
\subsection{Proof of Lemma~\ref{lem:dynamic-coarsening-to-rounding}: Reducing Rounding to Coarsening}

\label{sec:proof:claim:dynamic-coursaneing-to-rounding}

The following lemma allows us to efficiently convert coarsenings to bounded coarsenings.

\begin{lem}

\label{cl:bounded-coarsening}

There exists a deterministic algorithm which given an $(\eps,\delta)$-coarsening $\x'$ of fractional matching $\x$, finds in $O(|\supp{\x'}|)$ time a bounded $(3 (\eps + \delta), \delta)$-coarsening $\x''$ of $\x$, with $\x''_e=0$ only if $x_e < \delta$.
\end{lem}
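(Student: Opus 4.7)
The plan is to build $\x''$ from $\x'$ by greedy small-edge deletion at violating vertices. Initialize $\x'' \gets \x'$. Iterate over vertices $v \in V$ in arbitrary order: if $x''(v) - x(v) > 3(\eps+\delta)$ (the \emph{trigger} threshold), repeatedly pick any incident edge $e$ with $x_e < \delta$ and $x''_e > 0$ (such $e$ has $x'_e \in [\delta, 2\delta)$ by Property \ref{property:edgebound-correct} of $\x'$) and set $x''_e \gets 0$, until either $x''(v) - x(v) \leq 3\eps + \delta$ (the \emph{target} threshold) or no such edge remains. Each vertex is processed once and each edge examined a constant number of times, so the total running time is $O(|\supp{x'}|)$.

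Properties \ref{property:containment} and \ref{property:edgebound-correct} are immediate: $\supp{x''} \subseteq \supp{x'} \subseteq \supp{x}$, and only edges with $x_e < \delta$ are ever zeroed. For Property \ref{property:boundedcoursening} (boundedness): if $v$ was never triggered, then $x''(v) \leq x'(v) \leq x(v) + 3(\eps+\delta)$ by the trigger condition; if $v$ was triggered and exited successfully, then $x''(v) \leq x(v) + 3\eps + \delta \leq x(v) + 3(\eps+\delta)$; and if $v$'s small edges were exhausted, then $x''(v) = \sum_{e\ni v:\, x_e \geq \delta} x_e \leq x(v)$, since large edges are preserved from $\x'$ (and $x'_e = x_e$ there by Property \ref{property:edgebound-correct} of $\x'$).

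The key step is to bound the total deleted weight $W := \sum_{e \in D} x'_e$, where $D$ is the set of deleted edges. The double-threshold design---trigger at excess $> 3(\eps+\delta)$, stop at excess $\leq 3\eps + \delta$---creates a $2\delta$ margin that is crucial for amortization: each processed vertex $v$ starts with excess $\alpha^*_v > 2\delta$ and incurs deletion weight in $[\alpha^*_v, \alpha^*_v + 2\delta) \subseteq [\alpha^*_v, 2\alpha^*_v)$. Summing over processed vertices, $W \leq 2 \sum_v (x'(v) - x(v) - (3\eps+\delta))^+ \leq 2\, d^\eps_V(\x, \x') \leq 2(\eps\norm{x} + \eps)$ by monotonicity (\Cref{obs:monotone-distance}) applied to Property \ref{property:vertex-wise-slack} for $\x'$. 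Combined with Property \ref{property:global-slack} for $\x'$, this immediately yields Property \ref{property:global-slack} for $\x''$: $|\norm{x} - \norm{x''}| \leq |\norm{x} - \norm{x'}| + W \leq 3(\eps\norm{x} + \eps) \leq 3(\eps+\delta)(\norm{x}+1)$.

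The subtlest part is Property \ref{property:vertex-wise-slack} (Vertex Slack) for $\x''$. My plan is to decompose via the triangle-like property of \Cref{obs:monotone-distance}, writing $d^{3(\eps+\delta)}_V(\x, \x'') \leq d^\eps_V(\x, \x') + d^{2\eps+3\delta}_V(\x', \x'')$. The first summand is bounded by $\eps\norm{x} + \eps$ via Property \ref{property:vertex-wise-slack} for $\x'$. For the second, since $\x'' \leq \x'$ pointwise, only vertices whose incident deleted weight exceeds $2\eps + 3\delta$ contribute, and this is controlled globally by $\sum_v (x'(v) - x''(v)) = 2W$. The principal obstacle is orchestrating the two thresholds so that both Global and Vertex Slack land within the target factor $3(\eps+\delta)$ simultaneously; the $2\delta$ trigger margin is precisely what rules out uncontrolled per-vertex overshoot dominating the global weight bound.
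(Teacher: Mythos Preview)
Your approach—greedy deletion of small edges at violating vertices—is essentially the same as the paper's, but your choice of thresholds creates a constant-factor problem in the Vertex Slack bound that does not resolve. With trigger $3(\eps+\delta)$ and target $3\eps+\delta$, your amortization gives deletion weight at $v$ less than $2\alpha^*_v$, whence $W \leq 2\,d^\eps_V(\x,\x') \leq 2(\eps\norm{x}+\eps)$. Plugging into your triangle decomposition, $d^{3(\eps+\delta)}_V(\x,\x'') \leq d^\eps_V(\x,\x') + d^{2\eps+3\delta}_V(\x',\x'') \leq (\eps\norm{x}+\eps) + 2W \leq 5(\eps\norm{x}+\eps)$. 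This exceeds the required $3(\eps+\delta)(\norm{x}+1)$ whenever $\delta < \tfrac{2}{3}\eps$, so the lemma as stated is not established. The ``$2\delta$ margin'' you introduced does not, contrary to your closing remark, rule out overshoot—it is precisely what costs you the factor of two on $W$.

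The paper avoids this by using a \emph{single} lower threshold $\eps+2\delta$: before every deletion the excess exceeds $\eps+2\delta$, and since each deleted edge has $x'_e < 2\delta$, the excess remains strictly above $\eps$ after each deletion. Hence the weight deleted at $v$ is at most $x'(v)-x(v)-\eps$, giving $W \leq d^\eps_V(\x,\x')$ with no factor of two. Then $d^{3(\eps+\delta)}_V(\x,\x'') \leq d^\eps_V(\x,\x') + d^0_V(\x',\x'') \leq d^\eps_V(\x,\x') + 2W \leq 3(\eps\norm{x}+\eps)$, which fits. So the fix is straightforward: drop the separate trigger threshold and delete whenever $x''(v) > x(v)+\eps+2\delta$. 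Your double-threshold design is an unnecessary complication rather than the key to the argument.
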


\begin{proof}

\textbf{The algorithm:} Initialize $\x'' \leftarrow \x'$. For any vertex $v \in V$ such that $x''(v) > x(v) + \epsilon + 2\delta,$ remove arbitrary edges $e$ incident on $v$ in $\supp{\x''}$ such that $x_e \leq \delta$ until $x''(v) \leq x(v) + \epsilon + 2\delta$. Note that edges $e$ with $x_e\geq \delta$ have $x'_e=x_e$, and so the above process must terminate as once all edges with weight at most $\delta$ are removed for all vertices $v \in V$ we have $x_v'' \leq x_v$. Finally, return $\x''$.

\textbf{Running time:} Each edge of $\supp{\x''}$ has its value decreased  at most once, hence the running time of the algorithm is at most $O(|\supp{\x'}|)$. 

\textbf{Correctness:}
By construction and by Property \ref{property:containment} of coarsening $\x'$ of $\x$, we have that $\supp{\x''} \subseteq \supp{\x'} \subseteq \supp{\x}$, and so $\x''$ satisfies Property \ref{property:containment} of a coarsening of $\x$.
Next, since $\x'$ is an $(\eps,\delta)$-coarsening of $\x'$ and $\x''$ agrees with $\x'$ on $\supp{x''}$, we find that $\x''$ also satisfies Property \ref{property:edgebound-correct}.
Moreover, by definition, the algorithm ensures that $x''(v) \leq x(v) + \epsilon + 2\delta$ for any vertex $v \in V$, and so $\x'$ satisfies Property~\ref{property:boundedcoursening} of bounded $(3(\eps+\delta),\delta)$-coarsenings of $\x$. To prove the remaining  properties we leverage some minor calculations, which we now turn to.

Consider some vertex $v$ that had one of its edges in $\supp{\x''}$ deleted. Since before the deletion it must be that $x''(v) > x(v) + \epsilon + 2\delta$ and the deleted edge had weight at most $\delta$, we must have that after the deletion $x''(v) \geq x(v) + \epsilon$. This implies that the total weight of deleted edges is upper bounded by $d^\epsilon_V(\x, \x') \leq \norm{\x} \cdot \eps + \eps$, as $\x'$ is an $(\eps, \delta)$-coarsening of $\x$. Hence, $|\norm{\x'} - \norm{\x''}| \leq \norm{\x} \cdot \eps + \eps$. Note that $|\norm{\x} - \norm{\x'}| \leq \norm{\x} \cdot \eps$ as $\x'$ is an $(\eps, \delta)$-coarsening of $\x$. Thus, $\x''$ satisfies Property~\ref{property:global-slack} of $(3(\eps + \delta), \delta)$-coarsenings of $\x$:
$$ |\norm{\x''} - \norm{\x}| \leq |\norm{\x'} - \norm{\x}| + |\norm{\x'} - \norm{\x''}| \leq \norm{\x} \cdot 2 \cdot \eps + 2 \eps.$$

By Property \ref{property:vertex-wise-slack}, $d_V^{\eps}(\x, \x') \leq \norm{\x} \cdot \eps + \eps$. Therefore, by \Cref{obs:monotone-distance} and the above, we find that $\x''$ also satisfies Property~\ref{property:vertex-wise-slack} of $(3(\eps + \delta), \delta)$-coarsenings of $\x$. 
\begin{align*}
d_V^{3(\eps + \delta)}(\x, \x'') & \leq d_V^\eps(\x, \x') + d_V^{0}(\x',\x'') \leq d_V^\eps(\x, \x') + 2 \cdot |\norm{\x''} - \norm{\x'}| \leq \norm{\x} \cdot 3 \cdot \eps  +  3\eps. \qedhere 
\end{align*}
\end{proof}

With Claim~\ref{cl:bounded-coarsening} established, we are now ready to prove \cref{lem:dynamic-coarsening-to-rounding}.

\begin{proof}[Proof of \cref{lem:dynamic-coarsening-to-rounding}]
We will describe the algorithm $\calR^*$. The input dynamic fractional matching of $\calR^*$ is denoted by $\x$. Recall that $\bf{x_{\leq 2\delta}}$ and $\bf{x_{> 2\delta}}$ refer to $\x$ restricted to edges with weight at most $2\delta$ and greater then $2\delta$ respectively. Algorithm $\calR^*$ uses $\calC$ to always maintain fractional matching $\x^{A^*}$ an $(\eps, 2\delta)$-coarsening of $\x$. Define scaling constant $\alpha = (1 + 3(\eps + 2\delta))^{-1}$, and assume $\alpha \geq 1/2$. In addition $\calR^*$ completes the following operations:

\textbf{Initialization:} $\calR^*$ uses the algorithm of \Cref{cl:bounded-coarsening} to obtain a bounded $(3\cdot(\eps+2\delta), 2\delta)$-coarsening $\x^{A^*}_N$ of $\bf{x}$ from $\x^{A^*}$. $\calR^*$ sets $\x_{Small} \leftarrow \x^{A^*}_N$ restricted to edges of $\supp{x_{\leq 2\delta}}$ and $\x_{Large} \leftarrow \x^{A^*}_N$ restricted to edges of $\suppOp(\x_{>2\delta})$. The algorithm initializes its output $\x'$ by providing $\hat{\x} = (\x_{Small} + \x_{Large})\cdot \alpha$ as input to $\calR$. We will maintain throughout the algorithm, that $\hat{\x} = (\x_{Small} + \x_{Large})\cdot \alpha$ and that $\x'$ is the output of $\calR$ when given input fractional matching $\hat{\x}$. The algorithm furthermore sets counter $C \leftarrow 0$. Let $\bf{\hat{x}}_0$ stand for the state of $\bf{\hat{x}}$ at initialization.

\textbf{Handling an edge update:} For the sake of simplicity, we assume every update either changes the weight of some edge $e=(u,v)$ from $0$ to some positive value or vice versa, thus inserting or deleting an edge to/from $\supp{\x}$.
If $x_e > 2\delta$, algorithm $\calR^*$ removes $e$ (if present) from $\suppOp(\x_{Large})$ or adds $e$ to $\suppOp(\x_{Large})$ with weight $x_e$ and then updates $\x'$ using $\calR$. Note that by assumption $\alpha \geq 1/2$, hence if $e$ was inserted into $\x$ then $\bf{\hat{x}}$ undergoes an edge update with weight at least $2\delta \cdot \alpha \geq \delta$ and hence $\calR$ can handle this update correctly.

If $x_e \leq 2 \delta$, the algorithm removes one arbitrary edge incident on $u$ and $v$ from $\suppOp(x_{Small})$ (if there is any) and administer the changes this makes on $\bf{\hat{x}}$ and $\x'$ using $\calR$. Furthermore, if $e$ was deleted from $\supp{\x}$, then remove $e$ from $\suppOp(\x_{Small})$ and update $\x'$ with $\calR$. If $e$ was inserted then its effect (apart from the updates to its endpoints) is ignored.
Either way, set $C \leftarrow C + 12 \cdot \delta$ and re-initialize the datastrucrure if $C > \norm{\x_0} \cdot \eps$.

\textbf{Running Time:} First note that at initialization the algorithm first needs to initialize $\x^{\calA^*}$ on $\x_{\leq 2\delta}$ which takes $O(t_I^{\calC} \cdot |\supp{\x}|)$ time. Afterwards $\calR^*$ constructs fractional matching $\x^{\calA^*}_N$ in $O(|\supp{\x}|)$ time. Finally, the algorithm needs to run $\calR$ on the fractional matching $\hat{\x}$ which takes $O(t_I^{\calR} \cdot |\supp{\hat{\x}}|)$ time. $|\supp{\hat{\x}}| = |\suppOp(\x_{Small})| + |\suppOp(\x_{Large})|$ where $\suppOp(\x_{Large}) = \suppOp(\x_{ > 2\delta})$ and $\suppOp(\x_{Small}) \subseteq \suppOp(\x_{ \leq 2 \delta})$. Hence, initialization takes time $O(|\supp{\x}| \cdot (t_I^\calC+ t_I^\calR))$. 

The fact that the algorithm maintains an integral matching $M\subseteq \supp{\x}$ of expected size at least $|M|\geq \norm{\x} \cdot (1-O(\epsilon+\delta))$ follows from Claim~\ref{cl:correctness:lem:dynamic-coarsening-to-rounding} and the properties of $\epsilon$-rounding algorithms. Observe that unless the algorithm re-initializes after an update, it only uses $\calR$ to handle $O(1)$ updates on the support of $\hat{\x}$, which takes update time $O(t^{\calR}_U)$. The algorithm has to run $\calC$ on the dynamic fractional matching $\x$ at all times which adds an update time of $t^{\calC}_U$. Furthermore, the algorithm sometimes re-initializes.

Consider the cost of initialization at time $0$, which took $O(|\supp{\hat{\x}_0}| \cdot t^{\calR}_U)$ time. As $\hat{\x}$ has edge weights at least $2\delta \cdot \alpha\geq\delta$ we know that $|\supp{\hat{\x}_0}| = O(\norm{\bf{\hat{x}}_0} / \delta)$. By the next initialization $C>\norm{\x_0} \cdot \eps$ hence at least $\Omega(\norm{\x_0} \cdot \eps /\delta)$ updates must have occurred. By Claim~\ref{cl:correctness:lem:dynamic-coarsening-to-rounding} we have that $\norm{\x} = \Theta(\norm{\hat{\x}})$. Amortizing the cost of the initialization over these updates yields an additional expected update time cost of $O(t^{\calR}_U \cdot \eps^{-1})$. Hence, the total expected amortized update time of the algorithm is $O(t^{\calC}_U + t^{\calR}_U + t^{\calR}_I \cdot \eps^{-1})$. 

\textbf{Adaptivity:} Note that other than the inner operations of $\calC$ and $\calR$, the algorithm's actions are deterministic, and hence it is deterministic/\adaptive/\outputadaptive if both $\calC$ and $\calR$ are.

\begin{claim}

\label{cl:correctness:lem:dynamic-coarsening-to-rounding}

At all times $\norm{\hat{\x}} \geq \norm{\x} \cdot (1 - O(\epsilon+\delta))$ and $\hat{\x}$ is a valid fractional matching.

\end{claim}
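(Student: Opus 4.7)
The plan is to prove both conclusions by induction on the number of $\update$ operations since the most recent (re-)initialization, maintaining two invariants throughout a phase: a per-vertex inequality $x_{Small}(v) + x_{Large}(v) \leq x(v) + 3(\eps+2\delta)$ for every $v \in V$, and a global drift bound $\bigl|\,\norm{x} - \norm{x_{Small}} - \norm{x_{Large}}\,\bigr| = O(\eps+\delta)\norm{x_0} + O(\eps+\delta)$. With these in place, the matching property follows from scaling by $\alpha = (1+3(\eps+2\delta))^{-1}$ and using $x(v) \leq 1$, which yields $\hat{x}(v) \leq \alpha(x(v)+3(\eps+2\delta)) \leq 1$. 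The global drift bound, combined with $\norm{\x} = \Theta(\norm{\x_0})$ during a phase, then yields $\norm{\hat{\x}} \geq \norm{\x}(1-O(\eps+\delta))$.

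The base case at the end of initialization is immediate from \Cref{cl:bounded-coarsening}: since $\x^{A^*}_N = \x_{Small} + \x_{Large}$ is a bounded $(3(\eps+2\delta), 2\delta)$-coarsening of $\x_0$, the per-vertex bound is Property~\ref{property:boundedcoursening} verbatim. For the global invariant, Property~\ref{property:edgebound-correct} guarantees that the coarsening preserves edges with $x_e \geq 2\delta$, so the initial gap $\norm{x_{\leq 2\delta,\text{init}}} - \norm{x_{Small,\text{init}}}$ equals $\norm{x_0} - \norm{x^{A^*}_N}$, bounded by Property~\ref{property:global-slack}. For the inductive step I case on the update type. In case~(a), where $x_e > 2\delta$, the algorithm mirrors the update exactly in $\x_{Large}$: both $x(v)$ and $x_{Large}(v)$ shift by the same $\pm x_e$ at each endpoint, so the per-vertex invariant is preserved and the global drift is unchanged. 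In case~(b), where $x_e \leq 2\delta$, $x(v)$ changes by at most $2\delta$, and the algorithm removes from $\x_{Small}$ one edge at each endpoint of weight in $[2\delta, 4\delta)$ by Property~\ref{property:edgebound-correct}, which absorbs any increase in $x(v)$; when no such edge is incident on $v$, $x_{Small}(v) = 0$ and the invariant reduces to $x_{Large}(v) \leq x(v)$, which always holds. Globally, each case~(b) update shifts $\norm{x_{\leq 2\delta}} - \norm{x_{Small}}$ by at most $O(\delta)$, and the reinit rule $C > \norm{x_0}\cdot\eps$ caps updates per phase at $O(\norm{x_0}\eps/\delta)$, keeping cumulative drift within $O(\eps)\norm{x_0}$.

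The main obstacle is upgrading the additive drift bound involving $\norm{\x_0}$ to the multiplicative bound in the current $\norm{\x}$ that the claim requires, since this needs $\norm{\x} = \Theta(\norm{\x_0})$ during a phase. I would handle this by observing that only case~(a) updates can alter $\norm{\x}$ by more than $O(\delta)$, and that they synchronize $\x_{Large}$ in lockstep, so they contribute nothing to $\norm{\x} - \norm{\x_{Small}+\x_{Large}}$---any drop in $\norm{\x}$ due to case~(a) is mirrored exactly in $\norm{\x_{Large}}$, preserving the approximation ratio. Since case~(b) updates each change $\norm{\x}$ by at most $2\delta$ and are capped at $O(\norm{\x_0}\eps/\delta)$ per phase, the cumulative change in $\norm{\x}$ attributable to drift-producing updates is $O(\eps)\norm{\x_0}$, so $\norm{\x} = (1 \pm O(\eps))\norm{\x_0}$ whenever the multiplicative claim is nontrivial, closing the argument.
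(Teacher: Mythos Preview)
Your approach is essentially the paper's: establish feasibility via the per-vertex invariant $x_{Small}(v)+x_{Large}(v)\le x(v)+3(\eps+2\delta)$ (the paper states this as $x_{Small}(v)\le x_{\le 2\delta}(v)+3(\eps+2\delta)$ together with $x_{Large}(v)=x_{>2\delta}(v)$, which is the same thing), and bound the norm drift by the counter $C\le \eps\norm{\x_0}$. Your base case via \Cref{cl:bounded-coarsening} and your inductive case split are exactly how the paper proceeds.

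Your final paragraph, however, contains a genuine error. You assert that case-(a) updates ``preserv[e] the approximation ratio'' because any drop in $\norm{\x}$ is mirrored in $\norm{\x_{Large}}$. That is not what mirroring buys you: subtracting the same quantity from numerator and denominator of a ratio below $1$ \emph{decreases} the ratio. What case-(a) updates preserve is the \emph{additive} gap $G=\norm{\x}-\norm{\x_{Small}+\x_{Large}}$, not the multiplicative one. Hence your conclusion ``$\norm{\x}=(1\pm O(\eps))\norm{\x_0}$ whenever the multiplicative claim is nontrivial'' does not follow: a long run of case-(a) deletions can drive $\norm{\x}$ far below $\norm{\x_0}$ without incrementing $C$ at all, and the gap $G$, which remains $O(\eps+\delta)\norm{\x_0}$, is then no longer $O(\eps+\delta)\norm{\x}$. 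The paper's own proof simply asserts the conclusion at this step without addressing the additive-to-multiplicative conversion, so you are not behind the paper---but the fix you propose is incorrect as written.
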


\begin{proof}

We will first argue that $\hat{\x}$ remains a valid fractional throughout the set of updates. By definition throughout the run of the algorithm $\hat{\x} = (\x_{Small} + \x_{Large}) \cdot \alpha$. Fix some vertex $v \in V$. As $x_{Large}(v) = x_{> 2\delta}(v)$ and $\x$ is a valid fractional matching it is sufficient to argue that $x_{Small}(v) \leq x_{\leq 2\delta}(v) + 3 \cdot (\eps + 2\delta)$ at all times. Note that this holds at initialization due to Property~\ref{property:boundedcoursening} of bounded coarsenings. Assume an edge update occurs to $\x_{\leq 2\delta}$ to edge $(u,v)$. As all edges of $\x_{\leq 2\delta}$ have weight at most $2\delta$ this update may decrease $x_{Small}(u)$ and $x_{Small}(u)$ by at most $2\delta$. The algorithm compensates for this through deleting an edge incident on both $u$ and $v$ if any are present in $\suppOp(\x_{Small})$ (which all have weight at least $2\delta$ in $\x_{Small}$). Hence for any edge $e$ in $\suppOp(\x_{Small})$ we have that $x_{Small}(e) \geq 2\delta$ after these deletions $x_{Small}(u) \leq x_{\leq 2\delta}(u) + 3 \cdot (\eps + 2\delta)$ and $x_{Small}(v) \leq x_{\leq 2\delta}(v) + 3 \cdot (\eps + 2\delta)$ hence $\bf{\hat{x}}$ remains a valid fractional matching.

It remains to argue that $\norm{\hat{\x}} \geq \norm{\x} \cdot (1 - O(\epsilon+\delta))$. Observe, that at initialization the inequality holds (as $\x_{Small}$ is a $(3(\epsilon + 2\delta), \delta)$-coarsening of $\x_{\leq 2\delta}$) and $\norm{\x_{> 2\delta}} \leq \norm{\x_{Large}} / \alpha$. As edges of $\x_{Small}$ are in $[2\delta, 4\delta]$ we also must have that by definition $(\norm{\x_{Small}} + C) \cdot \alpha \geq \norm{\x_{\leq 2\delta}}$ throughout the run of the algorithm as $C$ is increased by $4 \delta$ whenever an edge update occurs to $\hat{\x}$. As whenever $C > \norm{\hat{\x}_0} \cdot \eps$ we re-initialize we must have that $\norm{\hat{\x}} \geq \norm{\x} \cdot (1 - O(\epsilon + \delta))$ at all times.
\end{proof}

This concludes the proof of \cref{lem:dynamic-coarsening-to-rounding}.
\end{proof}
\subsection{Proof of Lemma~\ref{lem:split-to-dynamic-coarsening}: Reducing Dynamic Coarsening to Static Splitting}

\label{sec:proof:lem:split-to-dynamic-coarsening}

We start with a lemma concerning the ``stability'' of coarsenings under (few) updates.

\begin{lem}

\label{lem:coarsening:stability}

Let $\x^{(0)}$ and $\x^{(t)}$ be two fractional matchings with maximum edge weights $\delta$ which differ on at most $\norm{\x^{(0)}} \cdot \eps \cdot \delta^{-1}$ many edges. Assume that $\x^{(0)'}$ is a $(\gamma, \delta)$-coarsening of $\x^{(0)}$. Define $\x^{(t)'}$ to be $\x'$ restricted to edges of $\supp{\x^{(t)}}$. Then $\x^{(t)'}$ is a $(30\eps + 2\gamma, \delta)$-coarsening of $\x^{(t)}$.

\end{lem}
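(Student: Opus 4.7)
The plan is to verify each of the four coarsening properties of $\x^{(t)'}$ with respect to $\x^{(t)}$, exploiting that by hypothesis the set $D := \{e : x^{(0)}_e \neq x^{(t)}_e\}$ has size $|D| \leq \epsilon \norm{x^{(0)}}/\delta$ and that every edge weight is at most $\delta$.

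Properties \ref{property:containment} and \ref{property:edgebound-correct} are essentially immediate. The containment $\supp{x^{(t)'}} \subseteq \supp{x^{(t)}}$ holds by the definition of $\x^{(t)'}$. For the edge-value property, any $e \in \supp{x^{(t)'}}$ satisfies $x^{(t)'}_e = x^{(0)'}_e$, and $\x^{(0)'}$ being a $(\gamma,\delta)$-coarsening of $\x^{(0)}$ forces $x^{(0)'}_e \in \{0\}\cup[\delta,2\delta)$; since $x^{(t)}_e\leq\delta$, this lies in the allowed range (with $x^{(t)'}_e = x^{(t)}_e$ in the borderline case where $x^{(t)}_e = x^{(0)}_e = \delta$, since then $x^{(0)'}_e = x^{(0)}_e = \delta$).

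For Property \ref{property:global-slack}, I would use the triangle inequality through $\norm{x^{(0)}}$ and $\norm{x^{(0)'}}$:
\[
|\norm{x^{(t)}} - \norm{x^{(t)'}}| \leq |\norm{x^{(t)}} - \norm{x^{(0)}}| + |\norm{x^{(0)}} - \norm{x^{(0)'}}| + |\norm{x^{(0)'}} - \norm{x^{(t)'}}|.
\]
The middle term is at most $\gamma\norm{x^{(0)}} + \gamma$ by the coarsening hypothesis. The first is bounded by $\delta|D| \leq \epsilon\norm{x^{(0)}}$ since each differing edge has weight $\leq\delta$. For the third, $\x^{(t)'}$ is obtained from $\x^{(0)'}$ by zeroing entries in $\supp{x^{(0)'}}\setminus\supp{x^{(t)}} \subseteq \supp{x^{(0)}}\setminus\supp{x^{(t)}} \subseteq D$, each of value $<2\delta$ by Property \ref{property:edgebound-correct}, giving $\leq 2\delta|D| \leq 2\epsilon\norm{x^{(0)}}$. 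Converting via the crude bound $\norm{x^{(0)}} \leq \norm{x^{(t)}}+\epsilon\norm{x^{(0)}}$, so $\norm{x^{(0)}} \leq 2\norm{x^{(t)}}$ for $\epsilon\leq 1/2$, yields a bound of the form $O(\epsilon+\gamma)\norm{x^{(t)}}+O(\gamma)$, comfortably within $(30\epsilon+2\gamma)\norm{x^{(t)}}+(30\epsilon+2\gamma)$.

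For Property \ref{property:vertex-wise-slack}, I would apply the triangle inequality for $d^{\cdot}_V$ from \Cref{obs:monotone-distance} with the split $\gamma = 0+\gamma+0$:
\[
d^{\gamma}_V(\x^{(t)}, \x^{(t)'}) \leq d^{0}_V(\x^{(t)}, \x^{(0)}) + d^{\gamma}_V(\x^{(0)}, \x^{(0)'}) + d^{0}_V(\x^{(0)'}, \x^{(t)'}).
\]
The middle term is $\leq \gamma\norm{x^{(0)}}+\gamma$ by hypothesis. The outer zero-slack terms I would dominate by their $\ell_1$ analogues: each edge meets at most two vertices, so $d^{0}_V(\x^{(t)}, \x^{(0)}) \leq 2\sum_e|x^{(t)}_e - x^{(0)}_e| \leq 2\delta|D| \leq 2\epsilon\norm{x^{(0)}}$, and the same support-difference argument gives $d^{0}_V(\x^{(0)'}, \x^{(t)'}) \leq 4\epsilon\norm{x^{(0)}}$. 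Summing, using $\norm{x^{(0)}}\leq 2\norm{x^{(t)}}$, and finally invoking the monotonicity $d^{30\epsilon+2\gamma}_V\leq d^{\gamma}_V$ (since $30\epsilon+2\gamma\geq\gamma$) yields the claimed bound. The only real obstacle is constant bookkeeping and the conversion $\norm{x^{(0)}}\mapsto\norm{x^{(t)}}$, which is harmless because for $\epsilon\geq 1/2$ the conclusion $30\epsilon+2\gamma>1$ is vacuous.
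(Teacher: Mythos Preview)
Your proposal is correct and follows essentially the same approach as the paper: both verify \ref{property:containment} and \ref{property:edgebound-correct} directly, then handle \ref{property:global-slack} and \ref{property:vertex-wise-slack} by the triangle-inequality decomposition through $\x^{(0)}$ and $\x^{(0)'}$, bounding the outer pieces via $|D|\cdot O(\delta)$ and the middle piece via the $(\gamma,\delta)$-coarsening hypothesis, and finally converting $\norm{x^{(0)}}$ to $\norm{x^{(t)}}$ using $|\norm{x^{(0)}}-\norm{x^{(t)}}|\le\epsilon\norm{x^{(0)}}$. Your constants are in fact slightly tighter than the paper's (the paper uses, e.g., $d_V^0(\x^{(0)},\x^{(t)})\le 6\epsilon\norm{x^{(0)}}$ where you get $2\epsilon\norm{x^{(0)}}$), and you are more explicit about the small-$\epsilon$ assumption, but the structure is identical.
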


\begin{proof}
Property~\ref{property:containment} of coarsenings follows trivially as $\x^{(t)'}$ is restricted to $\supp{\x^{(0)}}$ by definition. As $\x'$ is an $(\gamma, \delta)$-coarsening of $\x^{(0)}$ we are guaranteed by Property~\ref{property:edgebound-correct} that edges of $\x'$ take weight in $[\delta, 2\delta)$, implying Property~\ref{property:edgebound-correct}.

First note that as $\x^{(0)}$ and $\x^{(t)}$ differ on at most $\norm{\x^{(0)}} \cdot \delta^{-1} \cdot \eps$ edges. Each edge they differ on takes weight at most $\delta$. This first implies that $\norm{\x^{(0)}} \leq \norm{\x^{(t)}}(1+2\eps)$. Furthermore, summing the difference over all vertices implies that $d_V^0(\x^{(0)}, \x^{(0)}) \leq \norm{\x^{(0)}} \cdot 6 \cdot \eps$.

Similarly, $\x^{(0)'}$ and $\x^{(t)'}$ may only differ on at most $\norm{\x^{(0)}} \cdot \delta^{-1} \cdot \eps$ edges. On each of these edges they both take values in $[\delta, 2\delta)$. Hence, $d_V^0(\norm{\x^{(0)}}, \norm{\x^{(t)}}) \leq \norm{\x^{(0)}} \cdot 4 \eps$.

To conclude Property~\ref{property:global-slack} of coarsenings consider the following line of inequalities:
\begin{align}
|\norm{\x^{(t)}} - \norm{\x^{(t)'}}| & \leq 2 \cdot d_V^0(\x^{(0)}, \x^{(t)}) + | \norm{\x^{(0)}} - \norm{\x^{(0)'}}| + 2 \cdot d_V^0(\x^{(0)'}, \x^{(0)'}) \label{eq:1:lem:coarsening:stability} \\
& \leq \norm{x^{(0)}} \cdot (\gamma + \eps \cdot (12 + 8)) + \gamma \nonumber \\
& \leq \norm{x^{(t)}} \cdot (\gamma \cdot 2 + \eps \cdot 30) + \gamma. \label{eq:2:lem:coarsening:stability} 
\end{align}
Inequality~\ref{eq:1:lem:coarsening:stability} follows from the definitions of the distance $d^{\eps}_V$ and of norms. Inequality~\ref{eq:2:lem:coarsening:stability} follows from the observation that $\norm{\x^{(0)}} \leq \norm{\x^{(t)}}(1+2\eps)$.

Property~\ref{property:vertex-wise-slack} follows from a similar set of inequalities:
\begin{align*}
d_V^{30\gamma}(\x^{(t)}, \x^{(t)'}) & \leq d_V^0(\x^{(0)}, \x^{(t)}) + d_V^{\gamma}(\x^{(0)},\x^{(0)'}) + d_V^0(\x^{(0)'}, \x^{(0)'}) \nonumber \\
& \leq \norm{x^{(0)}} \cdot (\gamma  + \eps \cdot (6 + 4)) + \gamma \nonumber \\
& \leq \norm{x^{(t)}} \cdot (2 \gamma + 30 \eps) + \gamma. \qedhere
\end{align*}
\end{proof}

\begin{cor}

\label{cor:coarsening:stability:uniform}

The statement of \Cref{lem:coarsening:stability} holds assuming $|\supp{\x^{(0)}}| \cdot \gamma$ and $|\supp{\x^{(t)}}| \cdot \gamma$ are uniform fractional matchings (of the same uniform value) and they differ on at most $|\supp{\x^{(0)}}| \cdot \gamma$ edges while $\x^{(0)'}$ and $\x^{(t)'}$ differ on at most $|\supp{\x^{(t)'}}|$ edges. Furthermore, $\x^{(t)'}$ satisfies slightly stronger slack properties, $\left|\norm{\x} - \norm{\x'}\right| \leq \norm{\bf{x}} \cdot (2\gamma + 30\eps) + \gamma$ and $d_V^{\gamma}(\x, \x') \leq \norm{\x} \cdot (2\gamma + 30\eps) + \gamma$.

\end{cor}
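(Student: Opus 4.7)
The plan is to follow the same triangle-inequality argument as in the proof of \Cref{lem:coarsening:stability}, but to replace the two ``bulk'' distance bounds $d_V^0(\x^{(0)},\x^{(t)})\leq 6\eps\cdot \norm{x^{(0)}}$ and $d_V^0(\x^{(0)'},\x^{(t)'})\leq 4\eps\cdot \norm{x^{(0)}}$ used there with tighter versions that exploit the uniform structure. Since $\x^{(0)}$ and $\x^{(t)}$ are uniform with common value $\alpha\leq \delta$ and disagree on at most $|\supp{x^{(0)}}|\cdot \gamma$ edges, each disagreement contributes at most $\alpha = \norm{x^{(0)}}/|\supp{x^{(0)}}|$ to the $L_1$ gap, giving
\[
|\norm{x^{(0)}}-\norm{x^{(t)}}|\leq \gamma\cdot \norm{x^{(0)}}\qquad\text{and}\qquad d_V^0(\x^{(0)},\x^{(t)})\leq 2\gamma\cdot \norm{x^{(0)}}.
\]
Analogously, on the at most $|\supp{x^{(t)'}}|\cdot \gamma$ edges on which $\x^{(0)'}$ and $\x^{(t)'}$ disagree, each edge has value at most $2\delta$ by Property~\ref{property:edgebound-correct}; combining this with $|\supp{x^{(t)'}}|\leq \norm{x^{(t)'}}/\delta$ (since support edges have value at least $\delta$) yields $|\norm{x^{(0)'}}-\norm{x^{(t)'}}|\leq 2\gamma\cdot \norm{x^{(t)'}}$ and $d_V^0(\x^{(0)'},\x^{(t)'})\leq 4\gamma\cdot \norm{x^{(t)'}}$.

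I would then replay the two triangle-inequality chains of the original proof. For the strengthened global-slack property, I substitute the above into
\[
|\norm{x^{(t)}}-\norm{x^{(t)'}}|\leq 2\, d_V^0(\x^{(0)},\x^{(t)}) + |\norm{x^{(0)}}-\norm{x^{(0)'}}| + 2\, d_V^0(\x^{(0)'},\x^{(t)'}),
\]
combine with the $(\gamma,\delta)$-coarsening hypothesis $|\norm{x^{(0)}}-\norm{x^{(0)'}}|\leq \gamma\cdot \norm{x^{(0)}}+\gamma$, and convert $\norm{x^{(0)}}$ and $\norm{x^{(t)'}}$ back to $\norm{x^{(t)}}$ via the multiplicative $(1\pm O(\gamma))$ relations just derived. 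This yields a bound of $\norm{x^{(t)}}\cdot (2\gamma+30\eps)+\gamma$, in which the single surviving $+\gamma$ term comes from the coarsening hypothesis alone, since all other error terms are multiplicative in $\norm{x^{(0)}}$.

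For the strengthened vertex-wise-slack property, I use \Cref{obs:monotone-distance} together with
\[
d_V^{\gamma}(\x^{(t)},\x^{(t)'})\leq d_V^0(\x^{(0)},\x^{(t)}) + d_V^{\gamma}(\x^{(0)},\x^{(0)'}) + d_V^0(\x^{(0)'},\x^{(t)'}).
\]
The source of the improvement from $30\gamma$ down to $\gamma$ in the threshold is that the uniform structure lets me bound the two ``between-times'' distances at threshold zero, so no threshold budget is consumed by them; the threshold $\gamma$ on the left is exactly the one inherited from the coarsening hypothesis on $\x^{(0)'}$. Properties~\ref{property:containment} and \ref{property:edgebound-correct} are preserved verbatim from the original lemma, since they are local properties of the restriction $\x^{(t)'}$.

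The main obstacle is purely bookkeeping: tracking that the only $+\gamma$ additive term in the final conclusions arises from the single such term in the coarsening hypothesis of $\x^{(0)'}$ over $\x^{(0)}$, and that all multiplicative error absorbs cleanly into $\norm{x^{(t)}}\cdot (2\gamma+30\eps)$ via the sandwich $\norm{x^{(0)}},\norm{x^{(t)'}}\in (1\pm O(\gamma))\cdot \norm{x^{(t)}}$ derived from the first pair of tight inequalities.
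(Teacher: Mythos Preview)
Your proposal is correct and follows the same approach as the paper: the paper's own argument is the single sentence ``note that all the inequalities in the proof of Lemma~\ref{lem:coarsening:stability} hold in this slightly modified setting,'' and your sketch is precisely a fleshed-out version of that observation, replaying the triangle-inequality chains with the uniform-value assumption in place of the weight bound $\delta$.
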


To observe Corollary~\ref{cor:coarsening:stability:uniform} note that all the inequalities in the proof of Lemma~\ref{lem:coarsening:stability} hold in this slightly modified setting.
In Claim~\ref{cl:dynamic-coarsening:uniform:rough}, we   show that if the input fractional matching $\x$ is uniform, then we can maintain a coarsening of $\x$ efficiently as it undergoes updates. Afterwards we show how to extend the argument to general fractional matchings.

\begin{lemma}

\label{cl:dynamic-coarsening:uniform:rough} 

Let $\calS$ be a static $(\gamma, \delta)$-split algorithm that on uniform fractional matching $\x$ takes time $O(|\supp{x}| \cdot t_s)$, for $t_s = t_s(n,\gamma,\delta)$.
Then there exists a dynamic $(\epsilon+\gamma, \delta)$-coarsening algorithm $\calC_U$ for uniform fractional matchings, whose output 
$(\epsilon+\gamma, \delta)$-coarsening $\bf{x'}$ of $\bf{x}$ satisfies slightly stronger slack properties, namely:
\begin{enumerate}
[label=(C{{\arabic*}}')]
        \item \textbf{Stronger Global Slack}: $\left|\norm{\x} - \norm{\x'}\right| \leq \norm{\bf{x}} \cdot (\eps + \gamma) + \gamma$. \label{property:global-slack:stronger}  
        \item \textbf{Stronger Vertex Slack}: $d_V^{\gamma/4}(\x, \x') \leq \norm{\x} \cdot (\eps + \gamma) + \gamma$.
        \label{property:vertex-wise-slack:stronger}  
\end{enumerate}
 Algorithm $\calC_U$ has $\update$ time $O(\eps^{-1}\cdot t_s)$ and $\init$ time $O(|\supp{\bf{x}}|\cdot t_s)$, and is deterministic/adaptive if $\calS$ is deterministic/randomized.
\end{lemma}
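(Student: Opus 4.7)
The plan is to build $\calC_U$ via a standard rebuild scheme: at $\init$ (and at each rebuild) run $\calS$ once on the current $\x$, keep one of the resulting coarsenings as $\x'$, and lazily propagate updates to $\x'$ between rebuilds. Correctness will follow essentially immediately from Corollary~\ref{cor:coarsening:stability:uniform}, and the running time from standard amortization of the rebuild cost.

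In more detail, suppose $\x$ is $\lambda$-uniform. If $\lambda \geq \delta$ I would simply set $\x' := \x$, which is trivially an $(\eps+\gamma,\delta)$-coarsening satisfying the stronger slack properties (both slacks are zero), and maintaining $\x'=\x$ costs $O(1)$ per update. The interesting case is $\lambda < \delta$, in which $\x_{\max} \leq \delta$ and $\calS$ applies. On $\init$, run $\calS$ on $\x$ with parameters $(\gamma', \delta)$ for a rescaled $\gamma' = \Theta(\gamma)$ (to absorb constants from the stability corollary) to get coarsenings $\z^{(1)},\ldots,\z^{(k)}$, set $\x' \gets \z^{(1)}$ (any single coarsening — no need to combine them, as their supports are disjoint but each one already approximates all of $\norm{\x}$), snapshot $\x_0 \gets \x$, and reset a counter $c \gets 0$. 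On $\update(e,\nu)$, update $\x$ first; if $e$ is being deleted and $e \in \supp{\x'}$, also delete $e$ from $\x'$ (set $x'_e = 0$); if $e$ is being inserted, leave $\x'$ alone. Increment $c$, and whenever $c$ exceeds $c_2 \cdot \eps \cdot |\supp{\x_0}|$ for an appropriate constant $c_2$, trigger a rebuild by re-invoking $\init$ on the current $\x$.

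For correctness, the split guarantees $\x'_0 := \z^{(1)}$ is initially a $(\gamma',\delta)$-coarsening of $\x_0$ (with the strong slack properties inherent in the split). Between rebuilds, $\x_0$ and the current $\x$ are both $\lambda$-uniform and differ on at most $c_2 \eps |\supp{\x_0}|$ edges, while the maintained $\x'$ is exactly the restriction of $\x'_0$ to $\supp{\x}$. Corollary~\ref{cor:coarsening:stability:uniform} therefore yields that $\x'$ is an $(O(\gamma' + \eps), \delta)$-coarsening of the current $\x$ satisfying the strong slack properties; choosing $c_2$ and $\gamma'$ small enough constants (relative to $\eps$ and $\gamma$) matches the claimed $(\eps+\gamma,\delta)$ coarsening bound with the stronger slack $\leq \norm{\x}(\eps+\gamma)+\gamma$ and the threshold $d_V^{\gamma/4}$. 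Property~\ref{property:containment} and the edge-value property~\ref{property:edgebound-correct} are inherited directly from $\z^{(1)}$ and preserved by pure deletions.

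For the running time, $\init$ is dominated by the call to $\calS$ at cost $O(|\supp{\x}|\cdot t_s)$. Each $\update$ does $O(1)$ bookkeeping plus an amortized rebuild contribution of $O(|\supp{\x_0}|\cdot t_s)/(\eps |\supp{\x_0}|) = O(\eps^{-1}\cdot t_s)$, where the rebuild threshold also guarantees $|\supp{\x}|$ stays within a constant factor of $|\supp{\x_0}|$ so the per-rebuild cost is $O(|\supp{\x_0}|\cdot t_s)$. The algorithm is deterministic except for its use of $\calS$, so it inherits $\calS$'s adaptivity. The main obstacle is purely bookkeeping with constants: calibrating $\gamma'$ and $c_2$ so that the loose constants in Corollary~\ref{cor:coarsening:stability:uniform} (namely the $2\gamma + 30\eps$ bound and the $d_V^\gamma$ threshold) tighten into the sharper $(\eps+\gamma)$ slack and $d_V^{\gamma/4}$ threshold claimed here — a secondary subtlety is verifying that an appropriately configured static split algorithm delivers an initial coarsening $\z^{(1)}$ already satisfying these stronger properties, which I would bake into the rescaling of $\gamma$.
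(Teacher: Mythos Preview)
Your proposal has a genuine gap: you use only \emph{one} coarsening $\z^{(1)}$ from the split and rebuild only after $\Theta(\eps\cdot|\supp{\x_0}|)$ updates, but Corollary~\ref{cor:coarsening:stability:uniform} requires \emph{two} smallness conditions --- that $\x_0$ and $\x$ differ on few edges relative to $|\supp{\x_0}|$, \emph{and} that $\x'_0$ and $\x'$ differ on few edges relative to $|\supp{\x'_0}|$. You only verify the first. The second can fail badly: since the coarsening has edge values $\Theta(\delta)$ while $\x$ is $\lambda$-uniform with $\lambda$ potentially much smaller than $\delta$, we have $|\supp{\x'_0}| \approx \norm{\x_0}/\delta \ll \norm{\x_0}/\lambda = |\supp{\x_0}|$. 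An adaptive adversary can therefore concentrate all $\Theta(\eps|\supp{\x_0}|)$ deletions on $\supp{\x'_0}$ and completely empty $\x'$ long before your rebuild triggers, at which point $\norm{\x'}=0$ while $\norm{\x}$ is essentially unchanged --- violating the global slack property.

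This is precisely why the paper keeps \emph{all} the coarsenings $\z^{(1)},\dots,\z^{(k)}$ produced by the split: whenever the active coarsening has lost more than an $\Theta(\eps)$ fraction of its own support, the algorithm switches to another coarsening that has not. Because the split guarantees the supports are disjoint and jointly cover at least half of $\supp{\x_0}$, and only $\Theta(\eps)\cdot|\supp{\x_0}|$ edges are deleted before the global rebuild, pigeonhole always furnishes such a fresh coarsening. The multiplicity of coarsenings in the definition of a split is not incidental --- it is exactly what defeats the adaptive attack your single-coarsening scheme is vulnerable to.
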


\begin{proof}
The algorithm $\calC_U$ works as follows.

\textbf{Initialization:} $\calC_U$ calls $\calS$ to compute a $(\gamma/4,\delta)$-split 
$\x_1, \x_2, \dots$ of $\x$. Next, $\calC_U$ sets its output $\x'$ to be $\x_1$. Let $\x^{(0)}$ and $\x^{(0)}_i$ respectively denote  the states of $\x$ and $\x^i$, at initialization. Note that this implies an $\init$ time of $O(|\supp{\x}| \cdot t_s)$.

\textbf{Handling an Update:} If an edge $e$ gets deleted from $\supp{\bf{x}}$, then $\calC_U$ removes $e$ from $\supp{\x'}$ as well (provided we had $e \in \supp{\x'}$). Once $|\supp{\x^{(0)}}| \cdot \eps/64$ updates have occurred to $\bf{x}$ since the last initialization,  $\calC_U$ re-initializes. Further, once more than $|\supp{\x'}| \cdot \eps/32$ edges have been deleted from $\supp{\bf{x'}}$,  $\calC_U$ discards $\bf{x'}$ from memory and switches its output to be another coarsening of the split which satisfies that at most an $\eps/8$ fraction of its support has been deleted so far. The effects of insertions are ignored, except that they contribute to the counter timing the next re-initialization.

\textbf{Update Time:} Observe that the algorithm re-initializes every $\Omega(\eps \cdot |\supp{\x^{(0)}}|)$ updates. Hence, the re-initializations have an amortized update time of $O(t_s \cdot \eps^{-1})$. If at some point in time, for some coarsening $\x^i$ at least $|\supp{\x^{(0)}_i}| \cdot \eps/32$ edges have been deleted from $\supp{\x^{(0)}_i}$, then $\x^i$ can be discarded from memory as it will never again enter the output in future. Hence, $\calC_U$ adds/removes each edge of the initially computed split exactly once  to/from the output. Accordingly, the total work of $\calC_U$ between two re-initializations is proportional to $O(|\cup_i  \supp{\x^{(0)}_i}|)$, which is at most $O(|\supp{\bf{x^{(0)}}}|)$ by definition. We  similarly amortize this cost over $\Omega(\eps \cdot |\supp{\bf{x^{(0)}}}|)$ updates, which leads  to  an update time of $O(\eps^{-1})$.

Observe that at all times $\x'$ is a $(\gamma/4, \delta)$-coarsening of $\x$-s state at the last re-initialization restricted to edges which have not been deleted since the last re-initialization. This implies both Properties \ref{property:containment} and \ref{property:edgebound-correct}.

\textbf{Correctness:}
Let $\x^{(t)}$ denote for the state of $\x$ at some time  $t$ after initialization (but before the next re-initialization). $\x^{(t)}$ and $\x^{(0)}$ may differ on at most $|\supp{\x^{(0)}}| \cdot \eps/64$ edges. Suppose that  the coarsening $\x'$ is the  output of the algorithm at time $t$, and let $\x^{(0)'}$ and $\x^{(t)'}$ respectively denote the states of the coarsening $\bf{x'}$ at initialization and at time $t$. Since the algorithm would have discarded $\x'$ from memory if more than $|\supp{\x^{(0)'}}| \cdot \eps/32$ edges of $\supp{\x'}$ got deleted we know that $\x^{(0)'}$ and $\x^{(t)'}$ differ on at most $|\supp{\x^{(0)'}}| \cdot \eps/32$ edges. Based on this the correctness of the output of the algorithm follows from \Cref{cor:coarsening:stability:uniform}.

It now remains to argue that $\calC_U$ always maintains an output, i.e., the algorithm doesn't discard all coarsenings present in the initial split before re-initialization. By definition of splits, we know that $|\bigcup_i \supp{\x^{(0)}_i}| \geq |\supp{\x^{(0)}}| / 2$, and the supports of $\{\x^{(0)}_i\}$ are disjoint. The algorithm re-initializes after $|\supp{\x^{(0)}}| \cdot \eps / 64$ updates. Hence, less then a $\eps/32$ fraction of $\bigcup_i \supp{\x^{(0)}_i}$ gets deleted between successive re-initializations. Thus,  there is always a coarsening in the split that has less then $\eps/32$ fraction deleted edges (due to the pigeonhole principle), and so $\calC_U$ always maintains a correct output throughout.
\end{proof}

\paragraph{Proof of Lemma~\ref{lem:split-to-dynamic-coarsening}:} We will now show how to extend Claim~\ref{cl:dynamic-coarsening:uniform:rough} to non-uniform fractional matching and thus prove Lemma~\ref{lem:split-to-dynamic-coarsening}. Assume that $\bf{x}$ is a fractional matching. Let $\phi = \frac{\gamma}{n^2}$. For the sake of convenience assume that $\delta = \phi \cdot (1+\epsilon)^L$ for some integer $L < 2 (\log n + \log(\gamma^{-1}))\cdot \eps^{-1}$. Define $\bf{x}_i = \bf{x}^{\geq \phi \cdot (1+\eps)^{L-1}}_{<\phi \cdot (1+\eps)^{L}}$. For $i \in [L]$ define fractional matching $\hat{\bf{x}}_i$ to be a uniform fractional matching with weight $\phi \cdot (1+\epsilon)^{L-1}$ and support $\bf{x}_i$. Define $\hat{x} = \sum_{i \in [L]} \hat{\bf{x}}_i$. First note that as the graph has at most $n^2$ edges $\norm{\bf{x}_{<\phi}} \leq \gamma$. Furthermore, for any edge $e \in \supp{\bf{x}_{\geq \phi}}$ we must have that $x_e \leq \hat{x}_e \cdot (1 + \eps)$. Hence, $|\norm{\bf{\hat{x}}} - \norm{\bf{x}}| \leq \eps \cdot \norm{\bf{x}}  + \gamma$.

Algorithm $\calC$ will use algorithm $\calC_U$ from Claim~\ref{cl:dynamic-coarsening:uniform:rough} to maintain $(\eps + \gamma, \delta)$-coarsenings $\x_i'$ of fractional matchings $\hat{\x}_i$ in parallel. The output of $\calC$ will be $\x' = \sum_{i \in L} \x_i' + \x_{> \delta}$. In order to conclude the amortized update time bound observe that an edge update to $G$ may only effect a single fractional matching $\hat{\x}_i$. Note that at all times $\supp{\x'}$ is a subset of the support of one of the initially calculated coarsenings restricted to non-deleted edges, hence $\supp{\x'} \subseteq \supp{\x}$. Note that $\x'$ and $\x$ agrees on edges $e$ where $\x(e) > \delta$.
We now turn to proving the remaining properties of a coarsening:

\medskip
\noindent
\textbf{Property~\ref{property:global-slack} Global Slack}: By Claim~\ref{cl:dynamic-coarsening:uniform:rough}, we must have that $|\norm{\hat{\x}_i}- \norm{\x_i'}| \leq \norm{\hat{\x}_i} \cdot (\eps + \gamma) + \gamma$. 
\begin{align*}
\norm{\bf{x}}- \norm{\bf{x}'}|& \leq \norm{x_{<\phi}} + \sum_{i \in [L]} |\norm{\hat{\bf{x}}_i}- \norm{\bf{x}'}| + |\norm{\hat{\bf{x}}_i}- \norm{\bf{x}_i}| \nonumber \\
& \leq \gamma + \sum_{i \in [L]} \norm{\hat{\bf{x}}_i} \cdot (\eps + \gamma) + \gamma +  \norm{\bf{x}_i} \cdot \eps\nonumber \\
& \leq \norm{\bf{x}} \cdot 4(\eps + \gamma) + \gamma \cdot 2L \nonumber \\
& \leq \norm{\bf{x}} \cdot 8(\eps + \gamma) + \gamma \cdot 4\log(n/\gamma) \cdot \eps^{-1}.\nonumber
\end{align*}

\medskip
\noindent
\textbf{Property~\ref{property:vertex-wise-slack} Vertex Slack}: $ d_V^{0}(\bf{x}, \bf{\hat{x}}) \leq 2 \cdot \gamma + 2\eps \cdot \norm{\bf{x}}$ as the two matching differ only on the edges of $\bf{x_{< \phi}}$ or by just a factor of $\eps$.
\begin{align*}
    d_V^{\eps + \gamma}(\bf{x}, \bf{x'})& \leq d_V^{0}(\bf{x}, \bf{\hat{x}}) + d_V^{\eps + \gamma}(\bf{\hat{x}}, \bf{x}')\nonumber \\
   & \leq 2 \cdot \gamma + 2\eps \cdot \norm{\bf{x}} + \sum_{i \in [L]} d_V^{\eps +  \gamma}(\bf{\hat{x}_i}, \bf{x}_i') \nonumber \\
   & \leq 2 \cdot \gamma + 2\eps \cdot \norm{\bf{x}} + \sum_{i \in [L]} 2\cdot \norm{\bf{\hat{x}_i}} \cdot (\gamma + \eps) + \gamma\nonumber \\
   & \leq \norm{\bf{x}} \cdot 4 \cdot (\eps + \gamma)  + \gamma \cdot \frac{5 \cdot \log(n/\gamma)}{\eps}.\nonumber 
\end{align*}

\medskip
\noindent
\textbf{Property~\ref{property:edgebound-correct} Edge values}: The property follows as the output is the sum of edge-wise disjoint $(\eps + \gamma, \delta)$-coarsenings.

\section{Coarsening and Splitting Algorithms}\label{sec:partial-rounding-implementations}
So far, we have provided \Cref{lem:dynamic-coarsening-to-rounding,lem:split-to-dynamic-coarsening} which give a reduction from (faster) dynamic rounding to dynamic coarsening, and from dynamic coarsening to static splitting.
We further stated a number of such dynamic coarsening and static splitting algorithms, as well as their corollaries for faster rounding algorithms.
In this section we substantiate and analyze these stated coarsening and splitting algorithms.

\subsection{Deterministic Static Splitting}

\label{sec:proof:det-split}

In this section we prove \Cref{det-split}, restated below for ease of reference.

\detsplit*

Assume $\x$ is $\lambda$-uniform. First note that if $\lambda > \eps$ then we may return $\{\x\}$ as the split trivially. On the other extreme end, if $\lambda \leq \eps^2 \cdot n^{-2}$ then we may return a split consisting of $|\supp{\x}|$ many $\eps$-uniform fractional matchings each having the support of a single edge in $\supp{\x}$ (the properties of coarsening follow trivially). Hence, we may assume that $\eps^2 \cdot n^{-2} \leq \lambda \leq \eps$. Let $\epsilon' \in [\epsilon,2\epsilon]$ be some constant such that $\lambda \cdot 2^L = \epsilon'$ for an integer $L= O\left(\eps^{-1} \cdot \log n \right)$.

\textbf{The algorithm:}
Our algorithm inductively constructs sets of vectors $\calF_i$ for $i \in \{0,1,\dots,L\}$. As our base case, we let $\calF_0 := \{\x\}$. Next, for any vector $\x' \in \calF_{i-1}$, using $\dsplit$ (See~\Cref{prop:degree-split}) on $\supp{\x'}$, we compute two edge sets $E^1_{\x'}, E^2_{\x'}$, and add to $\calF_{i}$ a $\lambda \cdot 2^i$-uniform vector on each of these two edge sets. 
The algorithm outputs $\calF_L$. 

\textbf{Running time:} 
By \Cref{prop:degree-split}, each edge in $\supp{x}$ belongs to the support of exactly one vector in each $\calF_i$ and so each $\calF_i$ is found in time $O(|\supp{x}|)$ based on $\calF_{i-1}$. Therefore, the algorithm runs in the claimed $O(|\supp{\x}| \cdot L) = O(|\supp{\x}| \cdot \log(\eps^{-1} \log n)$ time. 

\textbf{Correctness:}
It remains to show that $\calF_L$ is a $(4\epsilon, \epsilon)$-split of $\x$.
First, as noted above, each edge in $\supp{x}$ belongs to the support of precisely one vector in $\calF_L$, and so these vectors individually satisfy Property \ref{property:containment} of coarsenings, and together they satisfy the covering property of splits. We now show that every vector $\x_L\in \calF_L$ satisfies the remaining properties of $(4\epsilon,\epsilon)$-coarsening of $\x$. For $i \in \{1,\dots, L-1\}$, inductively define $\x^{i-1}$ to be the fractional matching in $\calF_{i-1}$ that was split using $\dsplit$  to generate $\x^{i}$ (hence $\bf{x_0} = \x$). By Property \ref{property:size-halved} of \Cref{prop:degree-split}: 
$$|\suppOp(\x^i)| \in \left[\frac{|\suppOp(\x^{i-1})|}{2} - 1, \frac{|\suppOp(\x^{i-1})|}{2} + 1\right] \qquad \text{ for } i \in \{1,2,\dots,L\}.$$ Thus, as $\x^i$ and $\x^{i-1}$ are respectively $\lambda\cdot2^i$ and $\lambda \cdot 2^{i-1}$-uniform vectors, $|\normOp{\x^i} - \normOp{\x^{i-1}}| \leq \lambda \cdot 2^i$. Therefore, Property \ref{property:global-slack} follows form the triangle inequality, as follows.
    $$|\norm{\x} - \normOp{\x^L}| = |\normOp{\x^0} - \normOp{\x^L}| \leq \sum_{i \in [L]} \left|\normOp{\x^i} - \normOp{\x^{i-1}}\right| 
     \leq  \sum_{i \in [L]} \lambda \cdot 2^i 
     \leq 2\eps'\leq 4 \epsilon.$$

Similarly, this time by Property \ref{property:degree-halved} of \Cref{prop:degree-split}, we have that 
$\left|x(v)-x^{L}(v)\right| \leq 4\eps$ for each vertex $v\in V$.
Therefore, $d^{4\eps}_V(\x,\x^{L})=\sum_{v} \left(x(v)-x^{L}(v)-4\eps\right)^+ = 0$, and so $\x^L$ satisfies Property \ref{property:vertex-wise-slack} of a $(4\eps,\eps)$-coarsening of $\x$.

Finally, $\x^L$ is an $\epsilon'$-uniform fractional matching by definition, with $\eps'\in [\eps,2\eps]$, and thus satisfies Property \ref{property:edgebound-correct} of $(4\eps,\eps)$-coarsenings of $\x$. We conclude that $\calF_L$ is a $(4\eps,\eps)$-split of $\x$.
\subsection{Randomized Static Splitting}

\label{sec:api-rand-implement}

We now turn to proving \Cref{rand-split-whp}, restated below for ease of reference.

\randsplitwhp*

\begin{proof} 

Assume $\x$ is $\lambda$-uniform. First note that if $\lambda > \frac{\eps^4}{24 \cdot \log^2 n}$ then we may return $\{\x\}$ as the split trivially. Note that if $\lambda \leq \eps/n^2$ then $\norm{\x} \leq \eps$. In this case, it is sufficient to return a split consisting of $|\supp{\x}|$ coarsenings, where each coarsening is a single edge of $\supp{\x}$ and has weight $\eps^4/\log^2 n$. The same split can also be returned if $\norm{\x} \leq \eps^2/\log n$. Thus, from now on we assume that $\frac{\eps^4}{24 \cdot \log^2 n} \geq \lambda \geq \eps/n^2$ and $\norm{\bf{\x}} \geq \eps^2/\log n$. We start by defining the following two parameters.
$$\delta:=\frac{\eps^4}{24\log^2 n}  \text{ and } k:=2^{\lceil \log_2\frac{\delta}{\lambda}\rceil}.$$

\textbf{The algorithm:} We (implicitly) initialize $k$ zero vectors $\bf{\x}^{(1)},\dots,\bf{\x}^{(k)}$.
Next, for each edge $e\in \supp{\x}$, we roll a $k$-sided die $i \sim \Uni([k])$, and set $x^{(i)}_e \gets \delta':=\lambda \cdot k$.

\textbf{Running time:} The algorithm spends $O(1)$ time per edge in $\supp{x}$, and so it trivially takes $O(|\supp{x}|)$ time.

\textbf{Correctness:} By construction, 
$\{\suppOp(\x^{(i))}\}_{i=1}^k$ is a partition of $\supp{\x}$, and so each of these $\x^{(i)}$ satisfy Property \ref{property:containment} of coarsenings.
Moreover, since $\frac{\delta}{\lambda}\leq k\leq 2\cdot \frac{\delta}{\lambda}$, we have that $\delta'=k\cdot \lambda \in [\delta,2\delta]$. Hence, each of the vectors $\bf{\x}^{(i)}$ satisfy Property \ref{property:edgebound-correct} of an $(\eps,\delta)$-coarsening.
It remains to prove the two other properties of a coarsening.

For each edge $e\in \supp{\x}$ and $i\in [k]$, we have $x^{(i)}_e \sim \delta'\cdot \Ber(\frac{1}{k})$, 
and so 
$\E\left[x^{(i)}_e\right] = \frac{\delta'}{k} = \lambda$.
Thus, by linearity of expectation, we get: $\E[\normOp{\x^{(i)}}]=\norm{\x}]$.
Consequently, since $\norm{\x}\geq \frac{\eps^2}{\log n}$, by standard Chernoff bound and our choices of $\delta=\frac{\eps^4}{24\log^2 n}$ and $\delta'\leq 2\delta$,  
we have that Property \ref{property:global-slack}
holds w.h.p.
\begin{align*}
\Pr\left[|\;\normOp{\x^{(i)}} - \norm{\x} \;| \geq \eps\cdot \norm{\x} \right] & \leq 2\exp\left(-\frac{\eps^2 \cdot  \norm{\x}}{3\delta'}\right) \leq 2 n^{-4}.
\end{align*}
Similarly, $\E\left[x^{(i)}(v) \right] = x(v)$ for each vertex $v$, and so again a Chernoff bound implies that:
\begin{align*}
\Pr[|x^{(i)}(v) - x(v)| \geq \eps ]& \leq 2\exp\left(-\frac{\eps^2 \cdot  1}{3\delta'}\right) \leq 2n^{-4}. 
\end{align*}
Therefore, by taking a union bound over the $(n+2)$ bad events, for each of the $k \leq \frac{2\delta}{\lambda}\leq \frac{2\eps^4}{24\log^2 n}\cdot \frac{n^2}{\eps} \leq 2n^2$ vectors (i.e., $O(n^3)$ bad events), we have that $d^{\eps}_V(\x,\x^{(i)})=0$.
Therefore Property  \ref{property:vertex-wise-slack} also holds w.h.p., and so by union bound $\{\x^{(i)}\}$ satisfy all the properties of an $(\eps,\delta)$-split w.h.p.
\end{proof}

\subsection{Output-Adaptive Dynamic Coarsening}

\label{sec:proof:lem:dynamic-coursening:expectation}

In this section we prove \Cref{lem:dynamic-coursening:expectation}, using the following kind of \emph{set sampling} data structure.

\begin{wrapper}
\begin{restatable}{Def}{setsampler}\label{def:ds_set_sampler}
A \underline{\emph{dynamic
set sampler}} is a data structure supporting the following operations:
\begin{itemize}
\item $\init(n,\bf{p}\in[0,1]^{n})$: initialize the data structure for
 $n$-size set $S$ and probability vector $\bf{p}$.
\item $\set(i\in[n],\alpha\in[0,1])$: set $p_{i}\gets\alpha$.
\item $\sample()$: return $T\subseteq\R^{n}$ containing each $i\in S$ 
independently with probability $p_{i}$.
\end{itemize}
\end{restatable}
\end{wrapper}

\Cref{thm:set_sampler} shows that there exists a dynamic \outputadaptive set sampler with optimal properties. 
Concurrently to our work, another algorithm with similar guarantees was given by \cite{yi2023optimal}. 
As that work did not address the adaptivity of their algorithm, we present our (somewhat simpler) algorithm and its analysis, in \Cref{sec:app:setsampling}.

\begin{restatable}{thm}{thmsetsampler}\label{thm:set_sampler}
\Cref{alg:set_sampler} is a set
sampler data structure using $O(n)$ space that implements $\init$ in $O(n)$ time, $\set$ in $O(1)$ time,
and $T=\sample()$ in expected $O(1+|T|)$ time in a word RAM model with word size $w=\Omega(\log(p^{-1}_{\min}))$, under the promise that $p_i\geq p_{\min}$ for all $i\in [n]$ throughout. These
guarantees hold even if the input is chosen by an \outputadaptive adversary.
\end{restatable}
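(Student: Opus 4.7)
The plan is to analyze a bucket-based rejection sampling scheme. Partition the $n$ indices into $K := O(\log p_{\min}^{-1})$ dyadic buckets $B_0,\dots,B_{K-1}$ where $B_k := \{i : p_i \in [2^{-k-1}, 2^{-k})\}$. Each bucket is stored as a dynamic array along with a reverse pointer from each $i$ to its current bucket and slot; auxiliary state, maintained in $O(1)$ machine words using $w \geq \Omega(K)$, tracks a bitmask of non-empty buckets, the counts $n_k$, and weights $w_k := n_k \cdot 2^{-k}$ together with a running sum $W := \sum_k w_k$. Under this representation $\init$ runs in $O(n)$, and $\set(i,\alpha)$ runs in $O(1)$ via a swap-remove from $i$'s old bucket, an append into its new bucket, and $O(1)$ bitwise/arithmetic updates to the auxiliary words.

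Within a single bucket $B_k$, all items share the envelope $q_k := 2^{-k} \geq p_{\min}$, so I can sample envelope successes by drawing successive $\Geo(q_k)$ jumps --- each computable in $O(1)$ by inversion in the given word-RAM model --- to walk through positions in $B_k$, and accept each landed position $j$ with probability $p_{B_k[j]}/q_k \in [1/2,1]$ via one uniform draw. The composed probability that a fixed slot survives is $q_k \cdot (p_{B_k[j]}/q_k) = p_{B_k[j]}$, all draws are mutually independent, and a short induction on the geometric process shows that the output $T$ satisfies the correctness specification of $\sample$. The work inside $B_k$ is $O(1 + s_k)$ where $s_k$ is the number of envelope successes landing in $B_k$.

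The main obstacle, and the place where the word-RAM assumption earns its keep, is eliminating the additive $K$ overhead one would incur by iterating over all non-empty buckets. The plan is to drive $\sample$ by a global mechanism that enters only those buckets with $s_k \geq 1$: a Vose-style alias structure on the weights $w_k$, which fits in $O(1)$ machine words and is maintainable in $O(1)$ per $\set$, dispatches envelope successes to buckets directly, while the cached quantities $W$ and $\prod_k (1-q_k)^{n_k}$ (stored in log form) allow $O(1)$-time decisions on the total number of envelope successes. Writing $s_k$ for the envelope successes in $B_k$, the expected per-call work is then
\[
    \mathbb{E}\!\left[\sum_{k}\mathbf{1}[s_k \geq 1]\,(1+s_k)\right]
    \;\leq\; 2\,\mathbb{E}\!\left[\sum_k s_k\right]
    \;=\; 2\sum_k n_k q_k
    \;\leq\; 4\sum_i p_i
    \;=\; O\!\big(\mathbb{E}[|T|]\big),
\]
using $\mathbf{1}[s_k\ge 1](1+s_k)\le 2 s_k$ and $q_k \le 2 p_i$ for $i \in B_k$, which together with the $O(1)$ setup cost yields the claimed $O(1+|T|)$ expected time per $\sample$. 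The $O(n)$ space bound is immediate from storing $n$ items plus $O(1)$ auxiliary words.

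Finally, output-adaptivity is essentially automatic: each $\sample$ call consumes fresh independent randomness, while the bucket layout, alias structure, and bitmasks are deterministic functions of the adversary's $\init$/$\set$ inputs. Hence an adversary that chooses the next operation based on previous outputs $T$ cannot correlate with the internal random draws, so the expected-time and space guarantees transfer verbatim to the \outputadaptive setting.
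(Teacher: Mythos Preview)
The theorem is a statement about a \emph{specific} algorithm, \Cref{alg:set_sampler}, which is the geometric-pre-scheduling scheme given in the paper: for each $i$ one samples the next time $\tau_i$ at which $i$ will appear (via $\Geo(p_i)$), stores $i$ in the corresponding bucket $T_{\tau_i}$, and rebuilds every $n$ calls to $\sample$. Your write-up does not analyze this algorithm at all; it invents and analyzes an unrelated bucket-plus-alias rejection sampler. Whatever the merits of that construction, it cannot serve as a proof of the stated theorem, whose first clause is literally ``\Cref{alg:set_sampler} is a set sampler data structure \dots''. A correct proof has to verify (i) that the $T_\tau$ produced by \Cref{alg:set_sampler} has the right distribution (each $i$ included independently with probability $p_i$), (ii) that $\set$ runs in expected $O(1)$ by implementing the for-loop as a single $\Geo(p_i)$ draw (this is the content of \Cref{lem:set_sampler_time_efficient}), and (iii) that the $O(n)$ rebuild cost every $n$ calls amortizes to $O(1)$, so that $\sample$ costs $O(1+|T|)$ expected amortized. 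None of these points is addressed.

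Even viewed as an alternative existence proof, your sketch has gaps at exactly the step you flag as ``the main obstacle.'' A Vose alias table on $K=\Theta(\log p_{\min}^{-1})$ weights is a $\Theta(K)$-word structure that takes $\Theta(K)$ time to rebuild; the assertion that it ``fits in $O(1)$ machine words and is maintainable in $O(1)$ per $\set$'' because $K=O(w)$ is not justified (packing the table into $O(1)$ words does not make updates or draws $O(1)$ without further nontrivial word-tricks). Likewise, maintaining $\prod_k(1-q_k)^{n_k}$ ``in log form'' presumes real logarithms, which are not primitive in the word-RAM model you are working in. If you want to pursue this direction, those two points need actual implementations; but for the theorem as stated you should instead analyze \Cref{alg:set_sampler} directly.
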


Equipped with \Cref{thm:set_sampler}, we are ready to prove \Cref{lem:dynamic-coursening:expectation}.

\randsplitexp*

\begin{proof}

\textbf{Initialization:} The algorithm maintains a set sampler as in \Cref{def:ds_set_sampler}, using the algorithm of \Cref{thm:set_sampler} over ${n \choose 2}$ elements. Each element with non-zero probability corresponds to an edge $e \in \suppOp(\x_{\leq \eps^3})$, and receives a probability of $x_e \cdot \eps^{-3}$ within the sampler. The algorithm initializes counter $C \leftarrow 0$. Afterwards the algorithm draws a sample $E$ from its set sampler and defines its output $\x'$ to have support $E \cup \suppOp(\x_{>\eps^3})$ and take weight $\eps^3$ on edges of $E$ and $x(e)$ for edges $e \in \suppOp(\x_{>\eps^3})$. The algorithm repeats this process until $\x_{\leq \eps^3}'$ is an $(100 \cdot \eps,\eps^3)$-coarsening of $\x_{\leq \eps^3}$. Define $\x^{(0)}$ to stand for the state of $\x$ at the last initialization. 

\textbf{Handling updates:} If an update occurs to $\x_{>\eps^3}$ the algorithm simply updates $\x'$ accordingly. If an update occurs to edge $e \in \x_{\leq \eps^3}$ then the algorithm first updates $e$-s weight within the set-sampler appropriately. If $e$ was deleted from $\supp{\x}$ (or equivalently $x_e$ was set to $0$) it is deleted from $\supp{\x'}$. If $e$ was inserted into $\supp{\x}$ (or equivalently $x_e$ was set to be some positive value from $0$) it is ignored. The algorithm increases the counter $C$ to $C + \eps^3$. If $C$ reaches $\norm{\x^{(0)}} \cdot \eps$ the algorithm re-initializes $\x'$ (through repeatedly sampling $E$ from its sampler until $\x_{\leq \eps^3}'$ is a $(100\eps, \eps^3)$-coarsening of $\x_{\leq \eps^3}$) and resets $C$ to $0$. 

\textbf{Correctness:} 
First, we note that as $\x' = \x_{\leq \eps^3}' + \x_{> \eps^3}$, then if $\x_{\leq \eps^3}'$ is an $(O(\eps), \eps^3)$-coarsening of $\x_{\leq \eps^3}$, then $\x'$ is an $(O(\eps), \eps^3)$-coarsening of $\x$.
Thus, we only need to argue that $\x_{\leq \eps^3}'$ remains an $(O(\eps), \eps^3)$-coarsening of $\x_{\leq \eps^3}$ throughout the run of the algorithm. This follows from Lemma~\ref{lem:coarsening:stability} as between two re-initializations at most $\norm{\x^{(0)}} \cdot \eps  \cdot \eps^{-3}$ updates occur.

\textbf{Update Time:}
Observe that apart from the cost of re-initialization steps, the algorithm takes $O(1)$ time to update its output. During a re-initialization the algorithm has to repeatedly draw edge sets from its sampler. Note that $\x_{> \eps^3}'$ remains unaffected during this process. Whenever the algorithm draws an edge sample $|E|$ by definition we have that $\E[|E| \cdot \eps^3] = \norm{\x_{\leq \eps^3}}$. We will later show that the algorithm only needs to draw samples $O(1)$ times in expectation. As re-initialization occurs after $\Omega(\norm{\x_{\leq \eps^3}} \cdot \eps^{-2})$ updates and each sample edge set is drawn in linear time with respect to its size we receive that the algorithm has $O(\eps^{-1})$ expected amortized update time. It remains to show that at each re-initialization the algorithm makes $O(1)$ calls to its sampler in expectation. We will show this by arguing that every time a sample is drawn $\x_{\leq \eps^3}'$ is an $(100\eps, \eps^3)$-coarsening of $\x_{\leq \eps^3}$ with constant probability. Note that the algorithm can triviarly check if this is indeed the case in $O(|\suppOp(\x_{\leq \eps^3})|)$ time.

First note that Property~\ref{property:containment} and Property~\ref{property:edgebound-correct} (of coarsenings) follow trivially from the definition of $\x'$. We will first argue that Property~\ref{property:global-slack} holds with constant probability at each attempt. This will aslo imply that the algorithm has $\init$ time $O(|\supp{\x}|)$ in expectation.

For any $e \in \supp{\x}$ let $X_e$ stand for the indicator variable of event that $e \in E$ (where $E$ stands for the random edge sample drawn by the algorithm), and let $\bar{X} = \sum X_e$. Note that $X_e$ are independently distributed random variables and $\E[\bar{X}] = \norm{\x_{\leq \eps^3}} \cdot \eps^{-3}$ by definition. Furthermore, $\norm{\x_{\leq \eps^3}'} = \bar{X} \cdot \eps^3$. Using standard Chernoff bounds we receive the following:
\begin{align*} \Pr\left[|\norm{\x_{\leq \eps^3}} - \norm{\x_{\leq \eps^3}'}| \leq 100 \cdot \eps \cdot \norm{\x_{\leq \eps^3}}  \right] & = 1 - \Pr\left[|\bar{X} - \E[\bar{X}]| \geq \E[\bar{X}] \cdot 100 \cdot \eps \right] \nonumber \\
     & \leq 1 - O\left(\exp\left(-\frac{(100 \cdot \eps)^2 \cdot \E[\bar{X}]}{3}\right)\right). \nonumber 
\end{align*}
Given $\E[\bar{X}] \geq \eps^{-2}/8$ (or equivalently $\norm{\x_{\leq \eps^3}} \geq \eps/8$), the probability in the last inequality is $\Omega(1)$. If, conversely, $\E[\bar{X}] \leq \eps^{-2}/8$ (or equivalently $\norm{\x_{\leq \eps^3}} \leq \eps/8$), then by a simple Markov's inequality argument we get the following: 
$$\Pr[|\norm{\x_{\leq \eps^3}} - \norm{\x_{\leq \eps^3}'}| \geq \eps] \leq \Pr\left[\norm{\x_{\leq \eps^3}'} \geq \frac{7\eps}{8}\right] \leq \frac{\E[\norm{\x_{\leq \eps^3}'}] \cdot 7}{8\eps} \leq \frac{1}{2}.$$
Either way, with constant probability $|\norm{\x_{\leq \eps^3}} - \norm{\x_{\leq \eps^3}'}| \leq \norm{\x_{\leq \eps^3}} \cdot \eps + \eps$, implying Property~\ref{property:global-slack} holds with constant probability. 

It remains to prove Property~\ref{property:vertex-wise-slack}. Let random variable $Y_v$ stand for $d_{\{v\}}^{100\eps}(\x_{\leq \eps^3}, \x_{\leq \eps^3}')$ and $\bar{Y}$ stand for $d_V^{100\eps}(\x_{\leq \eps^3}, \x_{\leq \eps^3}') = \sum Y_v$. We will first show that $\E[Y_v] \leq 10 \eps \cdot \x_{\leq \eps^3}(v)$. Summing over all vertices this yields that $\E[\bar{Y}] \leq \norm{\x_{\leq \eps^3}} \cdot 20 \cdot \eps$. By a simple Markov's role based argument we can conclude from there that $\Pr[d_V^{100\eps}(\x_{\leq \eps^3}, \x_{\leq \eps^3}') \leq 100 \cdot \eps \cdot \norm{\x_{\leq \eps^3}}] = \Omega(1)$.

Consider any edge $e \in E$ incident on $v$. Let $y_e$ be the indicator variable of the event that $e$ was sampled by the dynamic sampler on query. Hence, we have $\E[y_e] = x_e \cdot \eps^{-3} \leq 1$, and $y_e$ are independently distributed random variables. Let $y_v = \sum_{e :  v \in e} y_e$. It follows that $\E[y_v] = x_{\leq \eps^3}(v)/\eps^3$ and $y_v = x_{\leq \eps^3}'(v)/\eps^3$. First, assume that  $\eps^{i} \leq x_{\leq \eps^3}(v) \leq \eps^{i-1}$ for some $i \geq 3$. Then applying Chernoff bounds yields the following:
\begin{align*}
\Pr\left[|x_{\leq \eps^3}(v) - x_{\leq \eps^3}'(v)| \geq \eps\cdot k\right] &\leq \Pr\left[x_{\leq \eps^3}'(v) - x_{\leq \eps^3}(v) \geq \eps \cdot k\right] \nonumber \\
&\leq  \Pr\left[y_v \geq \E[y_v] \cdot (1 + (\eps^{2-i}-1) \cdot (k-1/2))\right] \nonumber \\
&\leq  \exp\left(-\frac{((\eps^{2-i}-1) \cdot (k-1/2))^{2} \cdot \E[y_v]}{1 + \eps^{2-i}}\right) \nonumber \\
&\leq  \frac{10 \cdot \eps}{k^3}. \nonumber
\end{align*}

Next, if $\eps^2 \leq x_{\leq \eps^3}(v)$ and $k \leq 1/\eps$, then a similar application of Chernoff bound gives:
\begin{align*}
\Pr\left[|x_{\leq \eps^3}(v) - x_{\leq \eps^3}'(v)| \geq \eps\cdot k\right] &\leq  \Pr\left[|E[y(v)] - y(v)| \geq \E[y(v)] \cdot \frac{k \cdot \eps^4}{x_{\leq \eps^3}(v)}\right] \nonumber \\
&\leq  \Pr\left[|\E[y(v)] - y(v)| \geq \E[y(v)] \cdot k \cdot \eps^2 \right] \nonumber \\
&\leq  2 \cdot \exp\left(- \frac{(k \cdot \eps^2)^2 \cdot \E[y(v)]}{3}\right) \nonumber \\
&\leq  2 \cdot \exp\left(- \frac{k^2 \cdot \eps^3}{3}\right) \leq  \frac{10}{k^3}. \nonumber
\end{align*}

Summing over all the possible values of $k$, we get:
\begin{align*}\E[d^{100\eps}_{V}(\x_{\leq \eps^3},\x_{\leq \eps^3}')] &=\E\left[(|x_{\leq \eps^3}(v) - x_{\leq \eps^3}'(v)| - \eps)^+\right] \\
& \leq \sum_{i = 1}^{1/\eps}  \Pr\left[|x_{\leq \eps^3}(v) - x_{\leq \eps^3}'(v)| \geq \eps k\right]  \cdot \eps  k  \\
& \leq \sum_{i = 1}^{1/\eps} \frac{10  \eps^2}{k^2} \\
& \leq 20\eps. \qedhere
\end{align*}
\end{proof}

\section{General Graphs}\label{sec:general}

In this section, we extend our algorithms for rounding dynamic fractional matchings from bipartite to general graphs. Our main result in this section is a formalization of the informal \Cref{thm:general-informal}.
Formalizing this theorem and its key subroutine requires some build up, which we present in \Cref{sec:general:prelim}.
For now, we restate the main application of this theorem, given by \Cref{thm:AMMresults}.

\AMMresults*

The rest of this section is organized as follows. In \Cref{sec:general:prelim}, we introduce some known tools from the dynamic matching literature, as well as one new lemma (\Cref{lem:AMFM}) which motivates us to coarsen known fractional matching algorithms.
In \Cref{subsec:extensions:coarsening} we provide our general-graph coarsening algorithms for structured fractional matchings.
Finally, in the last two subsections, we provide applications of these dynamic coarsening algorithms: computing AMMs, in \Cref{subsec:proof:theorem:general}, and rounding $\eps$-restricted fractional matchings (see \Cref{def:restricted}), in  \Cref{sec:restricted}.

\subsection{Section Preliminaries}
\label{sec:general:prelim}

At a high level, our approach for rounding fractional matchings in general graphs is very close to our approach for the same task in their bipartite counterparts.
However, as mentioned in \Cref{sec:intro}, since the fractional matching relaxation studied in prior works is not integral in general graphs,
we cannot hope to round arbitrary fractional matchings in general graphs.
Therefore, in general graphs we focus our attention on a particular structured family of fractional matchings, introduced by \cite{arar2018dynamic} and maintained dynamically by \cite{bhattacharya2017fully,bhattacharya2019deterministically}.
\begin{Def}\label{def:AMFM}
		A fractional matching $\x\in \mathbb{R}^E$ is \emph{$(\eps,\delta)$-almost-maximal ($(\eps,\delta)$-AMFM)} if for each edge $e\in E$, either $x_e \geq \delta$ or some $v\in e$ satisfies both $x(v) \geq 1-\eps$ and $\max_{f\in v} x_f \leq \delta$.
	\end{Def}

For the sake of convenience, and similarly to \Cref{obs:xmin-and-high-bits}, we will assume that $\x$ has minimum edge weight of $\x_{\min}\geq \eps\delta/n$. Observe that if all edges with weight below $\eps\delta/n$ are decreased to zero, then vertex weights change by at most $\eps$, resulting in a $(2\eps,\delta)$-AMFM, which for our needs is as good as an $(\eps,\delta)$-AMFM, as we soon illustrate. For similar reasons, we will assume throughout that $\eps$ is a power of two.
    
	By LP duality, an $(\eps,\delta)$-AMFM is a $(1/2-O(\eps))$-approximate fractional matching \cite{bhattacharya2015deterministic}.
	Moreover, prior work established that the support of an $(\eps,\delta)$-AMFM contains a $(1/2-O(\eps))$-approximate \emph{integral} matching (even in general graphs), provided $\delta=O(\frac{\eps^2}{\log n})$ \cite{arar2018dynamic,wajc2020rounding}.
    Specifically, they show that the support of such a fractional matching contains the following kind of integral matching sparsifier, introduced in \cite{bhattacharya2015deterministic}.
	\begin{prop}\label{lem:kernel}
		An
        $(\eps,d)$-\emph{kernel}
        $K=(V,E_K)$ is a subgraph of $G=(V,E)$ satisfying:
		\begin{enumerate}
			\item $d_K(v)\leq d$ for every vertex $v\in V$.
			\item $\max_{v\in e} d_K(v)\geq d\cdot (1-\eps)$ for every edge $e\in E\setminus E_K$.
		\end{enumerate}
	\end{prop}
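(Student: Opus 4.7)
The statement as printed is a definition packaged inside a proposition environment: it stipulates that the label ``$(\eps,d)$-kernel'' is assigned to any subgraph $K=(V,E_K)$ of $G$ satisfying the two enumerated degree conditions. There is no hypothesis-to-conclusion implication to discharge, no object to construct, and no quantifier to instantiate. My plan for ``proving'' the statement as worded is therefore a one-liner: the two bulleted properties are definitional, so any subgraph meeting them is, by that very definition, an $(\eps,d)$-kernel, matching the notion introduced in \cite{bhattacharya2015deterministic}. One can note in passing that the class is non-vacuous but parameter-dependent (the empty edge set satisfies (1) but rarely (2); $G$ itself satisfies (2) but rarely (1)), yet none of this requires argument.

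I do suspect the intended substantive content is either (a) the existence of an $(\eps,d)$-kernel in every graph for valid parameter ranges, or (b) that the support of an $(\eps,\delta)$-AMFM $\x$ contains an $(O(\eps),d)$-kernel for a suitable $d=\Theta(\delta^{-1})$. For interpretation (a), the plan is the standard greedy peeling of \cite{bhattacharya2015deterministic}: scan edges in any order, adding $e=\{u,v\}$ to $E_K$ whenever both current degrees $d_K(u),d_K(v)<d$, and halting when no admissible edge remains. Condition (1) holds by the admission cap; condition (2) holds because every rejected edge had some endpoint with degree already at least $d\geq d(1-\eps)$, and kernel degrees are non-decreasing. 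For interpretation (b), one would additionally constrain the peeling to $\supp{x}$ and use the AMFM guarantee (that every edge either has weight $\geq\delta$ or a nearly saturated endpoint whose incident edges are all light) to certify that rejected edges in $\supp{x}$ satisfy (2).

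The main obstacle under either substantive reading would be calibrating $d$ to $\delta^{-1}$ and verifying, via a fractional charging / LP duality argument, that the resulting kernel admits a $(1/2-O(\eps))$-approximate integral matching — but since the proposition as literally worded asserts none of this, my concrete proposal is to record that the proposition is definitional, cite \cite{bhattacharya2015deterministic} for the origin of the term, and defer any existence or approximation claims to the subsequent lemmas of \Cref{sec:general}.
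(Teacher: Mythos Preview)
Your reading is correct: the paper treats this proposition purely as a definition (introduced in \cite{bhattacharya2015deterministic}) and supplies no proof, so your observation that there is nothing to discharge matches the paper exactly. Your speculative interpretations (a) and (b) are handled separately by later results in the section---in particular \Cref{lem:AMFM}---and are not part of this proposition.
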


    Any kernel of a graph $G$ contain an $(1/2-\eps)$-approximate matching with respect to $G$ for $d$ sufficiently large \cite{arar2018dynamic}.
    An alternative proof of this fact was given by \cite{bhattacharya2023dynamic}, who show how to efficiently compute an $\eps$-AMM (almost maximal matching, itself a $(1/2-\eps)$-approximate matching) of the host graph in a kernel. Specifically, they show the following.

	\begin{prop}
        \label{prop:KERNELtoAMM}
		Given an $(\eps, d)$-kernel of $G = (V,E)$ with $d\geq \eps^{-1}$, one can compute an $\eps$-AMM in
		$G$ in deterministic update time $O(d \cdot \mu(G)\cdot \eps^{-1}\cdot \log (\eps^{-1}))$.
	\end{prop}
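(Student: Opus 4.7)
The plan is to compute a near-maximum matching inside the sparse kernel $K$, and argue that it automatically yields an $\eps$-AMM of $G$ via the kernel's structural guarantees. The first step would be to bound $|E_K|=O(d\cdot\mu(G))$: since $G$ admits a vertex cover of size at most $2\mu(G)$ (e.g.\ the endpoints of any maximum matching), and every edge of $K\subseteq G$ is covered by such a cover with each covering vertex contributing at most $d$ kernel-edges, we obtain $|E_K|\leq 2d\cdot\mu(G)$.

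Next, I would apply a static $(1-\eps)$-approximate maximum matching algorithm for general graphs, such as that of Duan and Pettie, on $K$. This takes $O(|E_K|\cdot\eps^{-1}\log(\eps^{-1}))=O(d\cdot\mu(G)\cdot\eps^{-1}\log(\eps^{-1}))$ time and returns a matching $M$ in $K$ with $|M|\geq(1-\eps)\mu(K)$. An additional linear-time greedy completion extends $M$ to also be maximal in $K$, without harming its approximation ratio.

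To see that $M$ is an $\eps$-AMM in $G$, I would exhibit the candidate witness set $U\defeq\{v\in V\setminus V(M):d_K(v)\geq d(1-\eps)\}$. For any $G$-edge $(u_1,u_2)$ with both endpoints in $V\setminus V(M)$ and neither in $U$, the edge cannot lie in $E_K$ (by maximality of $M$ in $K$), so kernel property $2$ forces some endpoint to have kernel-degree at least $d(1-\eps)$, contradicting $u_1,u_2\notin U$. Hence $M$ is maximal in $G[V\setminus U]$, and the remaining task is to prove the cardinality bound $|U|\leq\eps\cdot\mu(G)$.

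This cardinality bound is the main obstacle. My approach would be as follows: maximality of $M$ in $K$ forces every $u\in U$ to have all of its $\geq d(1-\eps)$ kernel-neighbors inside $V(M)$, producing at least $|U|\cdot d(1-\eps)$ incidences between $U$ and $V(M)$ in $K$. Viewing this as a bipartite incidence structure between $U$ and the edges of $M$ (each matching edge receiving at most $2d$ incidences from the $U$-side), a Hall-type/double-counting argument extracts many vertex-disjoint short $M$-alternating extensions rooted at $U$. Each such extension corresponds to an $M$-augmenting structure inside $K$; packing $\Omega(|U|)$ of them would improve $M$ by more than the $\eps\cdot\mu(K)$ slack permitted by the $(1-\eps)$-approximation guarantee (using also that $\mu(K)=\Omega(\mu(G))$ for kernels with $d\geq\eps^{-1}$). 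After suitably rescaling $\eps$ by a constant in the approximation call, this contradiction yields $|U|\leq\eps\cdot\mu(G)$, completing the proof.
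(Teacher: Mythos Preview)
Your argument has a genuine gap at step 5. The witness set $U=\{v\notin V(M):d_K(v)\geq d(1-\eps)\}$ can be much larger than $\eps\cdot\mu(G)$ even when $M$ is an \emph{optimal} maximal matching in $K$, so the augmenting-path contradiction cannot fire. Concretely, take $G=K$ consisting of a perfect matching $M_0$ between sets $A,B$ of size $n$, together with an independent set $U_0$ of size $n/2$ where each $u\in U_0$ has $d$ neighbors in $A$ (and degrees in $A$ stay at most $d$). Here $\mu(K)=n$, the matching $M_0$ is simultaneously optimal and maximal in $K$, yet every vertex of $U_0$ is unmatched with kernel degree $d$, so your $U=U_0$ has size $n/2=\mu(G)/2$. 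There are no $M_0$-augmenting structures at all, so no ``packing $\Omega(|U|)$ of them'' is possible; the $(1-\eps)$-slack argument yields nothing. (The proposition is still true in this example---$M_0$ is in fact maximal in $G$, so the empty witness works---but your proof does not find this smaller witness.)

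The approach the paper cites (Lemma 5.5 of \cite{bhattacharya2023dynamic}) avoids this by computing a $(1-\eps)$-approximate maximum \emph{weight} matching in $K$, with edge weights designed so that high-kernel-degree vertices are preferentially matched; this is why the paper's runtime inherits the $\eps^{-1}\log(\eps^{-1})$ factor from the Duan--Pettie \emph{weighted} algorithm. With appropriate weights, leaving many high-degree vertices unmatched would cost more weight than the $(1-\eps)$ guarantee allows, directly bounding the number of such vertices. An unweighted near-maximum matching carries no such incentive and cannot by itself certify that your $U$ is small.
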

\begin{proof}
		The proof is given in \cite[Lemma 5.5]{bhattacharya2023dynamic}. The only missing detail is externalizing the dependence on $\eps$ of that lemma, which is that of the best static linear-time $(1-\eps)$-approximate maximum weight matching algorithm, currently $O(m\cdot \eps^{-1}\cdot \log (\eps^{-1}))$ for $m$-edge graphs \cite{duan2014linear}.\footnote{In the conference version of that paper, \cite{bhattacharya2023dynamic} do not state the requirement $d\geq \eps^{-1}$, but this requirement is necessary (and sufficient), as implied by the statement of \cite[Lemma 5.4]{bhattacharya2023dynamic}.}
\end{proof}
        Being able to periodically compute $\eps$-AMMs lends itself to dynamically maintaining $\eps$-AMMs, due to these matchings' natural \emph{stability}, as in the following lemma of \cite{bhattacharya2023dynamic}
    \begin{prop}
    \label{obs:AMMstability}
    Let $\eps \in (0, 1/2)$. If $M$ is an $\eps$-AMM in $G$, then the non-deleted edges of $M$ during any sequence of
    at most $\eps \cdot \mu(G)$ updates constitute a $6\eps$-AMM in $G$.
\end{prop}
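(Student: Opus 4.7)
The plan is to exhibit an explicit vertex set $U'$ witnessing the $6\epsilon$-AMM property of $M' := M \setminus D$ in the updated graph $G'$, where $D$ and $I$ denote the edges deleted and inserted respectively during the update sequence, so $|D| + |I| \le \epsilon \cdot \mu(G)$. By hypothesis there exists $U \subseteq V$ with $|U| \le \epsilon \cdot \mu(G)$ such that $M$ is maximal in $G[V \setminus U]$. I would define
\[
U' \;:=\; U \;\cup\; V(M \cap D) \;\cup\; V(I),
\]
where $V(F)$ denotes the set of endpoints of edges in $F$. The intuition is that $U$ still covers the ``old'' obstructions, while adding $V(M \cap D)$ handles vertices newly uncovered by the removal of matched edges, and adding $V(I)$ trivially handles inserted edges (both endpoints land in $U'$).

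The key step is to verify maximality of $M'$ in $G'[V \setminus U']$. Take any edge $e = (u,v) \in E(G')$ with $u,v \notin U'$. If $e \in I$ then $u,v \in V(I) \subseteq U'$, a contradiction, so $e \in E(G)$. Since $u,v \notin U$, maximality of $M$ in $G[V \setminus U]$ gives either $e \in M$ or $e$ shares an endpoint with some $f \in M$. In the first case $e \notin D$ (otherwise $e \notin E(G')$), so $e \in M'$. In the second case, if $f \in D$ then both endpoints of $f$ lie in $V(M \cap D) \subseteq U'$, contradicting $u \notin U'$ or $v \notin U'$; hence $f \in M'$. Either way, $e$ is covered by $M'$.

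It remains to bound $|U'|$. I would use $|V(M \cap D)| \le 2|D|$ and $|V(I)| \le 2|I|$, giving
\[
|U'| \;\le\; \epsilon \cdot \mu(G) + 2(|D| + |I|) \;\le\; 3\epsilon \cdot \mu(G).
\]
To pass from $\mu(G)$ to $\mu(G')$, I would use the standard fact that $\mu$ changes by at most $1$ per edge update, so $\mu(G) \le \mu(G') + \epsilon \cdot \mu(G)$, which for $\epsilon \le 1/2$ yields $\mu(G) \le 2\mu(G')$. Combining gives $|U'| \le 6\epsilon \cdot \mu(G')$, as required.

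I do not expect a serious obstacle: the argument is a direct bookkeeping exercise once the right $U'$ is chosen, and both the maximality check and the counting are essentially one-line verifications. The only subtle point is the constant factor—being careful to count endpoints (not edges) of $M \cap D$ and $I$, and to convert $\mu(G)$ to $\mu(G')$ using the $\epsilon \le 1/2$ hypothesis to get the factor of $2$ that produces the final $6\epsilon$ bound.
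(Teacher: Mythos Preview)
Your proof is correct. Note, however, that the paper does not supply its own proof of this proposition: it is quoted verbatim as a known stability lemma from \cite{bhattacharya2023dynamic}, so there is no in-paper argument to compare against. Your direct bookkeeping---augment the witness set $U$ by the endpoints of deleted matched edges and of inserted edges, then convert $\mu(G)$ to $\mu(G')$ via the Lipschitz bound on $\mu$---is precisely the standard way to establish such a claim and is presumably what the cited reference does as well. One tiny wording nit: when you argue ``$e \notin D$ (otherwise $e \notin E(G')$)'', this is valid because you have already established $e \notin I$ in the previous sentence (an edge that was deleted and later reinserted would satisfy $e \in I$ and be handled there); you may want to make that dependency explicit.
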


    Previous work \cite{arar2018dynamic,wajc2020rounding} show that a the support of any $(\eps,\delta)$-AMFM contains a kernel, for $\delta$ sufficiently small. We now prove the simple fact that the support of an $(\eps,\delta)$-AMFM $\x$ \emph{is itself} a kernel, provided $\x_{\min}
    \geq \delta$.
	
	\begin{lem}\textbf{(When AMFMs are kernels).}\label{lem:AMFM}
		Let $\x$ be an $(\eps,\delta)$-AMFM of $G=(V,E)$ satisfying ${\x_{\min}} \geq \delta$. Then, $K=(V, \support{x})$ is an $(\eps,\delta^{-1})$-kernel of $G$.
	\end{lem}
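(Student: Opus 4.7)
The plan is to verify the two defining properties of an $(\eps,\delta^{-1})$-kernel (as in \Cref{lem:kernel}) for the subgraph $K=(V,\supp{x})$, each in one short step. Both will follow directly from combining the hypothesis $\x_{\min}\geq \delta$ with the $(\eps,\delta)$-AMFM definition and the feasibility constraint $x(v)\leq 1$.

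For Property~1 (maximum degree in $K$), I would observe that every edge $e$ incident to $v$ in $K$ lies in $\supp{x}$, and hence has $x_e\geq \x_{\min}\geq \delta$ by assumption. Summing over such edges and using that $\x$ is a fractional matching,
\[
\delta\cdot d_K(v)\;\leq\;\sum_{e\ni v,\,e\in \supp{x}} x_e \;=\;x(v)\;\leq\;1,
\]
which immediately gives $d_K(v)\leq \delta^{-1}$, as required.

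For Property~2 (high-degree endpoint for every non-kernel edge), I would take any $e\in E\setminus \supp{x}$, so $x_e=0<\delta$, and apply \Cref{def:AMFM} to $e$: some endpoint $v\in e$ satisfies $x(v)\geq 1-\eps$ and $\max_{f\ni v} x_f \leq \delta$. The key inequality reverses direction here: using that \emph{every} edge at $v$ has weight at most $\delta$,
\[
1-\eps\;\leq\;x(v)\;=\;\sum_{f\ni v} x_f\;\leq\;\delta\cdot d_K(v),
\]
so $d_K(v)\geq (1-\eps)\cdot \delta^{-1}$. Thus $\max_{v\in e} d_K(v)\geq (1-\eps)\cdot \delta^{-1}$, which is precisely Property~2 with $d=\delta^{-1}$.

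There is no real obstacle here: the lemma is essentially definition-chasing, and the only subtle point is keeping straight that the AMFM condition gives an \emph{upper} bound $x_f\leq \delta$ on edges incident to the witness vertex $v$, which flips the direction of the degree bound between Properties~1 and~2. Once that is noted the two displays above complete the proof.
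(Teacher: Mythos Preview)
Your proof is correct and follows essentially the same approach as the paper's: both verify the two kernel properties directly from the AMFM definition and the fractional matching constraint. The only cosmetic difference is that for Property~2 the paper uses both $\x_{\min}\geq\delta$ and $\max_{f\ni v}x_f\leq\delta$ to conclude $x_f=\delta$ exactly for edges at the witness vertex, whereas you (slightly more economically) use only the upper bound $x_f\leq\delta$ to get $x(v)\leq\delta\cdot d_K(v)$, which already suffices.
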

 
\begin{proof}
		The degree upper bound follows from the condition $\x_{\min}\geq \delta$ together with the fractional matching constraint, implying that for each vertex $v\in V$,
		\begin{align*}
			d_K(v) = \sum_{f\in v} \mathds{1}[f\in \support{x}] \leq \sum_{f\in v} x_f\cdot \delta^{-1} \leq \delta^{-1}.
		\end{align*}
		For the lower bound, fix an edge $e\in E\setminus \support{x}$, which thus satisfies $x_e = 0 < \delta$. Therefore, by the $(\eps,\delta)$-AMFM property, some vertex $v\in e$ has $x(v)\geq 1-\eps$ and $\max_{f\in v} x_f \leq \delta$. But since $\x_{\min} \geq \delta$, this implies that each edge $f\ni v$ in $\support{x}$ is assigned value exactly $x_f = \delta$. Therefore, any edge $e\in E\setminus \support{x}$ has an endpoint $v$ with high degree.
		\begin{align*}
			d_K(v) & =  \mathds{1}[f\in \support{x}] = \sum_{f\in v} x_f\cdot \delta^{-1} \geq \delta^{-1}\cdot (1-\eps). \qedhere
		\end{align*}
\end{proof}

We are now ready to state this section's main result (the formal version of Theorem~\ref{thm:general-informal}), proved in \Cref{subsec:proof:theorem:general}. The key idea behind this lemma is that coarsenings of AMFMs yield kernels of a slightly larger graph, obtained by adding $O(\eps)\cdot \mu(G)$  many dummy vertices, and therefore allow us to periodically compute an $O(\eps)$-AMM $M$ in this larger graph, whch is then an $O(\eps)$-AMM in the current graph.
  
\begin{restatable}{thm}{generalformal}\label{thm:general-formal}
    Let $\epsilon\in (0,1)$. Let $\calF$ be a dynamic $(\eps,\eps/16)$-AMFM algorithm with $\update$ time $t_f$ and output recourse $u_f$. Let $\calC$ be a dynamic $(\eps, \eps/16)$-coarsening algorithm with $\update$ time $t_c$ for vectors $\x$ satisfying $x_e\geq \eps/16$ implies that $(x_e)_i=0$ for $i>k+4$, and whose output $\x'$ on $\x$ satisfies $x'_e\in \{0,\eps/16\}$ if $x_e<\eps/16$. Then there exists a dynamic $\eps$-AMM algorithm $\calA$ with $\update$ time $O(\eps^{-3}\cdot \log(\eps^{-1})+t_f + u_f\cdot t_c)$. Moreover, $\calA$ is deterministic/\adaptive/\outputadaptive if both $\calF$ and $\calC$ are.
    \end{restatable}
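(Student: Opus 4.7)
The plan is to combine the fractional matching algorithm $\calF$, the coarsening algorithm $\calC$, and the kernel-to-AMM reduction (\Cref{prop:KERNELtoAMM,obs:AMMstability}) in a natural pipeline. First I would run $\calF$ to maintain an $(\eps,\eps/16)$-AMFM $\x$ in $G$; each call to $\calF.\update$ produces at most $u_f$ internal updates to the entries of $\x$. I would feed this stream of updates into $\calC$ to maintain a coarsening $\x'$ of $\x$ whose nonzero entries all lie in $\{\eps/16\}\cup[\eps/16,1]$ (by the edge-value property of $\calC$), so that $\x'_{\min}\ge \eps/16$. Following the hint, I would then periodically, every $\Theta(\eps\cdot\mu(G))$ updates to $G$, recompute a fresh $O(\eps)$-AMM $M$ from a kernel derived from $\supp{\x'}$; between recomputations I would simply delete any edge of $M$ that gets removed from $G$, invoking \Cref{obs:AMMstability} to guarantee that the remaining matching stays an $O(\eps)$-AMM.

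The central step is to show that the support of $\x'$ essentially is a kernel. Concretely, I would argue that there exists an \emph{augmented} graph $\hat{G}$, obtained from $G$ by attaching $O(\eps\cdot \mu(G))$ dummy vertices and dummy edges, such that $\hat{K} := (V(\hat G),\supp{\x'}\cup\text{dummies})$ is an $(O(\eps), 16/\eps)$-kernel of $\hat{G}$. The degree upper bound follows from $\x'_{\min}\ge\eps/16$ combined with the fractional-matching constraint on $\x$ (capping $d(v)$ to at most $16/\eps$ by discarding an excess of at most $(x'(v)-x(v)-\eps)^+ / (\eps/16)$ edges per vertex, whose total is bounded by Property~\ref{property:vertex-wise-slack} as $O(\eps\cdot\mu(G))$). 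The lower degree bound on endpoints of non-kernel edges follows from the AMFM property of $\x$: for each edge $e\notin\supp{\x}$, some endpoint $v\in e$ has $x(v)\ge 1-\eps$ with all incident $x_f\le \eps/16$; by Property~\ref{property:edgebound-correct} every such edge $f\ni v$ is then rounded to $\{0,\eps/16\}$ in $\x'$, so $v$ has $\supp{\x'}$-degree at least $(x(v)-\eps)\cdot 16/\eps \ge (1-2\eps)\cdot 16/\eps$ unless $v$ is ``vertex-bad'' in the sense of contributing to $d^{\eps}_V(\x,\x')$; the remaining edges are rerouted to dummy high-degree vertices. Both the number of vertex-bad vertices and the number of rerouted edges are bounded globally by Properties~\ref{property:global-slack} and~\ref{property:vertex-wise-slack} as $O(\eps\cdot\norm{\x}+\eps)=O(\eps\cdot\mu(G))$, which is precisely the allotment of dummy vertices. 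Since an $\eps$-AMM of $\hat{G}$ restricted to $G$ excludes at most $|V(\hat G)\setminus V(G)|=O(\eps\cdot\mu(G))$ extra vertices from its maximality witness, it remains an $O(\eps)$-AMM of $G$ after adjusting constants.

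For the time analysis, applying \Cref{prop:KERNELtoAMM} to $\hat K$ with $d=16/\eps$ costs $O(d\cdot \mu(\hat G)\cdot \eps^{-1}\cdot\log\eps^{-1})=O(\eps^{-2}\cdot\mu(G)\cdot\log\eps^{-1})$ per recomputation (after rescaling $\eps$ by a constant to absorb the $O(\eps)$-AMM into an $\eps$-AMM), and constructing $\hat G$ takes $O(|\supp{\x'}|)=O(\mu(G)/\eps)$ time. Amortized over the $\Theta(\eps\cdot\mu(G))$ updates per period afforded by \Cref{obs:AMMstability}, this yields the $O(\eps^{-3}\log\eps^{-1})$ update-time summand. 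The other per-update costs are $t_f$ for running $\calF$ and at most $u_f\cdot t_c$ for pushing the resulting entry changes of $\x$ through $\calC$; standard recounting of $\mu(G)$ (up to constant factors between recomputations, since $\norm{\x}=\Theta(\mu(G))$ and $\calF$ is $(\eps,\eps/16)$-AMFM) lets us trigger the period length adaptively without asymptotic overhead. Adaptivity is inherited: the only randomness comes from $\calF$ and $\calC$, while the kernel construction, AMM computation, and deletion bookkeeping are all deterministic.

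The main obstacle is the kernel argument in the second paragraph — specifically, controlling the joint failure modes of the coarsening (vertices with $x'(v)>1$, and non-kernel edges whose ``AMFM witness'' endpoint has had too many incident small edges rounded to $0$). The crucial accounting step is to show that both types of failure are charged, via Properties~\ref{property:global-slack} and~\ref{property:vertex-wise-slack} of coarsenings, to the same global slack budget of $O(\eps\cdot\mu(G))$, which is what allows all defects to be absorbed simultaneously by a linear number of dummy vertices rather than requiring any modification to $\x'$ itself.
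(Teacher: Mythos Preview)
Your pipeline is the same as the paper's: maintain the AMFM $\x$ via $\calF$, feed its updates to $\calC$ to get $\x'$, periodically extract an $O(\eps)$-AMM from a kernel built from $\supp{\x'}$ plus $O(\eps\cdot\mu(G))$ dummies, and coast between recomputations via \Cref{obs:AMMstability}. The kernel argument (discard excess edges to cap degrees, add dummy edges to restore degrees of ``useful'' vertices that the coarsening hurt, with both budgets charged to the slack properties) is also what the paper does, though there it is packaged as a separate lemma that first passes through a \emph{bounded} coarsening and a rescaling to make $\x'$ an actual fractional matching before adding dummies.

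There is one genuine gap: you implicitly assume $\mu(G)$ is large. The dummy construction needs, for a single useful vertex $v$ whose $x'(v)$ fell far below $x(v)$, up to $\Theta(\delta^{-1})=\Theta(\eps^{-1})$ dummy edges to \emph{distinct} dummy vertices; with only $O(\eps\cdot\mu(G))$ dummies available this is impossible once $\mu(G)=o(\eps^{-2})$. (Adding more dummies is not free: every dummy vertex beyond $O(\eps)\cdot\mu(G)$ erodes the AMM guarantee when you restrict back to $G$.) Relatedly, the period length $\lfloor \eps\cdot\mu(G)\rfloor$ can be zero when $\mu(G)<\eps^{-1}$, so the amortization does not even get off the ground. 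The paper patches both issues in one stroke: it additionally maintains an $O(1)$-approximate vertex cover (deterministically, in $O(1)$ update time), and whenever the current matching has size at most $\eps^{-2}$ it simply computes a \emph{maximal} matching of $G$ in $O(\mu(G)^2)$ time by greedily matching inside the vertex cover and then extending; amortized over $\Theta(\eps\cdot\mu(G))$ updates this is $O(\eps^{-1}\mu(G))=O(\eps^{-3})$. Your proposal needs this low-$\mu(G)$ branch (or an equivalent) to go through.
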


While the above stipulations about $\calC$ and $\x$ may seem restrictive, as we shall later see, these are satisfied by our coarsening algorithms, and by known fractional matchings in general graphs (up to minor modifications).

To obtain \Cref{thm:AMMresults}, we apply \Cref{thm:general-formal} to the $O(\eps^{-2})$ time and output recourse $(\eps,\eps/16)$-AMFM algorithm of \cite{bhattacharya2019deterministically} (see \cite[Lemma C.1]{wajc2020rounding}), and the coarsening algorithms stated in the following lemma, presented in the following section. 

\begin{lem}
\label{lem:coarsening:AMFM}
        For $\epsilon=2^{-k}$ and $k\geq 0$ an integer, there exist dynamic $(\eps,\eps/16)$-coarsening algorithms for inputs $\x$ such that $x_e\geq \eps/16$ implies that $(x_e)_i=0$ for all $i>k+4$ that are:
    \begin{enumerate}
        \item Deterministic, with $\update$ time $O(\eps^{-1}\cdot ß(\log n + \log^{2}(\eps^{-1}))) = \tilde{O}(\eps^{-1}\cdot \log n)$.
        \item \expandafter\capitalize\adaptive, with $\update$ time $O(\eps^{-1} \cdot ((\log\log n)^2 + \log^2(\eps^{-1})))$,~succeeding~w.h.p.
        \item \expandafter\capitalize\outputadaptive, with expected $\update$ time $O\left(\eps^{-1}\cdot \log^2(\eps^{-1})\right) = \tilde{O}(\eps^{-1})$. 
    \end{enumerate}
    Moreover, these algorithms' output $\x'$ for input vector $\x$ satisfies $x'_e\in \{0,\eps/16\}$ if $x_e < \eps/16$.
\end{lem}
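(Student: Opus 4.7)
The plan is to reduce this lemma to the coarsening machinery of the preceding sections applied only to the small-edge portion $\x_{<\eps/16}$, keeping the large-edge portion $\x_{\geq\eps/16}$ unchanged in the output. Large edges can be passed through as is because the input structural constraint ($(x_e)_i=0$ for all $i>k+4$ when $x_e\geq\eps/16$) forces each such $x_e$ to be an integer multiple of $\eps/16=2^{-k-4}$, so Properties~\ref{property:containment} and \ref{property:edgebound-correct} restricted to large edges are automatic, and the output requirement ``$x'_e\in\{0,\eps/16\}$ if $x_e<\eps/16$'' imposes no constraint on them. What remains is to produce an \emph{exactly} $\eps/16$-uniform coarsening of $\x_{<\eps/16}$, i.e., an output in which every surviving small edge carries weight exactly $\eps/16$, not merely some value in $[\eps/16,\eps/8)$ as in the generic output of the splitters of \Cref{sec:partial-rounding-implementations}.

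For the two randomized variants, the desired exact uniformity comes essentially for free: both \Cref{rand-split-whp} and the sampling-based algorithm underlying \Cref{lem:dynamic-coursening:expectation} already assign a single fixed weight to every sampled edge, so I would simply instantiate them with target weight $\delta'=\eps/16$ and sampling probabilities $x_e/(\eps/16)$ chosen so that $\E[x'_e]=x_e$. The correctness analyses and the $\tilde O(\eps^{-1})$ amortized update times of \Cref{cor:bipartite:whp,cor:bipartite:expectation} then carry over with only cosmetic changes; the stated $\log^2(\eps^{-1})$ factor arises exactly as in those corollaries from running \Cref{alg:dynamic_rounding_hierarchy} on the coarsened input, which now has $\x'_{\min}\geq\eps/16$ and $L=O(\log(\eps^{-1}))$ levels.

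For the deterministic variant, exact $\eps/16$-uniformity is achieved by running \Cref{alg:dynamic_rounding_hierarchy} but halting the hierarchy at level $k+4$ rather than at level $0$; the intermediate vector $\x^{(k+4)}$ is by construction $\eps/16$-uniform on $\supp{x^{(k+4)}}\setminus\supp{x_{\geq\eps/16}}$. The vertex-slack Property~\ref{property:vertex-wise-slack} required of a coarsening then follows directly from \Cref{sequential-general}, which gives $d^{\eps/8}_V(\x,\x^{(k+4)})\leq\eps\norm{x}$, exactly the needed form up to a constant factor absorbed into $\eps$; Property~\ref{property:global-slack} follows analogously from \Cref{lem:sequential-size}. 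To obtain the tighter update time $O(\eps^{-1}(\log n+\log^2\eps^{-1}))$ rather than the $O(\eps^{-1}\log^2 n)$ that a direct invocation would give, I would first apply the deterministic splitter of \Cref{det-split} as a preprocessing step to coarsen up to minimum edge weight $\poly(\eps/\log n)$, and then run the now-shortened hierarchical rounding on top, exactly in the style used for \Cref{cor:bipartite:deterministic}.

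The hard part will be verifying that for the deterministic variant these modifications to \Cref{alg:dynamic_rounding_hierarchy} — halting the hierarchy at level $k+4$ and composing with a split preprocessing step — preserve the slack properties of a coarsening \emph{under adversarial dynamic updates}, not just statically. Concretely, I will need to reproduce the monotone-distance bookkeeping of \Cref{lem:sequential-size,sequential-general,lem:coarsening:stability} with the level cutoff inserted at $k+4$, then combine the split-preprocessing slack with the hierarchical-post-processing slack via the triangle-type inequality in \Cref{obs:monotone-distance}. This step is bit-fiddly but should require no new ideas beyond those already appearing in \Cref{sec:sequential,sec:proof:lem:split-to-dynamic-coarsening}.
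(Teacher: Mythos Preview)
Your deterministic plan is essentially the paper's: first a preliminary coarsening $\calC_1$ (built from \Cref{det-split} via \Cref{lem:split-to-dynamic-coarsening}) to raise $\x_{\min}$ to $\poly(\eps/\log n)$, then \Cref{alg:dynamic_rounding_hierarchy} halted at level $k+4$ as a second coarsening $\calC_2$, with \Cref{sequential-general} supplying Property~\ref{property:vertex-wise-slack}. The paper packages exactly this as \Cref{cor:thm:general:coarsen} (\Cref{alg:dynamic_rounding_hierarchy} is itself a dynamic $(2\eps,\eps)$-coarsening algorithm with $x'_e\in\{0,\eps\}$ on small edges) together with a generic composition lemma (\Cref{claim:composition} and \Cref{cor:combinedpartialrounding}), and then applies this \emph{uniformly} to all three variants.

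For the two randomized variants, however, your plan has a genuine gap. You propose to ``instantiate [the samplers] with target weight $\delta'=\eps/16$'', but this is exactly the parameter regime where their guarantees collapse. With target weight $\delta'$, the Chernoff bound controlling $|x'(v)-x(v)|$ has exponent $\Theta(\eps^2/\delta')$; at $\delta'=\eps/16$ this is $\Theta(\eps)$, which is neither polynomially small in $n$ (so the w.h.p.\ guarantee of \Cref{rand-split-whp} is lost) nor small enough to make $\E[d^\eps_V(\x,\x')]\leq O(\eps)\norm{x}$ (so the constant-probability retry argument of \Cref{lem:dynamic-coursening:expectation} fails). This is precisely why those lemmas set $\delta=\Theta(\eps^4/\log^2 n)$ and $\delta=\eps^3$ respectively. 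Your own sentence about ``running \Cref{alg:dynamic_rounding_hierarchy} on the coarsened input, which now has $\x'_{\min}\geq\eps/16$ and $L=O(\log(\eps^{-1}))$ levels'' is then self-contradictory: if the sampler already produced $\eps/16$-valued edges, the hierarchy halted at level $k+4$ is the identity and contributes no $\log^2(\eps^{-1})$ factor.

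The fix is what you already do deterministically and what the paper does throughout: run the sampler at its natural small target $\delta_1$ as $\calC_1$, then compose with \Cref{alg:dynamic_rounding_hierarchy} (as a coarsening, per \Cref{cor:thm:general:coarsen}) as $\calC_2$. The $\log^2(\eps^{-1})$ and $(\log\log n)^2$ terms then come from the $L=O(\log(\eps^{-1}\delta_1^{-1}))$ levels of $\calC_2$, and the exact-$\eps/16$ output property comes from $\calC_2$, not from the sampler.
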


\subsection{Proof of \Cref{lem:coarsening:AMFM}: Coarsening in General Graphs}

\label{subsec:extensions:coarsening}

    In this section we show how to leverage our rounding and coarsening algorithms of previous sections to     
    efficiently 
    coarsen AMFMs, giving \Cref{lem:coarsening:AMFM}.

    First, we show that (the internal state of) \Cref{alg:dynamic_rounding_hierarchy} yields a dynamic coarsening algorithm.

\begin{lem}

    \label{cor:thm:general:coarsen}
    For $\eps=2^{-k}$ and $k\geq 0$ an integer, \Cref{alg:dynamic_rounding_hierarchy}  maintains a $(2\eps, \eps)$-coarsening $\x'$ of input dynamic vector $\x\in \mathbb{R}^E_{\geq 0}$ satisfying $(x_e)_i=0$ for all $i > k$ and edges $e$ with $x_e\geq \epsilon$. Moreover, for every edge $e\in E$, if $x_e < \epsilon$, then $x'_e \in \{0, \epsilon\}$.
    \end{lem}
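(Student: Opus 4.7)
The plan is as follows. Running the algorithm with its promised lower bound $\delta=\eps$ gives $L=1+\lceil \log_2\eps^{-2}\rceil\geq k+1$, so the level $k$ lies within the algorithm's hierarchy. I will define the output coarsening by setting $x'_e:=x_e$ whenever $x_e\geq \eps$ and $x'_e:=\eps\cdot F_k(e)$ whenever $x_e<\eps$, where $F_k$ is the algorithm's internal level-$k$ set. Whenever $x_e<\eps$, the input assumption forces $(x_e)_j=0$ for every $j\leq k$, and the proof of \Cref{lem:sequential-containment} already shows $E_j(e)\leq (x_e)_j$ throughout. Hence, by \Cref{rounded-frac}, $x^{(k)}_e=\eps\cdot F_k(e)\in\{0,\eps\}$ on small edges, so $\x'$ agrees with $\x^{(k)}$ there and the ``moreover'' clause is verified.

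The key structural step will be the claim that $F_j\cap\{e:x_e\geq \eps\}=\emptyset$ at every point in time, for every $j\geq k$. I would prove this by simultaneous reverse induction on $j\in\{L,L-1,\dots,k\}$ and induction on the number of operations. The base case $F_L=\emptyset$ is immediate. For the inductive step, observe that $F_j$ changes only by (i) a $\rebuild(j')$ with $j'\geq j+1$, which sets $F_j\gets \dsplit(G[E_{j+1}\uplus F_{j+1}])$, or (ii) an $\update$, which only removes edges from $F_j$. For (i), since $j+1>k$, the input bit-assumption gives $(x_e)_{j+1}=0$ for every currently-big edge $e$, so the reset $E_{j+1}\gets \suppOp_{j+1}(\x)$ contains no big edges; combined with the inductive hypothesis that $F_{j+1}$ is big-free at that instant, $\dsplit$ produces a big-free $F_j$. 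For (ii), edge removal cannot create a big edge in $F_j$, and in the sole case where an update promotes some edge $e$ from small to big, that same update loop removes $e$ from $F_j$ for every $j\leq L-1$.

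With the structural claim in hand, I would verify the four properties of a $(2\eps,\eps)$-coarsening from \Cref{def:coarsening}. Property \ref{property:containment} follows from $\supp{x'}\subseteq \{e:x_e\geq \eps\}\cup F_k\subseteq \supp{x}$, the second inclusion being the direct analogue of \Cref{lem:sequential-containment} for $F_k$. Property \ref{property:edgebound-correct} is immediate from the definition together with the ``moreover'' derivation. For properties \ref{property:global-slack} and \ref{property:vertex-wise-slack}, $\x'$ agrees with $\x^{(k)}$ on small edges, while on big edges $x'_e - x^{(k)}_e = \sum_{j\leq k}((x_e)_j - E_j(e))\cdot 2^{-j}\geq 0$, using the structural claim to drop the $F_k(e)\cdot \eps$ term. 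Each summand is bounded by the level-$j$ desync count, itself at most $c_j\leq 2^{j-2}\eps\norm{x}/L$, so summing over $j\leq k$ yields $\sum_e|x'_e - x^{(k)}_e|\leq (k+1)\eps\norm{x}/(4L)\leq \eps\norm{x}/4$. Combining this with \Cref{lem:sequential-size}, \Cref{lem:sequential-size-upper-bound}, and \Cref{sequential-general} evaluated at $i=k$ (where $2^{1-k}=2\eps$) then yields the required $2\eps$-slack on both $|\norm{x}-\norm{x'}|$ and $d_V^{2\eps}(\x,\x')$.

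The structural claim about $F_j$ being disjoint from currently-big edges is the main obstacle: the interplay between the top-down traversal of the hierarchy by $\update$ and the bottom-up re-synthesis by $\rebuild$ via $\dsplit$ has to be untangled carefully, and the input bit-assumption enters precisely at the transition $j+1=k+1>k$. Once this claim is established, the remainder is book-keeping that directly invokes the norm and vertex-slack bounds already proved in \Cref{sec:sequential} for $\x^{(k)}$.
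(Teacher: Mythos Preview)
Your approach is correct and, in one respect, more careful than the paper's own argument. The paper simply takes $\x':=\x^{(k)}$ and asserts that for every big edge $e$ (i.e., $x_e\geq \eps$) one has $x^{(k)}_e = x_e - \sum_{i=k+1}^L (x_e)_i\cdot 2^{-i}$, from which Property~\ref{property:edgebound-correct} would follow by the bit hypothesis. That identity, however, need not hold in the \emph{dynamic} setting: after an $\update(e,\nu)$ with $\nu\geq \eps$ that does not trigger any $\rebuild$, all $E_j(e)$ are zeroed out while $(x_e)_j$ may be $1$ for some $j\leq k$, so $x^{(k)}_e \neq x_e$. Your definition $x'_e:=x_e$ on big edges and $x'_e:=\eps\cdot F_k(e)$ on small ones is exactly the patch that makes Property~\ref{property:edgebound-correct} hold by construction. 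The structural claim that $F_j$ contains no currently-big edge for every $j\geq k$ is the right ingredient (and your induction on operations plus reverse induction on $j$ is the natural way to establish it); it ensures $x'_e = x^{(k)}_e$ on small edges and lets you write $x'_e - x^{(k)}_e = \sum_{j\leq k}((x_e)_j-E_j(e))\,2^{-j}\geq 0$ on big ones. Bounding the total discrepancy by $\sum_{j\leq k} 2^{-j}\cdot|S_j\setminus E_j|\leq (k+1)\cdot \eps\norm{x}/(4L)\leq \eps\norm{x}/4$ and combining with \Cref{lem:sequential-size}, \Cref{lem:sequential-size-upper-bound}, and \Cref{sequential-general} at $i=k$ then indeed yields the $2\eps$-slack for \ref{property:global-slack} and \ref{property:vertex-wise-slack}.

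One minor remark: your setting $\delta=\eps$ is not quite the intended instantiation (the algorithm's $\delta$ is a promised lower bound on $\x_{\min}$, and the lemma places no such bound), but your argument only uses $L\geq k+1$, which holds for any $\delta\leq 1$. So the choice is harmless; just note that the proof works for whatever $\delta$ the algorithm is actually run with.
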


\begin{proof}

    Recall that \Cref{alg:dynamic_rounding_hierarchy} when run on vector $\x$ maintains vectors $\x^{(0)}, \x^{(1)} \dots$. 
    We claim that $\x^{(k)}$ is a $(2\eps,\eps)$-coarsening of $\bf{x}$ at all times during the run of the algorithm. 
    Properties \ref{property:containment}, \ref{property:global-slack}
    and \ref{property:vertex-wise-slack}
    of $(2\eps,\eps)$-coarsenings are shown explicitly by Lemmas \ref{lem:sequential-containment}, \ref{lem:sequential-size} (and \Cref{lem:sequential-size-upper-bound}) and \ref{sequential-general}, respectively. 
    Finally, 
    for every edge $e$ with $x_e < \eps$, we have that $x^{(k)}_e\in \{0,\eps\}$, while for each edge $e$ with $x_e \geq \eps$, we have that $x^{(k)}_e = x_e - \sum_{i=k+1}^L (x_e)_i\cdot 2^{-i} = x_e$ (where the second equality follows from the lemma's hypothesis), which shows Property~\ref{property:edgebound-correct} of $(2\eps,\eps)$-coarsenings.
\end{proof}

Unfortunately, \Cref{alg:dynamic_rounding_hierarchy} is too slow to yield speedups on the state-of-the-art via \Cref{thm:general-formal}.
In contrast, the coarsening algorithms of  \Cref{sec:partial-rounding-implementations} are also insufficient for that theorem's needs, as they only maintain  $(\poly(\eps^{-1}\cdot \log n),\poly(\eps^{-1}\cdot \log n))$-coarsenings, and so to be relevant for \Cref{thm:general-formal}, they must be run with a much smaller error parameter $\eps' = \poly(\eps/\log n)$, and again be too slow to yield any speedups.

However, as we show, similar to our results for rounding bipartite fractional matchings, using the output coarsenings of the algorithms of \Cref{sec:partial-rounding-implementations} as the input of \Cref{alg:dynamic_rounding_hierarchy} yields fast and sufficiently coarse dynamic coarsenings. For this, we need the following simple lemma.

   \begin{lem}\textbf{(Coarsening composition).}\label{claim:composition}
        Let $\eps_1,\eps_2\in [0,1]$ and $\delta_1\leq \delta_2$. 
        If $\x^{(1)}$ is an $(\eps_1,\delta_1)$-coarsening of $\x$ and $\x^{(2)}$ is an $(\eps_2,\delta_2)$-coarsening of $\x^1$, then $\x^{(2)}$ is an  $(\eps_1 + 2\eps_2,\delta_2)$-coarsening~of~$\x$.
    \end{lem}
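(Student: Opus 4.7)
The plan is to verify the four coarsening properties (C0)--(C3) for $\x^{(2)}$ with respect to $\x$ in turn, using the analogous properties that $\x^{(1)}$ satisfies w.r.t.\ $\x$ and that $\x^{(2)}$ satisfies w.r.t.\ $\x^{(1)}$. Containment (C0) is immediate by transitivity, $\supp{x^{(2)}}\subseteq \supp{x^{(1)}}\subseteq \supp{x}$.

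For the global slack (C1), the plan is to apply the triangle inequality for the norm, use the two coarsenings' global slack bounds, and then use $\norm{x^{(1)}}\leq (1+\eps_1)\norm{x}+\eps_1$ to replace $\norm{x^{(1)}}$ by $\norm{x}$ in the resulting estimate. Each cross-term $\eps_1\eps_2$ can be absorbed using $\eps_1,\eps_2\leq 1$, and the final bound collapses to $(\eps_1+2\eps_2)\norm{x}+(\eps_1+2\eps_2)$. Similarly, for vertex slack (C2), I would apply the triangle inequality for $d^{\eps}_V$ from \Cref{obs:monotone-distance} (with $\eps_1+\eps_2$ as the combined parameter), obtain $d^{\eps_1+\eps_2}_V(\x,\x^{(2)})\leq \eps_1\norm{x}+\eps_1+\eps_2\norm{x^{(1)}}+\eps_2$, convert $\norm{x^{(1)}}$ to $\norm{x}$ as before, and finally apply monotonicity of $d^{\cdot}_V$ to weaken the slack parameter to $\eps_1+2\eps_2$.

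The edge-values property (C3) is the only step requiring genuine case analysis, though nothing subtle occurs. Fix an edge $e$. If $x_e\geq \delta_2\geq \delta_1$, then (C3) for $\x^{(1)}$ gives $x^{(1)}_e=x_e\geq \delta_2$, and (C3) for $\x^{(2)}$ then gives $x^{(2)}_e=x^{(1)}_e=x_e$, as needed. If instead $x_e<\delta_2$, split further according to whether $x_e\geq \delta_1$ or $x_e<\delta_1$. In the first sub-case $x^{(1)}_e=x_e<\delta_2$, so (C3) for $\x^{(2)}$ w.r.t.\ $\x^{(1)}$ directly yields $x^{(2)}_e\in\{0\}\cup[\delta_2,2\delta_2)$. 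In the second sub-case $x^{(1)}_e\in\{0\}\cup[\delta_1,2\delta_1)$; if $x^{(1)}_e=0$ then (C0) forces $x^{(2)}_e=0$, otherwise whether $x^{(1)}_e\geq \delta_2$ or $x^{(1)}_e<\delta_2$ yields respectively $x^{(2)}_e=x^{(1)}_e\in[\delta_2,2\delta_1)\subseteq[\delta_2,2\delta_2)$ or $x^{(2)}_e\in\{0\}\cup[\delta_2,2\delta_2)$, as required.

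I do not anticipate a genuine obstacle here; the only minor care needed is making sure the slack blowup is additive with the advertised constant $2$ rather than multiplicative, which is why we use $\eps_1,\eps_2\leq 1$ (implicit by virtue of the definition) when bounding the $\eps_1\eps_2$ cross-terms, and why the hypothesis $\delta_1\leq \delta_2$ is needed to keep the case analysis for (C3) clean (without it, an edge with $x_e\in[\delta_2,\delta_1)$ would behave like a ``large-weight'' edge for $\x^{(1)}$ but a ``small-weight'' edge for $\x$, breaking the argument).
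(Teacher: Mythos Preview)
Your proposal is correct and follows essentially the same approach as the paper's proof: transitivity for (C0), triangle inequalities (for norms and for $d^\eps_V$ via \Cref{obs:monotone-distance}) combined with the bound $\normOp{\x^{(1)}}\leq (1+\eps_1)\norm{x}+\eps_1$ for (C1)--(C2), and a case split for (C3). Your case analysis for (C3) is in fact slightly more careful than the paper's, which asserts that $x_e<\delta_2$ implies $x^{(1)}_e<\delta_2$ without explicitly addressing the sub-case $x_e<\delta_1$ with $x^{(1)}_e\in[\delta_2,2\delta_1)$ that you handle.
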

 \begin{proof}
    Since $\x^{(2)}$ and $\x^{(1)}$ are coarsenings of $\x^{(1)}$ and $\x$, respectively, we have that $\suppOp(\x^{(2)}) \subseteq \suppOp(\x^{(1)}) \subseteq \supp{\x}$, implying Property~\ref{property:containment}.  To prove that $\x^{(2)}$ satisfies properties \ref{property:global-slack} and \ref{property:vertex-wise-slack} of an $(\eps_1+2\eps_2, \delta_2)$-coarsening of $\x$, we first note that by Property \ref{property:global-slack}, $\normOp{x^{(1)}}\leq (1+\eps_1)\cdot \norm{x}$.
    Consequently, Property \ref{property:global-slack} follows by the triangle inequality as follows.
    \begin{align*}
        \left|\norm{x}-\normOp{\x^{(2)}}\right| & \leq \left|\normOp{\x}-\normOp{\x^{(1)}}\right| + \left|\normOp{\x^{(1)}}-\normOp{\x^{(2)}}\right| \\
        & \leq \eps_1 \cdot \normOp{\x} + \eps_1 + \eps_2 \cdot \normOp{\x^{(1)}} + \eps_2 & \textrm{Property \ref{property:global-slack}} \\
        & \leq \eps_1 \cdot \normOp{\x} + \eps_1 + \eps_2 \cdot (1+\eps_1) \cdot \normOp{\x} + \eps_2 \\
        & \leq (\eps_1+2\eps_2) \cdot \norm{x} + \eps_1+2\eps_2. & \eps_1\leq 1
    \end{align*}
    Property \ref{property:vertex-wise-slack} similarly follows from \Cref{obs:monotone-distance}.
    \begin{align*}
        d_V^{\eps_1 + 2\eps_2}(\x,\x^{(2)}) 
        & \leq d_V^{\eps_1 + \eps_2}(\x,\x^{(2)}) & \textrm{\Cref{obs:monotone-distance}}\\
        & \leq d_V^{\eps_1}(\x,\x^{(1)}) + d_V^{\eps_2}(\x^{(1)},\x^{(2)}) & \textrm{\Cref{obs:monotone-distance}}\\
        & \leq \eps_1 \cdot \normOp{\x} + \eps_1 + \eps_2 \cdot \normOp{\x^{(1)}} + \eps_2, & \textrm{Property \ref{property:vertex-wise-slack}}
    \end{align*}
    where we already established that the latter term is at most $(\eps_1+2\eps_2)\cdot \norm{x} + (\eps+2\eps_2)$.

    Finally, by Property~\ref{property:edgebound-correct},
    if $x_e\geq \delta_2(\geq \delta_1)$, then $x^{(2)}_e=x^{(1)}_e=x_e$ and otherwise $x^{(1)}_e < \delta_2$ and hence $x^{(2)}_e\in [\delta,2\delta)$.  That is, $\x^{(2)}$ satisfies Property~\ref{property:edgebound-correct} of an $(\eps_1+2\eps_2,\delta_2)$-coarsening of $\x$.
\end{proof}

    Given \Cref{claim:composition}, an algorithm reminiscent of the rounding algorithm of  \Cref{lem:dynamic-coarsening-to-rounding} allows us to combine two dynamic coarsening algorithms, as in the following dynamic composition lemma:

\begin{restatable}{lem}{coarsentoAMFM}
\label{cor:combinedpartialrounding}
\textbf{(Composing dynamic coarsenings).} Let $\eps_1,\eps_2\in [0,1]$ and $\delta_1\leq \delta_2$. For $i=1,2$, let $\calC_i$ be a dynamic $(\eps_i,\delta_i)$-coarsening algorithms with $\update$ times $t^{\calC_i}_U := t^{\calC_i}_U(\eps_i, \delta_i, n)$ and $\init$ times $O(|\supp{x}|\cdot t^{\calC_i}_I)$, for $t^{\calC_i}_I := t^{\calC_i}_I(\eps_i,\delta_i,n)$. Then, there exists a dynamic $(O(\eps_1+\eps_2),\delta_2)$-coarsening algorithm with $\update$ time $O(t_U^{\calC_1} + t^{\calC_2}_U + \eps^{-1}\cdot t^{\calC_2}_I)$ and $\init$ time $O(|\supp{x}|\cdot (t^{\calC_1}_I+t^{\calC_2}_I))$, which is deterministic/\adaptive/\outputadaptive if both $\calC_1$ and $\calC_2$ are. Moreover, it suffices for $\calC_2$ to be a coarsening algorithm only for inputs that are a subset of the support of an output of $\calC_1$.
\end{restatable}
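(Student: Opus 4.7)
The plan is to adapt the re-initialization scheme from the proof of \Cref{lem:dynamic-coarsening-to-rounding} to this two-layer coarsening setting. I would run $\calC_1$ in the background on the dynamic input $\x$, maintaining its output $\x^{(1)}$ throughout. At initialization, and periodically thereafter, I would take a snapshot $\bar{\x}^{(1)}$ of $\x^{(1)}$ and fully re-initialize $\calC_2$ on $\bar{\x}^{(1)}$. To also guarantee Property~\ref{property:edgebound-correct} at the coarser threshold $\delta_2$, I would additionally maintain $\x_{\mathrm{Large}} := \x|_{x_e \geq \delta_2}$ directly, and define the final output $\x^\star$ as the sum of $\calC_2$'s current output and $\x_{\mathrm{Large}}$.

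Between re-initializations, I would handle an $\update(e,\nu)$ to $\x$ by (i) forwarding it to $\calC_1$, (ii) adjusting $\x_{\mathrm{Large}}$ based on whether $e$ now lies above or below the $\delta_2$ threshold, and (iii) if $e \in \suppOp(\bar{\x}^{(1)})$, calling $\calC_2.\update(e,0)$ to delete $e$ from $\calC_2$'s view. Thus $\calC_2$ only ever receives deletions on top of a single snapshot of an output of $\calC_1$---precisely the restricted setting permitted by the final sentence of the lemma. A counter would trigger a rebuild after $\Theta(\eps \cdot |\suppOp(\bar{\x}^{(1)})|)$ such updates; since each rebuild costs $O(|\suppOp(\bar{\x}^{(1)})| \cdot t_I^{\calC_2})$, this yields the desired amortized $O(\eps^{-1} \cdot t_I^{\calC_2})$ cost, on top of the per-update $t_U^{\calC_1} + t_U^{\calC_2}$.

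For correctness, $\calC_2$'s current output is by definition an $(\eps_2,\delta_2)$-coarsening of its current input $\tilde{\x}^{(1)}$, defined as $\bar{\x}^{(1)}$ restricted to edges untouched since the last rebuild. The snapshot $\bar{\x}^{(1)}$ was, at its creation time, an $(\eps_1,\delta_1)$-coarsening of the then-current $\x$. Invoking \Cref{claim:composition} together with a stability argument mirroring \Cref{lem:coarsening:stability}---to control drift between $\tilde{\x}^{(1)}$ and the current $\x$ across the inter-rebuild window---then yields the required $(O(\eps_1+\eps_2),\delta_2)$-coarsening guarantee, while adding $\x_{\mathrm{Large}}$ back in preserves Property~\ref{property:edgebound-correct} at threshold $\delta_2$. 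Adaptivity carries over verbatim because every step outside $\calC_1$ and $\calC_2$ is deterministic.

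The main obstacle will be the stability analysis. Since only $O(\eps \cdot |\suppOp(\bar{\x}^{(1)})|)$ edges of weight at most $2\delta_2$ are modified between rebuilds, the $\ell_1$ and vertex-wise deviations between $\tilde{\x}^{(1)}$ and the current $\x$ should each be $O((\eps_1+\eps_2) \cdot \norm{\x} + \eps_1+\eps_2)$; verifying Property~\ref{property:vertex-wise-slack} at the correct threshold will require combining the triangle inequality for $d^\epsilon_V$ from \Cref{obs:monotone-distance} across the three vectors $\x$, $\bar{\x}^{(1)}$, and $\x^\star$, as \Cref{lem:coarsening:stability} does for a single coarsening. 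Isolating $\x_{\mathrm{Large}}$ outside the $\calC_2$ machinery is precisely what makes the $\delta_2$-threshold bookkeeping for Property~\ref{property:edgebound-correct} straightforward despite $\calC_1$ operating at the finer scale $\delta_1$.
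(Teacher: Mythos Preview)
Your proposal is correct and mirrors the paper's own proof sketch, which likewise defers to the machinery of \Cref{lem:dynamic-coarsening-to-rounding} and invokes \Cref{lem:coarsening:stability} and \Cref{claim:composition} for the correctness analysis. One small slip: as literally written you initialize $\calC_2$ on the \emph{full} snapshot $\bar{\x}^{(1)}$ and then add $\x_{\mathrm{Large}}$ on top, which double-counts large edges that $\calC_2$ already preserves verbatim by Property~\ref{property:edgebound-correct}; simply feed $\calC_2$ only $\bar{\x}^{(1)}|_{x_e < \delta_2}$ (as the paper does at the analogous threshold) and the argument goes through unchanged.
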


As the proof is near identical to that of \Cref{lem:dynamic-coarsening-to-rounding}, we only outine one salient difference.
\begin{proof}[Proof (Sketch)]
    The algorithm broadly follows the logic of  \Cref{lem:dynamic-coarsening-to-rounding}, with $\calC_1,\calC_2,\eps_1$ and $\delta_1$ playing the roles of $\calC$,$\calR,\eps$ and $\delta$ in that lemma.
    The outline of the algorithm and the resulting running time and determinism/adaptivity are as in \Cref{lem:dynamic-coarsening-to-rounding}.
    The only difference in correctness analysis is that by \Cref{lem:coarsening:stability}, taking the non-deleted edges in $\suppOp(x^{(1)})\cap \{e\mid x_e<\delta_1\}$ periodically (i.e., after every $\eps_1\cdot\norm{x}/\delta_1$ updates and using them as input to $\suppOp(x^{(1)})\cap \{e\mid x_e<\delta_1\}$) allows us to maintain an $(O(\eps_1),\delta_1)$-coarsening of $\x$ as the (dynamic) input for $\calC_2$ throughout. (Note that $\calC_2$ is a dynamic coarsening algorithm for the resultant vector dominated by $\x^{(1)}$, by the lemma's hypothesis.) That the dynamic output of $\calC_2$ is an $(O(\eps_1+\eps_2),\delta_2)$-coarsening of $\x$ then follows \Cref{claim:composition}.
\end{proof}

    \Cref{lem:coarsening:AMFM} then follows from
    \Cref{cor:combinedpartialrounding}
    in much the same way that \Cref{thm:bipartite} follows from \Cref{lem:dynamic-coarsening-to-rounding} in \Cref{sec:coarsening-applications}, by taking the coarsening algorithms of 
    \Cref{sec:partial-rounding} as $\calC_1$ and \Cref{alg:dynamic_rounding_hierarchy} as the second coarsening algorithm $\calC_2$ (here, relying on \Cref{cor:thm:general:coarsen}). We omit the details to avoid repetition.

\subsection{Proof of Theorem~\ref{thm:general-formal}: Maintaining AMMs}
\label{subsec:proof:theorem:general}

In this section we show how to maintain $O(\epsilon)$-AMMs, by dynamically coarsening AMFMs (motivated by \Cref{lem:AMFM}), and periodically using these coarsened AMFMs to compute such AMMs. 

Our first lemma provides a static algorithm substantiating the intuition that a coarsening of an AMFM allows to efficiently compute an AMM (under mild conditions, which we address below).

    \begin{lem}\label{lem:AMM-from-coarsened-AMFM}
        Let $0\leq \delta\leq \epsilon\leq 1$. If $\mu(G)\geq 2\delta^{-1}\epsilon^{-1}$, then there exists a deterministic algorithm that, given an $(\eps,\delta)$-AMFM $\x$ of $G$ and an $(\epsilon,\delta)$-coarsening $\x'$ of $\x$ satisfying $x'_e\in \{0,\delta\}$ if $x_e<\delta$, outputs an $O(\eps+\delta)$-AMM of $G$ in time $O(\mu(G)\cdot \epsilon^{-1}\cdot \delta^{-1}\cdot \log(\eps^{-1}))$.
    \end{lem}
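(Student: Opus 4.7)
The plan is to interpret the support of a cleaned-up coarsening $\x''$ as a kernel of $G$ (possibly after removing a small set $U$ of ``bad'' AMFM-witness vertices), and then invoke Proposition~\ref{prop:KERNELtoAMM}. First, apply the algorithm of Lemma~\ref{cl:bounded-coarsening} to $\x'$ to produce a bounded $(3(\eps+\delta),\delta)$-coarsening $\x''$ of $\x$ with $x''(v)\leq x(v)+3(\eps+\delta)$ for all $v$, in time $O(|\supp{x'}|)=O(\mu(G)/\delta)$. Combining the hypothesis that $x'_e\in\{0,\delta\}$ whenever $x_e<\delta$ with Property~\ref{property:edgebound-correct}, every nonzero $x''_e$ equals $\delta$ (if $x_e<\delta$) or $x_e\geq \delta$ (otherwise); consequently $\x''_{\min}\geq \delta$ and $d_{\supp{x''}}(v)\leq x''(v)/\delta \leq (1+3(\eps+\delta))/\delta$ everywhere, giving the kernel's degree upper bound with parameter $d:=\lceil(1+3(\eps+\delta))/\delta\rceil=O(1/\delta)$.

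For the degree lower bound I would exploit the AMFM structure of $\x$. Let $V^\star:=\{v\in V: x(v)\geq 1-\eps,\ \max_{f\ni v}x_f\leq \delta\}$ be the set of AMFM-witnesses. Any edge $e\notin \supp{x''}$ satisfies $x_e<\delta$ (else $x''_e=x_e\geq \delta>0$), so by AMFM some endpoint $v\in e$ lies in $V^\star$. At such $v$, every incident edge $f\ni v$ has $x_f\leq \delta$, and hence $x''_f\in\{0,\delta\}$, yielding the exact identity $d_{\supp{x''}}(v)=x''(v)/\delta$. Thus if $x''(v)\geq 1-C(\eps+\delta)$ for an appropriate constant $C$, then $d_{\supp{x''}}(v)\geq (1-O(\eps+\delta))d$, which is what the kernel's lower degree bound requires with slack $O(\eps+\delta)$.

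Define $U:=\{v\in V^\star:x''(v)<1-C(\eps+\delta)\}$. I claim (i) $K:=\supp{x''}\cap E(G-U)$ is an $(O(\eps+\delta),d)$-kernel of $G-U$, and (ii) $|U|\leq O((\eps+\delta)\mu(G))$. Given these, invoking Proposition~\ref{prop:KERNELtoAMM} on $K$ and $G-U$ (which requires $d\geq (\eps+\delta)^{-1}$, satisfied since $d=\Omega(\delta^{-1})\geq \eps^{-1}$ as $\delta\leq \eps$) yields an $O(\eps+\delta)$-AMM of $G-U$ in time $O(d\cdot \mu(G)\cdot \eps^{-1}\cdot \log(\eps^{-1}))=O(\mu(G)\cdot \eps^{-1}\cdot \delta^{-1}\cdot \log(\eps^{-1}))$; since $|U|\leq O(\eps+\delta)\mu(G)$, the same matching is an $O(\eps+\delta)$-AMM of $G$. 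Claim (i) follows directly from the analysis above: for $v\in V^\star\setminus U$, the degree in $K$ differs from $d_{\supp{x''}}(v)$ by at most $|U|$ (edges lost by restricting to $V\setminus U$), absorbed into the $O(\eps+\delta)$ slack thanks to the hypothesis $\mu(G)\geq 2\eps^{-1}\delta^{-1}$, which ensures that additive terms of size $O(1)$ are negligible compared to $\eps\mu(G)$.

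The main obstacle is claim (ii). A first attempt uses $|V^\star|\leq O(\mu(G))$ (from $\sum_{v\in V^\star}x(v)\geq |V^\star|(1-\eps)$ and $\sum_v x(v)=2\norm{x}=O(\mu(G))$, since $\x$ is a constant-factor-approximate fractional matching) together with the bounded coarsening's vertex slack $d^{3(\eps+\delta)}_V(\x,\x'')\leq O((\eps+\delta)\mu(G))$ (from Property~\ref{property:vertex-wise-slack} applied to $\x''$, and using \Cref{obs:monotone-distance} restricted to $V^\star$). Each $v\in U$ has $x(v)-x''(v)>C(\eps+\delta)-\eps$, contributing at least $(C-4)(\eps+\delta)$ to the vertex-slack sum, but this only yields $|U|\leq O(\mu(G))$, which is too weak. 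Sharpening to $|U|\leq O((\eps+\delta)\mu(G))$ is the crux; I expect this follows by exploiting that each $v\in U$ has its kernel degree deficient by $\Omega((\eps+\delta)/\delta)$ incident edges relative to its $\x$-degree, and charging these missing (in $\x$) edges via an edge-level double-counting against the global slack $|\norm{x}-\norm{x''}|=O((\eps+\delta)\mu(G))$. With this counting argument in place, the three steps above combine to give the desired $O(\eps+\delta)$-AMM in the claimed running time.
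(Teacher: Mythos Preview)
Your plan has a genuine gap at exactly the point you flag as the crux: you cannot get $|U|\le O((\eps+\delta))\mu(G)$ from the coarsening properties. The vertex slack gives only
\[
\sum_{v}\bigl(|x(v)-x''(v)|-3(\eps+\delta)\bigr)^{+}\;\le\;O((\eps+\delta))\,\mu(G),
\]
and each $v\in U$ contributes merely $\Omega(\eps+\delta)$ to this sum, so dividing yields only $|U|=O(\mu(G))$. Your suggested sharpening via the global slack does not help either: bounding $\sum_{v\in U}(x(v)-x''(v))$ by $|\norm{x}-\norm{x''}|$ fails because the coarsening can simultaneously \emph{increase} value at many other vertices (edges with $x_e<\delta$ rounded up to $\delta$), so the per-vertex deficits at $U$ do not aggregate into a global norm deficit; formally, $\sum_v(x(v)-x''(v))^{+}$ is controlled by $d^{0}_V(\x,\x'')$, for which the coarsening gives no useful bound. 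Moreover, even if claim (ii) held, claim (i) would still break: restricting $K$ to $E(G-U)$ costs each $v\in V^\star\setminus U$ up to $|U|$ incident edges, and $|U|=O((\eps+\delta)\mu(G))$ can far exceed the permissible slack $O((\eps+\delta)d)=O((\eps+\delta)/\delta)$ once $\mu(G)\gg 1/\delta$.

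The paper sidesteps both obstacles by \emph{adding dummy vertices} rather than removing bad ones. After making the coarsening bounded and scaling to a genuine fractional matching $\z$, it pads each useful vertex $v$ with $\lceil (x(v)-z(v)-O(\eps'))^{+}/\delta'' \rceil$ dummy edges of weight $\delta''$ to fresh dummy vertices, restoring $x''(v)\ge x(v)-O(\eps')$ everywhere. Here the total \emph{value} on dummy edges is bounded by the vertex slack $O((\eps+\delta)\mu(G))$, and distributing this value round-robin over $O((\eps+\delta)\mu(G))$ dummy vertices keeps each dummy vertex feasible. The resulting $\x''$ is then a bona fide $(O(\eps+\delta),\Theta(\delta))$-AMFM with $\x''_{\min}\ge\Theta(\delta)$ in a supergraph $G''$, so \Cref{lem:AMFM} and \Cref{prop:KERNELtoAMM} apply directly. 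The key quantitative point is that the cost of padding is the \emph{total deficit}, which \emph{is} $O((\eps+\delta)\mu(G))$, whereas the cost of deletion is the \emph{number} of deficient vertices, which need not be.
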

    \begin{proof}
		We modify $\x'$ within the desired time bounds to obtain an $(\epsilon'',\delta'')$-AMFM $\x''$ with $\x''_{\min}\geq \delta''$ (for $\eps''=\Theta(\epsilon+\delta)$ and $\delta''=\Theta(\delta)$ to be determined) in a graph $G''=(V\cup V'',E\cup E'')$ obtained from $G$ by adding some $|V''| = O(\epsilon'')\cdot \mu(G)$ many dummy vertices $V''$ as well as some dummy edges $E''$ to $G$.
		Therefore, by \Cref{lem:AMFM}, the support of $\x''$ is an $(\epsilon'',(\delta'')^{-1})$-kernel of $G''$. Hence, by \Cref{prop:KERNELtoAMM} we can compute deterministically an $\eps''$-AMM $M''$ of $G''$ in the claimed running time. 
		Now, by definition of an $\epsilon''$-AMM, $M''$ is maximal in a subgraph obtained from $G''$ after removing a set of vertices $U\subseteq V\cup V''$ of cardinality at most $|U|\leq \epsilon''\cdot \mu(G'')\leq \epsilon''\cdot (1+O(\eps''))\cdot \mu(G)\leq O(\epsilon'')\cdot \mu(G)$.
		Consequently, $M:=M''\cap E$ is maximal in a subgraph obtained from $G$ after removing at most $|U|+|V''|\leq O(\epsilon'')\cdot \mu(G)=O(\epsilon+\delta)\cdot \mu(G)$ many vertices, including the at most $O(\eps'')\cdot \mu(G)$ many vertices of $V$ matched by $M''$ to dummy vertices in $V''$.
		It remains to construct $G''$ and $\x''$.
		
		First, invoking \Cref{cl:bounded-coarsening}, we deterministically compute a bounded $(\epsilon',\delta)$-coarsening $\y$ of $\x$, for $\epsilon':=3(\epsilon+\delta)$, in time $O(|\suppOp(\x')|) \leq O(\delta^{-1}\cdot \norm{x}) \leq O(\delta^{-1}\cdot \mu(G))$. Here, the first inequality follows from $\x'_{\min}\geq \delta$, while the second follows from the integrality gap of the degree-bounded fractional matching polytope in general graphs implying that $\norm{x}\leq \frac{3}{2}\cdot \mu(G)$.
		By Property \ref{property:boundedcoursening} of bounded coarsenings, $y(v)\leq x(v)+\epsilon'\leq 1+\epsilon'$.
		We therefore consider $\z:=\y/(1+\epsilon')$, which by the above is a fractional matching with $\z_{\min}\geq \delta':=\delta/(1+\epsilon')$. Moreover, since $\norm{z} = \norm{y}/(1+\eps')\geq \norm{y}\cdot (1-\epsilon')$ and  hence $\sum_v |y(v)-z(v)|\leq 2\epsilon'\norm{y}$, we have that 
		\begin{align*}
			d^{15\eps'}_V(\x,\z) & \leq d^{\eps'}_V(\x,\y) + d^{14\eps'}_V(\y,\z) & \textrm{\Cref{obs:monotone-distance}}\\
			& \leq  d^{\eps'}_V(\x,\y) +d^0_{V}(\y,\z) & \textrm{\Cref{obs:monotone-distance}} \\
			& \leq   d^{\eps'}_V(\x,\y) + 2\epsilon'\norm{y} & \\
			& \leq \eps'\cdot \norm{\x}+\epsilon'+ 2\epsilon'\cdot ((1+\epsilon')\cdot \norm{x} + \epsilon') +2\epsilon'  & \textrm{Properties \ref{property:global-slack} and \ref{property:vertex-wise-slack}}\\
			& \leq  15\epsilon'\cdot \norm{x}+15\epsilon'. & \epsilon'=3(\epsilon+\delta)\leq 6.
		\end{align*}
		That is, $\z$ is a fractional matching that satisfies Property \ref{property:vertex-wise-slack} of a $(15\epsilon',\delta')$-coarsening of $\x$.
		
		Now, we say a node $v$ is \emph{useful} if $x(v)\geq 1-\epsilon$ and $\max_{f\ni v}x_f\leq \delta$ (noting their relevance in \Cref{def:AMFM}).
		By Property \ref{property:vertex-wise-slack}, and again using the integrality gap, we have that 
		\begin{align}\label{eqn:num-useful}
			\sum_{v \textrm{  useful}}(|x(v)-z(v)|-15\epsilon')^+\leq  d^{15\epsilon'}_V(x,z)\leq (15\epsilon'\cdot \norm{x}+15\epsilon') \leq 30\epsilon'\cdot \norm{x} \leq 45 \epsilon'\cdot \mu(G).
		\end{align} 

        We now create a set $V''$ of $90\cdot \epsilon'\cdot \mu(G)$ many dummy vertices, and 
		create a fractional matching $\x''$ from $\z$, letting $x''_e=z_e$ for each edge $e\in E$ and adding dummy edges $e\in E''$ with $\x''_e = \delta'':=\delta'= \delta/(1+\eps)\geq \delta/2$. Specifically, for each useful vertex $v$, we add some $\lceil (x(v)-z(v)-15\eps')^+\cdot (\delta'')^{-1}\rceil \leq 2x(v)\cdot (\delta'')^{-1}\leq 4\delta^{-1}$ dummy edges to distinct dummy vertices (this is where we use that $\eps\cdot \mu(G')\geq \delta^{-1}$), chosen in round-robin fashion. 
        This guarantees that each useful vertex $v$ has $x''(v)\geq x(v)-15\eps'$.
        That $\x''$ is a fractional matching, and in particular dummy nodes $u$ have $x(u)\leq 1$, follows from \Cref{eqn:num-useful}, as the amount of total $\x''$-value on dummy edges is at most $90\eps'\cdot \mu(G)\leq |V''|$, distributed in round-robin fashion.
		
		Finally, we note that any edge $e$ with $x''_e<\delta'$ has $z_e<\delta'$ and so $z_e=0$, by Property \ref{property:edgebound-correct}, and so $y_e=0$, implying in turn that $x_e<\delta$, again by Property \ref{property:edgebound-correct}, and so there exists a useful $v\in e$, by \Cref{def:AMFM}.
		Therefore, $x(v)\geq 1-\epsilon$, and so $x''(v)\geq x(v)-15\epsilon'\geq 1-\epsilon-15\epsilon'$.
		Moreover, all edges $f\ni v$ have $x_f\leq \delta$, and so $x'_f\in \{0,\delta\}$ by Property \ref{property:edgebound-correct} and the lemma's hypothesis. Consequently, $y_f\in \{0,\delta\}$, and so $x''_f\in \{0,\delta'\}$.
		That is, letting $\eps'':=\epsilon-15\epsilon'$ and $\delta'':=\delta'$, we conclude that $\x''$ is an $(\epsilon'',\delta'')$-AMFM with $\x''_{\min}\geq \delta''$ in the graph $G''$ obtained by adding at most $90\epsilon'\cdot \mu(G)=O(\epsilon+\delta)\cdot \mu(G)$ many dummy vertices to $G$. The lemma follows.
    \end{proof}

	The above lemma allows us to compute an AMM quickly when $\mu(G)$ is large. 
	The following simple complementary lemma allows us to compute a maximal matching quickly when $\mu(G)$ is small.
	
	\begin{lem}\label{fact:small-mu-MM}
		There exists a static deterministic  algorithm that given an $O(1)$-approximate vertex cover $U$ of graph $G=(V,E)$,  compute a maximal matching in $G$ in time $O(\mu(G)^2)$.
	\end{lem}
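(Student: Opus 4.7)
The plan is to exploit the fact that an $O(1)$-approximate vertex cover $U$ has size $|U| = O(\mu(G))$, since by LP duality (or the trivial bound from any matching) one has $\mu(G) \leq \tau(G) \leq |U| = O(\tau(G)) = O(\mu(G))$. Moreover, every edge of $G$ has at least one endpoint in $U$, so it suffices to process vertices in $U$ to discover all ``uncovered'' edges for maximality. The algorithm is the natural greedy one: iterate over $u \in U$ in an arbitrary order; if $u$ is currently unmatched, scan its adjacency list one neighbor at a time, stopping as soon as an unmatched neighbor $v$ is found, and add $(u,v)$ to $M$ (if no unmatched neighbor exists, skip $u$).

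For correctness, the output $M$ is clearly a matching. To see that it is maximal, consider any edge $(a,b) \in E$; since $U$ is a vertex cover, WLOG $a \in U$. When $a$ was processed, either $a$ was already matched, or the scan matched $a$ to some neighbor, or the scan exhausted $a$'s neighborhood with every neighbor (including $b$) already matched. In each case $\{a,b\} \cap V(M) \neq \emptyset$, so $M$ is maximal.

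For the running time, the key point is that any neighbor scanned while processing $u$ (other than the final successful one, if any) is already matched at that moment, and the total number of matched vertices never exceeds $2|M| \leq 2\mu(G)$. Consequently, the number of neighbors examined while processing any single $u \in U$ is at most $\min(d(u),\,2\mu(G)) + 1 = O(\mu(G))$. Summing over the $|U| = O(\mu(G))$ vertices in $U$ (plus the $O(|U|)$ bookkeeping to iterate over $U$ and check matched status in $O(1)$ using a flag array initialized in $O(\mu(G))$ time for the relevant vertices encountered) yields total time $O(|U| \cdot \mu(G)) = O(\mu(G)^2)$.

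I do not anticipate a genuine obstacle here: the only subtle step is bounding the per-vertex scan length, which follows immediately from the observation that all scanned-but-rejected neighbors must lie among the at most $2\mu(G)$ currently matched vertices. The other ingredients (size of $U$, vertex-cover property, greedy maximality) are standard.
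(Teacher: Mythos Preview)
Your proposal is correct and follows essentially the same approach as the paper: bound $|U|=O(\mu(G))$, greedily scan neighbors of each $u\in U$, observe that every rejected neighbor is already matched and hence there are at most $2\mu(G)$ of them per scan, and conclude maximality from the vertex-cover property. The only difference is that the paper first computes a maximal matching inside $G[U]$ in $O(|U|^2)$ time and then extends it by the same greedy scan; this preprocessing step is not actually needed, and your single-pass version is a mild simplification of the same argument.
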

	\begin{proof}
		First, the algorithm computes a maximal matching $M$ in $G[U]$, taking $O(|U|^2)=O(\mu(G)^2)$ time, since an $O(1)$-approximate vertex cover has size $O(\mu(G))$.\footnote{The endpoints of any maximum matching form a vertex cover of size $2\mu(G)$.} Then, we extend $M$ to be a maximal matching in all of $G$ as follows: for each unmatched vertex $u\in U$, we scan the neighbors of $u$ until we find an unmatched neighbor $v$ (in which case we add $(u,v)$ to $M$) or until we run out of neighbors of $u$. As $|M|\leq \mu(G)$ at all times by definition of $\mu(G)$, we scan at most $2\mu(G)+1$ neighbors per node $u\in U$ until we match $u$ (or determine that all of its neighbors are unmatched), for a total running time of $O(U\cdot \mu(G))=O(\mu(G)^2)$. Since each edge in $G$ has an endpoint in the vertex cover $U$, this results in a maximal matching.
	\end{proof}

	We are finally ready to prove Theorem~\ref{thm:general-formal}, restated for ease of reference. 
    \generalformal*

	\begin{proof}
	Our algorithm dynamically maintains data structures, using which it periodically computes an $\epsilon$-AMM, the non-deleted edges of which serve as its matching during the subsequent period.
	A period at whose start the algorithm computes an $\epsilon$-AMM $M$ consists of $\lfloor\epsilon\cdot |M|\rfloor \leq \epsilon\cdot \mu(G)$ updates, and so by \Cref{obs:AMMstability}, the undeleted edges of $M$ are a $6\epsilon$-AMM throughout the period.
	We turn to describing the necessary data structures and analyzing their update time and the (amortized) update time of the periodic AMM computations.
    Throughout, we assume without loss of generality that $\eps=2^{-k}$ for $k\geq0$ some integer.
 
	\textbf{The data structures.} First, we maintain an $(\epsilon,\epsilon/16)$-AMFM $\x$ using update time $t_f$, and update recourse $u_f$. 
    For each edge $x_e\geq \eps/16$, we set $x_e=\eps/16$, noting that this does not affect the salient properties of an $(\eps,\eps/16)$-AMFM. This second assumption implies that $\calC$ is a coarsening algorithm for the dynamic AMFM we will wish to coarsen.
    This implies that coarsening algorithm $\calC$ can be run on $\x$ to maintain an $(\epsilon,\delta)$-coarsening $\x'$ of $\x$, where each change to $\x$ incurs update time $t_c$, for a total update time of $O(t_f+u_f\cdot t_c)$.
	In addition, using \cite{bhattacharya2019deterministically}, we deterministically maintain an $O(1)$-approximate vertex cover $U\subseteq V$ using $O(1)$ update time,\footnote{Generally, \cite{bhattacharya2019deterministically} maintain a $(2+\epsilon)$-approximate vertex cover in $O(\epsilon^{-2})$ for any $\epsilon\in (0,1)$.} which is dominated by the above update time.
	
	\textbf{The end of a period.} When a period ends (and a new one begins), both the current matching $M$ 
	and the new matching $M'$ that we compute for the next matching are $O(\epsilon)$-AMMs, and so $|M|=\Theta(\mu(G))$ and $|M'|=\Theta(\mu(G))$.
	We compute $M'$ as follows.
	If $|M|\leq \eps^{-2}$, we run the algorithm of \Cref{fact:small-mu-MM}, using the $O(1)$-vertex cover $U$, in time $O(\mu(G)^2)=O(|M'|^2)$, amortized over $\epsilon\cdot |M'|$ updates, for an amortized update time of $O(|M'|^2/(\epsilon\cdot |M'|))=O(\epsilon^{-1}\cdot |M'|)=O(\epsilon^{-3})$.
	Otherwise, we have that $(\mu(G)\geq)|M|>2\epsilon^{-2}$, and we use $\x$ and $\x'$ and \Cref{lem:AMM-from-coarsened-AMFM} to compute $M'$ in time $O(\mu(G)\cdot \epsilon^{-2}\cdot \log(\epsilon^{-1})) = O(|M'|\cdot \epsilon^{-2}\cdot \log(\epsilon^{-1}))$, for an amortized update time of $O(|M|\cdot \epsilon^{-2}\cdot \log(\epsilon^{-1}))/(\epsilon\cdot |M'|)=O(\epsilon^{-3}\cdot \log(\epsilon^{-1}))$.
	
	To summarize, the above algorithm maintains an $O(\epsilon)$-AMM within the claimed running time. As this algorithm consists of deterministic steps (including those of the vertex-cover algorithm of \cite{bhattacharya2019deterministically}) and runs algorithms $\calF$ and $\calC$, the combined algorithm is deterministic/\adaptive/\outputadaptive if the latter two algorithms are.
	\end{proof}

\subsection{Restricted Fractional Matchings}\label{sec:restricted}

To conclude this section, we briefly note that the same approach underlying \Cref{thm:general-formal} also allows us to round other known structured fractional matchings in general graphs. In particular, we can also round the following fractional matchings, introduced by \cite{bernstein2016faster}.
\begin{Def}\label{def:restricted}
Fractional matching $\x\in \mathbb{R}^E_{\geq 0}$ is \emph{$\eps$-restricted} if $x_e\in [0,\eps]\cup\{1\}$ for each~edge~$e\in E.$   
\end{Def}

The interest in restricted fractional matchings (in general graphs) is that their integrality gap is low, in the following sense (see, e.g., \cite{bernstein2016faster,assadi2022decremental}).
\begin{prop}\label{prop:restricted-matchings}
Let $\x$ be an
$\alpha$-approximate fractional matching in $G$ (i.e., $\norm{x}\geq \alpha\cdot \mu(G)$) that is also $\eps$-restricted. Then, $\supp{x}$ contains an $\alpha\cdot (1-\eps)$-approximate integral matching.
\end{prop}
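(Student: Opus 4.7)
The plan is to decompose $\supp{x}$ into ``heavy'' edges $M_H := \{e : x_e = 1\}$ and ``light'' edges $E_L := \supp{x} \setminus M_H$, where by $\eps$-restriction every $e \in E_L$ satisfies $x_e \in (0, \eps]$. Since $\x$ is a fractional matching, each heavy edge $e = \{u,v\}$ has $x(u) = x(v) = 1$, so no other edge of $\supp{x}$ is incident to $u$ or $v$; in particular $M_H$ is an integral matching and, writing $V_H := V(M_H)$, the restriction $\y := \x|_{E(G[V \setminus V_H])}$ is a fractional matching in $G[V \setminus V_H]$ with $\norm{y} = \norm{x} - |M_H|$ and $y_e \leq \eps$ for every $e$. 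The heart of the proof is the following \emph{key lemma}: any fractional matching $\y$ with $y_e \leq \eps$ for all $e$ satisfies $\mu(\supp{y}) \geq (1-\eps)\norm{y}$. Granting this, letting $M_L \subseteq \supp{y}$ be the matching guaranteed by the lemma, the union $M := M_H \cup M_L \subseteq \supp{x}$ is an integral matching with
\[
|M| \;\geq\; |M_H| + (1-\eps)\bigl(\norm{x} - |M_H|\bigr) \;\geq\; (1-\eps)\norm{x} \;\geq\; \alpha\,(1-\eps)\,\mu(G).
\]

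For the key lemma I would apply the Tutte--Berge formula to $H := \supp{y}$: letting $U^* \subseteq V(H)$ attain the maximum odd-component deficiency, $\mu(H) = (|V(H)| + |U^*| - o(H - U^*))/2$, where $o(\cdot)$ counts odd connected components. I would bound $\norm{y}$ against the same $U^*$: the $\y$-mass on edges incident to $U^*$ is at most $|U^*|$ (from $\sum_{u \in U^*} y(u) \leq |U^*|$); the $\y$-mass within an even component $C$ of $H - U^*$ is at most $|C|/2$; and the $\y$-mass within an odd component $C$ is at most $(|C|-1)/2 + \beta_C$ for some ``excess'' $\beta_C \in [0,1/2]$. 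Summing these bounds and comparing with Tutte--Berge gives
\[
\norm{y} \;\leq\; \mu(H) + \sum_{C\text{ odd}} \beta_C,
\]
so the lemma reduces to showing $\sum_C \beta_C \leq \eps \cdot \norm{y}$.

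The $\eps$-restriction enters here: for any odd component $C$ with $\beta_C > 0$, the inequality $\y|_C(\text{within}) \leq |E(C)|\cdot \eps \leq \binom{|C|}{2}\eps$ combined with $\y|_C(\text{within}) > (|C|-1)/2$ forces $|C| > 1/\eps$ and $\beta_C \leq (|C|-1)(|C|\eps - 1)/2$. A short case analysis around the crossover $|C| = 1 + 1/\eps$ (using this bound when $|C| \leq 1 + 1/\eps$ and the trivial $\beta_C \leq 1/2$ otherwise) shows $\beta_C \leq \eps \cdot (|C|-1)/2 \leq \eps\cdot \y|_C(\text{within})$ in both regimes; summing over all odd components gives $\sum_C \beta_C \leq \eps \cdot \norm{y}$ and completes the key lemma.

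The main obstacle is the final case analysis, which requires carefully balancing the trivial excess bound $\beta_C \leq 1/2$---tight for short odd cycles and responsible for the classical $3/2$ integrality gap in general graphs---against the size-driven bound $|C| > 1/\eps$ enabled by $y_e \leq \eps$. The heavy/light decomposition, the reduction to the key lemma, and the invocation of Tutte--Berge are otherwise routine.
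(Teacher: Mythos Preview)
The paper does not prove this proposition; it is stated as a known fact with citations to \cite{bernstein2016faster,assadi2022decremental}. Your argument is correct: the heavy/light split is standard, and your Tutte--Berge computation for the key lemma checks out (both cases of the $|C|$ split give $\beta_C \leq \eps\cdot \y|_C$, and summing yields $(1-\eps)\norm{y} \leq \mu(\supp{y})$).

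That said, the route taken in the cited references is shorter and avoids Tutte--Berge entirely. For the key lemma, one shows directly that $(1-\eps)\,\y$ lies in the Edmonds matching polytope of $\supp{y}$, whence an integral matching of size $\geq (1-\eps)\norm{y}$ exists in $\supp{y}$. The degree constraints are immediate; for an odd set $S$ one splits on $|S|$: if $|S| \leq 1/\eps$ then $\sum_{e\subseteq S} y_e \leq \binom{|S|}{2}\eps \leq (|S|-1)/2$ already without scaling, while if $|S| \geq 1/\eps$ then $(1-\eps)\sum_{e\subseteq S} y_e \leq (1-\eps)|S|/2 \leq (|S|-1)/2$. Your approach reproves the same inequality combinatorially via the deficiency formula; this is a genuinely different (and somewhat heavier) argument, but it buys nothing extra here---the polytope argument is both cleaner and what the literature actually does.
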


\begin{restatable}{thm}{generalrestricted}\label{thm:general-restricted}
    Let $\eps=2^{-k}$ for $k\geq 11$ an integer.
    Let $\calF$ be a dynamic $\alpha$-approximate $\eps$-restricted fractional matching algorithm with update time $t_f$ and output recourse $u_f$. 
     Let $\calC$ be a dynamic $(16\eps, \eps)$-coarsening algorithm with $\update$ time $t_c$ for vectors $\x$ satisfying $x_e\geq \eps/16$ implies that $(x_e)_i=0$ for $i>k+4$. Then there exists a dynamic $\alpha(1-O(\eps))$-approximate matching algorithm $\calA$ with update time $O(\eps^{-3}+t_f + u_f\cdot t_r)$. Moreover, $\calA$ is deterministic/\adaptive/\outputadaptive if both $\calF$ and $\calC$ are.
\end{restatable}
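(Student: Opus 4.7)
The plan is to mirror the proof of \Cref{thm:general-formal}, with \Cref{prop:restricted-matchings} taking the role of the AMFM-to-kernel extraction of an AMM. The key benefit of coarsening here is that for an $\eps$-restricted fractional matching $\x$ the coarsening $\x'$ has support of size only $O(\mu(G)/\eps)$---much smaller than $\supp{x}$ in general---so a single $(1-\eps)$-approximate maximum matching computation on $\supp{x'}$ via \cite{duan2014linear} takes only $\tilde O(\mu(G)\cdot \eps^{-2})$ time.

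The algorithm would maintain $\x$ via $\calF$ and feed each output change in $\x$ to $\calC$ to maintain an $(16\eps,\eps)$-coarsening $\x'$ of $\x$, at combined update cost $O(t_f+u_f\cdot t_c)$. The hypothesis on $\calC$'s input is satisfied because an $\eps$-restricted $\x$ has $x_e\in[0,\eps]\cup\{1\}$, so any $x_e\geq \eps/16$ either lies in $[\eps/16,\eps]$ (and hence fits in the bottom $k+4$ bits, since $\eps=2^{-k}$) or equals $1=2^0$. The timeline would then be partitioned into periods of $\lfloor\eps\cdot|M'|\rfloor$ updates, with $M'$ the integral matching computed at the start of the period, and during the period the algorithm would output $M'$ with deleted edges removed. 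At the start of each period, $M'$ would be statically recomputed as a $(1-\eps)$-approximate maximum matching in $\supp{x'}$ via Duan-Pettie.

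For correctness, $\supp{x'}$ must contain an integral matching of size $\alpha\cdot(1-O(\eps))\cdot\mu(G)$. By Property~\ref{property:edgebound-correct} of a coarsening, $\x'$ takes values in $\{0\}\cup[\eps,2\eps)\cup\{1\}$ (weight-$1$ edges of $\x$ persist unchanged in $\x'$), and by Property~\ref{property:global-slack}, $\norm{x'}\geq(1-O(\eps))\cdot \alpha\cdot\mu(G)$. To handle the $O(\eps)$-violation of vertex constraints by $\x'$, I would split off $M_1:=\{e:x_e=1\}$ (an integral matching trivially contained in $\supp{x'}$), apply \Cref{cl:bounded-coarsening} to the low-weight portion of $\x'$ to obtain a bounded coarsening $\y_s$ of $\x_s$ (the low-weight portion of $\x$), and then scale $\z_s:=\y_s/(1+O(\eps))$ to obtain a feasible $O(\eps)$-restricted fractional matching on $G\setminus V(M_1)$ with $\norm{z_s}\geq(1-O(\eps))\cdot\norm{x_s}$. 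Then \Cref{prop:restricted-matchings} applied to $\z_s$ yields a $(1-O(\eps))\cdot\norm{z_s}$-size integral matching in $\supp{z_s}\subseteq\supp{x'}$, which combined with $M_1$ yields an integral matching of total size $(1-O(\eps))\cdot(|M_1|+\norm{x_s})=(1-O(\eps))\cdot\norm{x}\geq\alpha\cdot(1-O(\eps))\cdot\mu(G)$ in $\supp{x'}$. Stability of $M'$ across a period follows from the period length: at most $\eps\cdot\mu(G)$ edges of $M'$ can be deleted during a period.

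For the running time, I would bound $|\supp{x'}|\leq|M_1|+\norm{x'}/\eps=O(\mu(G)/\eps)$, since weight-$1$ edges form a matching of size at most $\mu(G)$ and every other $\x'$-edge has weight at least $\eps$. The per-period static cost is thus $O(\mu(G)\cdot\eps^{-2}\cdot\log\eps^{-1})$, which amortized over $\Theta(\eps\cdot\alpha\cdot\mu(G))$ updates is $\tilde O(\eps^{-3})$ per update (treating $\alpha$ as a constant). In the small-$|M'|$ regime we would recompute after every update, but there $|\supp{x'}|=O(\eps^{-2})$, yielding the same bound. The main obstacle is the correctness step: since $\x'$ is not a feasible restricted fractional matching, \Cref{prop:restricted-matchings} cannot be applied directly, and the \Cref{cl:bounded-coarsening}-plus-scaling workaround from the proof of \Cref{thm:general-formal} must be adapted---handling the weight-$1$ and the low-weight portions separately---to preserve both the $O(\eps)$-restrictedness and the $\alpha\cdot(1-O(\eps))$-approximation guarantee simultaneously.
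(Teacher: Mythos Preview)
Your proposal is correct and follows essentially the same framework as the paper's own proof sketch: maintain the coarsening $\x'$ of the $\eps$-restricted $\x$, periodically compute a $(1-\eps)$-approximate matching in $\supp{x'}$ (which has size $O(\mu(G)/\eps)$ since every edge has weight $\geq \eps$), and rely on stability across a period of $\Theta(\eps\cdot\mu(G))$ updates.

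You are in fact more careful than the paper on the correctness step. The paper's sketch simply asserts that $\x'$ ``is [a] $2\eps$-restricted fractional matching, by Property~\ref{property:edgebound-correct}'' and applies \Cref{prop:restricted-matchings} directly; but Property~\ref{property:edgebound-correct} only controls edge values, not the vertex constraints $x'(v)\le 1$. Your observation that naive scaling would destroy restrictedness (weight-$1$ edges would become weight $1/(1+O(\eps))\notin[0,\eps]\cup\{1\}$) is exactly right, and your fix---split off $M_1=\{e:x_e=1\}$ (automatically vertex-disjoint from the low-weight support), apply \Cref{cl:bounded-coarsening} and scaling only to the low-weight portion, then recombine---is the clean way to fill this gap. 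One minor point: for the per-period static computation the paper invokes an $O(m\cdot\eps^{-1})$-time unweighted $(1-\eps)$-approximate matching algorithm rather than Duan--Pettie, which is how it shaves the $\log(\eps^{-1})$ factor and obtains $O(\eps^{-3})$ rather than $\tilde O(\eps^{-3})$.
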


The proof outline again mirrors that of \Cref{thm:general-formal} (and \Cref{lem:dynamic-coarsening-to-rounding} before it), maintaining essentially the same data structures, and so we only discuss the key difference.
\begin{proof}[Proof (Sketch)]
The maintained $(O(\eps),\eps)$-coarsening $\x'$ of the dynamic $\eps$-restricted fractional matching $\x$ is $2\eps$-restricted fractional matching, by Property \ref{property:edgebound-correct}. Moreover, by Property \ref{property:global-slack} and $\norm{x}\geq 1$, we have that $\normOp{\x'}\geq (1-32\eps)\norm{x}$.
Therefore, by \Cref{prop:restricted-matchings}, $\suppOp(x')$ contains an integral matching $M^*$ of cardinality $|M^*|\geq \normOp{\x'}\cdot (1-32\eps)\geq \norm{x}\cdot (1-32\eps)^2 \geq \alpha(1-1024\eps)\cdot \mu(G).$
Now, if instead of computing an AMM periodically, as in \Cref{thm:general-formal}, we compute an $\eps$-approximate matching $M$ in time $O(\eps^{-1}|\supp{x}|)=O(\eps^{-1}\norm{x}/\eps) = O(\eps^{-2}\cdot \mu(G))$.\footnote{For example, we can use the $\eps$-approximate maximum-weight matching algorithm of \cite{duan2014linear}, which runs in time $O(\eps^{-1}\cdot m)$ in $m$-edge unweighted graphs.} We  do so  every $\eps\cdot \mu(G)$ updates, thus maintaining a $(1-3\eps)$-approximation of $|M^*|$ throughout by ``stability'' of the matching problem (see \cite{gupta2013fully}. 
This then results in an $\alpha(1-1024\eps)\cdot(1-3\eps)\geq \alpha(1-O(\eps))$-approximation, with essentially the same update time of \Cref{thm:general-formal}, other than the $O(\eps^{-3}\cdot \log(\eps^{-1}))$ term in the update time, which is improved to $O(\eps^{-3})$.
\end{proof}
    
By combining \Cref{thm:general-restricted} with \Cref{lem:coarsening:AMFM}, we obtain efficient dynamic rounding algorithms for $\eps$-restricted fractional matchings, with update times and determinism/adaptivity as the  stated in \Cref{thm:AMMresults}

\section{Decremental Matching}\label{sec:decremental}

In this section we discuss applications of our rounding algorithms to speeding up decremental matching algorithms.
Prior results are obtained by a number of fractional algorithms \cite{bernstein2020deterministic,jambulapati2022regularized,bhattacharya2023dynamicLP,assadi2022decremental}, combined with known rounding algorithms (or variants thereof \cite{assadi2022decremental}).
We show how our (partial) rounding algorithms yield speeds ups on known decremental bipartite matching algorithms, and bring us within touching distance of similar deterministic results in general graphs.

\paragraph{Robust fractional matchings.} The approach underlying \cite{bernstein2020deterministic,jambulapati2022regularized,assadi2022decremental} is to phaseically compute a \emph{robust} fractional matching $\x$, in the sense that the value of $\x$ restricted to non-deleted edges remains $(1-\eps)$-approximate with respect to the current maximum (integral) matching in the subgraph induced by non-deleted edges, unless the latter decreases by a $(1-O(\eps))$ factor.
For example, in a complete graph, a uniform fractional matching is robust, while an integral matching is not, as deleting its edges would yield a $0$-approximation to the maximum matching, which would be unaffected by such deletions.
Formally, the above works implement the following.

\begin{Def}
    An \emph{\DM} decremental fractional matching algorithm 
    partitions the sequence of deletions into $P$ \emph{phases} of consecutive deletions (determined during the algorithm's run), and computes a fractional matching $\x^i$ at the start of phase $i$, where $\x^i$ restricted to non-deleted edges is a $(1-\eps)$-approximate throughout phase $i$ (i.e., until $\x^{i+1}$ is computed).
\end{Def}

As for dynamic algorithms, we may consider \DM decremental fractional matching algorithms that are either deterministic or adaptive. All currently known such algorithms are adaptive, and even deterministic.

In what follows, we show how our rounding algorigthms can be used to round known \DM decremental fractional matching algorithms to obtain new state-of-the-art decremental (integral) matching algorithms, starting with bipartite graphs.

\subsection{Bipartite Graphs}

Applying the rounding algorithms of \Cref{thm:bipartite}, one immediately obtains a framework to rounding such decremental fractional bipartite matching algorithms.

\begin{thm}\label{thm:decremental-framework}
    Let $\calF$ be an \DM decremental fractional bipartite matching algorithm, using total time $t^\calF$ and at most $p^\calF$ phases.
    Let $\calR$ be an $\eps$-approximate bipartite rounding algorithm with $\update$ time $t^\calR_U$ and $\init$ time $t^{\calR}_I(\x)$.
    Then, there exists a $(1-2\eps)$-approximate decremental bipartite matching algorithm $\calA$ which on graph starting with $m$ edges takes total time $O(t^\calF + \sum_{i=1}^{p^\calF} t^\calR_I(\x^i) + m\cdot t^\calR_U)$. Algorithm $\calA$ is deterministic/\adaptive/\outputadaptive if both $\calF$ and $\calR$ are.
 \end{thm}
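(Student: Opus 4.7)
The plan is to simply run $\calF$ and $\calR$ in parallel in a black-box fashion, reinitializing $\calR$ at phase boundaries and feeding single-edge updates into $\calR$ between them. Concretely, at the start of phase $i$ we call $\calR.\init(G,\x^i,\eps)$ with the robust fractional matching $\x^i$ produced by $\calF$. During phase $i$, whenever the adversary deletes an edge $e$ (from the current support of $\x^i$ restricted to non-deleted edges), we call $\calR.\update(e,0)$; deletions of edges outside $\supp{\x^i}$ require no action. The output of $\calA$ at any moment is the matching $M$ currently maintained by $\calR$.

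For correctness, I would argue that during phase $i$ the fractional vector seen by $\calR$ is exactly $\x^i$ restricted to non-deleted edges, which is a valid fractional matching (zeroing entries preserves feasibility, so the $\update$ promise is met) and is $(1-\eps)$-approximate with respect to the current maximum matching by the $\eps$-robustness of $\calF$. By the guarantee of $\calR$, the maintained matching $M\subseteq \supp{\x}$ satisfies $|M|\geq (1-\eps)\norm{\x}$, hence composes to a $(1-\eps)^2\geq (1-2\eps)$-approximation of the current maximum matching. Since $M\subseteq\supp{\x}$ lies in non-deleted edges, it is a valid matching in the current graph.

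For the running time, the contribution of $\calF$ is $t^\calF$ by assumption. The total cost of the $\init$ calls to $\calR$ is $\sum_{i=1}^{p^\calF} t^\calR_I(\x^i)$ by construction. Finally, each edge of $G$ is deleted at most once over the entire sequence, and each such deletion triggers at most one $\update$ call to $\calR$, costing $O(t^\calR_U)$; summed over the at most $m$ deletions this contributes $O(m\cdot t^\calR_U)$. Adding these gives the claimed total time $O(t^\calF+\sum_{i=1}^{p^\calF} t^\calR_I(\x^i)+m\cdot t^\calR_U)$. The adaptivity/determinism of $\calA$ follows immediately because all of $\calA$'s randomness and internal state outside of $\calF$ and $\calR$ is trivial: an adaptive adversary against $\calA$ induces an adaptive adversary against each of $\calF$ and $\calR$, and the same for the output-adaptive and deterministic settings.

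There is essentially no obstacle here beyond bookkeeping; the only point to be slightly careful about is that the per-phase $\init$ cost is charged once per phase (not amortized against $m$), so the bound must separate $\sum_i t^\calR_I(\x^i)$ from the $m\cdot t^\calR_U$ term, as stated. The other minor point is to observe that the updates fed to $\calR$ within a phase are only $\update(\cdot,0)$ operations and thus trivially satisfy the fractional-matching invariant required by \Cref{def:rounding}.
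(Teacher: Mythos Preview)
Your proposal is correct and follows essentially the same approach as the paper's proof: run $\calF$, reinitialize $\calR$ via $\init(G,\x^i,\eps)$ at each phase boundary, and call $\update(e,0)$ on each deletion within a phase, then combine the $(1-\eps)$-robustness of $\x^i$ with the $(1-\eps)$-rounding guarantee to get $(1-\eps)^2\geq 1-2\eps$. Your additional remarks (that edges outside $\supp{\x^i}$ need no update, and that $\update(\cdot,0)$ trivially preserves the fractional-matching invariant) are sound and slightly more explicit than the paper's treatment.
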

\begin{proof}
    The combined algorithm $\calA$ is direct. Whenever $\calF$ computes a new fractional matching $\x^i$, we run $\init(G,\x^i,\eps)$ in algorithm $\calR$. Between computations of fractional matchings $\x^i$ and $\x^{i+1}$, we call $\update(e,0)$ for each edge $e$ deleted. The running time is trivially $O(t^\calF + \sum_{i=1}^{p^\calF} t^\calR_I(\x^i) + m\cdot t^\calR_U)$. As for the approximation ratio, the fractional matching $\x$ throughout a phase (i.e., the $\x^i$ value of undeleted edges between computation of $\x^i$ and $\x^{i+1}$) is $(1-\eps)$-approximate with respect to the current maximum matching $\mu(G)$, and the rounding algorithm $\calR$ maintains a matching $M\subseteq \supp{x}$ with the desired approximation ratio, 
    \begin{align*}
        |M| &\geq (1-\eps)\cdot \norm{x}\geq (1-\eps)^2\cdot \mu(G)\geq (1-2\eps)\cdot \mu(G).
    \end{align*}
    Finally, as Algorithm $\calA$ only uses algorithms $\calF$ and $\calR$, the former is deterministic/\adaptive/\outputadaptive if the latter two are.
\end{proof}

\Cref{thm:decremental-framework} in conjunction with \Cref{thm:bipartite} yield a number of improvements for the decremental bipartite matching problem (see \Cref{table:decremental}).\footnote{The rounding algorithm of \cite{bhattacharya2021deterministic} only states its $n$-dependence, which is $\log^4n$, but a careful study shows that their $\eps$-dependence is $\eps^{-6}$.} 
In particular, these theorems yield decremental (integral) bipartite matching algorithms with the same update time as the current best \emph{fractional} algorithms for the same problems \cite{bhattacharya2020deterministic,jambulapati2022regularized}.
Our improvements compared to prior work follows both from our faster update times (previously of the order $\Omega(\eps^{-4})$, and some with high polylog dependencies), as well as our rounding algorithms' initialization times, which avoid us paying the update time per each of the potentially $p^\calF\cdot m$ many edges in the supports of $\x^1,\x^2,\dots,x^{p^{\calF}}$.
We discuss the fractional algorithms of \cite{bernstein2020deterministic,jambulapati2022regularized}, substantiating the results in the subsequent table, in \Cref{app:decremental}.

\begin{center}	
	\begin{table}[h]
		{
		\small
		\begin{center}
			\centering
			
			\begin{tabular}{ | c | c | c | c | }
				\hline				
				New Time Per Edge & Fractional Algorithm & Closest Prior Best & Reference \bigstrut\\
                  \hline 
                {$\eps^{-2}\cdot n^{o(1)}$ (A)} & {\cite{jambulapati2022regularized} (see \Cref{lem:box})} & $\eps^{-4}\cdot n^{o(1)}$ (A)& \cite{jambulapati2022regularized} + \cite{wajc2020rounding} \bigstrut\\    
                \hline 
                 \multirow{ 2}{*}{$\eps^{-3}\cdot \log^{5} n$ (D)} & \multirow{ 2}{*}{\cite{jambulapati2022regularized} (see \Cref{lem:box})} & $\eps^{-4}\cdot \log^{5}n$ (A)& \cite{jambulapati2022regularized} + \cite{wajc2020rounding} \bigstrut\\  
                & & $\eps^{-9}\cdot \log^{9}n$ (D) & \cite{jambulapati2022regularized} + \cite{bhattacharya2021deterministic} \bigstrut\\  
                \hline 
                \multirow{ 2}{*}{$\eps^{-4}\cdot \log^3 n $ (D)} & \multirow{ 2}{*}{\cite{bernstein2020deterministic} (see \Cref{lem:congestion})} & $\eps^{-4}\cdot \log^3 n$ (A) & \cite{bernstein2020deterministic} + \cite{wajc2020rounding} \bigstrut\\  
                & & $\eps^{-10}\cdot \log^7 n$ (D) & \cite{bernstein2020deterministic} + \cite{bhattacharya2021deterministic} \bigstrut\\  
                \hline 
			\end{tabular}
		\end{center}}	
		\captionsetup{justification=centering}
		\caption{New results for $(1-\eps)$-approximate Decremental Bipartite Matching\\
        (D) and (A) stand for deterministic, and (randomized) adaptive, respectively}
		\label{table:decremental}
	\end{table}
\end{center}

\subsection{Potential Future Applications: General Graphs}

In \cite{assadi2022decremental}, Assadi, Bernstein and Dudeja follow the same robust fractional 
matching approach of \cite{bernstein2020deterministic}, and show how to implement it in \emph{general} graphs.
By assuming that $\mu(G)\geq \eps\cdot n$ throughout, which only incurs a (randomized) $\poly(\log n)$ slowdown, by known vertex sparsification results \cite{assadi2016maximum}, they show how to maintain an \DM fractional matching that is $\eps$-restricted (\Cref{def:restricted}) in total time $m\cdot 2^{O(\eps^{-1})}$. 
By adapting the rounding scheme of \cite{wajc2020rounding}, they then round this fractional matching randomly (but adaptively).

By appealing to \Cref{thm:general-restricted} and \Cref{lem:coarsening:AMFM}, we can similarly round $\eps$-restricted fractional matchings deterministically, with a $\log n\cdot \poly(\eps^{-1})$ update time. 
Combining with \emph{deterministic} vertex sparsification with a $n^{o(1)}$ slowdown \cite{kiss2022improving}, we obtain \underline{\emph{almost}} all ingredients necessary for the first sub-polynomial-time deterministic $(1-\eps)$-approximate decremental matching algorithm in general graphs.
Unfortunately, the fractional algorithm of \cite{assadi2022decremental} relies on randomization in one additional crucial step to compute an \DM fractional matching, namely in the subroutine $M$\textsc{-or-}$E^*()$, that outputs either a large fractional matching respecting some edge capacity constraints, or a minimum cut (whose capacities one then increased). For bipartite graphs, \cite{bernstein2020deterministic} gave a deterministic implementation of this subroutine, though for general graphs this seems more challenging, and so \cite{assadi2022decremental} resorted to randomization to implement this subroutine. Our work therefore leaves the de-randomization of this subroutine as a concrete last step to resolving in the affirmative the following conjecture.

\begin{conj}
    There exists a determinsitic decremental $(1-\eps)$-approximate matching algorithm with total update time $m\cdot n^{o(1)}$, for any constant $\eps>0$.
\end{conj}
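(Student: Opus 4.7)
The plan is to combine three deterministic ingredients, mirroring the randomized pipeline of \cite{assadi2022decremental}, with the single missing deterministic piece being the internal $M$-or-$E^*()$ subroutine in general graphs. First I would apply the deterministic vertex sparsification of \cite{kiss2022improving} with its $n^{o(1)}$ overhead to reduce to the setting $\mu(G) \geq \eps \cdot n$, so that throughout the deletion sequence the surviving graph has at most $\tilde{O}(\mu(G) \cdot \poly(\eps^{-1}))$ relevant edges. Second I would run the primal-dual framework of \cite{assadi2022decremental} to maintain an $\eps$-restricted \DM fractional matching $\x$ in total time $m \cdot 2^{O(\eps^{-1})}$, which is $m \cdot n^{o(1)}$ for constant $\eps$, \emph{assuming} the inner $M$-or-$E^*()$ subroutine is implemented deterministically. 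Third I would round $\x$ to an integral $(1-O(\eps))$-approximate matching by invoking \Cref{thm:general-restricted} with the deterministic coarsening algorithm of \Cref{lem:coarsening:AMFM}, incurring only $\tilde{O}(\poly(\eps^{-1}) \cdot \log n)$ amortized overhead per fractional-matching update or recourse event.

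The hard part will be the de-randomization of $M$-or-$E^*()$ in general graphs. Recall that this subroutine must either produce a large fractional matching respecting current edge capacities or exhibit a small-capacity cut witnessing infeasibility; in \cite{assadi2022decremental} it is implemented via random contraction of odd components and random sampling of witnesses, which is where the randomness enters. In bipartite graphs, \cite{bernstein2020deterministic} give a deterministic implementation by reducing to decremental $(1-\eps)$-approximate bipartite max-flow/min-cut; in general graphs the natural analogue must contend with the blossom (odd-set) constraints of the matching polytope, which have no bipartite counterpart.

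I see two plausible attack strategies for this obstacle. The first is to bypass blossom constraints altogether by running the primal-dual against the fractional $b$-matching polytope and leveraging \Cref{prop:restricted-matchings}, which guarantees that an $\eps$-restricted fractional matching of value $V$ contains an integral matching of size at least $(1-\eps) V$; the challenge here is that the $b$-matching optimum can exceed $\mu(G)$ by up to a factor $3/2$, so the dual capacity updates must enforce $\eps$-restrictedness throughout the phase \emph{and} simultaneously drive the fractional value to $(1-\eps) \mu(G)$ rather than merely $(1-\eps) \mu_{\mathrm{bmatch}}(G)$. The second is to deterministically maintain a compact representation of currently tight odd sets under edge deletions, in the spirit of Gabow-style shrinkable-family data structures for static weighted matching, so that any violated blossom inequality can be enumerated in near-linear time whenever the subroutine returns a fractional matching that is far from the true matching polytope.

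Once either approach succeeds, the remainder is immediate from results already established in this paper. The deterministic coarsening of \Cref{lem:coarsening:AMFM} runs in $\tilde{O}(\eps^{-1} \log n)$ per update, and plugging it into \Cref{thm:general-restricted} yields per-update rounding cost $O(\eps^{-3} + t_f + u_f \cdot \tilde{O}(\eps^{-1} \log n))$. Summing over all $m$ deletions, together with the fractional algorithm's total time $t^{\calF} = m \cdot 2^{O(\eps^{-1})}$ and its total output recourse $\sum u_f = m \cdot \poly(\eps^{-1})$ (as in \cite{assadi2022decremental}), and absorbing the $n^{o(1)}$ sparsification overhead, the grand total becomes $m \cdot n^{o(1)}$ for any constant $\eps > 0$, settling the conjecture.
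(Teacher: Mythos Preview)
This statement is a \emph{conjecture} in the paper, not a theorem; the paper does not prove it and explicitly leaves it open. The paper's surrounding discussion identifies precisely the same pipeline you outline---deterministic vertex sparsification from \cite{kiss2022improving}, the \DM $\eps$-restricted fractional matching framework of \cite{assadi2022decremental}, and rounding via \Cref{thm:general-restricted} with the deterministic coarsening of \Cref{lem:coarsening:AMFM}---and singles out the de-randomization of the $M$\textsc{-or-}$E^*()$ subroutine as the one missing ingredient. So your reduction to that single obstacle, and your accounting of how the pieces combine once it is resolved, are exactly what the paper says.

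However, your proposal does not actually close the gap. Both ``attack strategies'' you sketch are programmatic rather than proofs. The first (work over the $b$-matching polytope and rely on \Cref{prop:restricted-matchings}) faces the very issue you name---the $b$-matching optimum can exceed $\mu(G)$ by a factor $3/2$---and you give no mechanism for forcing the primal-dual iterates to stay $\eps$-restricted \emph{and} converge to $(1-\eps)\mu(G)$ rather than $(1-\eps)\mu_{\mathrm{bmatch}}(G)$; this is the crux, not a detail. The second (maintain tight odd sets deterministically under deletions) is a one-sentence wish, not an algorithm, and it is not clear that Gabow-style laminar-family machinery survives in a decremental setting with the required time bounds. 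In short, your write-up is a faithful restatement of why the conjecture is plausible and what remains to be done, but it is not a proof---nor does the paper claim one.
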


\appendix
\section*{APPENDIX}
    \section{A $\dsplit$ Algorithm}\label{app:dsplit}

    In this section we provide an implementation and analysis of the $\dsplit$ algorithm guaranteed by \Cref{prop:degree-split}, restated below.
     \dsplitalgo*
\begin{proof}[Proof of \Cref{prop:degree-split}]
        First, suppose $G$ is a simple graph (i.e., it contains no parallel edges). We later show how to easily extend this to a multigraph with maximum multiplicity two.
        
        Recall that a \emph{walk} $\calW$ in a graph is a sequence of edges $(e_1,\dots,e_k)$, with $e_i=\{v_i,v_{i+1}\}$. We call $v_1$ and $v_{k+1}$ the \emph{extreme} vertices of $\calW$. We call odd-indexed and even-indexed edges $e_i$ \emph{odd} or \emph{even} for short.
        The walk $\calW$ is a \emph{cycle} if $v_1=v_{k+1}$, and the walk is \emph{maximal} if it cannot be extended, i.e., if there exists no edge $e=\{v_1,v\}$ or $e=\{v_{k+1},v\}$ outside of $\calW$.
        
        While $E\neq \emptyset$, Algorithm $\dsplit$ repeatedly computes maximal walks $\calW$, removes these walks' edges from $E$ and adds all odd edges (resp., even edges) of $\calW$ to the smaller (resp., larger) of $E_1$ and $E_2$, breaking ties in favor of $E_1$. This algorithm clearly runs in linear time, as the $O(|\calW|)$ time to compute a walk $\calW$ decreases $|E|$ by $|\calW|$.
        
        Next, since each edge is added to one of $E_1$ or $E_2$, we have that $|E|=|E_1|+|E_2|$. Moreover, since the number of odd and even edges differ by at most one for each walk (the former being no less plentiful), we have that $|E_1|\geq |E_2|$ and $\big||E_1|- |E_2|\big |\leq 1$ throughout. Property \ref{property:size-halved} follows.
        
        Finally, fix a vertex $v$ and $i\in \{1,2\}$.
        By maximality of the walks computed, once $v$ is an extreme vertex of at most one walk. Thus, all but at most two edges of $v$ are paired into successive odd/even edges in some walks computed. Consequently, 
        the number of odd and even edges of $v$ can differ by $0$, in which case  $d_i(v)=\frac{d_G(v)}{2}$, they may differ by $1$, in which case $d_i(v) \in \left\{\lfloor \frac{d_G(v)}{2}\rfloor,\ \lceil \frac{d_G(v)}{2}\rceil\right\}$, or they may differ by $2$, in which case $v$ has two more odd edges than even edges, the last walk $v$ belongs to is an odd-length cycle (and hence $G$ is not bipartite) and we have that $d_i(v)\in \left\{ \frac{d_G(v)}{2}-1,\,\frac{d_G(v)}{2}+1\right\}$.
        Properties \ref{property:degree-halved} and \ref{property:bipartite-degree-halved}  follow.

        Now, to address the (slightly) more general case where some edges have two parallel copies, we let $G'$ be the simple graph obtained after removing all parallel edges from $G$, and let $E'_1$ and $E'_2$ be the edge sets computed by running $\dsplit$ on $G'$. Then, for each pair of parallel edges $(e,e')$ in $G$, we arbitrarily add one of the pair to $E_1$ and add the other to $E_2$. Finally, we let $E_1\gets E_1\cup E'_1$ and $E_2\gets E_2\cup E'_2$. It is easy to see that the sets $E_1$ and $E_2$ are simple (by construction), and that these sets then satisfy all three desired properties with respect to $G$, since the edge sets $E'_1$ and $E'_2$ satisfy these properties with respect to $G'$, and every vertex $v$ has all of its (parallel) edges in $E\setminus E'$ evenly divided between $E_1$ and $E_2$. The linear running time is trivial given the linear time to compute $E'_1,E'_2$.
\end{proof}

\section{Dynamic Set Sampling}

\label{sec:app:setsampling}

In this section we provide an \outputadaptive data structure for the dynamic set sampling problem (restated below). Recall that this is the basic problem
of maintaining a dynamic subset of $[n]$ where every element is included
in the subset independently with probability $p_{i}$ under dynamic
changes to $p_{i}$ and re-sampling. This basic problem was studied by \cite{tsai2010heterogeneous,bringmann2017efficient,yi2023optimal}. 
\begin{wrapper}
\setsampler*
\end{wrapper}

Our main result of this section is that we can implement in each operation in total
time linear in the number of operations, $n$, and the size of the
$T$ output. 

\thmsetsampler*

\paragraph{Comparison with the concurrent work of \cite{yi2023optimal}.}
Our solution is somewhat simpler than that of the concurrent set sampler of \cite{yi2023optimal}. 
For our purposes, the only important difference is that our algorithm is provably \outputadaptive.\\

Our Algorithm~\ref{alg:set_sampler} and associated proof of Theorem~\ref{thm:set_sampler}
stem from a simple insight about computing when an element $i\in[n]$
will be in the output of $\sample()$. Note that if there are no operations
of the form $\set(i,\alpha)$, then the probability $i$ is in any
individual output of $\sample()$ is $p_{i}.$ Consequently, the probability
that $p_{i}$ is not in the output of $\sample()$ for the next $t$
calls to $\sample()$ is $(1-p_{i})^{t}$. Therefore, the number of
calls to $\sample()$ it takes for $i$ to be in the output sample
follows the geometric distribution with parameter $p_{i}$, i.e. $\Geo(p_{i})$!

Leveraging this simple insight above leads
to an efficient set sampler data structures. Naively, implementing set sampling
takes $O(n)$ time per call to $\sample()$, used to determine
for each element $i$ whether or not it should be in the output. 
However, we could instead simply sample from $\geom(p_{i})$ (in expected
$O(1)$, using \cite{bringmann2013exact}) whenever $\set(i,\alpha)$ is called or $i$ is in the output
of $\sample()$, in order to determine \emph{the next call} to $\sample()$ which will result
in $i$ being in the output. Provided we can efficiently keep track
of this information for each $i$, this would yield the desired bounds
in Theorem~\ref{thm:set_sampler}.

Unfortunately, when sampling from $\geom(p_{i})$, the output could
be arbitrarily large (albeit with small probability). Further, maintaining the data structure for
knowing when $i$ is scheduled to be in the output of $\sample()$
would naively involve maintaining a heap on arbitrary
large numbers, incurring logarithmic factors. There are many potential
data-structures and techniques to solve this problem. In our Algorithm~\ref{alg:set_sampler}
we provide one simple, straightforward solution. Every $n$ calls
to $\sample()$, we ``rebuild'' our data structure and rather than sampling
from $\geom(p_{i})$ to determine the next call to $\sample()$
that will output $i$, we instead simply sample to determine the next
call to $\sample()$ before the rebuild that will output $i$ (if
there is one). Algorithm~\ref{alg:set_sampler} simply does this,
resampling this time for $i$ whenever $\set(i,\alpha)$ is called.

\begin{algorithm}
\caption{Set Sampler Data Structure}
\label{alg:set_sampler}
\LinesNumbered
\SetKwInOut{State}{global}
\SetKwInOut{Return}{return}
\SetKwProg{Fn}{function}{}{}

\State{Size $n$ and $p\in[0,1]^{n}$}

\State{Current (relative) time $\tau$, subsets $T_{1},...,T_{n}\subseteq[n]$,
and next sample times $\tau_{1},...,\tau_{n}\in[n+1]$}

\BlankLine

\Fn{$\init(n,p\in[0,1]^{n})$}{

Save $n$ and $p$ as global variables\;
Initialize $\tau\gets1$, $T_{i}=\emptyset$, and $\tau_{i}=n+1$
for all $i\in[n]$\;

Call $\set(i,p_{i})$ for all $i\in S$\;

}

\BlankLine

\Fn{$\set(i\in[n],\alpha\in[0,1])$}{

\lIf{$\tau_{i}\neq n+1$}{$T_{\tau_{i}}\gets T_{\tau_{i}}\setminus\{i\}$}

\tcp{The loop below can be implemented in expected $O(1)$ time (see
Lemma~\ref{lem:set_sampler_time_efficient})}

\For{$j=\tau$ to $n$\label{line:set_sampler_for_start}}{

Independently with probability $p_{i}$, set $T_{j}\gets T_{j}\cup\{i\}$,
$\tau_{i}=j$, and \textbf{return} \label{line:set_sampler_for_end}

}

$\tau_{i}=n+1$\;

}

\BlankLine

\Fn{$\sample()$}{

Set $T\gets T_{\tau}$ and then set $\tau\gets\tau+1$\;

\lIf{$\tau<n+1$}{call $\set(i,p_{i})$ for all $i\in T$}

\lElse{set $\tau=1$, and then call $\set(i,p_{i})$ for all $i\in[n]$}

\Return{$T$}
}
\end{algorithm}

Algorithm~\ref{alg:set_sampler} is written without an efficient
determination of the next output time for each element, so that it is clear that this algorithm is a
set sampler. In the following Lemma~\ref{lem:set_sampler_time_efficient}
we show how to perform this efficient determination or, more precisely,
implementing the for-loop in Lines \ref{line:set_sampler_for_start}-\ref{line:set_sampler_for_end}
We then use this lemma to prove Theorem~\ref{thm:set_sampler}.
\begin{lem}
\label{lem:set_sampler_time_efficient} The for-loop in Lines \ref{line:set_sampler_for_start}-\ref{line:set_sampler_for_end}
of Algorithm~\ref{alg:set_sampler} can be implemented in expected
$O(1)$ time in the word RAM model.
\end{lem}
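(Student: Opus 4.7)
The plan is to recognize the for-loop in Lines~\ref{line:set_sampler_for_start}--\ref{line:set_sampler_for_end} as searching for the first success in a sequence of at most $n - \tau + 1$ independent $\Ber(p_i)$ trials. Concretely, if $g \sim \Geo(p_i)$ denotes the index of the first success (so $g \geq 1$), then the loop body should be executed with $j = \tau + g - 1$ whenever $g \leq n - \tau + 1$; otherwise, no success lies in the range $\{\tau, \ldots, n\}$, the loop falls through, and the procedure sets $\tau_i \gets n + 1$.

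The implementation is then straightforward: draw $g \sim \Geo(p_i)$; if $g \leq n - \tau + 1$, set $T_{\tau + g - 1} \gets T_{\tau + g - 1} \cup \{i\}$ and $\tau_i \gets \tau + g - 1$, and return; otherwise set $\tau_i \gets n + 1$. Every step aside from the geometric sample is a single word-RAM arithmetic operation, and the resulting behavior is identical in distribution to the explicit loop, since in both cases the time of the first success (if any) is distributed as $\Geo(p_i)$ and the updates performed depend only on that time.

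The one non-trivial step is producing a sample from $\Geo(p_i)$ in expected $O(1)$ word-RAM time, which I would invoke as a black box using the algorithm of \cite{bringmann2013exact}. This runs in expected $O(1)$ time provided the word size $w$ is $\Omega(\log p_i^{-1})$, which is ensured by the hypothesis of \Cref{thm:set_sampler}, namely $p_i \geq p_{\min}$ and $w = \Omega(\log p_{\min}^{-1})$. The main subtlety to handle is that a geometric sample can in principle be arbitrarily large, exceeding the word size; however, since we only act on $g$ when $g \leq n - \tau + 1 \leq n$, the sampler can safely truncate at $n$ and report ``too large'' otherwise without affecting correctness. Combining these observations yields the claimed expected $O(1)$ runtime.
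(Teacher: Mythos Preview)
Your proposal is correct and follows essentially the same approach as the paper: both recognize that the loop is equivalent to sampling the index of the first success from a geometric distribution with parameter $p_i$, truncated to the window $\{\tau,\ldots,n\}$, and both invoke \cite{bringmann2013exact} to perform this sampling in expected $O(1)$ time under the word-size assumption. Your write-up is slightly more careful about the truncation and the word-size hypothesis, but the argument is the same.
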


\begin{proof}
The loop executes the return statement with
a value of $j\in[t,n]$ with probability $p^{j-t}(1-p)$. Consequently,
if we let $\ell\geq0$ be sampled by the geometric distribution with
probability $p$, i.e., $\Pr[\ell=v]=p^{v}(1-p)$ for all $v\in\Z_{>0}$,
and if $\ell\in\{0,...,n-t\}$ simply execute the return statement
with $j=t+\ell$ and otherwise set $\tau_{i}=n+1$, then this is equivalent to the lines of the for loop. Since sampling from a
geometric distribution $\Geo(p)$ in this manner can be implemented in expected
$O(\log(1/p)/w)=O(1)$ time in the word RAM model \cite{bringmann2013exact}, the result follows.
\end{proof}

\begin{proof}[Proof of Theorem~\ref{thm:set_sampler}]
Algorithm~\ref{alg:set_sampler} maintains that after each operation ($\init$, $\set$ or $\sample$),
each $i\in[n]$ is a member of at most one $T_{j}$. Further, if $i\in T_{j}$,
then $\tau\leq j$ and $\tau_{i}=j$ and moreover $\tau_{i}=n+1$ if and
only if $i\notin T_{j}$ for any $j\in[n]$. 
Further, the algorithm is designed (as discussed) so 
that $T_{\tau}$ is a valid output of $\sample$ at time $\tau$  (for any updates of an \outputadaptive adversary, that is unaware of $T_j$ for $j\geq \tau$).
Since the algorithm also ensures that $\tau\leq n$, the algorithm
has the desired output. Further, from these properties it is clear
that the algorithm can be implemented in $O(n)$ space. It only remains
to bound the running time for implementing the algorithm.

To analyze the running time of the algorithm, first note that by Lemma~\ref{lem:set_sampler_time_efficient},
each $\set$ operation can be implemented in expected $O(1)$ time.
Consequently, $\init$ can be implemented in expected $O(n)$ time.
Further, since $T=\sample()$ simply calls $\set$ for elements in
its output or for all $n$ elements after every $n$ times it is called, it
has the desired expected runtime $O(1+|T|)$ as well.
\end{proof}

\section{Known Decremental Fractional Bipartite Matching Algorithms}\label{app:decremental}

For completeness, we briefly discuss the parameters of the decremental bipartite algortihms of \cite{bernstein2020deterministic,jambulapati2022regularized}, which together with \Cref{thm:decremental-framework} and \Cref{thm:bipartite} yield the results of \Cref{table:decremental}.

\begin{lem}[Congestion Balancing \cite{bernstein2020deterministic}]\label{lem:congestion}
    There exists an \DM decremental fractional bipartite matching algorithm with total time $O(m\cdot \eps^{-4}\cdot \log^3n)$ and $O(\eps^{-1}\cdot \log n)$ phases.
\end{lem}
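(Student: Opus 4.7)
The plan is to adapt the congestion-balancing framework of Bernstein--Gutenberg--Saranurak~\cite{bernstein2020deterministic}. Maintain per-edge capacities $c_e \in (0,1]$ initialized uniformly to $1/n$, together with a fractional matching $\x$ satisfying $x_e \leq c_e$ on every non-deleted edge. A phase ends (and a new one begins) once enough edges of $\supp{x}$ are deleted that $\x$ restricted to non-deleted edges ceases to be $(1-\eps)$-approximate with respect to the current $\mu(G)$. The robustness property follows directly from this invariant: within a phase, $\x$ is a $(1-\eps)$-approximate fractional matching of the non-deleted subgraph.

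At each phase boundary, invoke the key oracle $M$\textsc{-or-}$E^*$: given capacities $\mathbf{c}$ and target value $T=(1-\eps)\mu(G)$, the oracle either (i) returns a fractional matching $\x'$ with $\norm{x'}\geq T$ and $x'_e \leq c_e$ everywhere, or (ii) returns a saturated edge set $E^*$ certifying infeasibility at the current capacities. In case (i) we start a new phase with $\x \gets \x'$; in case (ii) we double the capacities of edges in $E^*$ and re-invoke. In bipartite graphs, $M$\textsc{-or-}$E^*$ reduces to (approximate) min-cost flow and admits a deterministic $\tilde{O}(m\cdot \eps^{-O(1)})$-time implementation, which is the content we would import essentially as a black box from~\cite{bernstein2020deterministic}.

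The phase count is $O(\eps^{-1}\log n)$: assume WLOG $\mu(G)\geq 1$, so $\mu(G)$ ranges over an interval of multiplicative size $n$; each phase begins only after the achievable matching value has multiplicatively dropped by a factor of $(1-\Theta(\eps))$, giving at most $O(\eps^{-1}\log n)$ phases. For the runtime, each edge's capacity can double at most $O(\log n)$ times before reaching $1$, which bounds the number of ``cut-returning'' oracle calls per phase after amortizing via a potential-function argument; combined with the per-call cost of $\tilde{O}(m/\eps^{O(1)})$ and the $O(\eps^{-1}\log n)$ phases, this yields the stated total runtime $O(m\cdot \eps^{-4}\cdot \log^3 n)$. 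The main technical obstacles are (a) verifying that doubling capacities on $E^*$ always makes measurable progress toward feasibility (handled in~\cite{bernstein2020deterministic} via a potential tracking total log-capacity), and (b) implementing $M$\textsc{-or-}$E^*$ with the correct accuracy--runtime trade-off in bipartite graphs; we anticipate citing~\cite{bernstein2020deterministic} directly for both.
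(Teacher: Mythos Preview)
Your proposal is correct and follows essentially the same approach as the paper: both extract the phase count and total runtime from~\cite{bernstein2020deterministic}, with phases corresponding to multiplicative $(1-\Theta(\eps))$ drops in $\mu(G)$, and both defer the technical content (robustness of the computed matching and cost of the oracle/subroutine) to that reference. The only difference is presentational: the paper's proof black-boxes the entire per-phase computation as the \textsc{Robust-Matching} subroutine of~\cite{bernstein2020deterministic} and simply reads off its $O(m\cdot\eps^{-3}\cdot\log^2 n)$ cost, whereas you unpack the congestion-balancing internals ($M$\textsc{-or-}$E^*$, capacity doubling, potential argument) before ultimately citing~\cite{bernstein2020deterministic} for the same guarantees.
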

\begin{proof}
    The algorithm of \cite{bernstein2020deterministic} consists of $O(\eps^{-1}\cdot \log n)$ phases, where a phase starts when the maximum matching size of $G$ decreases by a factor of $(1-\eps)$. The bottleneck of each phase is the $O(m\cdot \eps^{-2}\cdot \log^3 n)$-time subroutine \textsc{Robust-Matching}, which computes a fractional matching that does not change during the phase (aside from nullification of value on deleted edges), and is guaranteed to remain $(1-\eps)$-approximate during the phrase \cite[Lemma 5.1]{bhattacharya2020deterministic}.
    This subroutine's running time is $O(m\cdot \eps^{-3}\cdot \log^2n)$. Thus, the algorithm's total running time is $O(m\cdot \eps^{-4}\cdot \log^3n)$.
\end{proof}

\begin{lem}[Regularized Box-Simplex Games \cite{jambulapati2022regularized}]\label{lem:box}
    There exist \DM  decremental fractional bipartite matching algorithms that are:
    \begin{enumerate}
        \item Deterministic, with total time $O(m\cdot \eps^{-3}\cdot \log^{5} n)$ and $O(\epsilon^{-2} \cdot \log^2 n)$ phases \cite[Theorem 7]{jambulapati2022regularized}.
        \item \expandafter\capitalize\adaptive, with total time $O(m\cdot \eps^{-2}\cdot n^{o(1)})$ and $O(\epsilon^{-2} \cdot \log^2 n)$ phases \cite[Theorem 9]{jambulapati2022regularized}.
    \end{enumerate}
\end{lem}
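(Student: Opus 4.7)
The plan is to invoke the decremental fractional matching algorithms of \cite{jambulapati2022regularized} in a largely black-box fashion, verifying only that their output fits the \DM template. Specifically, the algorithms underlying Theorems 7 and 9 of \cite{jambulapati2022regularized} proceed in \emph{phases}: at the start of each phase $i$ the algorithm computes (deterministically or via fast MWU-style randomized procedures) a fractional matching $\x^i$ by approximately solving the matching LP through regularized box-simplex games, and then leaves $\x^i$ frozen until a ``trigger'' event declares the phase over. The natural trigger---which is exactly how their analysis is organized---is that the value of $\x^i$ restricted to undeleted edges has dropped below $(1-\eps)\cdot \mu(G)$ for the current graph $G$. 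This is precisely the condition defining the end of a phase in the \DM definition, so each of their phases is one of ours.

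I would then read off the two quantitative ingredients directly from \cite{jambulapati2022regularized}. The phase count of $O(\eps^{-2} \log^2 n)$ decomposes as roughly $O(\eps^{-1} \log n)$ geometric drops in the matching value, each of which requires $O(\eps^{-1} \log n)$ reweighting rounds in their MWU analysis. The per-phase cost is $O(m \cdot \eps^{-1} \log^3 n)$ in the deterministic implementation (using their deterministic approximate matching solver) and $O(m \cdot n^{o(1)})$ in the faster randomized adaptive variant (using their near-linear-time primitive). Multiplying yields the claimed total times of $O(m \cdot \eps^{-3} \log^5 n)$ and $O(m \cdot \eps^{-2} \cdot n^{o(1)})$ respectively.

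The only subtle point---and the step I expect to require the most care---is matching up approximation guarantees: \cite{jambulapati2022regularized} typically state bounds at accuracy $\eps/c$ for some constant $c$ or with additive slack, while our \DM definition demands a clean multiplicative $(1-\eps)$ guarantee on surviving edges throughout the phase. I would therefore rescale their accuracy parameter by a constant factor and set the phase-termination threshold so that the frozen fractional matching is guaranteed to be $(1-\eps)$-approximate on surviving edges for the entire phase; this affects only constants absorbed into the $O(\cdot)$. A minor auxiliary check is that their algorithms' output fractional vectors indeed satisfy the fractional matching degree constraints (as opposed to only approximate feasibility), which in both theorems is built in by their LP rounding / dual feasibility argument. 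With these verifications in place, Lemma~\ref{lem:box} follows by direct quotation.
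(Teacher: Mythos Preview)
Your proposal is reasonable in spirit, but it does considerably more work than the paper itself. The paper provides \emph{no proof} for this lemma: the statement already carries explicit pointers to \cite[Theorem~7]{jambulapati2022regularized} and \cite[Theorem~9]{jambulapati2022regularized}, and the lemma is simply asserted as a restatement of those results in the \DM framework. (Compare with the preceding Lemma~\ref{lem:congestion}, which \emph{does} get a short proof unpacking the phase structure of \cite{bernstein2020deterministic}.)

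Your plan---verifying that the phase structure of \cite{jambulapati2022regularized} matches the \DM definition, decomposing the phase count as $O(\eps^{-1}\log n)\times O(\eps^{-1}\log n)$, and checking the feasibility/accuracy rescaling---is exactly the kind of sanity check one would perform to justify the citation, and nothing in it is wrong. It is just that the paper treats this as a direct quotation rather than something requiring argument, so there is nothing to compare your proof against.
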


\begingroup
\sloppy

\section*{Acknowledgements}

Thank you to Arun Jambulapati for helpful conversations which were foundational for the static bipartite rounding algorithm in this paper and subsequent developments. Part of this work was conducted while authors in positions $2^0, 2^1$ and $2^2$ (alphabetically) were visiting \href{https://www.dagstuhl.de/en/seminars/seminar-calendar/seminar-details/22461}{Dagstuhl program 22461}, \href{https://www.dagstuhl.de/en/seminars/seminar-calendar/seminar-details/22461}{``Dynamic Graph Algorithms''}.
\endgroup

\bibliographystyle{alpha}
\bibliography{abb,ultimate}

\newcommand{\etalchar}[1]{$^{#1}$}
\begin{thebibliography}{BKM{\etalchar{+}}22}

\bibitem[ABD22]{assadi2022decremental}
Sepehr Assadi, Aaron Bernstein, and Aditi Dudeja.
\newblock Decremental matching in general graphs.
\newblock In {\em Proceedings of the 49th International Colloquium on Automata,
  Languages and Programming (ICALP)}, pages 11:1--11:19, 2022.

\bibitem[ABKL23]{assadi2023regularity}
Sepehr Assadi, Soheil Behnezhad, Sanjeev Khanna, and Huan Li.
\newblock On regularity lemma and barriers in streaming and dynamic matching.
\newblock In {\em Proceedings of the 55th Annual ACM Symposium on Theory of
  Computing (STOC)}, 2023.
\newblock To appear.

\bibitem[ABR24]{azarmehr2023fully}
Amir Azarmehr, Soheil Behnezhad, and Mohammad Roghani.
\newblock Fully dynamic matching: $(2-\sqrt{2})$-approximation in polylog
  update time.
\newblock In {\em Proceedings of the 35th Annual ACM-SIAM Symposium on Discrete
  Algorithms (SODA)}, 2024.

\bibitem[ACC{\etalchar{+}}18]{arar2018dynamic}
Moab Arar, Shiri Chechik, Sarel Cohen, Cliff Stein, and David Wajc.
\newblock Dynamic matching: Reducing integral algorithms to
  approximately-maximal fractional algorithms.
\newblock In {\em Proceedings of the 45th International Colloquium on Automata,
  Languages and Programming (ICALP)}, pages 79:1--79:16, 2018.

\bibitem[AD16]{abboud2016popular}
Amir Abboud and S{\o}ren Dahlgaard.
\newblock Popular conjectures as a barrier for dynamic planar graph algorithms.
\newblock In {\em Proceedings of the 57th Symposium on Foundations of Computer
  Science (FOCS)}, pages 477--486, 2016.

\bibitem[AKLY16]{assadi2016maximum}
Sepehr Assadi, Sanjeev Khanna, Yang Li, and Grigory Yaroslavtsev.
\newblock Maximum matchings in dynamic graph streams and the simultaneous
  communication model.
\newblock In {\em Proceedings of the 27th Annual ACM-SIAM Symposium on Discrete
  Algorithms (SODA)}, pages 1345--1364, 2016.

\bibitem[AVW14]{abboud2014popular}
Amir Abboud and Virginia Vassilevska~Williams.
\newblock Popular conjectures imply strong lower bounds for dynamic problems.
\newblock In {\em Proceedings of the 55th Symposium on Foundations of Computer
  Science (FOCS)}, pages 434--443, 2014.

\bibitem[BCH20]{bhattacharya2020deterministic}
Sayan Bhattacharya, Deeparnab Chakrabarty, and Monika Henzinger.
\newblock Deterministic dynamic matching in ${O}(1)$ update time.
\newblock {\em Algorithmica}, 82(4):1057--1080, 2020.

\bibitem[BDH{\etalchar{+}}19]{behnezhad2019fully}
Soheil Behnezhad, Mahsa Derakhshan, MohammadTaghi Hajiaghayi, Cliff Stein, and
  Madhu Sudan.
\newblock Fully dynamic maximal independent set with polylogarithmic update
  time.
\newblock In {\em Proceedings of the 60th Symposium on Foundations of Computer
  Science (FOCS)}, pages 382--405, 2019.

\bibitem[Beh23]{behnezhad2023dynamic}
Soheil Behnezhad.
\newblock Dynamic algorithms for maximum matching size.
\newblock In {\em Proceedings of the 34th Annual ACM-SIAM Symposium on Discrete
  Algorithms (SODA)}, pages 129--162, 2023.

\bibitem[BF13]{bringmann2013exact}
Karl Bringmann and Tobias Friedrich.
\newblock Exact and efficient generation of geometric random variates and
  random graphs.
\newblock In {\em Proceedings of the 40th International Colloquium on Automata,
  Languages and Programming (ICALP)}, pages 267--278, 2013.

\bibitem[BFH19]{bernstein2019deamortization}
Aaron Bernstein, Sebastian Forster, and Monika Henzinger.
\newblock A deamortization approach for dynamic spanner and dynamic maximal
  matching.
\newblock In {\em Proceedings of the 30th Annual ACM-SIAM Symposium on Discrete
  Algorithms (SODA)}, pages 1899--1918, 2019.

\bibitem[BGS15]{baswana2015fully}
Surender Baswana, Manoj Gupta, and Sandeep Sen.
\newblock Fully dynamic maximal matching in ${O}(\log n)$ update time.
\newblock {\em SIAM Journal on Computing (SICOMP)}, 44(1):88--113, 2015.

\bibitem[BGS20]{bernstein2020deterministic}
Aaron Bernstein, Maximilian~Probst Gutenberg, and Thatchaphol Saranurak.
\newblock Deterministic decremental reachability, scc, and shortest paths via
  directed expanders and congestion balancing.
\newblock In {\em Proceedings of the 61st Symposium on Foundations of Computer
  Science (FOCS)}, pages 1123--1134, 2020.

\bibitem[BHI15]{bhattacharya2015deterministic}
Sayan Bhattacharya, Monika Henzinger, and Giuseppe~F Italiano.
\newblock Deterministic fully dynamic data structures for vertex cover and
  matching.
\newblock In {\em Proceedings of the 26th Annual ACM-SIAM Symposium on Discrete
  Algorithms (SODA)}, pages 785--804, 2015.

\bibitem[BHN16]{bhattacharya2016new}
Sayan Bhattacharya, Monika Henzinger, and Danupon Nanongkai.
\newblock New deterministic approximation algorithms for fully dynamic
  matching.
\newblock In {\em Proceedings of the 48th Annual ACM Symposium on Theory of
  Computing (STOC)}, pages 398--411, 2016.

\bibitem[BHN17]{bhattacharya2017fully}
Sayan Bhattacharya, Monika Henzinger, and Danupon Nanongkai.
\newblock Fully dynamic approximate maximum matching and minimum vertex cover
  in ${O} (\log^3 n)$ worst case update time.
\newblock In {\em Proceedings of the 28th Annual ACM-SIAM Symposium on Discrete
  Algorithms (SODA)}, pages 470--489, 2017.

\bibitem[BK19]{bhattacharya2019deterministically}
Sayan Bhattacharya and Janardhan Kulkarni.
\newblock Deterministically maintaining a $(2+\epsilon)$-approximate minimum
  vertex cover in ${O}(1/\eps^2)$ amortized update time.
\newblock In {\em Proceedings of the 30th Annual ACM-SIAM Symposium on Discrete
  Algorithms (SODA)}, pages 1872--1885, 2019.

\bibitem[BK21]{bhattacharya2021deterministic}
Sayan Bhattacharya and Peter Kiss.
\newblock Deterministic rounding of dynamic fractional matchings.
\newblock In {\em Proceedings of the 48th International Colloquium on Automata,
  Languages and Programming (ICALP)}, 2021.

\bibitem[BK22]{behnezhad2022new}
Soheil Behnezhad and Sanjeev Khanna.
\newblock New trade-offs for fully dynamic matching via hierarchical edcs.
\newblock In {\em Proceedings of the 33rd Annual ACM-SIAM Symposium on Discrete
  Algorithms (SODA)}, pages 3529--3566, 2022.

\bibitem[BK23]{blikstad2023incremental}
Joakim Blikstad and Peter Kiss.
\newblock Incremental $(1-\epsilon)$-approximate dynamic matching in
  ${O}(\poly(1/\epsilon))$ update time.
\newblock In {\em Proceedings of the 31st Annual European Symposium on
  Algorithms (ESA)}, pages 22:1--22:19, 2023.

\bibitem[BKM{\etalchar{+}}22]{beimel2022dynamic}
Amos Beimel, Haim Kaplan, Yishay Mansour, Kobbi Nissim, Thatchaphol Saranurak,
  and Uri Stemmer.
\newblock Dynamic algorithms against an adaptive adversary: Generic
  constructions and lower bounds.
\newblock In {\em Proceedings of the 54th Annual ACM Symposium on Theory of
  Computing (STOC)}, pages 1671--1684, 2022.

\bibitem[BKS23a]{bhattacharya2023dynamicsublinear}
Sayan Bhattacharya, Peter Kiss, and Thatchaphol Saranurak.
\newblock Dynamic $(1+\epsilon)$-approximate matching size in truly sublinear
  update time.
\newblock pages 1563--1588, 2023.

\bibitem[BKS23b]{bhattacharya2023dynamicLP}
Sayan Bhattacharya, Peter Kiss, and Thatchaphol Saranurak.
\newblock Dynamic algorithms for packing-covering lps via multiplicative weight
  updates.
\newblock page To appear, 2023.

\bibitem[BKSW23]{bhattacharya2023dynamic}
Sayan Bhattacharya, Peter Kiss, Thatchaphol Saranurak, and David Wajc.
\newblock Dynamic matching with better-than-2 approximation in polylogarithmic
  update time.
\newblock In {\em Proceedings of the 34th Annual ACM-SIAM Symposium on Discrete
  Algorithms (SODA)}, pages 100--128, 2023.

\bibitem[BLM20]{behnezhad2020fully}
Soheil Behnezhad, Jakub \L\k{a}cki, and Vahab Mirrokni.
\newblock Fully dynamic matching: Beating 2-approximation in
  ${\Delta}^{\epsilon}$ update time.
\newblock In {\em Proceedings of the 31st Annual ACM-SIAM Symposium on Discrete
  Algorithms (SODA)}, pages 2492--2508, 2020.

\bibitem[BNS19]{van2019dynamic}
Jan van~den Brand, Danupon Nanongkai, and Thatchaphol Saranurak.
\newblock Dynamic matrix inverse: Improved algorithms and matching conditional
  lower bounds.
\newblock In {\em Proceedings of the 60th Symposium on Foundations of Computer
  Science (FOCS)}, pages 456--480, 2019.

\bibitem[BP17]{bringmann2017efficient}
Karl Bringmann and Konstantinos Panagiotou.
\newblock Efficient sampling methods for discrete distributions.
\newblock {\em Algorithmica}, 79:484--508, 2017.

\bibitem[BS15]{bernstein2015fully}
Aaron Bernstein and Cliff Stein.
\newblock Fully dynamic matching in bipartite graphs.
\newblock In {\em Proceedings of the 42nd International Colloquium on Automata,
  Languages and Programming (ICALP)}, pages 167--179, 2015.

\bibitem[BS16]{bernstein2016faster}
Aaron Bernstein and Cliff Stein.
\newblock Faster fully dynamic matchings with small approximation ratios.
\newblock In {\em Proceedings of the 27th Annual ACM-SIAM Symposium on Discrete
  Algorithms (SODA)}, pages 692--711, 2016.

\bibitem[CK19]{chuzhoy2019new}
Julia Chuzhoy and Sanjeev Khanna.
\newblock A new algorithm for decremental single-source shortest paths with
  applications to vertex-capacitated flow and cut problems.
\newblock In {\em Proceedings of the 51st Annual ACM Symposium on Theory of
  Computing (STOC)}, pages 389--400, 2019.

\bibitem[CKL{\etalchar{+}}22]{chen2022maximum}
Li~Chen, Rasmus Kyng, Yang~P Liu, Richard Peng, Maximilian~Probst Gutenberg,
  and Sushant Sachdeva.
\newblock Maximum flow and minimum-cost flow in almost-linear time.
\newblock In {\em Proceedings of the 63rd Symposium on Foundations of Computer
  Science (FOCS)}, 2022.
\newblock To appear.

\bibitem[CS18]{charikar2018fully}
Moses Charikar and Shay Solomon.
\newblock Fully dynamic almost-maximal matching: Breaking the polynomial
  barrier for worst-case time bounds.
\newblock In {\em Proceedings of the 45th International Colloquium on Automata,
  Languages and Programming (ICALP)}, pages 33:1--33:14, 2018.

\bibitem[CZ19]{chechik2019fully}
Shiri Chechik and Tianyi Zhang.
\newblock Fully dynamic maximal independent set in expected poly-log update
  time.
\newblock In {\em Proceedings of the 60th Symposium on Foundations of Computer
  Science (FOCS)}, pages 370--381, 2019.

\bibitem[Dah16]{dahlgaard2016hardness}
S{\o}ren Dahlgaard.
\newblock On the hardness of partially dynamic graph problems and connections
  to diameter.
\newblock In {\em Proceedings of the 43rd International Colloquium on Automata,
  Languages and Programming (ICALP)}, pages 48:1--48:14, 2016.

\bibitem[DP14]{duan2014linear}
Ran Duan and Seth Pettie.
\newblock Linear-time approximation for maximum weight matching.
\newblock {\em Journal of the ACM (JACM)}, 61(1):1, 2014.

\bibitem[Edm65]{edmonds1965maximum}
Jack Edmonds.
\newblock Maximum matching and a polyhedron with 0, 1-vertices.
\newblock {\em Journal of research of the National Bureau of Standards B},
  69(125-130):55--56, 1965.

\bibitem[GLS{\etalchar{+}}19]{grandoni20191+}
Fabrizio Grandoni, Stefano Leonardi, Piotr Sankowski, Chris Schwiegelshohn, and
  Shay Solomon.
\newblock $(1+\epsilon)$-approximate incremental matching in constant
  deterministic amortized time.
\newblock In {\em Proceedings of the 30th Annual ACM-SIAM Symposium on Discrete
  Algorithms (SODA)}, pages 1886--1898, 2019.

\bibitem[GP13]{gupta2013fully}
Manoj Gupta and Richard Peng.
\newblock Fully dynamic $(1+\eps)$-approximate matchings.
\newblock In {\em Proceedings of the 54th Symposium on Foundations of Computer
  Science (FOCS)}, pages 548--557, 2013.

\bibitem[GRS14]{gupta2014maintaining}
Manoj Gupta, Venkatesh Raman, and SP~Suresh.
\newblock Maintaining approximate maximum matching in an incremental bipartite
  graph in polylogarithmic update time.
\newblock In {\em Conference on Foundation of Software Technology and
  Theoretical Computer Science (FSTTCS)}, volume~29, pages 227--239, 2014.

\bibitem[GSSU22]{grandoni2022maintaining}
Fabrizio Grandoni, Chris Schwiegelshohn, Shay Solomon, and Amitai Uzrad.
\newblock Maintaining an edcs in general graphs: Simpler, density-sensitive and
  with worst-case time bounds.
\newblock {\em Proceedings of the 5th Symposium on Simplicity in Algorithms
  (SOSA)}, pages 12--23, 2022.

\bibitem[HKNS15]{henzinger2015unifying}
Monika Henzinger, Sebastian Krinninger, Danupon Nanongkai, and Thatchaphol
  Saranurak.
\newblock Unifying and strengthening hardness for dynamic problems via the
  online matrix-vector multiplication conjecture.
\newblock In {\em Proceedings of the 47th Annual ACM Symposium on Theory of
  Computing (STOC)}, pages 21--30, 2015.

\bibitem[IL93]{ivkovic1993fully}
Zoran Ivkovic and Errol~L Lloyd.
\newblock Fully dynamic maintenance of vertex cover.
\newblock In {\em Proceedings of the 19th International Workshop on
  Graph-Theoretic Concepts in Computer Science}, pages 99--111, 1993.

\bibitem[JJST22]{jambulapati2022regularized}
Arun Jambulapati, Yujia Jin, Aaron Sidford, and Kevin Tian.
\newblock Regularized box-simplex games and dynamic decremental bipartite
  matching.
\newblock In {\em Proceedings of the 49th International Colloquium on Automata,
  Languages and Programming (ICALP)}, 2022.

\bibitem[Kis22]{kiss2022improving}
Peter Kiss.
\newblock Improving update times of dynamic matching algorithms from amortized
  to worst case.
\newblock {\em Proceedings of the 13th Innovations in Theoretical Computer
  Science Conference (ITCS)}, pages 94:1--94:21, 2022.

\bibitem[KPP16]{kopelowitz2016higher}
Tsvi Kopelowitz, Seth Pettie, and Ely Porat.
\newblock Higher lower bounds from the 3sum conjecture.
\newblock In {\em Proceedings of the 27th Annual ACM-SIAM Symposium on Discrete
  Algorithms (SODA)}, pages 1272--1287, 2016.

\bibitem[NS17]{nanongkai2017dynamic}
Danupon Nanongkai and Thatchaphol Saranurak.
\newblock Dynamic spanning forest with worst-case update time: adaptive, las
  vegas, and ${O} (n^{1/2}-\varepsilon)$-time.
\newblock In {\em Proceedings of the 49th Annual ACM Symposium on Theory of
  Computing (STOC)}, pages 1122--1129, 2017.

\bibitem[OR10]{onak2010maintaining}
Krzysztof Onak and Ronitt Rubinfeld.
\newblock Maintaining a large matching and a small vertex cover.
\newblock In {\em Proceedings of the 42nd Annual ACM Symposium on Theory of
  Computing (STOC)}, pages 457--464, 2010.

\bibitem[PS16]{peleg2016dynamic}
David Peleg and Shay Solomon.
\newblock Dynamic $(1+ \epsilon)$-approximate matchings: a density-sensitive
  approach.
\newblock In {\em Proceedings of the 27th Annual ACM-SIAM Symposium on Discrete
  Algorithms (SODA)}, pages 712--729, 2016.

\bibitem[RSW22]{roghani2022beating}
Mohammad Roghani, Amin Saberi, and David Wajc.
\newblock Beating the folklore algorithm for dynamic matching.
\newblock In {\em Proceedings of the 13th Innovations in Theoretical Computer
  Science Conference (ITCS)}, pages 111:1--111:23, 2022.

\bibitem[San09]{sankowski2009maximum}
Piotr Sankowski.
\newblock Maximum weight bipartite matching in matrix multiplication time.
\newblock {\em Theoretical Computer Science (TCS)}, 410(44):4480--4488, 2009.

\bibitem[Sol16]{solomon2016fully}
Shay Solomon.
\newblock Fully dynamic maximal matching in constant update time.
\newblock In {\em Proceedings of the 57th Symposium on Foundations of Computer
  Science (FOCS)}, pages 325--334, 2016.

\bibitem[SS21]{solomon2021generalized}
Noam Solomon and Shay Solomon.
\newblock A generalized matching reconfiguration problem.
\newblock In {\em Proceedings of the 12th Innovations in Theoretical Computer
  Science Conference (ITCS)}, pages 57:1--57:20, 2021.

\bibitem[TWLH10]{tsai2010heterogeneous}
Meng-Tsung Tsai, Da-Wei Wang, Churn-Jung Liau, and Tsan-sheng Hsu.
\newblock Heterogeneous subset sampling.
\newblock In {\em Proceedings of the 16th}, pages 500--509, 2010.

\bibitem[Waj20]{wajc2020rounding}
David Wajc.
\newblock Rounding dynamic matchings against an adaptive adversary.
\newblock In {\em Proceedings of the 52nd Annual ACM Symposium on Theory of
  Computing (STOC)}, pages 194--207, 2020.

\bibitem[YWW23]{yi2023optimal}
Lu~Yi, Hanzhi Wang, and Zhewei Wei.
\newblock Optimal dynamic subset sampling: Theory and applications.
\newblock In {\em Proceedings of the 29th International Conference on Knowledge
  Discovery and Data Mining (KDD)}, 2023.

\bibitem[ZH23]{zheng2023multiplicative}
Da~Wei Zheng and Monika Henzinger.
\newblock Multiplicative auction algorithm for approximate maximum weight
  bipartite matching.
\newblock In {\em Proceedings of the 24th Conference on Integer Programming and
  Combinatorial Optimization (IPCO)}, 2023.

\end{thebibliography}

\end{document}